\documentclass[11pt]{article}

\usepackage{graphicx} % Required for inserting images
\usepackage{amsthm,amsmath,amssymb}
\usepackage[disable]{todonotes}
\usepackage{geometry}
\usepackage{hyperref}
\usepackage{svg}

\newtheorem{theorem}{Theorem}[section]

\newtheorem{lemma}[theorem]{Lemma}

\newcounter{tbox}
\newcommand{\sta}[1]{\vspace*{0.3cm}\refstepcounter{tbox}\noindent{ \parbox{\textwidth}{(\thetbox) \emph{#1}}}\vspace*{0.3cm}}

\newcommand{\bw}{\textsf{bw}}
\newcommand{\pw}{\textsf{pw}}
\newcommand{\pack}{\textsf{pack}}

\title{Forbidden Subgraphs of Graphs with Low Bandwidth}

\author{Maria Chudnovsky\thanks{Princeton University, Princeton, NJ, USA. Supported by NSF Grants DMS-2348219 and CCF-2505100, AFOSR grant FA9550-25-1-0275, and a Guggenheim Fellowship.} \and
Daniel Lokshtanov\thanks{Department of Computer Science, University of California Santa Barbara, Santa Barbara, CA, USA. Supported by NSF Grant CCF-2505099.} \and
Eran Nevo\thanks{Hebrew University of Jerusalem, Jerusalem, Israel; and Universidad de Valladolid, Valladolid, Spain. Partially supported by the Israel Science Foundation grant ISF-687/24, and thanks Harvard CMSA for their hospitality and support.}
}

\date{}

\begin{document}

\maketitle

\begin{abstract}
A {\em layout} of a graph $G$ is an injective function $f : V(G) \rightarrow \mathbb{Z}$, and the {\em bandwidth} of a layout $f$ is $\bw(G,f) = \max_{uv \in E(G)} |f(u) - f(v)|$. The bandwidth $\bw(G)$ of $G$ is the minimum bandwidth of a layout of $G$. Computing the bandwidth of a graph is a notoriously hard problem: 
assuming $\textsf{P} \neq \textsf{NP}$, there is no polynomial time algorithm, even on very restricted classes of trees [Monien, SIAM Journal on Algebraic Discrete Methods, 1986], and no constant factor approximation, even on trees [Dubey et al., JCSS 2011]. Assuming the Exponential Time Hypothesis, there is no algorithm with running time $f(k)n^{o(k)}$ to determine whether an input graph has bandwidth at most $k$, even on very restricted classes of trees [Dregi and Lokshtanov, ICALP 2014].

In this paper we show that {\sc Bandwidth} on general graphs is FPT-approximable. In particular we give an algorithm that takes as input a graph $G$ and an integer $k$, runs in time $2^{O(9^k)} \cdot n^{O(1)}$, and outputs a subtree $T$ of $G$ such that $\bw(T) \geq k$ or a layout of $G$ of bandwidth at most $(10^{85} \cdot k^{28})^{4^k}$.
This resolves in the affirmative an open problem of Chung and Seymour [Discrete Mathematics, 1989], who asked whether the bandwidth of every graph $G$ is upper bounded in terms of the maximum bandwidth of a subtree of $G$. 
Our theorem leads to a forbidden subgraph characterization for graphs of bounded bandwidth, and can be seen as an analog for bandwidth of the classic grid minor theorem for treewidth, the forbidden subtree theorem for pathwidth, and the forbidden subpath theorem for treedepth. 
\end{abstract}

\section{Introduction}

%\todo[inline]{go over whole paper and check for same symbol meaning different things. Check whether some variables should be renamed.}

\todo[inline]{Check whole paper for too big jumps/insufficiently justified statements.}

A {\em layout} of a graph $G$ on $n$ vertices is an injective function $f$ from the vertex set $V(G)$ of $G$ to the integers. The {\em bandwidth} of a layout is the maximum over all edges $uv$ of $G$ of $|f(u)-f(v)|$ and the bandwidth of the graph $G$ is the minimum bandwidth over all layouts of $G$. In the {\sc Bandwidth} problem the input is a graph $G$ and an integer $k$ and the task is to determine whether the bandwidth of $G$ is at most $k$. In the optimization version of the problem the task is to compute a layout of the input graph $G$ of minimum bandwidth. 

%Bandwidth had received much attention during the fifties in order to speed up sev-eral computations on sparse matrices. Ac-cording to Dewdney [1976], the introduc-tion of the bandwidth problem for graphs(BANDWIDTH)  was  first  stated  in  Harary[1967], however the problem was formally defined in Harper [1966]. \todo{reword}

{\sc Bandwidth} was formally defined as a graph problem in the late sixties~\cite{harper1966optimal, harary1967}, however an equivalent formulation in terms of sparse matrices received considerable attention already in the fifties~\cite{chinn1982bandwidth}. 
%has received considerable attention since the fifties~\cite{}, because of its close ties to computations on sparse matrices. 
In particular one can reinterpret an arbitrary square matrix $A$ as the adjacency matrix of a weighted, directed graph $D$.
%\todo{cite? dont see a good citation for this.}
Then the (underlying undirected graph of) $D$ has bandwidth at most $k$ if and only if there exists a permutation matrix~$P$ such that $PAP^T$ is a matrix whose non-zero entries all lie in a band of width $k$ centered around the main diagonal, giving the {\sc Bandwidth} problem its name. Many matrix operations can be sped up substantially when the input matrix has bounded bandwidth~\cite{George1981,saad2003iterative}. 
%More modern tools for sparse matrix computations outperform bandwidth based schemes, nevertheless commercial packages still offer functionality to reduce the bandwidth of sparse matrices as a pre-processing  step~\cite{DiazPS02}.
Lai et al.~\cite{lai1999survey} survey a large number of known heuristics for computing the bandwidth of a graph, as well as upper and lower bounds on the bandwidth of general graphs, and sharper bounds for special cases.

{\sc Bandwidth} is a notoriously hard problem. It is one of the classic \textsf{NP}-complete~\cite{GareyJ1979,Papadimitriou76} problems, and it remains \textsf{NP}-complete even on caterpillars of hair length at most $3$~\cite{monien1986bandwidth}.
Here a {\em caterpillar} is a tree $T$ where all vertices of degree $3$ or more lie on a path $B$, called the {\em backbone} of $T$, and the {\em hair length} of a caterpillar is the maximum distance from $B$ to a vertex in $T$.
%a very restricted subclass of trees~\cite{monien1986bandwidth}. 
%
Indeed, the problem remains \textsf{NP}-hard to approximate within any constant factor, even on caterpillars, and a polynomial time approximation algorithm with ratio $c\sqrt{\log n/\log\log n}$ for a sufficiently small $c$ would imply that every problem in \textsf{NP} can be solved in quasi-polynomial time~\cite{dubey2011hardness}.
On the positive side, Dunagan and Vempala~\cite{DunaganV01} gave a polynomial time $O(\log^3 n \log\log n)$-approximation algorithm for {\sc Bandwidth} in general graphs, improving over a poly-logarithmic approximation algorithm of Feige~\cite{feige1998approximating,Feige00}. For special classes of graphs, such as trees~\cite{Gupta00} and  caterpillars~\cite{feige2009approximating} better (but still super-constant) approximation algorithms are known. 
Polynomial time algorithms for exactly computing the bandwidth are known on 
bipartite permutation graphs~\cite{HeggernesKM09},
interval graphs~\cite{kleitman1990computing},
cographs~\cite{yan1997bandwidth}
and caterpillars with hair length $1$ and $2$~\cite{assmann1981bandwidth}.

From the perspective of parameterized algorithms, Saxe~\cite{saxe1980dynamic} gave an algorithm to determine whether an input graph $G$ admits a layout of bandwidth $k$ in time $2^{O(k)}n^{k+1}$. The {\sc Bandwidth} problem was one of the first fundamental parameterized problems shown hard for the W-hierarchy. In particular Bodlaender et al.~\cite{BodlaenderFH94} showed that the algorithm of Saxe~\cite{saxe1980dynamic} cannot be improved to one with running time $f(k)n^{O(1)}$ unless FPT=W[t] for every integer $t$. Dregi and Lokshtanov~\cite{DregiL14} showed that, assuming the Exponential Time Hypothesis, there is no $f(k)n^{o(k)}$ time algorithm for {\sc Bandwidth} even on trees $T$ with a path $P$ such that every component of $T-P$ is a caterpillar. 

Exponential time algorithms for computing the bandwidth exactly and approximately have also been considered. Feige and Kilian gave an algorithm for exactly computing the bandwidth with running time $10^nn^{O(1)}$ and polynomial space~\cite{Feige00}. This was subsequently improved by Cygan and Pilipczuk~\cite{CyganP12b} to $O(4.83^n)$, and again by Cygan and Pilipczuk~\cite{cygan2010exact} to $O(4.39^n)$, at the cost of using exponential space. In a separate work, Cygan and Pilipczuk~\cite{CyganP12} also designed a $O(9.363^n)$ time algorithm for {\sc Bandwidth} that uses polynomial space. F\"{u}rer et al.~\cite{furer2013exponential} gave a factor $2$ approximation algorithm with running time $O(1.98^n)$, while Cygan and Pilipczuk~\cite{cygan2010exact} gave faster exponential time approximation algorithms at the cost of worse (but still constant) approximation ratios.

Given the hardness of {\sc Bandwidth} from the perspective of both polynomial time approximation algorithms and parameterized algorithms, it is interesting to study the problem from the perspective of parameterized approximation~\cite{feldmann2020survey,marx2008parameterized}. 
Here the goal is to get an algorithm that takes as input $G, k$, runs in time $f(k)n^{O(1)}$ and either concludes that $\bw(G) \geq k$ or produces a layout of bandwidth at most $g(k)$ for some function $g$ (where both $f$ and $g$ should ideally grow as slowly as possible with $k$). 
Prior to our work the only results in this direction were parameterized approximation algorithms for {\sc Bandwidth} on trees~\cite{DregiL14} and on asteroidal-triple-free graphs~\cite{GolovachHKLMS11}.
Indeed, {\sc Bandwidth} was one of the very first fundamental parameterized problems to be shown not to admit $f(k)n^{O(1)}$ time algorithms under plausible complexity assumptions~\cite{downey1999parameterized}. 
Recently we have seen breakthrough hardness of approximation results that lower bound the approximation ratio of parameterized approximation algorithms running in time $f(k)n^{O(1)}$ for problems such as {\sc Clique}~\cite{lin2021constant,karthik2022almost}, {\sc Dominating Set}~\cite{chen2019constant,karthik2018parameterized} and {\sc Constraint Satisfaction}~\cite{guruswami2024parameterized}. For each of these problems the message is essentially that no non-trivial approximation in time  $f(k)n^{O(1)}$ is possible. 
Since {\sc Bandwidth} is a ``harder'' problem from the perspective of parameterized complexity (it is hard for W[t] for every $t$, while the above problems are only hard for W[1] or W[2], see e.g.~\cite{downey1999parameterized}), 
the strong hardness approximation results for {\sc Clique}, {\sc Dominating Set} and {\sc Constraint Satisfaction} might lead one to believe  
%In light of these results one might believe 
that {\sc Bandwidth} also does not admit a meaningful parameterized approximation algorithm. 
Our main result runs counter to this intuition and demonstrates that the {\sc Bandwidth} problem does admit an $(10^{85} \cdot k^{28})^{4^k}$-approximation algorithm with running time $2^{O(9^k)} \cdot n^{O(1)}$.

\begin{theorem}\label{thm:main}
There exists an algorithm that takes as input a graph $G$ and an integer $k$, runs in time $2^{O(9^k)} \cdot n^{O(1)}$, and either concludes that $\bw(G) \geq k$ or produces a layout with bandwidth at most $(10^{85} \cdot k^{28})^{4^k}$.
When the algorithm concludes that $\bw(G) \geq k$ it outputs a subtree $T$ of $G$ with $\bw(T) \geq k$.
\end{theorem}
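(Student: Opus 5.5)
We plan to argue by induction on $k$, following the known strategy for trees (Dregi--Lokshtanov) and lifting it to general graphs via pathwidth.

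\textbf{Step 1: reduce to bounded pathwidth.} Since $\bw(G) \geq \pw(G)$, we first run an FPT algorithm that either finds a pathwidth obstruction or a path decomposition. By the forbidden subtree theorem for pathwidth (Bienstock--Robertson--Seymour--Thomas), if $\pw(G)$ is large, namely at least about $3^k$, then $G$ contains as a subgraph a complete ternary tree of height $k$. That tree has pathwidth $k$ and hence bandwidth at least $k$. Computing either the tree or a decomposition of width $O(3^k)$ can be done in $2^{O(9^k)} n^{O(1)}$ time (e.g.\ Bodlaender--Kloks on $O(3^k)$ width), which accounts for the running time. So we may assume a path decomposition $(X_1,\dots,X_m)$ of width $w=O(3^k)$.

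We also handle vertices of huge degree directly. A vertex of degree at least $2k$ yields a star subtree $K_{1,2k}$, which has bandwidth at least $k$. Hence we may assume $\Delta(G) < 2k$.

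\textbf{Step 2: induction via a ``spine''.} Pick a path $P$ in $G$ running from a vertex of $X_1$ to a vertex of $X_m$, say a shortest such path. Then consider the components of $G - N^r[P]$ for a suitable radius $r$. Each component $C$ meets only a bounded stretch of bags, and we recurse on each $C$ with parameter $k-1$. If some recursion returns a subtree $T_C$ with $\bw(T_C)\geq k-1$, we try to combine it with the spine and short connecting paths to build a subtree of bandwidth at least $k$. The model for this is the lower bound for caterpillar-like trees: a long path together with many far-apart, disjoint pendant subtrees of bandwidth $k-1$, attached through paths, forces bandwidth $k$ by a density/volume argument. This requires finding many such components, pairwise far apart, whose attachment points to $P$ are spread out so that the local density exceeds what bandwidth $k-1$ can accommodate.

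Otherwise every component has a layout of bandwidth $g(k-1)$. We then lay out $G$ by sweeping along $P$ and the bags. Each component is placed in an interval near where it attaches to the spine, interleaving components whose attachment regions overlap. Since the degree is bounded and the pathwidth is bounded, only boundedly many components, about $(kw)^{O(1)}$, overlap any window. The resulting bandwidth is $g(k) \le (\mathrm{poly}(k)\cdot g(k-1))^{4}$, or a similar recursion, which yields the $(10^{85}k^{28})^{4^k}$ bound.

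\textbf{Main obstacle.} The hard step is the dichotomy in Step 2. We must show that when too many components with bandwidth about $k-1$ crowd a window of the spine, we can actually extract a \emph{tree} of bandwidth $k$. This needs a quantitative local-density lemma: a subtree consisting of a path with attached bandwidth-$(k-1)$ subtrees, at prescribed spacing, has bandwidth at least $k$. It also needs a way to choose the components and connecting paths disjointly, for which we would use the bounded pathwidth to route the paths inside consecutive bags. I would first try to prove the lemma that $t$ pendant subtrees of bandwidth $\geq k-1$, attached within distance $d$ of a spine segment of length $\ell$, force bandwidth $k$ once $t \gg \ell\cdot\mathrm{poly}(k,d)$. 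I would then set $r$ and the window sizes so that failure of this condition gives the bounded-overlap layout.
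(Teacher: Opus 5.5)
Step~1 is fine in substance, but Step~2 has a genuine gap.

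\emph{The recursion is not well founded.} A call with parameter $k-1$ on a component $C$ may return a subtree of bandwidth $\geq k-1$. That is neither a certificate that $\bw(G)\geq k$ nor a layout of $C$. So whenever your combination step fails, you have nothing for $C$, and the components are not simpler in any sense tied to $k$.

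\emph{The combination lemma cannot supply the missing alternative.} Suppose $t$ disjoint pendant trees each meet the $d$-neighbourhood of a spine segment of length $\ell$. Then a single ball of radius about $\ell/2+d$ contains $t$ vertices, so $t\gg \ell\cdot\mathrm{poly}(k,d)$ already forces $\hat{\Delta}_r(G)>k$ on its own. The lemma never uses the bandwidth of the pendant trees, and its hypothesis cannot hold once you reduce to $\hat{\Delta}_r(G)\leq k$ (a BFS tree certifies the other case).

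\emph{The sweeping step is false as stated.} Let $G$ be a Chung--Seymour Cantor comb of bandwidth far above $g(k)$, with $k$ larger than an absolute constant. It has constant local density (hence bounded degree) and constant pathwidth, so Step~1 does nothing. For the natural width-$2$ decomposition of this caterpillar, your spine is essentially the backbone. Then every component of $G-N^r[P]$ is a subpath of a hair, so every recursive call returns a layout and no combination is ever attempted. Yet your sweep claims a layout of bandwidth $g(k)$, which does not exist. Bounded overlap of components per window is not the relevant quantity. A component can be far longer than its attachment window (long hairs), or run alongside the spine and reattach at far-apart points. The genuine obstruction (a skewed Cantor comb) is two bandwidth-$(k-1)$ pieces joined by a backbone segment carrying one very long hair, not a crowd of pendants.

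What is missing is a way to extract a high-bandwidth subtree from a failed gluing, rather than from local crowding. The paper does this by embedding all of $G$, with bounded stretch and congestion, into a spanning tree $T$ of $G$. If $T$ has small bandwidth, composing embeddings gives a layout of $G$; otherwise $T$ is the output. Making this work requires two ingredients.

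(i) A path-like decomposition of $G$ into connected parts with bounded-size adhesions, each part genuinely simpler. The paper inducts on the maximum pathwidth $\tau$ of a subtree. It shows that a maximum packing of pathwidth-$\tau$ subtrees contracts to a path or cycle, via Scheffler's characterization. It then uses bounded radial local density to get an Erd\H{o}s--P\'osa-type hitting set in each window.

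(ii) Theorem~\ref{lem:lowStretchToEmbedding}, which embeds each recursively laid-out part into its piece of $T$ with the images of the small separators prescribed, so that embeddings of consecutive parts agree. The bandwidth test itself runs on a caterpillar scaffold via the Dregi--Lokshtanov caterpillar algorithm.

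As an aside, deleting a path from $X_1$ to $X_m$ does lower the pathwidth. So recursing with the same $k$ and inducting on $\pw(G)$ would make your recursion well founded. But that gives on the order of $3^k$ levels instead of $k$, so your recurrence would not yield $(10^{85}k^{28})^{4^k}$, and the gluing gap above remains.
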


The second part of the statement of Theorem~\ref{thm:main} is a structural result which is interesting in its own right.
In particular, 
Chung and Seymour~\cite{chung1989graphs} posed as an open problem whether the bandwidth of a graph $G$ is upper bounded as a function of the maximum bandwidth of a subtree of $G$.

\smallskip
\noindent
{\bf Problem 1.}~\cite{chung1989graphs}{\em
~Does there exist a function $f : \mathbb{N} \rightarrow \mathbb{N}$ such that for every graph $G$ with $\bw(G) > f(k)$, $G$ contains a subtree of bandwidth at least $k$?}
\smallskip

%: namely that every graph $G$ of bandwidth at least $(10^{85} \cdot k^{28})^{4^k}$ must contain a subtree $T$ of bandwidth at least $k$.
%
%
%This answers in the affirmative an open problem of Chung and Seymour~\cite{chung1989graphs}, who posed as an open problem whether the bandwidth of a graph $G$ is upper bounded as a function of the maximum bandwidth of a subtree of $G$. \todo{write explicitly "problem 1: ..." and maybe put before theorem statement}

Theorem~\ref{thm:main} resolves Problem 1 with $f(k) = (10^{85} \cdot k^{28})^{4^k}$.
The study of forbidden structure theorems for bandwidth dates back to at least the early 1980s. 
A simple lower bound for the bandwidth of a graph $G$ is its {\em local density}. The local density $\hat{\Delta}(G)$ of a graph $G$ is the maximum of $\frac{|V(G')|-1}{\textsf{diam}(G')}$ over all subgraphs $G'$ of $G$, where the diameter $\textsf{diam}(G)$ is the maximum over all pairs of vertices in $G$ of their shortest path distance.
%
%A related notion is the {\em radial local density}, $\hat{\Delta}_r(G)$, which is the maximum over all subgraphs $G'$ of $G$ of $\frac{V(G')-1}{2\textsf{rad}(G')}$, where $\textsf{rad}(G)$ is the minimum over all vertices $u$ of the maximum over all vertices $v$ of the shortest path distance from $u$ to $v$.
%
It is not too hard to see that for every graph $G$, it holds that 
%$\bw(G) \geq \hat{\Delta}(G) \geq \hat{\Delta}_r(G) \geq \hat{\Delta}(G)/2$
$\bw(G) \geq \hat{\Delta}(G)$. On the other hand, Gupta et al.~\cite{gupta2003bounded} showed that $\bw(G) \leq O( \log^2(n) \cdot \hat{\Delta}(G)^{1.5})$.

Chv{\'a}talov{\'a}~\cite{chvatalova1982bandwidth} showed that there exist trees with constant local density and arbitrarily large bandwidth, that is, that the $\log^2(n)$ factor in the upper bound of Gupta et al.~\cite{gupta2003bounded} cannot be replaced by any function of $\hat{\Delta}(G)$.
The trees in Chv{\'a}talov{\'a}'s construction are subdivisions of complete binary trees. Here a {\em subdivision} of a graph $G$ is a graph $H$ obtained from $G$ by repeatedly replacing edges by degree $2$ vertices adjacent to the endpoints of the removed edge. 

The {\em pathwidth} of a layout $\lambda$ of $G$ is the maximum over all integers $i \leq n$ of the number of vertices $v \in V(G)$ such that $\lambda(v) \leq i$ and $v$ has at least one neighbor $u$ such that $\lambda(u) > i$. The pathwidth of a graph $G$, denoted by $\pw(G)$, is the minimum pathwidth of a layout of $G$. \todo{comment about non standard definition}
It is known that $\pw(G) \leq \bw(G)$ for every graph $G$, and that every graph of sufficiently large pathwidth contains a subdivision of a complete binary tree as a subgraph (see e.g.~\cite{robertson1983graph, bienstock1991quickly, cattell1996simple}). 
In light of this, Chung and Seymour~\cite{chung1989graphs} asked whether there exist trees with constant local density, constant pathwidth, but arbitrarily large bandwidth.
Chung and Seymour answered this question in the affirmative, giving a construction of such trees, which they call Cantor combs.

Subsequently Dregi and Lokshtanov~\cite{DregiL14, dregi2017beyond} showed that, for trees, these substructures are the only ones that force bandwidth to be large. In particular, they define the following generalization of Cantor Combs. 
A {\em skewed $b$-Cantor comb} of depth $1$ is the path on two vertices where one of the two vertices is its backbone. 
%Its backbone consists 
%
For $k > 1$, a caterpillar $T$ with backbone $B$ is a skewed $b$-Cantor comb of depth $k$ if there exists a subpath $Q$ of the backbone $B$ of $T$ such that $T-Q$ has precisely $3$ connected components, say $C_L$, $C_R$ and $P$, such that $C_L \cap B$ and $C_R \cap B$ are non-empty, $C_L$ with backbone $C_L \cap B$ is a skewed $b$-Cantor comb of depth $k-1$,  $C_R$ with backbone $C_R \cap B$ is a skewed $b$-Cantor comb of depth $k-1$, and  $P$ is disjoint from $B$, and $P$ is a path of length at least $2bd$, where $d$ is the maximum distance from the neighbor of $P$ on $Q$ to a vertex in $B$ (see Figure~\ref{fig:cantorCombs}). 
%\todo{figure? E: $v\in C_R \cup C_L$?? DL: {\em Changed, does this help?} A figure will clarify this too.}

\begin{figure}[h]
    \centering
    \includegraphics[width=0.9\textwidth]{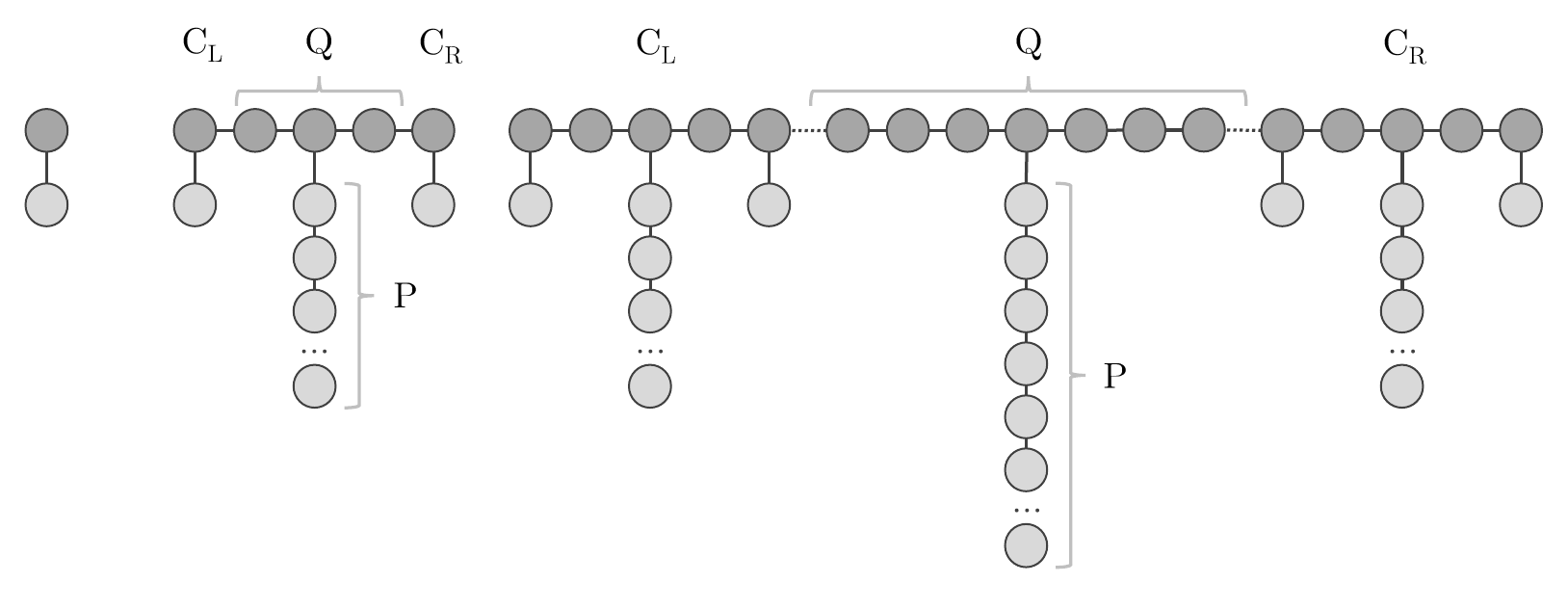}
    \caption{\em Example of skewed 3-Cantor combs of depth $1$, $2$, and $3$ respectively. The darkened path of each graph is its backbone.}
    \label{fig:cantorCombs}
\end{figure}

Dregi and Lokshtanov~\cite{DregiL14, dregi2017beyond} prove that for every $k$, skewed $k$-Cantor combs of depth $k$ have bandwidth at least $k$. As an extension of the parameterized approximation algorithm of Dregi and Lokshtanov~\cite{DregiL14} for the bandwidth of trees, Dregi shows the following. 

\begin{theorem}[\cite{dregi2017beyond}]\label{thm:treeBandwidth}
For every tree $T$, $\pw(T) \geq k+1$, or $\hat{\Delta}(T) \geq k+1$, or $T$ contains a skewed $(k+1)$-Cantor comb of depth $k+1$ as a subgraph, or $\bw(T) \leq (5k)^{6k}$.
\end{theorem}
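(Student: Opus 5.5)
We prove the contrapositive: given a tree $T$ with $\pw(T) \leq k$, $\hat{\Delta}(T) \leq k$ and no skewed $(k+1)$-Cantor comb of depth $k+1$ as a subgraph, we build a layout of bandwidth at most $(5k)^{6k}$. The plan is an induction on $k$ driven by the pathwidth, and it follows the structure of the parameterized approximation for trees in~\cite{DregiL14}.

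\textbf{Step 1 (spine).} Since $\pw(T)\le k$, a standard argument on trees gives a path $B$ in $T$ such that every component of $T-B$ has pathwidth at most $k-1$. We take $B$ as the backbone to which all the recursion is attached.

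\textbf{Step 2 (recursion on the hanging components).} Let $C$ be a component of $T-B$. It has smaller pathwidth and inherits the local density bound. It also cannot contain a skewed $(k+1)$-Cantor comb of depth $k$, or else we could combine it with $B$ and obtain a larger comb. So induction lays out $C$ with bandwidth $(5(k-1))^{6(k-1)}$. This step needs care because the comb depth drops by exactly one when we pass to $C$, and the hypothesis should really be strengthened to track comb depth and pathwidth jointly. I would state the induction as: no skewed $b$-Cantor comb of depth $d$ and pathwidth at most $p$ imply bandwidth at most $g(b,d,p)$.

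\textbf{Step 3 (interleaving along $B$).} Order the vertices of $B$ and place each component $C$ near the position of its attachment vertex on $B$. The local density bound $\hat{\Delta}(T)\le k$ implies that the ball of radius $r$ around any backbone segment contains only $O(kr)$ vertices. Hence a greedy left-to-right placement, with each component's recursive layout stretched by a factor $\mathrm{poly}(k)$, fails only where a long component, of depth large relative to its distance from $B$, is forced to spread far from its attachment point.

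\textbf{Step 4 (the main obstacle).} We must show that these conflicts cannot nest $k+1$ levels deep. If a component's layout must cover a span comparable to twice its distance to the rest of the backbone, the geometry is exactly one level of a skewed comb: a path $P$ of length at least $2bd$ hanging from $Q$, with two sides $C_L$ and $C_R$. Repeating this inside $C_L$ and $C_R$ builds the comb of depth $k+1$. I would charge each failed interval to a comb level and use a potential argument: every level multiplies the bandwidth by at most $(5k)^{6}$, which gives the bound $(5k)^{6k}$. Making this charging precise is the hard step, namely extracting disjoint left and right subcombs from overlapping failures.
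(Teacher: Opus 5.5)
The paper gives no proof of this statement; it imports it from \cite{dregi2017beyond}. Judged on its own, your outline has a genuine gap: Step~4 is not one step of the proof but the whole theorem, and you leave it undone. Steps~1--2 only peel off pathwidth, and the recursion bottoms out at caterpillars whose hairs are long paths. Such trees have pathwidth at most $2$ and can have bounded local density, and the Cantor combs are among them with unbounded bandwidth. So Steps~1--2 reduce nothing. You still have to show that a spine with low-bandwidth pieces hanging off it has bounded bandwidth when $\hat{\Delta}\le k$ and there is no skewed $(k+1)$-Cantor comb of depth $k+1$, and this is already hard for caterpillars with long hairs. Step~3 never defines the greedy placement or what it means for it to ``fail''. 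Step~4 gives no mechanism for producing, from nested failures, two vertex-disjoint depth-$k$ subcombs $C_L$, $C_R$ on a \emph{common} backbone, a segment $Q$ between them carrying exactly one hair, and that hair of length at least $2bd$ with $d$ measured to the whole backbone. You acknowledge that this is the hard part, so the proposal is a plan rather than a proof.

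Step~2 also rests on a false claim. A component $C$ of $T-B$ containing a skewed $(k+1)$-Cantor comb of depth $k$ does not give a deeper comb in $T$. A depth-$(k+1)$ comb contains two disjoint depth-$k$ combs plus a further hair attachment, so it has more than twice as many vertices of degree at least $3$ as a depth-$k$ comb. Hanging one depth-$k$ comb off a bare path by a single edge adds at most two such vertices, so for $k\ge 3$ the resulting tree contains no depth-$(k+1)$ comb. Comb depth does not drop; only pathwidth does. Your $(5k)^{6k}$ accounting works only if the two coincide, which is exactly this false claim; otherwise the per-pathwidth-level and per-comb-level losses multiply.

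The induction must therefore keep the density bound and the comb parameters $(b,d)=(k+1,k+1)$ fixed and decrease only $p$. The theorem's own parametrization cannot be applied at $k-1$, because $\hat{\Delta}\le k-1$ and the exclusion of $k$-combs of depth $k$ are not inherited by $C$. With $(b,d)$ fixed, comb-freeness of $C$ is trivial since $C\subseteq T$.

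One way to confine all the comb reasoning to a single statement is as follows. Let $T'$ consist of $B$ together with, for each component $C$, its attaching edge and a longest path in $C$ starting at its attachment point.
\begin{itemize}
\item $T'$ is a caterpillar and a subgraph of $T$, so it inherits $\hat{\Delta}\le k$ and comb-freeness.
\item Local density gives $|C|\le 2k$ times the length of $C$'s hair. So a layout of $C$ with its root first can be compressed onto that hair with stretch $O(\beta_{p-1})$ and congestion $2k$.
\item Composing with a layout of $T'$ gives $\beta_p\le O\bigl(k\cdot\bw(T')\cdot(\beta_{p-1}+1)\bigr)$.
\end{itemize}
This is the same embed-into-a-subtree strategy the paper uses in Section~\ref{sec:intoSubtree}. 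What your proposal would then still lack is precisely a caterpillar obstruction theorem: bounded local density and no skewed comb imply bounded bandwidth. That is where the comb extraction has to be carried out. A bound polynomial in $k$ there is what would make a total of the form $k^{O(k)}$, such as $(5k)^{6k}$, possible.
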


%they show that every tree $T$ with local density $k_1$, pathwidth $k_2$ and no (generalized) Cantor comb of bandwidth at least $k_3$ as a subgraph has bandwidth at most $h(k_1, k_2, k_3)$ for a function $h$ polynomial in $k_1$ and $k_3$, and exponential in $k_2$.
%

Theorem~\ref{thm:main} together with Theorem~\ref{thm:treeBandwidth} immediately imply a forbidden subgraph characterization of bandwidth of general graphs (see Section~\ref{sec:proofsOfMain} for a proof). The proof yields a slightly tighter triple-exponential threshold; we round up to the compact form below.

\begin{theorem}\label{thm:mainInTermsOfObstructions}
For every graph $G$ if $\bw(G) > 4^{4^{(5k)^{7k}}}$ then
$G$ contains a subtree $T$ such that $\pw(T) \geq k+1$, or $\hat{\Delta}(T) \geq k+1$, or $T$ is a skewed $(k+1)$-Cantor comb of depth $k+1$.
\end{theorem}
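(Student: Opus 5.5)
The plan is to combine Theorem~\ref{thm:main} with Theorem~\ref{thm:treeBandwidth} through a contrapositive argument. Suppose $G$ has no subtree $T$ with $\pw(T) \geq k+1$, none with $\hat{\Delta}(T) \geq k+1$, and none that is a skewed $(k+1)$-Cantor comb of depth $k+1$. The goal is then to show $\bw(G) \leq 4^{4^{(5k)^{7k}}}$.

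\textbf{Step 1: every subtree of $G$ has small bandwidth.} Let $T$ be any subtree of $G$. By the hypothesis, $\pw(T) \leq k$ and $\hat{\Delta}(T) \leq k$.

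We must also rule out that $T$ contains a skewed $(k+1)$-Cantor comb of depth $k+1$ as a subgraph. This is where care is needed. The statement of Theorem~\ref{thm:treeBandwidth} speaks of a comb contained in $T$, while our hypothesis forbids subtrees of $G$ that \emph{are} combs. A comb is a tree, so a comb subgraph $C$ of $T$ is itself a subtree of $G$, and $C$ is a comb. Hence the hypothesis excludes it.

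Theorem~\ref{thm:treeBandwidth} now gives $\bw(T) \leq (5k)^{6k}$. Set $k' = (5k)^{6k}+1$. Then every subtree of $G$ has bandwidth strictly less than $k'$.

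\textbf{Step 2: apply the main theorem.} Run the algorithm of Theorem~\ref{thm:main} on $(G,k')$. It cannot output a subtree $T$ with $\bw(T) \geq k'$, by Step~1. So it must output a layout of $G$, and this layout has bandwidth at most $(10^{85} k'^{28})^{4^{k'}}$. Only the existence of this layout matters here; the running time of the algorithm is irrelevant.

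\textbf{Step 3: the arithmetic, which is the main obstacle.} We need
\[
(10^{85} k'^{28})^{4^{k'}} \leq 4^{4^{(5k)^{7k}}}.
\]
The plan is to compare base-$4$ logarithms twice.

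First, $\log_4\!\big(10^{85} k'^{28}\big) \leq 4^{c\log k'}$ for a small constant $c$. Therefore the base-$4$ logarithm of the left side is at most
\[
4^{k'} \cdot \log_4\!\big(10^{85}k'^{28}\big) \leq 4^{k'+O(\log k')}.
\]

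Second, it suffices to show $k' + O(\log k') \leq (5k)^{7k}$. Since $k' = (5k)^{6k}+1$, the factor $(5k)^{k} \geq 5$ gives ample room for all $k \geq 1$.

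The regime $k \geq 1$ small should be checked directly. At $k=1$, we have $k' = 6$, and the left side is about $(10^{85}\cdot 6^{28})^{4096}$. Its base-$4$ logarithm is roughly $4096 \cdot 190 < 10^6$, while $(5)^{7} = 78125$ yields $4^{78125}$. So the $k=1$ case holds with large slack.

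The case $k=0$ is handled trivially or excluded: an edge already has $\hat{\Delta} \geq 1$, so $G$ has no edges and bandwidth $0$. I expect only this constant bookkeeping, not any structural issue, to require attention.
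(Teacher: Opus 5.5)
Your proposal is correct and follows essentially the same route as the paper. The paper also argues by contrapositive, uses Theorem~\ref{thm:treeBandwidth} to bound every subtree's bandwidth by $(5k)^{6k}$, applies Theorem~\ref{thm:main} with parameter $(5k)^{6k}+1$, and compares iterated base-$4$ logarithms; your observation that a comb contained in a subtree of $G$ is itself a subtree of $G$ fills a small step the paper leaves implicit.
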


%\begin{proof}
%Let $G$ be a graph such that $G$ does not contain a subtree $T$ such that $\pw(T) \geq k+1$, or $\hat{\Delta}(T) \geq k+1$, or $T$ is a skewed $(k+1)$-Cantor comb of depth $k+1$.
%
%By Theorem~\ref{thm:treeBandwidth} the bandwidth of every subtree of $G$ is at most $(5k)^{6k}$. Then, by Theorem~\ref{thm:main}, the bandwidth of $G$ is at most 
%$$(10^{84} \cdot 4^{11 {(5k)^{6k}}} \cdot {(5k)^{24k}})^{4^{(5k)^{6k}}} \leq (10^{84} \cdot 4^{{(8k)^{6k}}} \cdot {(5k)^{24k}})^{4^{(5k)^{6k}}} \leq 4^{4^{{5k}^{7k}}}$$
%\end{proof}

%result of Dregi and Lokshtanov gives a characterization of the substructures of a graph $G$ that force large bandwith. In particular every graph $G$ with no subtree of local density at least $k_1$, no subtree of pathwidth at least $k_2$ and no Cantor comb of bandwidth at least $k_3$ as a subgraph has bandwidth upper bounded as a function of $k_1$, $k_2$ and $k_3$. 

Forbidden subgraph theorems such as Theorem~\ref{thm:mainInTermsOfObstructions} for graph width parameters are central in algorithmic and structural graph theory. 
Theorem~\ref{thm:mainInTermsOfObstructions} is to bandwidth what the celebrated Grid Minor Theorem~\cite{robertson1986graph,chekuri2016polynomial,chuzhoy2015excluded,chuzhoy2021towards} is to treewidth (There exists a $c$ such that every graph of treewidth at least $ck^{9}(\log k)^c$ contains a grid of treewidth at least $k$ as a minor, or equivalently~\cite{kawarabayashi2018new}, a subdivision of a wall of treewidth $k$ as a subgraph),
%subgraph a subdivision of a wall of treewidth at least $k$, or equivalently~\cite{kawarabayashi2018new}, every graph of treewidth at least $\Omega(k^9)$ contains as a minor a grid of treewidth at least $k$), 
%
and the analogous results for 
%the forbidden forest~\cite{} theorem for 
pathwidth~\cite{robertson1983graph, bienstock1991quickly, cattell1996simple} (every graph of pathwidth more than $\frac{5 \cdot 3^{k-1} - 5}{2}$ contains a tree of pathwidth at least $k$ as a subgraph), 
and tree-depth~\cite{nesetril2014sparsity} 
%
%and the forbidden sub-path theorem~\cite{} for tree-depth 
(every graph of tree-depth at least $2^k$ contains as a subgraph a path of tree-depth at least $k$).

%\todo[inline]{E: wall? meaning a grid? {\em DL, no wall as a subgraph, grid as a minor. I wanted to stick to subgraphs. Will add a citation to what a wall is. Is it ok now? E:Yes.}}

\subsection{Proof Outline.}
In this overview we will focus solely on the structural aspect of the statement of Theorem~\ref{thm:main}, namely that every graph $G$ of huge bandwidth contains a subtree $T$ of large bandwidth. Every step of the proof (except one) is algorithmic, and so the parameterized approximation algorithm essentially follows from the proofs of the structural statements. 

Rather than working with layouts, it is more convenient to work with a more general object that we call {\em embeddings}. An embedding of a graph $G$ into a graph $H$ is simply a function $f : V(G) \rightarrow V(H)$. We define the {\em stretch} of the embedding $f$ to be the maximum over all edges $uv \in E(G)$ of the shortest path distance between $f(u)$ and $f(v)$ in $H$.
The {\em congestion} of an embedding $f$ is the maximum taken over all vertices $v$ in $H$ of the number of vertices $u$ in $G$ such that $f(u) = v$. We remark that Bienstock~\cite{bienstock1990embedding} also studied a combinatorial object called embeddings, this object is closely related to, but not quite the same as, embeddings as considered in this paper. The embeddings of Bienstock~\cite{bienstock1990embedding} need to be injective, and their congestion is measured differently than in this work.

It is not too difficult to see that the bandwidth of a graph $G$ is precisely the minimum stretch of an embedding of $G$ into a path with congestion $1$. While not directly relevant to our work, embeddings also generalize the tree-partition-width~\cite{ding1995some,wood2009tree} of a graph $G$. The tree-partition-width of $G$ is precisely the minimum congestion of a stretch at most $1$ embedding of $G$ into a tree. 

We have seen that embeddings into a path generalize graph layouts. In fact they are essentially equivalent: an embedding of $G$ into a path $P$ with stretch $s$ and congestion $c$ can be ``stretched out'', by arbitrarily breaking ties, into an embedding of $G$ into a path $Q$, which is $c$ times longer than $P$, such that the resulting embedding of $G$ into $Q$ has congestion $1$ and stretch at most $c(s+1) - 1$. Since we are only concerned with graphs whose bandwidth is a constant, it suffices to consider embeddings whose stretch and congestion is also constant, without worrying too much about what precisely that constant is. 
Throughout this overview we will refer to quantities that are upper bounded by a function of $k$ with terms such as ``small'' or ``low''. 

Embeddings exhibit a compositional property: if $h_1$ is an embedding of $G$ into $H_1$ and $h_2$ is an embedding of $H_1$ into $H_2$, then the function $h : V(G) \rightarrow V(H_2)$ defined as $h(u) = h_2(h_1(u))$ is an embedding of $G$ into $H_2$. Furthermore the stretch of $h$ is at most the stretch of $h_1$ times the stretch of $h_2$, and the congestion of $h$ is at most the congestion of $h_1$ times the congestion of $h_2$. 

Composition of embeddings yields a simple proof of the following fact: if $S$ is a small set of vertices of bounded degree, and $G - S$ has small bandwidth, then $G$ also has small bandwidth. Consider a low stretch and congestion embedding of $G - S$ into a path $P$. Then make a low stretch and congestion embedding of $P$ into another path $Q$ where we ``fold over" $P$ such that every vertex of $P$ that contains (the image of) a vertex in $N(S)$ gets mapped to the second vertex of $Q$. 
Since $S$ is a small set of small degree $N(S)$ is also small, and we only need to ``fold over" the path $P$ $N(S)$ many times. 
Composing the two embeddings yields a low stretch and congestion embedding of $G-S$ where all vertices of $N(S)$ get mapped to the second vertex of $Q$. We can now extend this to a low stretch and congestion embedding of $G$ into $Q$ by mapping all vertices of $S$ to the first vertex of $Q$.

Composition also leads to the following natural line of attack toward proving Theorem~\ref{thm:main}: suppose we can embed $G$ into a ``simpler'' graph $H$ (for some appropriate notion of simpler that ensures that we will only simplify few times). If $H$ has low bandwidth then $G$ does too. But what if $H$ has high bandwidth? We can avoid this problem if $H$ is a subgraph of $G$. In this case we know that the bandwidth of $H$ is at most the bandwidth of $G$. 
Our proof gives the ``easiest'' instantiation of this idea, in the sense that we will find a subtree $T$ of $G$ and an embedding of $G$ into $T$ with low stretch and congestion. If $T$ has high bandwidth we win, since we just found a subtree with high bandwidth. If $T$ has low bandwidth we also win, because then $G$ has low bandwidth by the composition of embeddings. 
    
Not every graph $G$ can be embedded into a subtree with constant stretch and congestion, for example a grid cannot (we do not prove this, since we do not need it). Thus, in our search for a subtree $T$ and an embedding of $G$ into $T$ we will also accept an alternative outcome of directly finding a subtree $T$ of $G$ of large bandwidth. For example, if $G$ has large local density (that is, $\hat{\Delta}(G) \geq 2k$) then there is a breadth first search tree of $G$ that has local density at least $k$, and therefore bandwidth at least $k$. If $G$ has large pathwidth (that is, $\pw(G) > \frac{5 \cdot 3^{k-1} - 5}{2}$) then $G$ has a subtree of pathwidth at least $k$, and hence of large bandwidth. Thus, throughout the proof we will assume that $G$ has small local density and small pathwidth. 

Our proof proceeds by induction on the maximum pathwidth $\tau$ of a subtree of $G$. It is known that the pathwidth of $G$ is functionally equivalent (see Theorem~\ref{thm:pwObstructions}) to the maximum pathwidth of a subtree of $G$, so one would expect that this is just an unnecessarily complicated way of doing induction on the pathwidth of $G$. Indeed, the induction can be carried out on either parameter, but each induction step in our proof incurs a cubic blowup in the bandwidth bound, so the number of induction steps appears in a doubly-exponential position in the final bound. Inducting on the maximum subtree pathwidth, which is at most $k$, keeps this exponent at $k$, whereas inducting on $\pw(G)$ would inflate it to a much larger function of $k$ (see Theorem~\ref{thm:pwObstructions} for the precise relationship).

Working with subtree pathwidth also lets us exploit a nice characterization, due to Scheffler~\cite{scheffler1990linear}, of the pathwidth of trees: a tree $T$ has pathwidth at least $k$ if and only if $T$ contains a vertex $v$ such that $T-v$ has at least three connected components of pathwidth at least $k-1$.

%\todo[inline]{DL: After thinking about this I think we should actually be able to do the proof using pathwidth of $G$, and do the packing argument on subgraphs of maximum pathwidth instead of subtrees? I believe this should make the whole bound single exponential instead of double exponential, but would make it slightly harder to find the packing algorithmically ... but now i think we should be able to do this, and get explicit dependence on $k$ in the running time and only single exponential dependence... Resolved after several rewrites: doing the argument on max pathwidth subtrees makes the argument cleaner but the bound quite a bit worse, even with the more clever bound that pw <= poly(local density, pathwidth of subtrees)}

Let $\tau$ be the maximum pathwidth of a subtree of $G$. Consider a maximum packing of subtrees of pathwidth $\tau$, that is a maximum cardinality set $\{T_1, T_2, \ldots, T_\rho\}$ of subtrees of $G$, such that $V(T_i) \cap V(T_j) = \emptyset$ for every pair of distinct integers $i$, $j \leq \rho$, and each tree $T_i$ has pathwidth precisely $\tau$. To find $\tau$ and a packing, we combine the algorithm of Hicks~\cite{Hicks04} for minor containment, the bound on the size of minor-minimal trees of a given pathwidth~\cite{cattell1996simple}, and an algorithm for computing path decompositions~\cite{furer2016faster,BodlaenderJT23}.

%\todo[inline]{using local density not pathwidth}
If the size $\rho$ of the packing is small we can use that $G$ has low local density to prove an Erd\"{o}s-P\'{o}sa-like (see~\cite{Erdos_Posa_1965,raymond2017recent}) result: in this case there exists a small vertex set $S$ such that $G - S$ has no subtree of pathwidth $\tau$. Then, by induction on $\tau$, $G-S$ either has a subtree of large bandwidth or $G-S$ has low bandwidth. We already saw that if $G-S$ has low bandwidth for a small set of low degree vertices (and all vertices in $G$ must have low degree, because local density is small), then $G$ has low bandwidth as well. Thus, in the case when the number $\rho$ of trees in the packing is small we are done. 

We now consider the case when the number $\rho$ of trees in the packing is large. In this case we first do a simple pre-processing step: whenever some vertex $v$ has a neighbor in a tree $T_i$, but $v$ is not in any other tree $T_j$ for $j \neq i$ we add $v$ to $T_i$. This ensures that the union of the vertex sets of the trees $T_1, \ldots, T_\ell$ covers all the vertices of the graph.

Let $H$ be the graph obtained from $G$ by contracting each tree in the packing to a single vertex. 
If $H$ has a vertex of degree at least $3$, then  that there are $4$ trees in the packing, say $T_1$, $T_2$, $T_3$ and $T_4$ such that $T_1$ has a neighbor in each of $T_2$, $T_3$ and $T_4$. Make a subtree $T$ of $G$ by taking the four trees $T_1$, $T_2$, $T_3$ and $T_4$ and adding, for each $T_i \in \{T_2, T_3, T_4\}$, an edge from a vertex in $T_1$ to a neighbor of that vertex in $T_i$.
Let $v$ be the unique vertex in $T_1$ which lies on the $T_2$-$T_3$ path, the $T_2$-$T_4$ path and the $T_3$-$T_4$ path in $T$, and apply Scheffler's characterization~\cite{scheffler1990linear} of pathwidth of trees to $v$.
Since each of $T_2$, $T_3$ and $T_4$ has pathwidth at least $\tau$ it follows that $T$ has pathwidth at least $\tau+1$ contradicting that $\tau$ was the maximum pathwidth of a subtree of $G$.
We conclude that the maximum degree of $H$ is $2$, in other words $H$ is either a path or a cycle.
The cycle case (essentially) reduces to the path case by removing only a few vertices, 
%and removing a small set of small degree vertices can't substantially decrease bandwidth,
so we might as well consider only the path case. 

%\todo[inline]{DL: we should probably update how we handle the cycle case by removing edges instead of vertices, because then we can directly reduce to the path case, making the definition of weak pre-layouts / pre-layouts a little bit cleaner. Thought through and resolved; not worth the pain.}

%\smallskip
Suppose now that $H$ is a path. We re-order the trees so that each tree $T_i$ only has neighbors in $T_{i-1}$ and $T_{i+1}$. 
The same Erd\"{o}s-P\'{o}sa style argument that we used for the case when $\rho$ is small can be used to show that, for every $i$ and small number $t$, the graph $G[V(T_i) \cup V(T_{i+1}) \ldots \cup V(T_{i+t-1})]$ induced by the vertex sets of $t$ consecutive subtrees in the packing, contains a small vertex set that hits every subtree of pathwidth at least $\tau$.
Additionally, since the local density of $G$ is small, there is an abundance of small separators everywhere in the graph. 
We use these two properties to prove the following decomposition theorem for $G$.
There exists a sequence $V_1, V_2, \ldots, V_\ell$ of vertex sets of $G$ such that:
\begin{enumerate}\setlength{\itemsep}{-.5pt}
    \item  Only consecutive pieces can have non-empty intersection: if $V_i$ and $V_j$ have non-empty intersection then $|i-j| \leq 1$.
        
    \item Each piece $V_i$ induces a connected subgraph of $G$. In fact a slightly stronger property holds: $G[V_i] - (V_{i-1} \cup V_{i+1})$ is connected. 
        
    \item Every vertex of $G$ appears in at least one and in at most two parts $V_i$. 
    
    \item Every edge of $G$ has both endpoints in at least one part $V_i$.
    
    \item Every boundary is small: for every part $V_i$, $|V_i \cap V_{i+1}|$ is small. This implies that for every $i$, only few vertices in $V_i$ can have neighbors outside of $V_i$. 
    
    \item  Each piece $G[V_i]$ is structurally simpler than $G$: in each piece $G[V_i]$ there is a small size set $Z_i$ such that $G[V_i]-Z_i$ does not have any subtree of pathwidth $\tau$. In particular we can apply induction on each $G[V_i]-Z_i$ and obtain for each piece $G[V_i]$ either a subtree of large bandwidth or a layout of small bandwidth. 
\end{enumerate}
For a reader familiar with path decompositions the above just states that we can find a path decomposition of $G$ such that adhesions are small, and every bag $V_i$ induces a {\em connected} subgraph $G[V_i]$ which is strictly simpler than $G$. 
  
For each part $G[V_i]$ of the decomposition we obtain either a subtree of large bandwidth, or a layout of $G[V_i]-Z_i$ of small bandwidth. Since each $Z_i$ is small, in the latter case we obtain a small bandwidth layout of the part $G[V_i]$. If either one of the parts has a subtree of large bandwidth, then this is a subtree of $G$ of large bandwidth and we are done. Thus we only need to handle the case when all the pieces $G[V_i]$ have small bandwidth. In this case we will produce a spanning tree $T$ of $G$ and a low stretch and congestion embedding of $G$ into $T$. 

When picking the spanning tree $T$ we crucially exploit the connectivity property of the decomposition. In particular we make the spanning tree $T$ in such a way that for every part $V_i$ of the decomposition, the smallest subtree $T_i$ of $T$ that contains $V_i$ is contained in $V_{i-1} \cup V_i \cup V_{i+1}$ (these trees $T_1, \ldots, T_\ell$ are {\em not} the same as the trees we used to make the decomposition $V_1, \ldots, V_\ell$).
Then we use the fact that $G[V_i]$ has bounded bandwidth to find a low stretch and low congestion embedding $f_i$ of $G[V_i]$ into $T_i$. 
We also ensure that the embeddings of different parts are consistent: for every $i$ and every vertex $v$ in $V_{i} \cap V_{i+1}$, $f_i(v) = f_{i+1}(v)$.
%the embeddings of $G[V_i]$ into $T_i$ and $G[V_{i+1}]$ into $T_{i+1}$ map $v$ to the same vertex. 
Thus we obtain an embedding $f$ of $G$ into $T$. The stretch of the embedding $f$ is just the max of the stretch of the individual embeddings $f_i$ of $G[V_i]$ into $T_i$. The congestion of $f$ is larger than the congestion of the individual embeddings $f_i$, but not by much. 
Since each embedding $f_i$ maps $G_i$ to $T_i$, whose vertex set is a subset of $V_{i-1} \cup V_i \cup V_{i+1}$, each vertex of $G$ is in the range of at most three embeddings $f_i$. Hence the congestion of $f$ is at most three times larger than the maximum congestion of an $f_i$.

In very broad strokes we are done - we either found a subtree of $G$ of large bandwidth or we found an embedding of $G$ with low stretch and low congestion into a spanning tree of $G$. However, several questions remain. Most prominently, why does each part $G[V_i]$ having low bandwidth imply that the part can be embedded with low stretch and congestion into $T_i$? And why does ensuring that the embeddings of consecutive parts are consistent not force stretch or congestion to go up by too much? We briefly sketch how we address each of these issues. 

Let us start with the easiest one of the two questions above: why does each part $G[V_i]$ having low bandwidth imply that the part can be embedded with low stretch and congestion into the tree $T_i$?
Here we exploit the compositionality of embeddings. Since $G[V_i]$ embeds nicely into a path $P$ on $|V_i|$ vertices, it is sufficient to find an embedding of $P$ into $T_i$ with low stretch and congestion.
It turns out that a path $P$ can be embedded into {\em any} connected graph $G'$ on at least $|V(P)|/2$ vertices with stretch and congestion at most $2$, by following a depth first search (DFS) traversal of $G'$ and writing down vertices of $G'$ when they are first visited by the DFS and when the recursive call corresponding to the vertex terminates. 
%every time the DFS finishes the recursive call corresponding to the vertex. 
This gives a sequence $v_1, v_2, \ldots,$ of vertices of $G'$ in which every vertex of $G'$ appears twice and consecutive vertices in the sequence are either adjacent or have a common parent in the DFS tree. We obtain the desired embedding of $P$ into $G'$ by mapping the $i$'th vertex of $P$ to the $i$'th vertex in the sequence. 

\smallskip
We now know that each $G_i$ can be embedded into $T_i$ with low stretch and congestion. However we still do not know how to coordinate the embeddings in such a way that we get a single well-defined embedding of $G$ into $T$.
Let $S_i = V_i \cap (V_{i-1} \cup V_{i+1})$. By the properties of our decomposition, $S_i$ is a small set. In order to coordinate the embedding $f_i$ of $V_i$ with the other embeddings, it is sufficient to consider the restriction of $f_i$ to $S_i$, that is $f_i(v)$ for every $v \in S_i$. 
We therefore investigate, given two connected graphs $G$ and $R$ such that $G$ has small bandwidth, a small vertex set $S \subseteq V(G)$ and a function $r : S \rightarrow V(R)$, what are the necessary and sufficient conditions for there to exist a low stretch and congestion embedding of $G$ into $R$ which agrees with $r$.

A few necessary conditions are quite immediate. First, $|V(R)|$ cannot
%\todo{look for can't} 
be much smaller than $|V(G)|$, otherwise congestion will be large. 
Second, every pair of vertices in $S$ that are close together in $G$ cannot be mapped far apart in $R$, otherwise stretch will be large. 
Surprisingly, these two necessary conditions are sufficient! In particular, for graphs $G$, $R$, a vertex set $S \subseteq V(G)$ and a function $r : S \rightarrow V(R)$, the $S$-{\em stretch} of $r$ is defined as $\max_{u,v \in S} \frac{d_R(r(u), r(v))}{d_G(u,v)}$.
We prove the following theorem, which is of independent interest. 

\begin{theorem}\label{lem:lowStretchToEmbedding}
There exists a polynomial time algorithm that takes as input
a connected graph $G$, positive integers $k$, $s$, $r$
such that $r \geq 2$,
a layout $f$ of $G$ of bandwidth at most $k$,
a non-empty vertex set $S$ in $G$,
a connected graph $H$ such that $|V(H)| \geq |V(G)|/r$
and a function $h : S \rightarrow V(H)$ with $S$-stretch at most $s$,
and outputs an embedding $g$ of $G$ into $H$ with
stretch at most $480(|S|+1)^3(k+1)^2s$
and congestion at most $1300(|S|+1)^4(k+1)r$
such that for every $s \in S$, $g(s) = h(s)$.
\end{theorem}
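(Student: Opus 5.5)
The plan is to reduce to the case where $G$ is a path, and then do the work inside $H$ using the DFS-sequence embedding described in the proof outline.

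\emph{Step 1: reduce to $G$ being a path.} The layout $f$ lists the vertices as $u_1,\dots,u_n$ in layout order. The map $u_j \mapsto j$ embeds $G$ into the path $P=p_1\cdots p_n$ with stretch at most $k$ and congestion $1$. By composition of embeddings it then suffices to embed $P$ into $H$ with a restricted map on $S'=\{p_{f(v)} : v\in S\}$.

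The catch is that the $S$-stretch of $h$ is measured against $d_G$, not $d_P$. Two vertices of $S$ can be close in $G$ yet far apart in $P$. Conversely, $d_P \le k\, d_G$, so closeness in $P$ does not say anything bad, but distance in $P$ does not imply distance in $G$.

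To handle this, I would first group $S$ into clusters. Build an auxiliary graph on $S$ by joining $u,v$ whenever $d_G(u,v)$ is small compared with their index gap in the layout. Within a cluster, the $h$-images are close in $H$. For each cluster, the layout interval it spans, of length at most about $|S|\cdot k\cdot d_G$, will be folded in the same way as in the ``fold over'' trick from the outline. The segment of $P$ between cluster members is mapped back and forth so that all cluster members land near one another.

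This folding costs factors polynomial in $|S|$ and $k$ in stretch and congestion. I expect this is where the $(|S|+1)^3(k+1)^2$ factors come from.

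\emph{Step 2: the core path-into-graph lemma.} The lemma to prove is this. Let $P$ be a path on $n$ vertices and $H$ a connected graph with $|V(H)|\ge n/r$. Let $p_{i_1},\dots,p_{i_m}$ be prescribed vertices with targets $x_1,\dots,x_m\in V(H)$, and suppose $d_H(x_j,x_{j+1}) \le s\,(i_{j+1}-i_j)$. Then $P$ embeds with stretch $O(s)$ and congestion $O(m r)$, hitting the targets.

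The proof processes the segments between consecutive prescribed vertices independently. Consider a segment of length $L=i_{j+1}-i_j$ from $x_j$ to $x_{j+1}$.

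If $L$ is small relative to $|V(H)|$, walk along a shortest $x_j$–$x_{j+1}$ path of length at most $sL$. Spread the $L$ path vertices along it, using steps of at most $s$, and pad by standing still. That padding costs congestion, so instead of standing still, oscillate on an edge or walk a short DFS excursion.

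If $L$ is large, use the DFS-sequence trick. Take a DFS closed walk of $H$ rooted at $x_j$, which visits each vertex at most twice with steps of distance at most $2$. Traverse it roughly $L r/|V(H)|$ times, then a shortest path to $x_{j+1}$. Rescale the speed so that each segment deposits at most $O(r)$ vertices per $H$-vertex per traversal.

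Summing over the $m+1$ segments gives congestion $O((m+1)r)$. Combined with the Step 1 blow-up, this gives the $(|S|+1)^4(k+1)r$ congestion bound.

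Also handle the two end segments of $P$, before the first prescribed vertex and after the last one. These have only one constraint, so they are free walks of the same kind.

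\emph{Main obstacle.} The hard part is Step 1: reconciling the metric of $G$ with the order of the layout. Points of $S$ that are adjacent-ish in $G$ but far apart in $f$ force the path embedding to revisit the same region of $H$. The paper's bounds are $\mathrm{poly}(|S|,k)$, so I would first try this argument: a $G$-geodesic between $u,v\in S$ of length $d$ has layout span at most $kd$. Then merge overlapping ``geodesic intervals'' in the layout into at most $|S|$ blocks. Inside each block, the anchored images are within distance $s\cdot(\text{diameter})$ of each other. The block can then be mapped by a single folded walk around one anchor, losing at most factors of $|S|k$ per block.

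Everything constructed is explicit (shortest paths, DFS), so the procedure runs in polynomial time.
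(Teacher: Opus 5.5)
Your Step 2 is fine. It is essentially the second half of the paper's argument: Lemmas~\ref{lem:pathWithOneEndpointsGeneral} and~\ref{lem:pathWithTwoEndpointsGeneral}, applied to the segments between consecutive images of $S$. The gap is in Step 1, where you state the obstruction backwards. Since $|f(u)-f(v)|\le k\,d_G(u,v)$, vertices of $S$ that are close in $G$ are automatically close in the layout, and vertices far apart in the layout are far apart in $G$. So the pairs you cluster (small $d_G$ relative to the index gap) already satisfy your Step 2 hypothesis $d_H(x_j,x_{j+1})\le s\,(i_{j+1}-i_j)$, up to a factor.

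The pairs that break that hypothesis are those that are \emph{close in the layout but far apart in $G$}. Step 2 needs $d_G$ between consecutive prescribed vertices to be at most a bounded multiple of their line distance, i.e.\ bounded $S$-contraction, and $f$ need not have this. Folding only shortens line distances, so it makes matters worse.

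For a concrete failure, take $G$ the path $x_1x_2\cdots x_m$ with the bandwidth-$2$ layout $x_1,x_m,x_2,x_{m-1},\ldots$, $S=\{x_1,x_m\}$, $H$ a path on $m$ vertices, and $h$ sending $x_1,x_m$ to its two ends (so $s=1$). The theorem promises an embedding of constant stretch. But any bounded-stretch embedding of the layout path into $H$, with or without a preliminary folding, keeps the images of positions $1$ and $2$ within constant distance, whereas they are prescribed to be at distance $m-1$. Your single block is all of $G$, and a folded walk around one anchor cannot reach the other.

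The missing ingredient is a \emph{new} embedding of $G$ into the line, not obtained by post-processing the layout path. It must have stretch and congestion polynomial in $|S|$ and $k$ and also bounded $S$-contraction. This is Lemma~\ref{lem:simultaneousEmbedding} in the paper, built in three steps:
\begin{enumerate}
\item Move $S$ to the front of the layout (Lemma~\ref{lem:makeSFirst}).
\item Take the components $C_i$ of $G[\{v_1,\ldots,v_i\}]$ containing $v_i$. These form a laminar family, and their traces on $S$ give at most $2|S|-1$ distinct leaf sets. From them the paper builds a tree on $V(G)$ with at most $2|S|$ leaves (Lemma~\ref{lem:treeEmbedding}). $G$ embeds into this tree by the identity with stretch at most the new bandwidth $k'$, congestion $1$, and $S$-contraction at most $\max(2|S|-1,2k')$.
\item Embed this few-leaf tree into $P_{\mathbb{Z}}$ by applying Matou\v{s}ek's theorem (Theorem~\ref{thm:lineEmbedding}) to its leaves, branch vertices and $S$, and interpolating along the connecting paths (Lemma~\ref{lem:embedFewLeafTree}). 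This step has $S$-contraction $1$.
\end{enumerate}
After composing, consecutive images of $S$ on the line are at distance at least $d_G(s_i,s_{i+1})/(4(k+1)(|S|+1))$. That is exactly the hypothesis your Step 2 needs, with $s$ replaced by $4(k+1)(|S|+1)s$. The $(|S|+1)^3(k+1)^2$ factors come from this stretch-times-contraction product, not from folding.
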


%The proof of this fact is not very long, but still quite non-trivial. 
A key step towards the proof of Theorem~\ref{lem:lowStretchToEmbedding} is to show that, for every graph $G$ of low bandwidth and small set $S$ of vertices in $G$, there exists an embedding of $G$ into a path $P$ with low stretch, low congestion, and also low $S$-{\em contraction}. Here the $S$-contraction of an embedding $f$ of a graph $G$ into a graph $R$ is defined as $\max_{u,v \in S} \frac{d_G(u,v)}{d_R(f(u), f(v))}$.
%is the maximum ratio between the distance in $G$ between two vertices in $S$ and the distance between their images in $P$. 
In other words $S$-contraction is large when two vertices in $S$ that are far apart in $G$ get mapped closed together in $P$. 
Embeddings of metrics into other metrics with low stretch and contraction are called low {\em distortion} embeddings and have been extensively studied (see e.g.~\cite{indyk2001algorithmic,Lin02} and references within). A crucial ingredient in our proof is a classic result of Matou{\v{s}}ek~\cite{matouvsek1990bi} that every metric space with $n$ points can be embedded into $\mathbb{R}$ with contraction $1$ and stretch at most $12n$.

Armed with Theorem~\ref{lem:lowStretchToEmbedding}, all that remains to do in order to complete the proof of Theorem~\ref{thm:main} is to select for each part $V_i$ of the decomposition a mapping $h_i$ of $S_i$ to $T_i$ with low $S_i$-stretch, such that for every $i$, $h_i$ and $h_{i+1}$ agree on $S_{i} \cap S_{i+1}$. 
This final step also requires some care. Indeed it requires one additional technical property of the main decomposition, and crucially relies on $G$ having low local density, as well as the parts of the decomposition being connected. 

\paragraph{Comparison with the Conference Version.}
This paper is an updated version of the conference paper~\cite{ChudnovskyLN26}. 
The two main differences between this paper and~\cite{ChudnovskyLN26} are both in the proof of the main decomposition lemma, namely Lemma~\ref{lem:weakPreLayoutDensity}. 
The original version~\cite{ChudnovskyLN26} invoked Courcelle's Theorem, leading to a non-elementary dependence on $k$ in the running time of the algorithm. 
We instead rely on a greedy improvement scheme, which leads to a double exponential running time dependence on $k$.
Additionally, in Section~\ref{sec:weakPreLayouts} we use local density, as opposed to pathwidth, as the engine behind the 
Erd\"{o}s-P\'{o}sa style arguments which are used for the proof of Lemma~\ref{lem:weakPreLayoutDensity}. This allows us to obtain a slightly better approximation ratio than that of~\cite{ChudnovskyLN26}.

\paragraph{Overview of the Paper.} In Section~\ref{sec:prelims} we set up notation, define the notions used throughout the paper, and collect some useful results from the literature.
In Section~\ref{sec:embeddings} we develop a toolbox for working with embeddings of small stretch and small congestion. 
In Section~\ref{sec:prescribed} we prove our first main result, namely Theorem~\ref{lem:lowStretchToEmbedding}.
In Section~\ref{sec:treepackings} we study the structure of graphs that do not contain a subtree of pathwidth at least $\tau+1$, but do contain subtrees of pathwidth $\tau$. In particular we show that if we pack a maximum number of vertex disjoint subtrees of pathwidth $\tau$ in $G$ then these subtrees must be arranged in the graph as either a path or a cycle. 
In Section~\ref{sec:weakPreLayouts} we show how to use the arrangement of pathwidth-$\tau$ subtrees in $G$ (where $\tau \leq k$) to obtain a decomposition of $G$ by small separators into pieces that are structurally simpler than the original graph. Specifically the pieces of the decomposition are simple enough that we may call the algorithm recursively on each of the pieces and assume by induction that, on each of the pieces the algorithm either returns a subtree of bandwidth at least $k$ or a layout of low bandwidth. 
If either of the recursive calls return a subtree bandwidth $k$ then we are done. Thus, in Section~\ref{sec:intoSubtree} we consider the case when all of the recursive calls on the pieces of the decomposition returned low bandwidth embeddings of those pieces. In this outcome we find a special subtree $T$ of $G$ and then use the results of Section~\ref{sec:prescribed} to find a low stretch and low congestion embedding of $G$ into $T$. Hence, if $T$ has bandwidth at least $k$ we win because we found a subtree of large bandwidth. On the other hand, if $T$ has low bandwidth then composing the embedding of $G$ into $T$ with the low bandwidth layout of $T$ yields a low bandwidth layout of $G$.
In Section~\ref{sec:proofsOfMain} we put all of the results in previous sections together and complete the proofs of Theorems~\ref{thm:main} and~\ref{thm:mainInTermsOfObstructions}.
Finally, in Section~\ref{sec:conclusion} we conclude with a few remarks about potential ways to improve our results.

\section{Notation, Definitions, and Preliminary Results}\label{sec:prelims}
%\todo[inline]{minors, minor model, Menger's Theorem, vertex neighbors, vertex set neighbors}
We will mostly follow the notation and terminology for graphs as defined in the textbook of Diestel~\cite{diestelBook}. All graphs are simple and undirected unless specified otherwise. The vertex set and edge set of a graph $G$ are denoted by $V(G)$ and $E(G)$ respectively. When clear from context we will denote by $n$ the number of vertices in the graph $G$. 
For a vertex $v$ its open neighborhood is defined as $N(v) = \{u : uv \in E(G)\}$ and its closed neighborhood is $N[v] = N(v) \cup \{v\}$. For a set $S$ of vertices $N[S] = \bigcup_{v \in S} N[v]$ and $N(S) = N[S] \setminus S$.
A {\em subgraph} of $G$ is a graph $H$ with $V(H) \subseteq V(G)$ and $E(H) \subseteq E(G)$; equivalently, $H$ can be obtained from $G$ by vertex and edge deletions. We write $H \subseteq G$ to denote that $H$ is a subgraph of $G$. 
The subgraph of $G$ {\em induced by} the vertex set $S$ is denoted by $G[S]$ and defined as $G[S] = (S, \{uv \in E(G) ~:~ \{u,v\} \subseteq S\})$. 
For a vertex set $S \subseteq V(G)$ we write $G - S$ for the subgraph $G[V(G) \setminus S]$; for a single vertex $v$, $G - v = G - \{v\}$; and for an edge $e$, $G - e$ denotes the subgraph obtained by deleting $e$ (but no vertices). A {\em subtree} of $G$ is a subgraph of $G$ that is a tree (not necessarily induced).
A {\em minor model} of a graph $H$ in a graph $G$ is a family $\{X_v ~:~ v \in V(H)\}$  of connected vertex sets in $G$ such that for every pair $u$, $v$ of vertices in $H$, $X_u$ and $X_v$ are disjoint, and for every edge $uv \in E(H)$ there exists an edge between $X_u$ and $X_v$ in $G$. A graph $H$ is a {\em minor} of $G$ if there exists a minor model of $H$ in $G$.

For a path $P$ the {\em length} of the path is the number of edges on $P$. For a graph $G$ and two vertices $u$ and $v$ the {\em distance} in $G$ from $u$ to $v$ is denoted as $d_G(u, v)$ and defined as the length of a shortest (minimum length) path from $u$ to $v$. If no path between $u$ and $v$ exists the distance from $u$ to $v$ is infinite. 
The {\em diameter} of a graph $G$ is $\textsf{diam}(G) = \max_{u,v \in V(G)} d_G(u,v)$ and the {\em radius} of $G$ is $\textsf{rad}(G) = \min_{u \in V(G)} \max_{v \in V(G)} d_G(u,v)$.
$P_{\mathbb{Z}}$ is the infinite path with vertex set $\mathbb{Z}$ and edge set $\{(i, i+1) ~:~ i \in \mathbb{Z}\}$. It is well known~\cite{diestelBook} that the diameter of a graph is at least its radius, and at most twice the radius. 
$P_n$ is the path with vertex set $\{1, \ldots, n\}$ and edge set $\{(i, i+1) ~:~ 1 \leq i < n\}$.

Let $D$ and $R$ be sets and $f : D \rightarrow R$ be a function. We denote by $2^D$ the power set $\{D' ~:~ D' \subseteq D\}$ of $D$. We define the function $f^{-1} : R \rightarrow 2^D$ as $f^{-1}(r) = \{d \in D ~:~ f(d) = r\}$. For sets $S \subseteq D$ we will write $f(S)$ for $\bigcup_{x \in S} \{f(x)\}$ and for sets $S \subseteq R$ we will write $f^{-1}(S)$ for $\bigcup_{x \in S} \{f^{-1}(x)\}$.

A vertex set $S$ separates two vertices $a$ and $b$ if every path from $a$ to $b$ in $G$ passes through $S$. The vertex set $S$ separates vertex sets $A$ from $B$ if it separates every $a \in A$ from every $b \in B$.
We will use the well known Menger's Theorem (see e.g.~\cite{diestelBook}): for every graph $G$ and vertex sets $A$ and $B$ the minimum size of a vertex set $S$ that separates $A$ from $B$ is equal to the maximum number of pairwise vertex-disjoint paths from $A$ to $B$ (sharing no vertex; a single vertex in $A \cap B$ counts as a path).
The {\em internal} vertices of a path $P$ are all its vertices except for its endpoints. Two paths $P$ and $Q$ are internally vertex disjoint if no vertex is an internal vertex of both $P$ and $Q$. A well known variant of Menger's Theorem is that for every graph and vertex sets $A$ and $B$ the minimum size of a vertex set  $S$ that separates $A$ from $B$ and is disjoint from $A \cup B$ is equal to the maximum number of internally vertex disjoint paths from $A$ to $B$.

A caterpillar is a tree $T$ containing a path $B$ such that all vertices of degree $3$ or more lie on $B$. We then say that $B$ is a backbone of $T$ and every connected component of $T - B$ is a stray or a hair. 
We will use the following well-known bound on the number of non-isomorphic trees.
\begin{theorem}[\cite{otter1948number}]\label{thm:treeCount}
The number of non-isomorphic trees on at most $n$ vertices is $2^{O(n)}$.
\end{theorem}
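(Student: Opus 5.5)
This statement is Otter's classical bound (Theorem~\ref{thm:treeCount}), and the paper cites it rather than proving it. I would prove it by encoding. It suffices to bound the number of \emph{rooted ordered} (plane) trees on at most $n$ vertices. Every unlabeled tree becomes such a tree once we choose a root and an ordering of the children at each vertex, so this number is an upper bound on the number of isomorphism classes.

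To encode a plane tree on $m$ vertices, I would perform a depth first traversal starting at the root. At each step the traversal records a $1$ when it descends along an edge to a new child and a $0$ when it returns along an edge to the parent. Every edge is traversed exactly twice, so the output is a binary word of length $2(m-1)$.

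The key check is that this map is injective. The tree can be rebuilt by reading the word left to right and maintaining a stack of the current root-to-vertex path. A $1$ creates a new child of the top vertex and pushes it, placed as the next child in order. A $0$ pops the stack. Distinct plane trees therefore produce distinct words, so there are at most $2^{2(m-1)}\le 4^{m}$ plane trees on $m$ vertices.

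Summing over $m\le n$ gives at most $\sum_{m=1}^{n}4^m\le 2\cdot 4^n = 2^{O(n)}$ non-isomorphic trees on at most $n$ vertices. The only step needing care is the injectivity (reconstruction) argument, and it is routine. The paper only needs the $2^{O(n)}$ upper bound, so Otter's exact asymptotic constant is unnecessary.
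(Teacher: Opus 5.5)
Your proof is correct and complete. Choosing a root and child orderings maps isomorphism classes injectively to plane trees. The depth-first $1/0$ word of a plane tree on $m$ vertices has length $2(m-1)$ and decodes uniquely via the stack. This gives at most $4^m$ classes for each $m$ and $\sum_{m\le n}4^m\le 2\cdot 4^n$ in total.

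The paper does not prove this statement; it only cites Otter's enumeration. Otter gives the sharp asymptotics: the number of unlabeled trees on $n$ vertices grows like $C\alpha^n n^{-5/2}$ with $\alpha\approx 2.9557$. Your encoding gives only base $4$. That loss does not matter here, because the only use, the preprocessing step of Lemma~\ref{lem:decidePathwidthSubtree}, needs just $|{\cal T}_\tau|\le 2^{O(N_\tau)}$.

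Your route also gives something the citation does not. That lemma asserts the list ${\cal T}_\tau$ can be \emph{enumerated} in time $2^{O(3^\tau)}$, and a counting bound alone yields no enumeration procedure. Your encoding does: run over all binary words of length at most $2(N_\tau-1)$, decode the valid ones, and filter by pathwidth. Even without removing isomorphic duplicates the list has size $2^{O(N_\tau)}$, which is all the running-time analysis needs.
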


\paragraph{Embeddings and Layouts.}
An {\em embedding} of $G$ to $H$ is a mapping $f : V(G) \rightarrow V(H)$.
The {\em stretch} of $f$ is $\max_{uv \in E(G)} d_H(f(u), f(v))$.
The {\em congestion} of $f$ is $\max_{v \in V(H)} |f^{-1}(v)|$.
For a vertex set $S \subseteq V(G)$ the $S$-{\em contraction} of a function $f : S \rightarrow V(H)$ (or contraction of $S$ by $f$) is $\max_{u,v \in S} \frac{d_G(u,v)}{d_H(f(u), f(v))}$. Here the maximum is taken over all {\em distinct} pairs $u$, $v$ of vertices in $S$.
If $|S| \leq 1$ the $S$-contraction of $f$ is defined to be $1$, and if $f(u) = f(v)$ for two distinct vertices $u$ and $v$ in $S$ the $S$-contraction of $f$ is defined to be infinite. 
For a vertex set $S \subseteq V(G)$ the $S$-{\em stretch} of a function $f : S \rightarrow V(H)$ (or stretch of $S$ by $f$) is $\max_{u,v \in S} \frac{d_H(f(u),f(v))}{d_G(u, v)}$. Here the maximum is taken over all {\em distinct} pairs $u$, $v$ of vertices in $S$.
%\todo[inline]{Eran and Maria: $d_G(f(u), f(v))$ does not make sense. What did you mean? $\max_{u,v \in S} \frac{d_H(f(u),f(v))}{d_G(u, v)}$?}

%
The $S$-contraction and $S$-stretch of an embedding $f$ of $G$ into $H$ is the $S$-contraction ($S$-stretch) of the restriction of $f$ to the domain $S$. Note that the stretch of an embedding $f$ of $G$ to $H$ is equal to its $V(G)$-stretch. 
A {\em layout} of a graph $G$ is an embedding of $G$ into $P_{n}$ with congestion $1$. The {\em bandwidth} of a layout is its stretch, and the {\em bandwidth} of a graph $G$ is the minimum bandwidth of a layout of $G$. We will denote by $\textsf{bw}(G, f)$  the bandwidth of the layout $f$, and by $\textsf{bw}(G)$ the bandwidth of $G$.

\paragraph{(Radial) Local Density.}
The {\em ball of radius $r$ around $v$} is in $G$ denoted as $B_r(v)$ and is the set of all vertices at distance at most $r$ from $v$. If the ball is taken in a graph $G'$ other than $G$ it is denoted by $B_r^{G'}(v)$. The {\em radial local density} of $G$ is denoted by $\hat{\Delta}_r(G)$ and defined as $\max_{v,q} \frac{|B_q(v)|-1}{2q}$. It is easy to see that, equivalently, $\hat{\Delta}_r(G) = \max_{G' \subseteq G,\, \textsf{rad}(G') \geq 1} \frac{|V(G')|-1}{2 \cdot \textsf{rad}(G')}$. 
Recall that the {\em local density} of a graph $G$ is denoted as $\hat{\Delta}(G)$ and defined as $\hat{\Delta}(G) = \max_{G' \subseteq G} \frac{|V(G')|-1}{\textsf{diam}(G')}$.
%
%\todo{E: what's $\hat{\Delta}(G)$? Something looks wrong here. I agree with $\bw(G) \geq \hat{\Delta}_r(G)$. {\em local density $\hat{\Delta}(G)$ is defined in the intro, page 3}}
We have that for every graph $G$, $\hat{\Delta}(G) \geq \hat{\Delta}_r(G) \geq \hat{\Delta}(G)/2$ because for every graph its diameter is at least its radius and at most twice the radius. Thus, for every graph it holds that $\bw(G) \geq \hat{\Delta}(G) \geq \hat{\Delta}_r(G)$ (see~\cite{chinn1982bandwidth} for the simple argument that $\bw(G) \geq \hat{\Delta}(G)$).

\begin{lemma}\label{lem:densityMonotone}
For every graph $G$ and every subgraph $H$ of $G$, $\hat{\Delta}(H) \leq \hat{\Delta}(G)$ and $\hat{\Delta}_r(H) \leq \hat{\Delta}_r(G)$.
\end{lemma}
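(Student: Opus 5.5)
The local density part follows directly from the definition, because the subgraph relation is transitive. By definition, $\hat{\Delta}(H)$ is a maximum of $\frac{|V(H')|-1}{\textsf{diam}(H')}$ over all subgraphs $H'$ of $H$. Each such $H'$ is also a subgraph of $G$. Moreover, the quantity $\frac{|V(H')|-1}{\textsf{diam}(H')}$ is intrinsic to $H'$: the diameter is measured inside $H'$ itself, not inside the host graph. So the family of candidates for $\hat{\Delta}(H)$ is a subfamily of the candidates for $\hat{\Delta}(G)$, with the same values, and the maximum can only grow. The degenerate conventions (single-vertex or disconnected $H'$, where the diameter is $0$ or infinite) are identical on both sides, so they need no separate treatment.

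For the radial local density I will use the equivalent characterization stated just above. It writes $\hat{\Delta}_r(G)$ as the maximum of $\frac{|V(G')|-1}{2\,\textsf{rad}(G')}$ over subgraphs $G'$ with $\textsf{rad}(G') \geq 1$. Again the value depends only on $G'$, so the same transitivity argument gives $\hat{\Delta}_r(H) \leq \hat{\Delta}_r(G)$.

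If I prefer not to rely on that equivalence, I will argue directly from the ball definition $\max_{v,q}\frac{|B_q(v)|-1}{2q}$. For $v \in V(H)$ every path in $H$ is also a path in $G$, so $d_G(v,u) \leq d_H(v,u)$. Hence $B^H_q(v) \subseteq B^G_q(v)$, and each term on the $H$ side is at most the corresponding term on the $G$ side.

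The only point needing care is which notion of distance the radius or ball uses. With the equivalent characterization the radius is internal to $G'$, so there is no issue. With the ball formulation the inclusion of balls handles it. I therefore expect no real obstacle.
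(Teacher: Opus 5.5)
Your proposal is correct and matches the paper's proof: both quantities are maxima of values intrinsic to subgraphs, and every subgraph of $H$ is a subgraph of $G$, so the candidate set for $H$ is contained in the candidate set for $G$. Your backup argument via $B^H_q(v) \subseteq B^G_q(v)$ is also valid, and it has the small advantage of working directly from the ball definition of $\hat{\Delta}_r$ without relying on the stated equivalent characterization.
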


\begin{proof}
For both quantities, every subgraph of $H$ is also a subgraph of $G$, so the maximum ranging over subgraphs in the definition for $H$ is over a subset of the candidates appearing in the corresponding definition for $G$.
\end{proof}

A nice feature of radial local density is that every graph $G$ contains a subtree that completely preserves its radial local density.

\begin{lemma}\label{lem:preserveRadialLocalDensity}
For every connected graph $G$ there is a subtree $T$ of $G$ such that $\hat{\Delta}_r(G) = \hat{\Delta}_r(T)$.
\end{lemma}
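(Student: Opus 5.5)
The plan is to take a BFS tree from the center of a ball that attains the radial local density. Since $G$ is connected (and finite), $\hat{\Delta}_r(G)=\max_{v,q}\frac{|B_q(v)|-1}{2q}$ is attained by some pair $(v,q)$; here $q$ ranges over positive integers, since $q=0$ is excluded from the ratio. If $G$ has a single vertex, the statement is trivial with $T=G$, so assume otherwise and fix a maximizing pair $(v,q)$.

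Let $T$ be a breadth-first search tree of $G$ rooted at $v$. It spans $G$ because $G$ is connected. The key property of a BFS tree is that it preserves distances from the root: $d_T(v,u)=d_G(v,u)$ for every vertex $u$. Consequently $B_q^T(v)=B_q^G(v)$, and so
\[
\hat{\Delta}_r(T)\ \geq\ \frac{|B^T_q(v)|-1}{2q}=\frac{|B^G_q(v)|-1}{2q}=\hat{\Delta}_r(G).
\]

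The reverse inequality $\hat{\Delta}_r(T)\leq\hat{\Delta}_r(G)$ follows from Lemma~\ref{lem:densityMonotone}, since $T$ is a subgraph of $G$. More directly, distances in $T$ dominate distances in $G$, so $B_{q'}^T(w)\subseteq B_{q'}^G(w)$ for every vertex $w$ and every $q'$. Combining the two inequalities gives equality.

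There is no real obstacle. The only points needing care are that the maximum is attained (the graph is finite and the quantity takes finitely many values on each bounded range) and that the BFS tree preserves the root distances exactly, not just up to a constant factor. This exactness is precisely why radial local density, rather than local density, is preserved.
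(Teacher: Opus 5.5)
Your proposal is correct and follows essentially the same approach as the paper: take a BFS tree rooted at the center $v$ of a maximizing ball $B_q(v)$, use exact root-distance preservation to get $B_q^T(v)=B_q(v)$, and get the reverse inequality from $T$ being a subgraph of $G$. Your extra care about attainment of the maximum and the single-vertex case is harmless. For attainment, the cleaner reason is that $|B_q(v)|$ stabilizes once $q \geq n$, so the ratio only decreases beyond that point.
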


\begin{proof}
Let $v$ be a vertex of $G$ and $q$ be an integer such that $\hat{\Delta}_r(G) = \frac{|B_q(v)|-1}{2q}$. Let $T$ be a BFS-tree of $G$ rooted at $v$. Since $T$ is a subgraph of $G$ we have that $\hat{\Delta}_r(G) \geq \hat{\Delta}_r(T)$.
Further, we have that $B_q^T(v) = B_q(v)$ and hence 
$\hat{\Delta}_r(T) \geq \frac{|B_q^T(v)|-1}{2q} = \frac{|B_q(v)|-1}{2q} \geq \hat{\Delta}_r(G)$.
\end{proof}

%\todo[inline]{The discussion about radial local density, and that there is always a tree with max radial local density should probably be in the intro. }

%{\em radial local density} of a graph $G$ is the maximum over all vertices v 

\paragraph{Metrics.}
A {\em metric} is a set $S$ together with a function $d : S \times S \rightarrow \mathbb{R}$ such that for all $u$, $v$, $w$ in $S$ we have $d(u, v) \geq 0$ with equality if and only if $u = v$, $d(u, v) = d(v, u)$, and $d(u, w) \leq d(u, v) + d(v, w)$. A metric is {\em integral} if $d(u, v) \in \mathbb{Z}$ for every $u, v \in S$. Notice that for every connected unweighted graph $G$, the distance function $d_G$ is an integral metric. We will frequently make use of the following result.

\begin{theorem}[\cite{matouvsek1990bi}]\label{thm:lineEmbedding}
There exists an algorithm that takes as input an integral metric $(S, d)$, runs in time polynomial in $|S|$, and outputs a function $f : S \rightarrow \mathbb{N}$ such that for every $u, v \in S$ we have 
$d(u,v) \leq |f(u) - f(v)| \leq 12|S|d(u,v)$.
\end{theorem}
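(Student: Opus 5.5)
This theorem is cited from Matou\v{s}ek~\cite{matouvsek1990bi}. If I had to reprove it, my plan would be a minimum-spanning-tree construction: put the points on the line in a suitable order, and make the gap between consecutive points equal to their distance in the metric. Concretely, view $(S,d)$ as a complete graph with edge weights $d$ and compute a minimum spanning tree $M$. Choose an ordering $x_1,\dots,x_n$ of $S$ (defined below), set $f(x_1)=0$ and $f(x_{i+1}) = f(x_i) + d(x_i,x_{i+1})$. Since $d$ is integral and distinct points are at positive distance, $f$ is injective into $\mathbb{N}$, and it is clearly computable in polynomial time.

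The contraction bound $d(u,v)\le |f(u)-f(v)|$ holds for every ordering. For $u=x_i$ and $v=x_j$ with $i<j$, we have $|f(u)-f(v)| = \sum_{t=i}^{j-1} d(x_t,x_{t+1})$, which is at least $d(u,v)$ by the triangle inequality. So everything rests on choosing the ordering to get the expansion bound.

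For the ordering I would use a Kruskal-style hierarchy. Run Kruskal's algorithm on $M$'s edges in increasing weight. Maintain for each current component $X$ a linear order of its points, and whenever an edge $ab$ of weight $w$ merges components $A$ and $B$, concatenate their orders, possibly reversing each one. This gives an invariant: for every threshold $D$, each connected component of the subgraph of $M$ formed by edges of weight at most $D$ occupies a contiguous interval of the final order.

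The expansion bound would then go as follows. Fix $u,v$ with $D=d(u,v)$. By the cycle property of minimum spanning trees, every edge on the $M$-path from $u$ to $v$ has weight at most $D$. Hence $u$ and $v$ lie in the same component $C$ of the weight-$\le D$ forest, and every gap between them lies inside $C$'s interval. The aim is to show that the sum of all gaps inside $C$'s interval is at most about $12|C|\,D$. The natural route is to compare the ordering inside $C$ with an Euler tour of the subtree $M[C]$: if each consecutive gap is at most the length of the tour segment between the two points, the total is at most $2\,w(M[C]) \le 2(|C|-1)D$.

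The main obstacle is exactly the reversal choice when concatenating. The gap $d(\text{last}(A),\text{first}(B))$ must be charged to $w$ plus a constant number of tour traversals of $M[A]$ and $M[B]$. A naive bound like $\mathrm{span}(A)+w+\mathrm{span}(B)$ compounds exponentially over repeated merges.

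What I would try first is to keep, for each component, the invariant that one of its endpoints in the order is "close" to every merge vertex that may later be used, in the sense of being within a bounded multiple of the current threshold times the component size. Equivalently, I would arrange the order inside each component as a DFS of $M$ that visits heavier children last, so that each edge of $M[C]$ is traversed $O(1)$ times by the gaps. This would give a bound of the form $c|C|D$ with a constant $c$ such as $12$.

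If this charging argument fails, the fallback is simply to invoke Matou\v{s}ek's theorem as stated, since the paper uses it only as a black box.
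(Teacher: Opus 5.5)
Your normalization is correct, and it matches the remark the paper adds after citing Matou\v{s}ek (the paper gives no proof of its own). For a fixed ordering, setting consecutive gaps equal to the integral distances gives an injective map into $\mathbb{N}$, and the triangle inequality makes it non-contracting. So the whole theorem is the expansion bound, and that is where the proposal breaks. The problem is not just a missing charging argument. The invariant you build in, namely that for \emph{every} threshold $D$ each component of the weight-$\le D$ forest of $M$ is an interval, is by itself incompatible with the bound: on some metrics every ordering satisfying it has distortion $\Omega(n^2)$.

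Here is such a metric. Take a path $y_1\cdots y_m$ with edges of weight $N\ge 2k$, and let $Y=\{y_1,\dots,y_m\}$. Attach pendant vertices $z_1,\dots,z_k$, where $z_j$ hangs from $y_1,y_{m/2},y_m$ in cyclic order via an edge of weight $N+j$. Let $d$ be the tree metric, so $M$ is this tree.
\begin{itemize}
\item The components at threshold $N+j$ are $Y\cup\{z_1,\dots,z_j\}$ and singletons. So every admissible ordering appends each $z_j$ at one of the two ends of the current interval.
\item Among any three consecutive insertions, some two go to the same end and are adjacent in the final order. They hang from different $y$'s, so they create a gap of length at least $(m/2-1)N$.
\item Pigeonholing over the two ends and the three attachment points gives two pendants hanging from the same $y$, at distance at most $3N$, separated by $\Omega(k)$ such gaps.
\item This is stretch $\Omega(km)=\Omega(n^2)$ for $k=m=n/2$.
\end{itemize}
In particular your target $\mathrm{span}(C)\le 12|C|D$ fails for these two pendants whatever reversals you choose: $D>N+k$ forces $C=S$, while the span is $\Omega(kmN)$. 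A plain DFS order, on the other hand, does not keep the components contiguous. Then you lose the step ``all gaps between $u$ and $v$ lie inside $C$'s interval''.

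What is missing is to require contiguity only at the dyadic thresholds $2^i$, so that junction costs decay geometrically across scales.
\begin{itemize}
\item Call the components of the forest of $M$-edges of weight at most $2^i$ the scale-$i$ clusters; below scale $0$ they are singletons, since $d$ is integral.
\item Inside a scale-$i$ cluster $C$, order its scale-$(i-1)$ clusters $X_1,\dots,X_r$ by the time an Euler tour of $M[C]$ first enters them, at vertices $e_1,\dots,e_r$. Concatenate their recursively built orders.
\item The junction between $X_j$ and $X_{j+1}$ costs at most $\mathrm{diam}(X_j)+d(e_j,e_{j+1})+\mathrm{diam}(X_{j+1})$.
\item Since $\mathrm{diam}(X_j)\le(|X_j|-1)2^{i-1}$ and the tour has length $2w(M[C])\le 2(|C|-1)2^i$, all junctions inside $C$ together cost at most $3|C|2^i$.
\item Induction over scales gives $\mathrm{span}(C)<6|C|2^i$.
\item Take $i$ minimal with $2^i\ge d(u,v)$. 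The cycle property puts $u,v$ in one scale-$i$ cluster, and $2^i<2d(u,v)$.
\end{itemize}
Hence $|f(u)-f(v)|<12|S|\,d(u,v)$, which is exactly the constant in the statement.
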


We remark that the statement of Theorem~\ref{thm:lineEmbedding} in~\cite{matouvsek1990bi} is for metrics, rather than integer metrics, and for functions $f : S \rightarrow \mathbb{R}$, as opposed to  $f : S \rightarrow \mathbb{N}$. However, it is well known (see e.g.~\cite{DBLP:journals/toct/FellowsFLLRS13}) that, without loss of generality, one can assume that for every $u$, $v$ in $S$ such that $f(u) < f(v)$ and $f^{-1}(\{t~:~ f(u) < t < f(v)\}) = \emptyset$ we have $f(v)-f(u)=d(u,v)$. Hence for integral metrics the range of $f$ can be assumed to be $\mathbb{N}$.

\paragraph{Pathwidth.}
The {\em boundary} of a vertex set $S \subseteq G$ is the set $\partial(S) = \{v \in S ~:~ N(v) - S \neq \emptyset\}$ where $N(v)$ is the set of neighbors of $v$ in $G$. The {\em pathwidth} of a layout $f$ of $G$ is $\max_{i \leq n} |\partial(f^{-1}(\{1, \ldots, i\})))|$. The pathwidth of a graph $G$ is the minimum pathwidth of a layout of $G$. We will denote by $\textsf{pw}(G, f)$  the pathwidth of the layout $f$, and by $\textsf{pw}(G)$ the pathwidth of $G$.

For every layout $f$ of bandwidth $k$ and $i \leq n$ we have that 
$\partial(f^{-1}(\{1, \ldots, i\})) \subseteq f^{-1}(\{i-k+1, \ldots, i\})$, and therefore $|\partial(f^{-1}(\{1, \ldots, i\}))| \leq k$. Thus, for every layout $f$ of $G$ we have $\textsf{pw}(G, f) \leq \textsf{bw}(G, f)$, and hence for every graph $G$ we have $\textsf{pw}(G) \leq \textsf{bw}(G)$. 
%
%\todo{E: .. mention what?}
It is worth mentioning that the definition of pathwidth used here is known as the {\em vertex serparation number}~\cite{ellis1987graph}, while pathwidth is most commonly defined via path decompositions~\cite{Bodlaender98}. It is known that vertex separation number and pathwidth of a graph $G$ are the same~\cite{Bodlaender98,Kinnersley92,kinnersley1989obstruction}. We will need the following basic results about pathwidth:

\begin{lemma}[\cite{Bodlaender98}]\label{lem:pwsubgraph}
If $H$ is a minor of $G$ then $\textsf{pw}(H) \leq \textsf{pw}(G)$.
\end{lemma}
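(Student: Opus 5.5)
The plan is to work with path decompositions rather than with the layout-based definition (vertex separation number). The paper already cites the equivalence of the two notions~\cite{Bodlaender98,Kinnersley92,kinnersley1989obstruction}, so it suffices to prove minor-monotonicity for pathwidth defined via path decompositions.

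First fix a minor model $\{X_h : h \in V(H)\}$ of $H$ in $G$, and a path decomposition $(B_1,\dots,B_m)$ of $G$ of width $w=\pw(G)$. Recall what this means: every vertex lies in some bag, every edge has both endpoints in some bag, and for each vertex the bags containing it form a contiguous interval. Define new bags
\[
B'_i=\{h\in V(H) ~:~ X_h\cap B_i\neq\emptyset\}.
\]

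Next verify the three axioms for $(B'_1,\dots,B'_m)$ as a decomposition of $H$.
\begin{itemize}
\item Vertex coverage: each $X_h$ is non-empty, and any vertex of $X_h$ lies in some $B_i$, so $h\in B'_i$.
\item Edge coverage: for an edge $uv\in E(H)$ the model provides an edge $xy$ of $G$ with $x\in X_u$ and $y\in X_v$. Some bag $B_i$ contains both $x$ and $y$, so $u,v\in B'_i$.
\item Contiguity: the set of indices $i$ with $h\in B'_i$ is the union, over $x\in X_h$, of the intervals $I_x$ of bags containing $x$. For adjacent $x,y\in X_h$ the intervals $I_x$ and $I_y$ intersect, because the edge $xy$ lies in a common bag. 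Since $G[X_h]$ is connected, the union of these pairwise-chained intervals is itself an interval.
\end{itemize}

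Then bound the width. Because the branch sets are pairwise disjoint, each $x\in B_i$ belongs to at most one $X_h$. Every $h\in B'_i$ is witnessed by some such $x$, and distinct $h$ need distinct witnesses, so $|B'_i|\le |B_i|$. Hence the width of the new decomposition is at most $w$, giving $\pw(H)\le\pw(G)$. Translating back through the cited equivalence yields the statement for the layout-based definition used in this paper.

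The only step requiring care is contiguity, where the connectivity of the branch sets is essential; everything else is bookkeeping.

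If one wanted to avoid the equivalence, the main obstacle would be edge contraction in the layout language. Deletions are immediate, since restricting a layout cannot enlarge boundaries. For contraction of an edge $uv$ into a vertex $z$, I would place $z$ at the earlier of the positions of $u$ and $v$, and check that each prefix boundary in $H$ injects into the corresponding prefix boundary of $G$. Given that the equivalence is already cited, the decomposition route is the one I would carry out.
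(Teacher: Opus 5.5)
Your proof is correct. The paper gives no argument of its own for this lemma; it is simply imported from \cite{Bodlaender98}. Your construction of the bags $B'_i=\{h : X_h\cap B_i\neq\emptyset\}$ is the standard proof behind that citation: contiguity comes from connectivity of the branch sets, and $|B'_i|\le|B_i|$ comes from their disjointness.

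Its only external input is the equivalence of vertex separation number and pathwidth. The paper also cites this equivalence, so your route costs nothing relative to the paper.

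The layout-only route you sketch at the end also works, and it would make the lemma self-contained for the paper's definition. Place $z$ at the earlier of the positions of $u$ and $v$. Then each prefix $S'$ of the new layout corresponds to a prefix $S$ of the old one, and every $w\neq z$ in $\partial_H(S')$ already lies in $\partial_G(S)$. Charge $z$ to $u$ when $v\notin S$, since the edge $uv$ then crosses. When both $u,v$ lie in $S$, charge $z$ to whichever of them has a neighbor outside $S$. Vertex and edge deletions only shrink boundaries (after re-indexing positions).
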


The next theorem characterizes the pathwidth of trees. We will use Theorem~\ref{thm:treePathwidth} to compute the pathwidth of trees that we are concerned with, as well as to lower bound the pathwidth of certain trees in the proof of our main decomposition theorem. 

\begin{theorem}[\cite{scheffler1990linear}]\label{thm:treePathwidth}
For every tree $T$ and integer $k$, $\textsf{pw}(T) \geq k+1$ if and only if there exists a node $v \in V(T)$ and three distinct components $C_1$, $C_2$, $C_3$ of $T-v$ such that $\textsf{pw}(C_i) \geq k$ for every $i \in \{1,2,3\}$.
\end{theorem}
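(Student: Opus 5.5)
The statement has two directions. I will prove both using the vertex separation formulation of pathwidth from Section~\ref{sec:prelims}. For the "if" direction I will switch to the path-decomposition formulation, which is equivalent by the cited results~\cite{Bodlaender98,Kinnersley92}. The base case $k=0$ is trivial: $\pw(T)\ge 1$ if and only if $T$ has an edge, and an edge gives a vertex with a nonempty component, while three components of $\pw\ge 0$ always exist in the relevant sense. So the real work is for $k\ge 1$.

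\emph{Sufficiency.} Let $v$ and $C_1,C_2,C_3$ be as in the statement, and fix a path decomposition $(B_1,\dots,B_m)$ of $T$. Since each $C_i$ is connected, the bags meeting $C_i$ form an interval $I_i$. The bags containing $v$ also form an interval, and it meets each $I_i$, because $v$ has a neighbour in $C_i$.
\begin{itemize}
\item Let $C_a$ be the component whose interval starts leftmost, and $C_b$ the one whose interval ends rightmost.
\item Let $C_c$ be a component different from both; if $a=b$, take any other component.
\item Every bag $B_j$ with $j\in I_c$ lies weakly between a bag meeting $C_a$ and a bag meeting $C_b$.
\item The path from $C_a$ through $v$ to $C_b$ avoids $C_c$. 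By the interpolation property of path decompositions, each such $B_j$ therefore contains a vertex outside $C_c$.
\item Restricting the decomposition to $C_c$ gives a decomposition of $C_c$. Since $\pw(C_c)\ge k$, some bag has at least $k+1$ vertices of $C_c$.
\end{itemize}
Adding the extra vertex, that bag has at least $k+2$ vertices, so the decomposition has width at least $k+1$. Hence $\pw(T)\ge k+1$.

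\emph{Necessity (contrapositive).} Assume no vertex has three components of pathwidth $\ge k$; I will build a layout of vertex separation at most $k$. The plan has two steps.
\begin{enumerate}
\item Find a path $P=p_1\dots p_m$ in $T$ such that every component of $T-V(P)$ has pathwidth at most $k-1$.
\item Lay out $p_1$, then each component of $T-V(P)$ attached to $p_1$ using an optimal layout of width $\le k-1$, then $p_2$ and its attached components, and so on.
\end{enumerate}
At any cut, the boundary consists of the current $p_j$ plus the boundary inside the single partially placed hanging component, which is at most $(k-1)+1=k$. Earlier hanging components are fully placed and attach only to earlier $p_{j'}$, which have no neighbours to the right except $p_{j'+1}$. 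So only $p_j$ contributes from the prefix.

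\emph{The main obstacle is Step 1.} I would orient each edge $uv$ as $u\to v$ when the component of $T-u$ containing $v$ has pathwidth $\ge k$. By hypothesis every vertex has out-degree at most $2$.
\begin{itemize}
\item If some vertex has out-degree $0$, take $P$ to be that single vertex.
\item Otherwise, consider the set of edges oriented in both directions. If there are none, following out-arcs from any vertex must revisit an edge backwards in a finite tree, which yields a vertex of out-degree $0$; this is a contradiction, so such edges exist.
\item Show that the two-way edges form a path $P$. A branching of two-way edges at a vertex $x$ would give three components of $T-x$, each containing a heavy subtree. By minor-monotonicity (Lemma~\ref{lem:pwsubgraph}) each of these has pathwidth $\ge k$, contradicting the hypothesis.
\item Show that any component of $T-V(P)$ of pathwidth $\ge k$ would create either a further two-way edge or a third heavy component at its attachment vertex on $P$.
\end{itemize}
This case analysis on the orientation is the step I expect to need the most care.
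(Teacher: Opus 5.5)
The paper gives no proof of this statement; it only quotes Scheffler. Your plan is the standard one from the literature: a ``middle branch'' argument in a path decomposition for sufficiency, and a spine path with light hanging branches for necessity. It works for $k\ge 1$.

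The concrete error is your base case. For $k=0$ the right-hand side says exactly that $T$ has a vertex of degree at least $3$, while $\pw(T)\ge 1$ says only that $T$ has an edge. Take $T=K_2$, or any path: it has pathwidth $1$, but $T-v$ never has three components. So the ``only if'' direction is false for $k=0$ (and for negative $k$). Your remark that three components of pathwidth $\ge 0$ ``always exist in the relevant sense'' is simply wrong. That direction needs the hypothesis $k\ge 1$, as in the original source. Your necessity argument in fact uses it silently: in Step~2 the spine vertex $p_j$ alone contributes $1$ to the boundary, and the hanging components are required to have pathwidth at most $k-1$. Your sufficiency argument, by contrast, is valid for every $k\ge 0$. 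That is the only direction the paper needs when $\tau=0$ in Lemma~\ref{lem:treePackingStructure}.

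For $k\ge 1$ the rest is sound, with one omission in Step~1. Excluding branching only shows that the two-way edges form a disjoint union of paths; you also need this union to be connected. Write $T_{x\to y}$ for the component of $T-x$ containing $y$. Connectivity follows from monotonicity of pathwidth under subgraphs. Let $ab$ and $cd$ be two-way edges with $b$ and $c$ their closest endpoints. Every edge $xy$ of the $b$--$c$ path (with $x$ nearer $b$) satisfies $T_{x\to y}\supseteq T_{c\to d}$ and $T_{y\to x}\supseteq T_{b\to a}$. Both of these are heavy, so $xy$ is two-way as well.

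The same containment settles your last bullet in one line (in the out-degree-$0$ case there is nothing to prove). Suppose a component $C$ of $T-V(P)$, attached via $pc$, were heavy. Then $p\to c$. Also $T_{c\to p}\supseteq T_{p\to p'}$ for a neighbour $p'$ of $p$ on $P$, which is heavy. So $pc$ would be a two-way edge not on $P$, a contradiction. Branching itself is excluded directly by the out-degree bound, without appealing to minor-monotonicity.
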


%The complete binary tree $B_0$ of height $0$ is a single node, this node is considered the root of $B_0$. The complete binary tree of height $i \geq 1$ is obtained by taking two disjoint copies of $B_{i-1}$ adding a new node $v$, and adding edges from $v$ to the roots of the two copies of $B_{i-1}$. The vertex $v$ is then set to be root of $B_{i}$. It is well known (and a simple induction shows) that $|V(B_{i})| = 2^{i+1} - 1$. Scheffler~\cite{scheffler1990linear} determined the pathwidth of complete binary trees.

%\begin{theorem}[\cite{scheffler1990linear}]\label{thm:pwBinaryTrees}
%$\pw(B_i) = \lceil i/2 \rceil$.
%\end{theorem}

The next theorem shows that for every fixed tree $T$, every graph that does not contain $T$ as a minor has pathwidth upper bounded by a function of $T$. This result was first proved by Robertson and Seymour~\cite{robertson1983graph}, we will rely on a subsequent result with better dependence on $T$ (see also~\cite{cattell1996simple}).

\begin{theorem}[\cite{bienstock1991quickly}]\label{thm:pwExcludingTree}
For every tree $T$, if $G$ does not contain $T$ as a minor then $\textsf{pw}(G) \leq |V(T)|-2$.
\end{theorem}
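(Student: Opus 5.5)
The plan is to prove the contrapositive constructively. Set $n=|V(T)|$. I would build, vertex by vertex, a layout of $G$ whose vertex-separation width stays at most $n-2$, and show that the only way the construction can get stuck is by producing a minor model of $T$. Fix an ordering $t_1,\dots,t_n$ of $V(T)$ in which every $t_i$ with $i\ge 2$ has exactly one neighbour among $t_1,\dots,t_{i-1}$; such an ordering exists because $T$ is connected, e.g.\ a BFS order. We may assume $G$ is connected, since pathwidth is the maximum over components.

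\textbf{Invariant.} I would carry the following structure, called a \emph{rooted partial model}.
\begin{itemize}
\item A set $A\subseteq V(G)$ together with an ordering of $A$ (an initial segment of the layout) such that every prefix has boundary of size at most $n-2$.
\item An index $j$ and a minor model $\{X_{t_1},\dots,X_{t_j}\}$ of $T[\{t_1,\dots,t_j\}]$ inside $G[A]$.
\item An injection from the boundary $\partial(A)$ into these branch sets, with each branch set containing at most one boundary vertex.
\item A branch set $X_{t_i}$ avoids $\partial(A)$ only if $t_i$ is \emph{saturated}, meaning all neighbours of $t_i$ in $T$ are among $t_1,\dots,t_j$.
\end{itemize}
Because the boundary vertices sit in distinct branch sets of unsaturated tree vertices, $|\partial(A)|\le j$ during the construction. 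Once $j=n$ we have a $T$-minor and are done. Start from $A=\{v\}$, $j=1$, $X_{t_1}=\{v\}$. Among all rooted partial models, pick one maximising first $|A|$ and then $j$. The aim is to show that either $A=V(G)$, giving $\pw(G)\le n-2$, or $j=n$.

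\textbf{Extension step.} Suppose $A\neq V(G)$. Since $G$ is connected, some boundary vertex $x\in\partial(A)$ has a neighbour $y\notin A$. Let $x\in X_{t_i}$; by the invariant $t_i$ is unsaturated. Two cases:
\begin{itemize}
\item If appending $y$ to the layout keeps the boundary at most $n-2$, then either absorb $y$ into $X_{t_i}$ and transfer the root from $x$ to $y$ when $x$ leaves the boundary, or start a new branch set $X_{t_{j+1}}=\{y\}$ for the next tree vertex attached to $t_i$.
\item If instead the boundary would reach $n-1$, then $j\ge n-1$. Adding $y$ as the branch set of the remaining tree vertex then completes the model of $T$.
\end{itemize}
Either way we contradict maximality unless $j=n$.

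\textbf{Main obstacle.} The delicate point is keeping the root injection valid as vertices leave the boundary. When a boundary vertex loses its last neighbour outside $A$, it must either be dropped from the boundary with its tree vertex becoming saturated, or the root must move within its branch set. The danger is that the root of an unsaturated branch set disappears without the tree vertex being saturated. In that situation the branch set can no longer be extended, yet it still owes children in $T$.

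To handle this, I would strengthen the maximality choice as in Diestel's short proof of the path-width theorem. Among separations $(A,B)$ of order at most $n-1$, choose one that is linked: by Menger's Theorem there are $|A\cap B|$ disjoint paths from the roots into $B$. Then an unsaturated branch set that has lost its root can be re-rooted along one of these paths inside $B$. Verifying that this re-rooting never pushes the order above $n-1$ is where I expect most of the work to lie. The difficulty is that the Menger paths may run through vertices we would like to add to $A$, so re-rooting and extending $A$ interact.
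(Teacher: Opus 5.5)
Your proposal has a genuine gap: the invariant you set up cannot be maintained, so the extremal argument is broken, not merely unfinished. Consider the terminal state you aim for. If $A=V(G)$ then $\partial(A)=\emptyset$, so every branch set $X_{t_i}$ avoids $\partial(A)$. Your fourth condition then forces every $t_i$ with $i\le j$ to be saturated, and since $T$ is connected this means $j=n$. So a rooted partial model with $A=V(G)$ exists only if $G$ already has a $T$-minor. Both exits of your dichotomy ``$A=V(G)$ or $j=n$'' therefore produce a $T$-minor, and your extension step (``either way we contradict maximality unless $j=n$'') would prove that every connected graph with at least two vertices contains every tree as a minor.

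A minimal counterexample is $G=K_2$, $T=P_3$, where $\textsf{pw}(G)=1=n-2$. The start $A=\{u\}$, $X_{t_1}=\{u\}$ is valid. Appending the other vertex $w$ empties the boundary, and whether you absorb $w$ into $X_{t_1}$ or open a new branch set $\{w\}$, some unsaturated tree vertex is left with a branch set avoiding $\partial(A)$. So the event you call the main obstacle, an unsaturated branch set losing its last boundary vertex, cannot be avoided by a cleverer choice. It must occur for every graph without a $T$-minor.

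The extension step also has more local errors:
\begin{itemize}
\item Appending $y$ can strip the last boundary vertex from several branch sets at once, not just from $X_{t_i}$.
\item Absorbing $y$ into $X_{t_i}$ while $x$ stays on the boundary puts two boundary vertices in one branch set.
\item With your fixed enumeration, $t_{j+1}$ need not be a neighbour of $t_i$.
\item In Case 2, $|\partial(A\cup\{y\})|=n-1$ only gives $|\partial(A)|=n-2$, hence $j\ge n-2$, not $j\ge n-1$. Two tree vertices may still be missing, and the single vertex $y$ does not complete the model.
\end{itemize}

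A workable version must therefore either tolerate rootless branch sets of unsaturated tree vertices or not carry a model along at all. The $T$-model can then only be extracted at a stuck configuration chosen by some extremal rule, and making that extraction work is the real content of the theorem. Your last paragraph only gestures at this. ``$|A\cap B|$ disjoint paths from the roots into $B$'' is vacuous as stated, since the roots already lie in $B$; you do not say what the separator should be linked to. You also explicitly leave open whether re-rooting along such paths keeps the order at most $n-1$ and ends in a model of $T$.

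The paper does not prove this statement; it imports it from Bienstock, Robertson, Seymour and Thomas. A self-contained proof would have to supply exactly the argument you postpone, for instance by following Diestel's short proof of the path-width theorem. As written, the proposal does not prove the theorem.
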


A tight bound on the size of tree obstructions for pathwidth was established by Cattell, Dinneen, and Fellows.

\begin{theorem}[\cite{CattellDF96}]\label{thm:pwTreeObstructionSize}
For every integer $\tau \geq 1$, every tree of pathwidth at least $\tau$ contains as a minor a tree of pathwidth at least $\tau$ on at most $\frac{5 \cdot 3^{\tau-1} - 1}{2}$ vertices.
Furthermore, for every positive integer $k$ there exists a tree $T_k$ of pathwidth at least $k$.
\end{theorem}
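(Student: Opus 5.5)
Theorem~\ref{thm:pwTreeObstructionSize} is a quoted result of Cattell, Dinneen and Fellows, so I would re-derive it by induction on $\tau$ using Scheffler's characterization (Theorem~\ref{thm:treePathwidth}). I would prove the stronger statement that every tree of pathwidth at least $\tau$ contains as a minor a tree $M_\tau$ with $\pw(M_\tau)\ge\tau$ and $|V(M_\tau)|\le N_\tau$, where $N_1=2$ and $N_\tau = 3N_{\tau-1}+1$. Since $N_1 = 2 \le \frac{5\cdot 3^0-1}{2}$ and the bound $\frac{5\cdot 3^{\tau-1}-1}{2}$ satisfies $N \mapsto 3N+1$, this gives the claimed bound.

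\textbf{Base case.} If $\tau=1$, a tree of pathwidth at least $1$ has an edge, and an edge $K_2$ is a minor on $2$ vertices.

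\textbf{Inductive step.} Let $\tau\ge 2$ and let $T$ be a tree with $\pw(T)\ge\tau$.
\begin{enumerate}
\item By Theorem~\ref{thm:treePathwidth} there is a vertex $v$ and three components $C_1,C_2,C_3$ of $T-v$ with $\pw(C_i)\ge\tau-1$.
\item By induction, each $C_i$ contains a tree minor $M_i$ on at most $N_{\tau-1}$ vertices with $\pw(M_i)\ge\tau-1$. Let $\{X^i_u\}_{u\in V(M_i)}$ be its minor model inside $C_i$.
\item Let $w_i$ be the neighbour of $v$ in $C_i$. Enlarge one branch set of $M_i$ to absorb the path in $C_i$ from $w_i$ to $\bigcup_u X^i_u$, taking the branch set that this path first meets. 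This remains a minor model of $M_i$ in $C_i$, since a connected set stays connected after adding a path that touches it. Now $w_i$ lies in some branch set.
\item Take $\{v\}$ as a new branch set. The result is a minor model in $T$ of the tree $M$ formed from $M_1,M_2,M_3$ plus a new vertex adjacent to one vertex of each $M_i$.
\item In $M$, deleting the new vertex leaves $M_1,M_2,M_3$ as three components, each of pathwidth at least $\tau-1$. By Theorem~\ref{thm:treePathwidth}, $\pw(M)\ge\tau$.
\item Counting vertices, $|V(M)|\le 3N_{\tau-1}+1$. Since minors of minors are minors, $M$ is a minor of $T$.
\end{enumerate}

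\textbf{Existence of $T_k$.} For the second sentence, build $T_1=K_2$ and let $T_{k}$ be three copies of $T_{k-1}$ joined to a new vertex. By Theorem~\ref{thm:treePathwidth}, induction gives $\pw(T_k)\ge k$.

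The only delicate step is step 3: checking that after extending a branch set along the path to $w_i$, the model is still a valid minor model, so that the adjacency to $v$ is realized. This is routine.
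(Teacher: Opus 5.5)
Your proof is correct, and it takes a different route from the paper only because the paper gives no proof here: the statement is quoted from Cattell, Dinneen and Fellows and used as a black box. Your argument is self-contained. It reduces the result to Scheffler's characterization, Theorem~\ref{thm:treePathwidth}, which the paper imports anyway.

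The recursion $N_1=2$, $N_\tau=3N_{\tau-1}+1$ has solution exactly $\frac{5\cdot 3^{\tau-1}-1}{2}$. Extending one branch set along the path to $w_i$, and then adding $\{v\}$ as a new branch set, does give a minor model in $T$ of the three-branch tree $M$.

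Your argument yields only the upper bound, but that is all the paper uses, in Lemma~\ref{lem:decidePathwidthSubtree} and Theorem~\ref{thm:pwObstructions}. Your explicit $T_k$ has exactly $N_k$ vertices, which is exactly the object the proof of Theorem~\ref{thm:pwObstructions} needs.

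Starting the induction at $\tau=1$ with $K_2$ is necessary, not just convenient. Theorem~\ref{thm:treePathwidth} as stated in the paper fails for $k=0$: $K_2$ has pathwidth $1$ but no vertex of degree $3$. All of your applications of it use $k=\tau-1\geq 1$, so you avoid this case.
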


The next lemma extracts an actual subtree from a tree minor model.

\begin{lemma}\label{lem:minorToSubtree}
Let $T$ be a tree and $G$ a graph. If $T$ is a minor of $G$, then $G$ contains a subtree $T^*$ such that $T$ is a minor of $T^*$.
\end{lemma}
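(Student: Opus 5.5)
Let $\{X_t : t \in V(T)\}$ be a minor model of $T$ in $G$. The plan is to build a subtree $T^*$ of $G$ by choosing a spanning tree inside each branch set and then joining these trees with one edge for each edge of $T$. We then show that $T^*$ is a tree and that the sets $X_t$ still form a minor model of $T$ in $T^*$.

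First, for every $t \in V(T)$ we fix a spanning tree $S_t$ of $G[X_t]$. This is possible because each $X_t$ is connected by the definition of a minor model. Second, for every edge $tt' \in E(T)$ we fix one edge $e_{tt'} \in E(G)$ with one endpoint in $X_t$ and the other in $X_{t'}$; such an edge exists, again by the definition of a minor model. We define $T^*$ to be the subgraph of $G$ with vertex set $\bigcup_t X_t$ and edge set $\bigcup_t E(S_t) \cup \{e_{tt'} : tt' \in E(T)\}$. Since the branch sets are pairwise disjoint, the edges $e_{tt'}$ are distinct from each other and from the edges of the trees $S_t$.

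Next we check that $T^*$ is a tree. It is connected, because each $S_t$ is connected and the edges $e_{tt'}$ link the pieces along the connected tree $T$. For acyclicity we count edges: $|E(T^*)| = \sum_t (|X_t|-1) + |V(T)| - 1 = |V(T^*)| - 1$, and a connected graph with this many edges is a tree. Alternatively, one can argue that contracting each $S_t$ yields $T$, which has no cycles.

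Finally, each $X_t$ induces a connected subgraph of $T^*$ because it contains $S_t$, and for every edge $tt'$ of $T$ the edge $e_{tt'}$ lies in $T^*$. Hence $\{X_t\}$ is a minor model of $T$ in $T^*$. None of these steps is a real obstacle; the only care needed is the edge count, which relies on the branch sets being disjoint and on choosing exactly one connecting edge per edge of $T$.
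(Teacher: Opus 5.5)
Your proposal is correct and follows the paper's proof exactly: a spanning tree $S_t$ of each branch set, one connecting edge $e_{tt'}$ per edge of $T$, and the observation that $\{X_t\}$ remains a minor model in the resulting subgraph $T^*$. Your connectivity-plus-edge-count argument (relying on the disjointness of the branch sets) explicitly justifies the step ``$T^*$ is a tree,'' which the paper simply asserts.
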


\begin{proof}
Fix a minor model $\{X_v\}_{v \in V(T)}$ of $T$ in $G$. For every $v \in V(T)$, fix a spanning tree $S_v$ of $G[X_v]$, and for every edge $uv \in E(T)$, fix one edge $e_{uv} \in E(G)$ with one endpoint in $X_u$ and one in $X_v$. Let $T^*$ be the subgraph of $G$ with edge set $\bigcup_{v \in V(T)} E(S_v) \cup \{e_{uv} : uv \in E(T)\}$ and vertex set $\bigcup_{v \in V(T)} X_v$. Then $T^*$ is a tree, and $\{X_v\}_{v \in V(T)}$ is a minor model of $T$ in $T^*$.
\end{proof}

Combining Theorems~\ref{thm:treePathwidth} and~\ref{thm:pwExcludingTree} with Lemma~\ref{lem:minorToSubtree} we obtain the following result.

\begin{theorem}\label{thm:pwObstructions}
For every graph $G$ and integer $k \geq 1$, $G$ either contains a subtree of pathwidth at least $k$, or 
$\textsf{pw}(G) \leq  \frac{5 \cdot 3^{k-1} - 5}{2}$.
\end{theorem}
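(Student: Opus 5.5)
Suppose $G$ has no subtree of pathwidth at least $k$; we show $\pw(G) \leq \frac{5 \cdot 3^{k-1} - 5}{2}$. If some tree $T'$ of pathwidth at least $k$ were a minor of $G$, then Lemma~\ref{lem:minorToSubtree} would give a subtree $T^*$ of $G$ having $T'$ as a minor. By Lemma~\ref{lem:pwsubgraph}, $\pw(T^*) \geq \pw(T') \geq k$, contradicting the assumption. Hence no tree of pathwidth at least $k$ is a minor of $G$.

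We now fix a small tree of pathwidth at least $k$. By the second part of Theorem~\ref{thm:pwTreeObstructionSize}, some tree $T_k$ of pathwidth at least $k$ exists. Applying the first part to $T_k$ with $\tau = k$, $T_k$ contains as a minor a tree $T$ with $\pw(T) \geq k$ and $|V(T)| \leq \frac{5 \cdot 3^{k-1} - 1}{2}$.

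By the previous paragraph, $T$ is not a minor of $G$. Theorem~\ref{thm:pwExcludingTree} then gives $\pw(G) \leq |V(T)| - 2 \leq \frac{5\cdot 3^{k-1} - 1}{2} - 2 = \frac{5 \cdot 3^{k-1} - 5}{2}$, as claimed.

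The only step needing care is the bookkeeping in the first paragraph: we must pass from ``$T$ is a minor of $G$'' to an actual subtree of $G$ of large pathwidth. This uses Lemma~\ref{lem:minorToSubtree} together with minor-monotonicity of pathwidth, and it holds because the obstruction $T$ is itself a tree. Theorem~\ref{thm:treePathwidth} is not needed for this argument; it only serves to certify pathwidth of specific trees elsewhere.
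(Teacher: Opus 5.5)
Your proof is correct and is essentially the paper's argument: fix a tree of pathwidth at least $k$ on at most $\frac{5\cdot 3^{k-1}-1}{2}$ vertices via Theorem~\ref{thm:pwTreeObstructionSize}, bound $\pw(G)$ by Theorem~\ref{thm:pwExcludingTree} if it is not a minor of $G$, and otherwise lift it to a subtree via Lemma~\ref{lem:minorToSubtree} and Lemma~\ref{lem:pwsubgraph}. The differences are cosmetic: you phrase it as a contrapositive, and you write $\pw(G) \leq |V(T)| - 2 \leq \frac{5\cdot 3^{k-1}-5}{2}$ with an inequality where the paper writes an equality, which is the more accurate bookkeeping.
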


\begin{proof}
For every integer $k \geq 1$, let $T_k$ be a tree of pathwidth at least $k$ on at most  $\frac{5 \cdot 3^{k-1} - 1}{2}$ vertices, as guaranteed by Theorem~\ref{thm:pwTreeObstructionSize}.
If $G$ does not contain $T_k$ as a minor, then by Theorem~\ref{thm:pwExcludingTree} we have $\textsf{pw}(G) \leq |V(T_k)| - 2 = \frac{5 \cdot 3^{k-1} - 5}{2}$.

Otherwise, by Lemma~\ref{lem:minorToSubtree}, $G$ contains a subtree $T'$ such that $T_k$ is a minor of $T'$. From Lemma~\ref{lem:pwsubgraph} we have that $\textsf{pw}(T') \geq \textsf{pw}(T_k) \geq k$.
\end{proof}

% OLD PROOF.
%\begin{proof}
%Suppose first that $G$ contains $B_{2k}$ as a minor, in which case $G$ contains (as a subgraph) a tree of pathwidth at least $\textsf{pw}(B_{2k})$. By Theorem~\ref{thm:pwBinaryTrees} we have $\textsf{pw}(B_{2k}) \geq k$ and so $G$ contains as a subtree $T$ of pathwidth at least $k$. 
%If $G$ does not contain $B_{2k}$ as a minor then, by Theorem~\ref{thm:pwExcludingTree} we have $\textsf{pw}(G) \leq 4^{k+1}$.
%\end{proof}

We will also need an algorithm to compute path decompositions.

\begin{theorem}[\cite{furer2016faster,BodlaenderJT23,DBLP:journals/jal/BodlaenderK96}]\label{thm:computePw}
There exists an algorithm that takes as input a graph $G$ and integer $k$, runs in time $2^{O(k^2)}n$ and either concludes that $\textsf{pw}(G) > k$ or produces a layout of $G$ with pathwidth at most $k$.
\end{theorem}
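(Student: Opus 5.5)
The statement is an algorithmic version of path decomposition/vertex separation computation. I would obtain it by chaining known results rather than proving anything new.

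First, run the linear-time algorithm of Bodlaender and Kloks~\cite{DBLP:journals/jal/BodlaenderK96}, in the form of the improved treewidth-to-pathwidth machinery of F\"urer~\cite{furer2016faster} and Bodlaender, Jaffke and Telle~\cite{BodlaenderJT23}. Given $G$ and $k$, it either certifies $\textsf{pw}(G) > k$ or outputs a path decomposition of width at most $k$ in time $2^{O(k^2)}n$. The cited running time refers to the refined dynamic programming over a tree (or path) decomposition of width $O(k)$. That decomposition is itself obtained by a standard iterative compression or approximation step, for instance a $2^{O(k)}n$ time constant-factor treewidth approximation. If that step fails we may already conclude that $\textsf{tw}(G) > k$, and hence $\textsf{pw}(G) > k$.

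Second, convert a path decomposition $(B_1,\dots,B_m)$ of width at most $k$ into a layout in the sense of this paper, that is, a vertex ordering with vertex separation number at most $k$. I would order the vertices by the index of the last bag containing them, breaking ties arbitrarily, and call this layout $f$. Take a prefix $f^{-1}(\{1,\dots,i\})$ and let $j$ be the last-bag index of its $i$-th vertex. Every boundary vertex $v$ of the prefix has a neighbor $u$ outside the prefix. The last bag of $u$ has index at least $j$, and $u$ and $v$ share some bag. The last bag of $v$ has index at most $j$. By the interpolation property of path decompositions, $v$ therefore lies in $B_j$. The boundary thus sits inside $B_j$ minus at least the vertex $f^{-1}(i)$, which is a vertex not on the boundary or counted appropriately, so its size is at most $|B_j|-1 \le k$. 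This is the standard proof that vertex separation number is at most pathwidth~\cite{Kinnersley92}, and it runs in linear time.

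The only delicate step is the off-by-one bookkeeping in the second step: bags have size $k+1$ while the boundary must have size at most $k$. I would handle it by noting that $f^{-1}(i)$ itself is in $B_j$, and that either it has a later neighbor, or some vertex of $B_j$ does not. If the accounting fails, I would instead invoke the equivalence of vertex separation number and pathwidth~\cite{Kinnersley92,kinnersley1989obstruction} directly as a black box, since its proof is constructive.
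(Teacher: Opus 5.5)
\textbf{The gap is in your second step, and it is more than off-by-one bookkeeping.} Your first step is acceptable. The paper gives no proof of this statement; it imports it from the cited works, so invoking them is all that is needed there.

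Ordering vertices by the index of their \emph{last} bag does not bound the boundary $\partial$ used in this paper. In your interpolation argument, the vertex that is forced into $B_j$ is $u$, not $v$. Indeed, $u$ lies in a bag $B_m$ with $m \le \mathrm{last}(v) \le j$ and in $B_{\mathrm{last}(u)}$ with $\mathrm{last}(u) \ge j$. Nothing places $v$ in $B_j$, because $\mathrm{last}(v)$ may be far smaller than $j$.

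A concrete failure is the star with center $c$ and leaves $l_1,\dots,l_n$, with the width-$1$ decomposition $\{c,l_1\},\{c,l_2\},\dots,\{c,l_n\}$. Your ordering places $l_1,\dots,l_{n-1}$ before $c$. The prefix $\{l_1,\dots,l_{n-1}\}$ then has all $n-1$ of its vertices on the boundary. What your argument actually bounds is the number of vertices \emph{outside} the prefix that have a neighbor inside it. That is not the quantity in the definition of $\pw(G,f)$.

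\textbf{The fix.} Order by the index $\mathrm{first}(v)$ of the first bag containing $v$; equivalently, reverse your layout. Take a prefix $A$ with $\partial(A)\neq\emptyset$, and let $u^*\in N(A)$ minimize $m^*=\mathrm{first}(u^*)$.
\begin{itemize}
\item For $v\in\partial(A)$ with a neighbor $u\notin A$, some bag $B_m$ contains both $u$ and $v$.
\item Then $m\ge \mathrm{first}(u)\ge m^*\ge \mathrm{first}(v)$. The last inequality holds because $v$ precedes $u^*$ in the ordering.
\item By interpolation, $v\in B_{m^*}$.
\end{itemize}
Hence $\partial(A)\subseteq B_{m^*}\setminus\{u^*\}$, which has at most $k$ vertices. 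This also settles the off-by-one accounting you flagged, with no case analysis.

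Your fallback of citing the constructive equivalence of vertex separation number and pathwidth would also work, and it is effectively what the paper does. However, the explicit construction you actually propose is incorrect.
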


The next theorem, due to Hicks, is stated in terms of {\em branch decompositions}, a notion we do not define in this paper. The only fact about branch decompositions that we will use is that there is a polynomial-time algorithm that, given a graph $G$ and a path decomposition of $G$ of width $k$, outputs a branch decomposition of $G$ of width at most $k+1$~\cite{RobertsonS91}.

\begin{theorem}[\cite{Hicks04}]\label{thm:hicksMinor}
There is an algorithm that takes as input a graph $G$, a branch decomposition of $G$ of width $w$, and a graph $H$ on $h$ vertices, runs in time $O(3^{w^2} \cdot (h+w-1)! \cdot |E(G)|)$ and either outputs a minor model of $H$ in $G$ or determines that $H$ is not a minor of $G$.
\end{theorem}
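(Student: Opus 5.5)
The plan is a dynamic program over the branch decomposition, in the standard style of Bodlaender-type and Robertson--Seymour minor-testing algorithms, with the state space sized to match the claimed running time. Root the branch decomposition at an arbitrary edge, giving a rooted binary tree whose leaves are the edges of $G$. Each tree edge $e$ determines a subgraph $G_e$, formed by the edges of $G$ at leaves below $e$, and a middle set $M_e$ of at most $w$ vertices separating $G_e$ from the rest of $G$.

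For each tree edge $e$ I will compute the set of \emph{boundaried partial models} that are realizable in $G_e$. Such a partial model is the trace, on $G_e$, of some hypothetical minor model $\{X_v\}_{v\in V(H)}$ of $H$ in $G$. It consists of the following data:
\begin{enumerate}
\item a label for each vertex of $M_e$, either ``unused'' or a vertex of $H$;
\item a partition of the used vertices of $M_e$ into blocks, namely the vertex sets of the connected components of $G_e[X_v]$ that meet $M_e$;
\item the set of $H$-vertices whose branch set lies entirely inside $G_e\setminus M_e$ and is already finished, meaning connected;
\item for each pair of blocks, or of a block and a finished vertex, whether the corresponding $H$-edge has already been witnessed by an edge of $G_e$.
\end{enumerate}
The validity condition is that any $H$-vertex whose branch set has a component avoiding $M_e$ must consist of exactly that one component. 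Otherwise its branch set could never become connected later.

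Leaves are handled by enumerating the few ways a single edge $xy$ can be labeled. Internal nodes combine two children. The join step checks that labels agree on shared middle-set vertices, merges blocks through a union--find on $M_{e_1}\cup M_{e_2}$, ORs together the witnessed-edge information, and then projects onto $M_e$. During projection, a block that disappears from the boundary must carry a label $v$ none of whose other blocks survive; it then becomes ``finished''. Otherwise the state is discarded. At the root, $H$ is a minor iff some state has every $H$-vertex finished and every $H$-edge witnessed. A model is recovered by storing back-pointers and retracing them, which yields the sets $X_v$ explicitly.

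Correctness follows by the usual induction along the decomposition: a partial model of $G_{e}$ is realizable iff it arises as the join of realizable child states. This works because $M_e$ separates $G_e$ from the rest of $G$, so all interaction between the two sides passes through the boundary.

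The main obstacle is the counting that gives $3^{w^2}\cdot(h+w-1)!$ rather than a cruder bound like $h^{w}\cdot w^{w}\cdot 2^{h^2}$. The witnessed-edge data must be recorded between boundary blocks, not between $H$-vertices. There are at most $w$ blocks, so this is a relation on at most $w^2$ pairs. Together with per-pair adjacency and identity bookkeeping, this fits in the $3^{w^2}$ factor. The assignment of blocks to $H$-vertices, together with which $H$-vertices are finished, is an ordered choice of at most $w$ ``slots'' interleaved with $h$ vertices, and I expect this to be what produces $(h+w-1)!$. Finished vertices must also have all their $H$-edges accounted for, so their own edge data need not be stored.

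I would first try to set up the encoding so that each state is a sequence of $H$-vertices with separators marking blocks, which gives the factorial directly. Then I would verify that joins cost at most a polynomial factor in $w$ per pair of states, and that these polynomial factors are absorbed into the stated bound. Summing over the $O(|E(G)|)$ nodes of the decomposition gives the claimed running time.
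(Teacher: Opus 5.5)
The paper does not prove this statement. It is quoted from Hicks and used only as a black box inside Lemma~\ref{lem:decidePathwidthSubtree}. Your dynamic program over boundaried partial models is the natural route, and its correctness part is standard. However, the proposal does not establish the running time, which is the real content of the statement. You defer the count (``I expect'', ``I would first try''), and the accounting you sketch does not work as stated.

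The join at an internal node ranges over \emph{pairs} of child states that agree on $M_{e_1}\cap M_{e_2}$, and such compatible pairs are numerous. The two finished sets can be any two disjoint subsets of $V(H)$, and labels on $M_{e_1}\setminus M_{e_2}$ and $M_{e_2}\setminus M_{e_1}$ are essentially unconstrained. So the per-node cost is governed by roughly the square of the number of states. An upper bound of order $3^{w^2}(h+w-1)!$ on the number of states, which is what your ``sequence with separators'' encoding aims for, therefore only yields about $9^{w^2}\bigl((h+w-1)!\bigr)^2$ per node, not the claimed bound after summing over the $O(|E(G)|)$ nodes.

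The way to close the gap is the opposite of what you propose: bound the number of states crudely and well below the target, then show that its square still fits. Concretely, take
\[
N\le (h+1)^{w}\,B_w\,2^{h}\,2^{\binom{w}{2}},
\]
accounting for labels, the block partition ($B_w$ the Bell number), the finished set, and witnessed pairs of blocks. Then verify $N^2\cdot\mathrm{poly}(h,w)=O\bigl(3^{w^2}(h+w-1)!\bigr)$. Using $(h+w-1)!\ge (h-1)!\,h^{w}$, the ratio is at most
\[
(h+1)^{w}\,w^{O(w)}\,\mathrm{poly}(h,w)\,(2/3)^{w^2}\cdot 4^{h}/(h-1)!.
\]
When $\log_2(h+1)\le w/4$, the factor $(2/3)^{w^2}$ absorbs everything in front of it. When $w<4\log_2(h+1)$, the factor $4^h/(h-1)!$ absorbs everything. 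For the paper's only use, where $w,h=O(3^\tau)$, even a bound of $2^{O(w^2+h\log h)}|E(G)|$ would suffice.

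Two smaller repairs are also needed in the DP itself:
\begin{enumerate}
\item In the projection, a vanishing block must be the \emph{unique} block of its label. As worded, two blocks of the same $v$ leaving the boundary at the same node both pass, and $v$ is declared finished with a disconnected branch set.
\item The join must check that the children's finished sets are disjoint from each other and from all labels on $M_{e_1}\cup M_{e_2}$. It must also check that every newly finished vertex has all of its $H$-edges witnessed.
\end{enumerate}
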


\section{Manipulating Embeddings}\label{sec:embeddings}
In this section we develop some basic tools for dealing with embeddings with low congestion and stretch. 
We show that the existence of good embeddings is essentially transitive, relate low bandwidth layouts with low stretch and congestion embeddings into a path, and use this to show that adding a small number of edges to a low bandwidth graph cannot increase the bandwidth too much.  

\begin{lemma}\label{lem:composeEmbeddings}
Let $F$, $G$, $H$ be graphs, $S \subseteq V(F)$ be a vertex set. 
Let $f$ be an embedding of $F$ to $G$ with congestion $c_f$, stretch $s_f$ and $S$-contraction $z_f$.
Let $g$ be an embedding of $G$ to $H$ with congestion $c_g$, stretch $s_g$ and $f(S)$-contraction $z_g$.
Let $h : V(F) \rightarrow V(H)$ be the embedding defined as $h(v) = g(f(v))$. 
Then congestion of $h$ is at most $c_fc_g$, stretch of $h$ is at most $s_fs_g$ and $S$-contraction of $h$ is at most $z_fz_g$.
\end{lemma}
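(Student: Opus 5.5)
The plan is to verify the three bounds one at a time, each directly from the definitions. The only subtlety is the stretch bound: an edge of $F$ maps under $f$ to a pair of vertices of $G$ that need not be adjacent, so we must route through a shortest path in $G$.

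\emph{Congestion.} Fix $x \in V(H)$. We have
\[
h^{-1}(x) = \bigcup_{y \in g^{-1}(x)} f^{-1}(y).
\]
The union has at most $c_g$ terms, and each term has size at most $c_f$. Hence $|h^{-1}(x)| \leq c_f c_g$.

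\emph{Stretch.} Take an edge $uv \in E(F)$.
\begin{itemize}
\item We have $d_G(f(u),f(v)) \leq s_f$. Choose a shortest path $f(u)=w_0,w_1,\dots,w_\ell=f(v)$ in $G$, so $\ell \leq s_f$.
\item Each $w_{i}w_{i+1}$ is an edge of $G$, so $d_H(g(w_i),g(w_{i+1})) \leq s_g$.
\item By the triangle inequality in $H$, $d_H(h(u),h(v)) \leq \ell s_g \leq s_f s_g$.
\item If $f(u)=f(v)$ the distance is $0$ and the bound is trivial.
\end{itemize}

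\emph{$S$-contraction.} Take distinct $u,v \in S$. The edge cases are handled first.
\begin{itemize}
\item If $|S|\le 1$, all quantities are $1$ and we are done.
\item If $z_f=\infty$ or $z_g=\infty$, the bound holds trivially.
\end{itemize}
So assume both are finite and $|S| \geq 2$. Then $f$ is injective on $S$, so $f(u) \neq f(v)$ are distinct vertices of $f(S)$. Also $g$ is injective on $f(S)$, so $h(u) \neq h(v)$.

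If $|f(S)|\le1$, then $|S| \le 1$ by injectivity, which we have excluded; so $z_g$ really is a maximum over distinct pairs of $f(S)$. We then chain the two contraction bounds:
\[
d_F(u,v) \leq z_f\, d_G(f(u),f(v)) \leq z_f z_g\, d_H(g(f(u)),g(f(v))).
\]
Dividing by $d_H(h(u),h(v))>0$ and maximizing over pairs gives the $S$-contraction bound.

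The main obstacle is only bookkeeping of these degenerate cases, namely the infinite contraction convention and $|S|\le1$. Infinite or undefined distances between disconnected components need no special treatment: on the stretch side they do not arise, since $s_f$ and $s_g$ are finite, and on the contraction side they are absorbed by the $\infty$ conventions.
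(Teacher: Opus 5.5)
Your proof is correct and follows the paper's argument essentially verbatim. Congestion comes from summing $|f^{-1}(y)|$ over $y \in g^{-1}(x)$, stretch from routing along a shortest $f(u)$--$f(v)$ path in $G$ with the triangle inequality in $H$, and $S$-contraction from chaining the two contraction inequalities; the only difference is that you explicitly handle the degenerate cases ($|S|\le 1$ and the infinite-contraction convention), which the paper leaves implicit.
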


\begin{proof}
For stretch, let $uv$ be an edge in $F$ and $P$ be a 
shortest
path from $f(u)$ to $f(v)$ in $G$. We have that
$$d_H(h(u), h(v)) \leq \sum_{xy \in E(P)}d_H(g(x), g(y)) \leq \sum_{xy \in E(P)} s_g \leq |E(P)|s_g \leq s_fs_g\mbox{.}$$
For congestion, let $u$ be a vertex of $H$. We have that 
$$|h^{-1}(u)| = \sum_{v \in g^{-1}(u)} |f^{-1}(v)| \le \sum_{v \in g^{-1}(u)} c_f \le c_fc_g\mbox{.}$$
For $S$-contraction, let $x$ and $y$ be distinct elements of $S$. Then 
$$d_H(h(x), h(y)) \geq \frac{d_G(f(x), f(y))}{z_g} \geq \frac{d_F(x, y)}{z_f z_g}\mbox{.}$$
\end{proof}

The following easy lemma shows how to fold a path such that the endpoints of the path are placed in prescribed positions. 

%\todo[inline]{Maybe we should remove Lemma~\ref{lem:pathWithFixedEndpoints} and invoke Lemma~\ref{lem:pathWithTwoEndpointsGeneral} instead (at cost of factor $3$ instead of $2$). No since this messes with numbers :)}

\begin{lemma}\label{lem:pathWithFixedEndpoints}
Let $P = v_1, \ldots, v_t$ be a path with $t\ge 2$, and $a$, $b$ and $k \geq 1$ be integers such that $|a-b| \leq (t-1)k$. Then there exists an embedding $f$ from $P$ to $P_{\mathbb{Z}}$ with congestion at most $2$ and stretch at most $k$ such that $f(v_1) = a$, $f(v_t) = b$ and $f(v_i) \geq \min(a,b)$ for every $v_i$ in $P$.
\end{lemma}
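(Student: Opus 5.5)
We prove the statement by an explicit construction. By symmetry we may assume $a \le b$: if $a > b$, apply the construction to the reversed path $v_t, \ldots, v_1$ with the roles of $a$ and $b$ swapped. Set $D = b - a$, so $0 \le D \le (t-1)k$ and $\min(a,b) = a$. An embedding of $P$ into $P_{\mathbb{Z}}$ is just a sequence $f(v_1), \ldots, f(v_t)$ of integers. Its stretch is the largest absolute difference of consecutive terms, and its congestion is the largest number of times any single integer occurs in the sequence. We split according to whether $D$ is at least the number $t-1$ of steps available.

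\emph{Case $D \ge t-1$.} We walk monotonically from $a$ to $b$. Write $D = q(t-1) + \rho$ with $0 \le \rho < t-1$. Use $\rho$ steps of size $q+1$ followed by $t-1-\rho$ steps of size $q$. Since $D \ge t-1$ we have $q \ge 1$, so every step is strictly positive. Hence the sequence is strictly increasing, the embedding is injective, and its congestion is $1$. The largest step is $\lceil D/(t-1) \rceil \le k$, because $D/(t-1) \le k$ and $k$ is an integer, so the stretch is at most $k$. The sequence starts at $a$, ends at $b$, and all values lie in $[a,b]$.

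\emph{Case $D < t-1$.} There are too many steps to walk straight, so we overshoot and come back, using only unit steps plus possibly one step of size $0$. Choose an integer $m \ge b$ and consider the sequence
\[
a, a+1, \ldots, m, m-1, \ldots, b .
\]
This sequence uses $(m-a) + (m-b) = 2m - a - b$ steps, a quantity with the same parity as $D$. If $t-1 \equiv D \pmod 2$, pick $m$ so that $2m - a - b = t-1$. This is possible since $t - 1 > D$ and the parities agree, and then $m \ge b$. If the parities differ, pick $m$ with $2m - a - b = t-2 \ge D$, and repeat $m$ once at the turning point, which is a step of size $0$. In either case every step has size at most $1 \le k$, the endpoints are $f(v_1) = a$ and $f(v_t) = b$, and every value lies in $[a, m]$, so $f(v_i) \ge \min(a,b)$. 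For congestion, each value in $[b, m-1]$ is visited once going up and once coming down. The value $m$ appears once, or twice in the repeated version. Each value in $[a, b-1]$ appears once. So the congestion is at most $2$.

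The only delicate point is the parity adjustment in the second case. It is resolved by the zero-length step at the top, which is permitted because stretch only bounds distances from above, and which keeps the congestion at $2$ because $m$ is otherwise visited only once.
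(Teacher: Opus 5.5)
Your proof is correct and follows essentially the same route as the paper. For $D \ge t-1$ you walk monotonically with steps of size $\lfloor D/(t-1)\rfloor$ or $\lceil D/(t-1)\rceil$, which matches the paper's $\lfloor a+\delta(i-1)\rfloor$. Otherwise you climb past $b$ to a turning point and descend back to $b$, using a repeated top value to fix the parity; the paper splits this into the cases $a=b$ and $1 \le b-a < t-1$ (a straight run to $b-1$ followed by the $a=b$ fold), which produces exactly your sequence.
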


\begin{proof}
Without loss of generality $a \leq b$.
Consider first the case when $|a - b| \geq t - 1$.
Set $\delta = \frac{b-a}{t-1}$ and define $f(v_i) = \lfloor a + \delta(i-1)\rfloor$. Then $f(v_1) = a$, $f(v_t) = b$, and $\delta \geq 1$ so the congestion of $f$ is at most $1$. Further, the stretch of $f$ is at most $\lceil \delta \rceil$. However $\delta \leq k$ since $|a-b| \leq (t-1)k$. Since $k$ is an integer we get $\lceil \delta \rceil \leq k$ as well. 

Consider now the case when $a=b$. In this case let $f(v_i)=a+i-1$ for $i \leq \lceil t/2 \rceil$ and $f(v_i) = a + (t - i)$  for $i > \lceil t/2 \rceil$. Then $f$ has congestion at most $2$, stretch at most $1$ and satisfies $f(v_1)=f(v_t)=a=b$.

Finally consider the case when $1 \leq b-a < t-1$.
%Without loss of generality $a < b$.
Set $f(v_i) = a + i - 1$ for $i \leq b - a$ and use the construction for the $b=a$ case for the remaining sub-path $P' = v_{b-a+1}, \ldots, v_t$. 
Since $f(v_{b-a}) = b - 1$ we have that $f$ has stretch at most $1$.
Since $f(v_i) < b$ for all $i \leq b - a$ and $f(v_i) \geq b$ for all $i > b - a$, $f$ has congestion at most $2$. For all three cases the condition $f(v_i) \geq \min(a,b)$ for every $v_i$ in $P$ holds by construction.
\end{proof}

We will also need a generalization of Lemma~\ref{lem:pathWithFixedEndpoints} where the path $P$ is embedded into a general graph $G$ instead of into $P_{\mathbb Z}$. We first show that we can embed a path $P$ into any connected graph $G$ with sufficiently many vertices, as long as we want the first and last vertex of $P$ to be mapped to the same vertex of $G$.

\begin{lemma}\label{lem:pathWithOneEndpointsGeneral}
There exists a polynomial time algorithm that takes as input
a path $P = v_1, \ldots, v_t$,
an integer $k \geq 2$,
a connected graph $G$ on at least $t/k$ vertices,
and a vertex $r$ of $G$,
and outputs an embedding $f$ of $P$ into $G$ with stretch at most $2$, congestion at most $k$, such that $f(v_1) = f(v_t) = r$.
\end{lemma}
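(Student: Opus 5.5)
The plan is to use the DFS (Euler tour) traversal already sketched in the proof outline, started at $r$ and padded with repeated visits so the path exactly fills the available slots. First compute a spanning tree $T$ of $G$ rooted at $r$; it suffices to embed $P$ into $T$, since $d_G \leq d_T$ and stretch can only decrease when passing from $T$ to $G$. Run a DFS on $T$ from $r$ and record the sequence $w_1, \ldots, w_m$ of vertices in which each vertex is written when it is first entered and again when its recursive call terminates. This gives $m = 2|V(T)|$. Then append $r$ at the very end: after the call at $r$ terminates we are back at $r$, so we set $w_{m+1} = r$. Alternatively we can start the sequence at $r$ with its entry record and close it with its exit record.

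Next I check two properties of the sequence. First, consecutive entries are at distance at most $2$ in $T$. The transitions are entry of $x$ followed by entry of a child, exit of a child followed by entry of the next child (siblings, at distance $2$), exit of the last child followed by exit of the parent, and entry of a leaf followed by exit of the same leaf (distance $0$). Second, each vertex appears at most twice, except $r$, which appears at most three times if we add an extra copy. The sequence begins with $r$ (entry) and ends with $r$ (exit).

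Finally, fit the $t$ vertices of $P$ onto this sequence so that $v_1 \mapsto r$ and $v_t \mapsto r$, congestion is at most $k$, and stretch is at most $2$.
\begin{itemize}
\item If $t \leq m$, fold early: map $v_1, \ldots, v_{t-1}$ to $w_1, \ldots, w_{t-1}$ and $v_t$ to $r$. This breaks the stretch bound, since $w_{t-1}$ may be far from $r$.
\item A cleaner choice is to stretch each sequence element into a block of consecutive path vertices. Assign to each $w_j$ a multiplicity $c_j \geq 0$ with $\sum_j c_j = t$, $c_1, c_m \geq 1$, and $c_j \leq \lfloor k/2 \rfloor$ so that each vertex (appearing at most twice) receives at most $k$ path vertices.
\item Skipping elements with $c_j = 0$ would break stretch, so require all $c_j \geq 1$ when $t \geq m$.
\item When $t < m$, use only a closed sub-walk instead: the DFS traversal of a subtree containing $r$ with $\lceil t/2 \rceil$ vertices (a DFS-prefix subtree). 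Its tour has length about $t$, and we repeat $r$ once if needed for parity.
\end{itemize}
Repeating an element consecutively gives distance $0$, so stretch stays at most $2$.

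The main obstacle is the arithmetic for the case $t \geq m$, where we need $t \leq \lfloor k/2\rfloor \cdot m$. We know $m = 2|V(G)| \geq 2t/k$, so $k m/2 \geq t$. For odd $k$, $\lfloor k/2 \rfloor$ loses a little. I would fix this by letting the two occurrences of each vertex share a budget of $k$ rather than $k/2$ each: put $c$ copies at the entry record and $k-c$ at the exit record, distributing greedily until $t$ is reached. I would also use the extra copy of $r$ to absorb parity issues. With these adjustments, $\sum$ capacity $= k|V(G)| \geq t$ suffices, and every record receives at least one copy since $k \geq 2$ and $t \geq m$. Congestion is then at most $k$, the endpoints map to $r$, and the construction runs in polynomial time.
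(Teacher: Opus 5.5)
Your overall strategy is the paper's: root a spanning tree at $r$, write down the DFS tour in which every vertex occurs twice (entry and exit) and consecutive entries are at distance at most $2$, inflate entries by consecutive repetition to use the capacity $k|V(G)| \geq t$, and pass to a smaller subtree when $G$ is too large. Your case $t \geq m = 2|V(G)|$ is correct. Since $m \leq t \leq k|V(G)|$ and $k \geq 2$, multiplicities $c_j \geq 1$ in which the two records of each vertex share a budget of $k$ exist and can be found greedily. You should, however, drop the appended extra copy of $r$ altogether. The tour already begins and ends at $r$, and a third record for $r$ with $c_j \geq 1$ would put three path vertices on $r$, which exceeds $k$ when $k=2$.

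The step that fails as written is the parity fix for odd $t < m$. The tour of a subtree on $\lceil t/2 \rceil$ vertices has $t+1$ entries, so one entry must be \emph{removed}, not an $r$ added. The other reading (a subtree on $\lfloor t/2 \rfloor$ vertices plus an extra $r$) violates congestion for $k=2$. For example, for $G$ a single edge $rx$ and $t=3$ it yields $r,r,r$, whereas the only valid output is $r,x,r$. The missing idea is the paper's trimming trick. All occurrences of a leaf other than $r$ are consecutive in the tour, so deleting some of them keeps consecutive entries at distance at most $2$ and keeps both ends at $r$. If the subtree is just $\{r\}$, i.e.\ $t=1$, simply output $r$. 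With this trick the paper avoids your case split entirely. It deletes leaves other than $r$ until $|V(G)| = \lceil t/k \rceil$, and replaces the first occurrence of each vertex by $k-1$ copies so that every vertex occurs exactly $k$ times. It then removes the $k\lceil t/k\rceil - t < k$ surplus entries from the block of a single leaf.
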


\begin{proof}
We may assume without loss of generality that $G$ is a tree rooted at $r$: if not the algorithm computes a spanning tree $G'$ of $G$ rooted at $r$ and works with $G'$ rather than $G$. An embedding $f$ of $P$ into $G'$ with stretch at most $2$, congestion at most $k$, such that $f(v_1) = f(v_t) = r$ is also such an embedding into $G$. From now on we will consider the case that $G$ is a tree rooted at $r$.  Further we may assume that $|V(G)| = \lceil t/k \rceil$, as otherwise we may pick a leaf $v$ of $G$, set $G' = G-v$ and work with $G'$ instead. Again an embedding $f$ of $P$ into $G'$ with stretch at most $2$, congestion at most $k$, such that $f(v_1) = f(v_t) = r$ is also such an embedding into $G$. Thus we will assume that $G$ is a tree rooted at $r$ and that $|V(G)| = \lceil t/k \rceil$.

The algorithm computes a depth-first-search traversal of $G$ rooted at $r$, recording the first and last time every vertex is visited by the traversal. More formally we define a function $DFS(T, r)$ that takes as input a tree $T$ and vertex $r \in T$, and outputs a sequence of vertices in $T$. If $V(T) = \{r\}$ then $DFS(T, r) = r,r$. Otherwise let $T_1, \ldots, T_p$ be the connected components of $T-r$ and let $r_i$ be the unique neighbor of $r$ in $T_i$. We define $DFS(T,r)$ as follows:
$$DFS(T,r) = r \circ DFS(T_1, r_1) \circ \ldots \circ DFS(T_p, r_p) \circ r\mbox{.}$$
Here $\circ$ denotes concatenation of sequences (so $a,b,c \circ b,a,d = a,b,c,b,a,d$). 

It follows from the definition of $DFS(G, r)$ that every vertex of $G$ appears precisely two times in the sequence. Furthermore for every pair of vertices $u$, $v$ appearing consecutively in the sequence $DFS(G, r)$, either $u=v$, $u$ is the parent of $v$, $v$ is the parent of $u$, or $u$ and $v$ have the same parent in $G$ (see e.g.~\cite{kleinberg2006algorithm}). Thus $d_G(u, v) \leq 2$. 

From $DFS(G, r)$ we make a sequence $S$ on precisely $|V(G)| \cdot k$ vertices by replacing the {\em first} occurrence of every vertex $v$ in $DFS(G, r)$ by $k-1$ occurrences of $v$. Since every vertex occurs precisely two times in  $DFS(G, r)$ it occurs precisely $k$ times in $S$.
Since $|V(G)| = \lceil t/k \rceil$ we have that $0 \leq |V(G)| \cdot k - t < k$. Let $\ell$ be a leaf of $G$. From the definition of $DFS(G, r)$ it follows that all $k$ occurrences of $\ell$ appear consecutively in $S$. Create the sequence $S'$ from $S$ by removing precisely $|V(G)| \cdot k - t$ occurrences of $\ell$. Note that at least one occurrence of $\ell$ remains in the sequence. Hence, for every pair $u, v$ of {\em distinct} vertices appearing consecutively in $S'$ they also appear consecutively in $S$. Therefore we have that $|S'| = t$, the first and last vertex in the sequence $S'$ is $r$, and for every pair $u, v$ of consecutive vertices in $S'$ we have that $d_G(u,v) \leq 2$.

The function $f$ maps $v_i$ to the $i$'th vertex in the sequence $S'$. Since the length of $S'$ is precisely $t$, $f$ is well defined and $f(v_1) = f(v_t) = r$. Further, since every vertex of $G$ appears at most $k$ times in $S'$ the congestion of $f$ is at most $k$. For every edge $v_{i}v_{i+1}$ in $P$ the vertices $f(v_i)$ and $f(v_{i+1})$ appear consecutively in $S'$ and hence $d_G(f(v_i), f(v_{i+1})) \leq 2$. Thus the stretch of $f$ is at most $2$.  

Each step of the algorithm runs in polynomial time. 
\end{proof}

Next we extend Lemma~\ref{lem:pathWithOneEndpointsGeneral} to the case where we would like to map the two endpoints of $P$ into different prescribed vertices.
%Note that Lemma~\ref{lem:pathWithTwoEndpointsGeneral} generalizes Lemma~\ref{lem:pathWithFixedEndpoints}, except that the congestion is slightly worse, and 

\begin{lemma}\label{lem:pathWithTwoEndpointsGeneral}
There exists a polynomial time algorithm that takes as input
a path $P = v_1, \ldots, v_t$,
two positive integers $s$ and $k$ such that $k \geq 2$,
a connected graph $G$ on at least $t/k$ vertices,
and two vertices $a$ and $b$ of $G$ such that $d_G(a, b) \leq s \cdot (t-1)$,
and outputs an embedding $f$ of $P$ into $G$ with stretch at most $\max(2,s)$ and congestion at most $k+1$, such that $f(v_1) = a$ and $f(v_t) = b$.
\end{lemma}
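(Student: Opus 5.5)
Let $Q = a = q_0, q_1, \ldots, q_d = b$ be a shortest path from $a$ to $b$ in $G$, so $d = d_G(a,b) \leq s(t-1)$. The plan is to split $P$ into an initial segment that walks along $Q$ from $a$ to $b$, and a final segment that is handled by Lemma~\ref{lem:pathWithOneEndpointsGeneral} with both endpoints mapped to $b$. The walking segment should use only a small number of vertices of $P$, so that the remaining segment still satisfies the size hypothesis of Lemma~\ref{lem:pathWithOneEndpointsGeneral}.

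\textbf{Step 1 (walking segment).} Set $m = \lceil d/s \rceil$. Since $d \leq s(t-1)$ we have $m \leq t-1$. Map $v_{i+1}$ to $q_{\min(is,\,d)}$ for $0 \leq i \leq m$. Then $v_1 \mapsto a$ and $v_{m+1} \mapsto b$. Consecutive images are at distance at most $s$ along $Q$, hence at most $s$ in $G$. Because $Q$ is a shortest path, the images $q_0, q_s, q_{2s}, \ldots$ are distinct vertices of $G$, so this segment contributes congestion at most $1$.

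\textbf{Step 2 (remaining segment).} The rest of the path, $P' = v_{m+1}, \ldots, v_t$, has $t' = t-m$ vertices and both of its endpoints must go to $b$. If $t' = 1$ we are done. Otherwise we apply Lemma~\ref{lem:pathWithOneEndpointsGeneral} to $P'$ with root $r = b$ and the same $k$. This is legal because $t'/k \leq t/k \leq |V(G)|$. It gives an embedding of $P'$ with stretch at most $2$ and congestion at most $k$ with $f(v_{m+1}) = f(v_t) = b$, which agrees with Step 1 at the shared vertex $v_{m+1}$.

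\textbf{Step 3 (combining).} Every edge of $P$ lies inside one of the two segments, so the stretch of the combined map is at most $\max(2,s)$. For congestion, a vertex of $G$ receives at most $k$ preimages from $P'$ and at most $1$ from the walking segment. The vertex $v_{m+1}$ is counted in both but mapped once, so the total is at most $k+1$. Both steps clearly run in polynomial time.

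The only delicate point is the congestion bookkeeping at the junction and along $Q$. The distinctness of the images $q_{is}$ along the shortest path is what keeps the extra congestion to $+1$. There is also a minor edge case: when $d = 0$ we have $m = 0$, and the statement reduces directly to Lemma~\ref{lem:pathWithOneEndpointsGeneral}. Using the original $k$ in Step 2 (rather than reducing it) is fine, since $P'$ is no longer than $P$.
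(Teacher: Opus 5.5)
Your proof is correct and follows the paper's approach: walk along a shortest $a$--$b$ path $Q$, then fold the rest of $P$ back at $b$ using Lemma~\ref{lem:pathWithOneEndpointsGeneral}. The only difference is that you always step by $s$ along $Q$, which covers both of the paper's cases at once; the paper treats $d_G(a,b) \geq t-1$ separately by spreading $P$ evenly along $Q$ (congestion $1$, no fold), and uses unit steps followed by the fold when $d_G(a,b) < t-1$.
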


\begin{proof}
We first prove the lemma for the case where $d_G(a, b) \geq t-1$. If $t = 1$ then $a = b$, thus setting $f(v_1) = a$ satisfies the lemma trivially. We therefore assume $t \geq 2$. Let $Q = q_1, q_2, \ldots, q_\ell$ be a shortest path from $a$ to $b$ in $G$ (thus $q_1 = a$ and $q_\ell = b$). For every $i \leq t$ we define $f(v_i) = q_j$ where $j = 1 + \lceil (i-1)(\ell-1)/(t-1) \rceil$. Note that $f(v_1) = q_1 = a$ and $f(v_t) = q_\ell = b$. Since $d_G(a, b) = \ell - 1 \leq s(t-1)$ it follows that $(\ell-1)/(t-1) \leq s$ and hence $d_G(f(v_i), f(v_{i+1})) \leq s$. Thus the stretch of $f$ is at most $s$. Since  $\ell - 1 =  d_G(a, b) \geq t-1$ we have that $(\ell-1)/(t-1) \geq 1$ and therefore $f(v_i) \neq f(v_{i'})$ for $i \neq i'$. Thus the congestion of $f$ is at most $1$.

We now consider the case that $d_G(a, b) < t-1$.
%$a=b$ then the statement of the lemma follows immediately from Lemma~\ref{lem:pathWithOneEndpointsGeneral}. Suppose now that $1 \leq d_G(a, b) < t-1$.
Set $\ell = d_G(a, b) + 1$ and define $P_1 = v_1, v_2, \ldots, v_\ell$ and $P_2 = v_\ell, v_{\ell+1}, \ldots, v_t$. Let $Q = q_1, \ldots, q_\ell$ be a shortest path from $a$ to $b$ in $G$. Define the embedding $f_1$ from $P_1$ to $Q$ as $f_1(v_i) = q_i$ and observe that $f_1(v_1)=a$, $f_1(v_\ell)=b$, the stretch of $f_1$ is at most $1$ and the congestion of $f_1$ is at most $1$. Obtain an embedding $f_2$ of $P_2$ into $G$ with $f_2(v_\ell) = f_2(v_t) = b$ by applying Lemma~\ref{lem:pathWithOneEndpointsGeneral} to $P_2$ and $G$ with $r=b$. The stretch of $f_2$ is at most $2$ and the congestion of $f_2$ is at most $k$. Define an embedding $f$ of $P$ into $G$ by setting $f(v_i) = f_1(v_i)$ for $i \leq \ell$ and $f(v_i) = f_2(v_i)$ for $i > \ell$. The congestion of $f$ is at most $k+1$ (the sum of congestions of $f_1$ and $f_2$). Further, for every edge $v_iv_{i+1}$ of $P$, it is either an edge of $P_j$ for $j \in \{1,2\}$. Then $f(v_i) = f_j(v_i)$ and $f(v_{i+1}) = f_j(v_{i+1})$ and since the stretch of $f_j$ is at most $2$, the stretch of $f$ is at most $2$ as well. All constructions take polynomial time.
\end{proof}

The next lemma shows how to turn any embedding of $G$ into $P_\mathbb{Z}$ into a layout.
A vertex set $S$ {\em comes first} in a layout $f$ (or an embedding $f$ of a graph into $P_\mathbb{Z}$) if, for all $x \in S$ and $y \notin S$ we have $f(x) \leq f(y)$.

\begin{lemma}\label{lem:embeddingToLayout}
There exists a polynomial time algorithm that takes as input 
a graph $G$, 
an embedding $f$ of $G$ into $P_\mathbb{Z}$ with stretch $s_f$ and congestion $c_f$, 
and a vertex set $S$ of $G$ that comes first in $f$,
and outputs a layout $g$ of $G$ with bandwidth at most $(s_f+1)c_f - 1$ where $S$ comes first.
\end{lemma}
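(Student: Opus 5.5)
The natural approach is to "stretch out" the embedding by breaking ties: sort the vertices by their image under $f$ and lay them out in that order. I would sort $V(G)$ lexicographically by the key $(f(v), [v \notin S], \mathrm{id}(v))$, where $\mathrm{id}$ is an arbitrary fixed ordering of $V(G)$ used only to break remaining ties. I then set $g(v)$ to be the position of $v$ in this sorted order, so $g : V(G) \to \{1,\ldots,n\}$ is a bijection and hence a layout. Computing $g$ is just sorting, so it takes polynomial time.

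\emph{$S$ comes first in $g$.} Take $x \in S$ and $y \notin S$. Since $S$ comes first in $f$, we have $f(x) \le f(y)$. If $f(x) < f(y)$, then $x$ precedes $y$ because of the first coordinate of the key. If $f(x) = f(y)$, then $x$ precedes $y$ because of the second coordinate. In either case $g(x) < g(y)$.

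\emph{Bandwidth bound.} Take an edge $uv$ and assume without loss of generality $g(u) < g(v)$. Then $f(u) \le f(v)$, and $f(v) - f(u) \le s_f$ because the stretch in $P_\mathbb{Z}$ is exactly $|f(u)-f(v)|$. Every vertex $w$ with $g(u) \le g(w) \le g(v)$ satisfies $f(u) \le f(w) \le f(v)$, since the order is monotone in $f$. So all such $w$ lie in $f^{-1}(\{f(u), \ldots, f(v)\})$. This preimage consists of at most $s_f+1$ integer points, each carrying at most $c_f$ vertices, so it has at most $(s_f+1)c_f$ vertices. The interval $[g(u), g(v)]$ therefore contains at most $(s_f+1)c_f$ vertices, which gives $g(v) - g(u) \le (s_f+1)c_f - 1$.

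The only delicate point is that the tie-breaking must not disrupt monotonicity in $f$. This is guaranteed because $f(v)$ is the primary sort key and the $S$-membership bit is used only among vertices with equal $f$-value, so no step is a real obstacle.
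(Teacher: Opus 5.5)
Your proposal is correct and is essentially the paper's proof. The paper also sorts the vertices by $f$-value, breaks ties first by $S$-membership and then by vertex index, and bounds $g(v)-g(u)$ by $|f^{-1}(\{p,\ldots,q\})|-1 \le (s_f+1)c_f-1$, exactly as you do.
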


\begin{proof}
We assume without loss of generality that $V(G) = \{v_1, \ldots, v_n\}$.
We define a total ordering $\leq$ on the vertices of $G$ as follows: for $i \neq j$ we have that 
$v_i \leq v_j$ if $f(v_i) < f(v_j)$, 
or if $f(v_i) = f(v_j)$ and $v_i \in S$ and $v_j \notin S$,
or if $f(v_i) = f(v_j)$ and $|\{v_i,v_j\} \cap S| \in \{0,2\}$ and $i < j$. In other words, we sort the vertices by their $f$-value, first breaking ties by whether they are in $S$ or not (selecting vertices in $S$ first), and then by their name. 

Since $\leq$ is a total order it defines a layout $g$. To see that $S$ comes first in $g$ consider two elements $v_i$ and $v_j$ such that $v_i \in S$ and $v_j \notin S$. We prove that $g(v_i) < g(v_j)$. 
Since $S$ comes first in $f$ we have that $f(v_i) \leq f(v_j)$. If $f(v_i) < f(v_j)$ then $g(v_i) < g(v_j)$. If $f(v_i) = f(v_j)$ then again $g(v_i) < g(v_j)$ by the first tie breaking rule. Thus $S$ comes first in $g$.

Finally we bound the stretch of $g$. 
Let $v_iv_j$ be an edge of $G$. Let $p = \min(f(v_i), f(v_j))$ and $q = \max(f(v_i), f(v_j))$. We have that $q-p \leq s_f$. Furthermore, 
%\todo[inline]{E: isn't the first inequality below actually an equality? DL: no, suppose G is a clique $v1,v2,v3,v4$ and f(v1)=f(v2)=1, f(v3)=f(v4)=2. Look at $v2v3$}
$$g(v_i) - g(v_j) \leq |f^{-1}(\{p, p+1, \ldots, q\})| - 1 \leq (q-p+1)c_f -1 \leq (s_f+1)c_f - 1\mbox{.}$$
Comparing two vertices $v_i$ and $v_j$ in the total ordering $\leq$ takes polynomial time, hence the sort takes polynomial time as well. 
\end{proof}

Most of the time when we apply Lemma~\ref{lem:embeddingToLayout} we will have $s_f \geq 1$ and use the upper bound $2s_fc_f$ for $(s_f+1)c_f - 1$. The stricter bound of $(s_f+1)c_f - 1$ is just there to make clear that when congestion of $f$ is $1$ then Lemma~\ref{lem:embeddingToLayout} does not increase stretch. 
The next lemma shows that at the cost of a small increase in the bandwidth we may put any small set $S$ first. 
\begin{lemma}\label{lem:makeSFirst}
There exists a polynomial time algorithm that takes as input a graph $G$, bandwidth $k$ layout $f$ of $G$, and a vertex set $S \subseteq V(G)$ and outputs a bandwidth $2(k+1)(|S|+1)$ layout of $G$ in which $S$ comes first. 
\end{lemma}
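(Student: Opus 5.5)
We prove Lemma~\ref{lem:makeSFirst} by building an embedding of $G$ into $P_{\mathbb{Z}}$ with small stretch and congestion in which $S$ comes first, and then converting it to a layout with Lemma~\ref{lem:embeddingToLayout}. The natural construction is to ``fold'' the layout $f$ at every position occupied by a vertex of $S$, so that all of $S$ lands on a single leftmost point.

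Write $|S| = m$, and let $p_1 < p_2 < \dots < p_m$ be the positions $f(x)$ for $x \in S$. Consider the piecewise linear map $\phi : \mathbb{Z} \to \mathbb{Z}$ defined by
\[
\phi(i) = \min_{1 \le j \le m} |i - p_j| + 1 .
\]
We then set $g(v) = \phi(f(v))$ for $v \notin S$, and $g(x) = 0$ for $x \in S$. Since every vertex outside $S$ satisfies $g \ge 1$, the set $S$ comes first in $g$.

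\textbf{Stretch.} The map $\phi$ is $1$-Lipschitz, being a minimum of $1$-Lipschitz functions. Hence for an edge $uv$ with $u, v \notin S$ we get $|g(u) - g(v)| \le |f(u) - f(v)| \le k$. Now take an edge $xv$ with $x \in S$ and $v \notin S$. Then $|f(v) - f(x)| \le k$, so $\phi(f(v)) \le k + 1$, which gives $|g(v) - g(x)| \le k + 1$. An edge with both endpoints in $S$ has stretch $0$. So the stretch of $g$ is at most $k + 1$.

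\textbf{Congestion.} For a value $c \ge 1$, the preimage $\phi^{-1}(c)$ consists of integers at distance exactly $c - 1$ from some $p_j$. There are at most $2m$ such integers, and $f$ is injective, so $|g^{-1}(c)| \le 2m$. Also $|g^{-1}(0)| = m$. So the congestion of $g$ is at most $2m$; when $m \ge 1$ this is at most $2m$, and it is at most $m + 1$ only in trivial cases. If $S = \emptyset$ we simply output $f$ itself.

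\textbf{Conversion to a layout.} Apply Lemma~\ref{lem:embeddingToLayout} with $s_g = k + 1$ and $c_g \le 2m$. This gives a layout in which $S$ comes first and whose bandwidth is at most
\[
(k + 2)\cdot 2m - 1 .
\]
We need this to be at most $2(k+1)(m+1) = 2km + 2k + 2m + 2$. Since $2km + 4m - 1 \le 2km + 2k + 2m + 2$ exactly when $2m \le 2k + 3$, the bound holds whenever $m \le k + 1$.

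\textbf{Main obstacle.} The real difficulty is the bookkeeping needed to hit the stated constant $2(k+1)(|S|+1)$ for all $m$. When $m > k + 1$ the crude congestion count above is too weak, and the count should be sharpened. Vertices of $S$ that lie within distance $k$ of each other in $f$ should be grouped into one block. After folding, the preimage of each value $c \ge 1$ is then controlled by the number of block boundaries rather than by $2m$. The more robust fallback is to keep the non-uniform count $|g^{-1}(c)| \le \#\{ j : p_j \pm (c-1) \text{ is unoccupied by } S \}$ and redo the estimate from the proof of Lemma~\ref{lem:embeddingToLayout} directly. That estimate sums preimage sizes over a window $\{p, \dots, q\}$ of length at most $k + 2$. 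Doing the count this way should give a bound of the form $(k+1)\cdot 2(m+1)$, because each window contains at most $2$ images of each fold center plus the $m$ vertices of $S$.

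Every step of the construction is clearly polynomial time.
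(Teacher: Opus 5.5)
Your folding map $\phi$ is essentially the paper's construction. The paper also folds the path underlying $f$ at the positions of $S$ so that $S$ lands on the smallest value, and then invokes Lemma~\ref{lem:embeddingToLayout}. However, as written your proof only establishes the stated bound when $m\le k+1$. For $m>k+1$ you have $(k+2)\cdot 2m-1>2(k+1)(m+1)$, and the ``Main obstacle'' paragraph only says a refined count ``should give'' the right bound; it never carries that count out.

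The diagnosis there is also off. Your congestion bound $2m$ is fine, and no grouping of $S$ into blocks is needed. The loss comes from your choice $g(x)=0$ for $x\in S$. That choice raises the stretch of the $S$-to-non-$S$ edges from $k$ to $k+1$. The estimate $(s+1)c-1$ of Lemma~\ref{lem:embeddingToLayout} cannot see that level $1$ of your $g$ is empty.

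The repair is one line: define $g=\phi\circ f$ on all of $V(G)$.
\begin{itemize}
\item Every $x\in S$ gets $g(x)=1$. Every $v\notin S$ has $f(v)\neq p_j$ for all $j$ by injectivity of $f$, so $g(v)\ge 2$. Hence $S$ comes first.
\item $g$ is $f$ composed with the $1$-Lipschitz map $\phi$, so its stretch is at most $k$ on \emph{every} edge, including edges between $S$ and $V(G)\setminus S$.
\item Level $1$ has exactly $m$ preimages, namely $S$, and each level $c\ge 2$ has at most $2m$. So the congestion is at most $2m$.
\end{itemize}
Lemma~\ref{lem:embeddingToLayout} then gives bandwidth at most $(k+1)\cdot 2m-1\le 2(k+1)(m+1)$ for every $m\ge 1$.

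Alternatively, you can keep your $g$ and redo the window count directly. For an edge $xv$ with $x\in S$, the window $\{0,\dots,q\}$ with $q\le k+1$ contains $m$ vertices at level $0$, none at level $1$, and at most $2m$ at each of the remaining $q-1\le k$ levels. This gives a layout difference of at most $m+2mk-1$.

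With this repair your argument matches the paper's and is slightly sharper. The paper folds at $S$ together with the first and last vertex of $f$, which costs congestion $2(|S|+1)$. Your $\phi$ leaves the two end segments unfolded, so congestion $2|S|$ suffices.
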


\begin{proof}
We treat the layout $f$ as a stretch $k$ and congestion $1$ embedding of $G$ into a path $P$ on $n$ vertices. 
Let $S'$ be the union of $S$ and the first and last vertex of $G$ (according to $f$). Thus $|S'| \leq |S| + 2$.
Divide $P$ into $\ell = |S'|-1 \leq |S|+1$ sub-paths $P_1, \ldots, P_\ell$ where, for every $1 \leq i \leq \ell$, $P_i$ is the sub-path of $P$ that starts at the $i$'th vertex of $S'$ on $P$ and ends at the $i+1$'st vertex of $S'$ on $P$.
For each path $P_i$ obtain using Lemma~\ref{lem:pathWithFixedEndpoints} (with $a=b=k=1$) an embedding $g_i$ of $P_i$ into $P_\mathbb{Z}$ such that the first and last vertex of $P_i$ are mapped to $1$ by $g_i$. 
Each $g_i$ has stretch at most $1$ and congestion at most $2$. 
Define $g : V(P) \rightarrow \mathbb{Z}$ as follows. For every $v \in V(P)$, if $v \in V(P_i)$ then $g(v) = g_i(v)$. Note that $g$ is well defined: every vertex $v \in V(P)$ is in $V(P_i)$ for some $i$. Further, if $v \in V(P_i) \cap V(P_j)$ for $i \neq j$ then $v$ is an endpoint of $P_i$ and $P_j$ and therefore mapped to $1$ both by $g_i$ and by $g_j$. 
Note that $g$ maps every vertex of $f(S)$ to $1$ and every vertex of $P$ to at least $1$, so $f(S)$ comes first in $g$.
Further $g$ has stretch $1$ and congestion $2\ell = 2(|S|+1)$.

Let $h'$ be the embedding of $G$ into $P_\mathbb{Z}$ defined as $h'(v) = g(f(v))$ for every $v \in V(G)$. By Lemma~\ref{lem:composeEmbeddings} the stretch of $h'$ is at most $k$ and congestion of $h'$ is at most $2(|S|+1)$. Additionally $S$ comes first in $h'$ because $f(S)$ comes first in $g$.
Finally, obtain a layout $h$ of $G$ by applying Lemma~\ref{lem:embeddingToLayout} to $h'$ and $S$. By Lemma~\ref{lem:embeddingToLayout} $S$ comes first in $h$, and the bandwidth of $h$ is at most $2(k+1)(|S|+1)$. Each of the steps takes polynomial time.
\end{proof}

We will often need to lift the layout of $G-S$ for a small set $S$ to a layout of $G$.

\begin{lemma}\label{lem:deleteSet}
There exists a polynomial time algorithm that takes as input a graph $G$ of maximum degree $d$, a vertex set $S$ in $G$, a bandwidth $k$ layout $f$ of $G-S$, and outputs a layout of $G$ with bandwidth at most $12k|S|d$.
\end{lemma}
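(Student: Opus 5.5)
We follow the ``fold over'' strategy sketched in the proof outline. Let $N = N(S) \setminus S$; since $G$ has maximum degree $d$ we have $|N| \leq |S|d$. We may assume $S$ is non-empty and $d \geq 1$; otherwise $f$ (extended arbitrarily to the isolated vertices of $S$) already works. View $f$ as a stretch-$k$, congestion-$1$ embedding of $G-S$ into a path $P$ on $n-|S|$ vertices.

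\textbf{Folding the path.} Let $N'$ consist of the images $f(N)$ together with the first and last vertices of $P$, so $|N'| \leq |S|d+2$. These points cut $P$ into at most $|S|d+1$ subpaths $P_1,\ldots,P_\ell$, consecutive ones sharing an endpoint. Exactly as in the proof of Lemma~\ref{lem:makeSFirst}, apply Lemma~\ref{lem:pathWithFixedEndpoints} with $a=b=1$ to each $P_i$. This gives an embedding $g_i$ of $P_i$ into $P_\mathbb{Z}$ with stretch $1$ and congestion $2$, mapping both endpoints of $P_i$ to $1$ and every vertex to a value at least $1$. The $g_i$ agree on shared endpoints, so they glue to an embedding $g$ of $P$ with stretch $1$, congestion at most $2\ell \leq 2(|S|d+1)$, and $g(f(N)) = \{1\}$.

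\textbf{Placing $S$.} By Lemma~\ref{lem:composeEmbeddings}, $h'=g\circ f$ is an embedding of $G-S$ into $P_\mathbb{Z}$ with stretch $k$ and congestion at most $2(|S|d+1)$, and it maps every vertex of $N$ to $1$. Extend $h'$ to $G$ by setting $h'(v)=0$ for all $v\in S$. The congestion becomes at most $\max(|S|,2(|S|d+1)) \leq 2(|S|d+1)$. For the stretch, consider the three kinds of edges. An edge inside $G-S$ has stretch at most $k$. An edge from $S$ to $N$ has stretch exactly $1$. An edge inside $S$ has stretch $0$. So $h'$ has stretch at most $k$, using $k\geq 1$.

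\textbf{Converting to a layout.} Apply Lemma~\ref{lem:embeddingToLayout} to $h'$. The resulting layout has bandwidth at most
\[
(k+1)\cdot 2(|S|d+1) - 1 \leq 4k\cdot 2|S|d \leq 12k|S|d,
\]
using $k,|S|,d\geq 1$. All steps are clearly polynomial time.

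The only point requiring care is checking that the gluing of the $g_i$ is well defined and that the congestion count is right. This is handled identically to Lemma~\ref{lem:makeSFirst}, since each vertex of $P$ lies in at most two of the $P_i$, and only as a shared endpoint, which both adjacent $g_i$ send to $1$.
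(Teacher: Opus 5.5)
Your proof is correct and is essentially the paper's argument. The paper applies Lemma~\ref{lem:makeSFirst} to $G-S$ with $X=N(S)$ and then places $S$ immediately before the resulting layout, whereas you inline that lemma's folding construction and put $S$ at position $0$ before a single call to Lemma~\ref{lem:embeddingToLayout}; the two give essentially the same bound. (Your aside that $f$ ``already works'' when $S=\emptyset$ is not literally true against the bound $12k|S|d=0$, but that degenerate case is a defect of the statement itself, and the paper's proof has the same gap.)
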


\begin{proof}
The algorithm sets $X = N(S)$ and applies Lemma~\ref{lem:makeSFirst} to $G-S$ and $X$ to make an embedding $g$ of $G-S$ with bandwidth $2(k+1)(|X|+1)$ in which $X$ comes first. Since $G$ has maximum degree $d$ we have that $|X| \leq |S|d$, and so the bandwidth of $g$ is at most $12k|S|d$. Extend $g$ to a layout $g'$ of $G$ by mapping the vertices of $S$ immediately before the vertices of $G-S$. Since all edges from $S$ have the other endpoint in $S \cup X$ we have that for every edge $uv$ with one endpoint in $S$ it holds that $g'(u) - g'(v) \leq |X|+|S| \leq |S|(d+1) \leq 12k|S|d$. Since the algorithm of Lemma~\ref{lem:makeSFirst} takes polynomial time, this algorithm runs in polynomial time as well.
\end{proof}

\section{Low Bandwidth Embeddings with Prescribed Vertices}\label{sec:prescribed}
Let $G$ be a graph of bandwidth at most $k$, and $H$ be a connected graph. 
Treating the layout of $G$ as an embedding of $G$ into a path, Lemma~\ref{lem:pathWithOneEndpointsGeneral} tells us that 
%if $H$ is not much smaller than $G$ then 
there exists an embedding $g$ of $G$ into $H$ where the stretch and congestion of the embedding are upper bounded as a function of $k$ and $|V(G)|/|V(H)|$.
Suppose now that we are in a more general setting. In addition to $G$ and $H$ we are given a ``small'' vertex set $S \subseteq V(G)$ and a function $h : S \rightarrow V(H)$. We now want the embedding $g$ to ``respect'' $h$ in the sense that $g(v) = h(v)$ for every $v$ in $S$.
How much can this additional requirement affect the stretch and congestion of $g$?
An immediate answer is that the stretch of $g$ can increase a lot. If $u$ and $v$ are vertices in $S$ that are close together in $G$, but $h(u)$ and $h(v)$ are very far away from each other in $H$, then the stretch of $g$ must be large. 
However, it turns out that this is essentially the only bad thing that can happen. 
In particular, the main result of this section is that 
%if $S$ and the $S$-stretch of $h$ are both small, then 
there exists an embedding $g$ of $G$ into $H$ with stretch and congestion that are upper bounded by a function of $k$, $|V(G)|/|V(H)|$, $|S|$, and the $S$-stretch of $h$. 

The main technical ingredient of the proof is that for every graph $G$ of bandwidth at most $k$ and every small vertex set $S \subseteq V(G)$, there exists a tree $T$ with few leaves and an embedding of $G$ into $T$ with low stretch, congestion, and $S$-contraction.
%
%We first show that this goal can be attained if we embed into a tree with at most $|S|$ leaves instead of a path. 

%\todo[inline]{define/discuss distortion in intro. I'm adding a sentence right here for now, but makes sense to add in future. }

%In this section we prove our first main theorem, namely that whenever a graph $G$ has bandwidth at most $k$, then for every vertex subset $S$ of $G$ there exists a layout of $G$ with small bandwidth that simultaneously is a low distortion (low $S$-stretch and $S$-contraction) embedding of $S$ into the line. 
%

%
%The main technical ingredient is that this goal can be attained if we embed into a tree with at most $|S|$ leaves instead of a path. 

\begin{lemma}\label{lem:treeEmbedding}
There exists a polynomial time algorithm that takes as input a connected graph $G$, a vertex set $S$, and a bandwidth $k$ layout $f$ of $G$ in which $S$ comes first, and outputs a tree $T$ with at most $2|S|$ leaves and an embedding $g$ of $G$ into $T$ with stretch at most $k$, congestion $1$, and $S$-contraction at most $\max(2|S|-1, 2k)$. 

\end{lemma}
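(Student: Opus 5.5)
The plan is to build $T$ as a \emph{component tree} of the layout, so that stretch and congestion come for free from $f$, and then spend the real effort on bounding the number of leaves and the $S$-contraction. Write $S=\{s_1,\dots,s_p\}$ with $f(s_j)=j$; this is possible because $S$ comes first, and after shifting we may take $f:V(G)\to\{1,\dots,n\}$ bijective.

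For every $i$ and every connected component $C$ of $G_{\geq i}=G[f^{-1}(\{i,\dots,n\})]$ that contains the vertex $f^{-1}(i)$, I create a node $(i,C)$. Its parent is the node $(i',C')$ where $i'<i$ is the largest index such that the component $C'$ of $G_{\geq i'}$ containing $C$ also contains $f^{-1}(i')$. The root is $(1,V(G))$, which exists because $G$ is connected. The embedding is
$$g(v)=\bigl(f(v),\,\text{the component of } G_{\geq f(v)} \text{ containing } v\bigr).$$
Distinct vertices have distinct first coordinates, so $g$ is injective and the congestion is $1$. For an edge $uv$ with $f(u)<f(v)\le f(u)+k$, the vertex $v$ lies in the component of $u$ in $G_{\geq f(u)}$. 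Hence $g(v)$ is a descendant of $g(u)$, and the chain between them passes through nodes with strictly increasing first coordinate in $(f(u),f(v)]$. So $d_T(g(u),g(v))\le f(v)-f(u)\le k$, giving stretch at most $k$. Everything here is computable in polynomial time.

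Next I bound the leaves. A branching happens when removing the vertex at position $i$ splits its component. I would then show the following: since $G$ is connected and every edge has length at most $k$, each component $C$ of $G_{\ge i}$ with $i>p+k$ ends at the vertex of $C$ that is last in $f$. I then prune or contract, keeping only the subtree spanned by $g(V(G))$, and reroute any branch that does not contain an image of $S$ along the branch of its last vertex. The target is that every remaining leaf ``belongs'' to a vertex of $S$ in a charging scheme that charges each $s\in S$ at most twice, which gives at most $2|S|$ leaves. I expect this charging to be the delicate combinatorial step. If the suffix version does not charge cleanly, my fallback is to run the same construction on prefixes $G[f^{-1}(\{1,\dots,i\})]$ and combine the two.

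For the $S$-contraction, take distinct $s,s'\in S$. If $g(s)$ and $g(s')$ lie on a common root path, then $s'$ is reachable from $s$ inside $G_{\ge f(s)}$. I would bound $d_G(s,s')$ by walking through the at most $p$ positions $1,\dots,p$ and using edges of length at most $k$. The goal is a $G$-path whose length, per unit of $T$-distance, is at most $2p-1$ when the step is inside the $S$-block and at most $2k$ when it leaves the block. If $g(s)$ and $g(s')$ are in different branches, their $T$-distance is at least the depth difference to the branch point, and I would bound it the same way. This contraction estimate is the main obstacle. It is where the precise constant $\max(2|S|-1,2k)$ must come out, and I am not confident that the plain component tree achieves it. If it does not, I would modify $T$ by attaching each $s_j$ as its own pendant path. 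The length of that path would be chosen from the Matou\v{s}ek embedding of Theorem~\ref{thm:lineEmbedding} restricted to $S$, with the aim of forcing $d_T$ up to $d_G$ on $S$ while keeping stretch $k$. The pendant paths would add at most $|S|$ further leaves, which still fits within the $2|S|$ budget.
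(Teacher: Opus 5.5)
\textbf{There is a genuine gap.} Your suffix component tree does give congestion $1$ and stretch at most $k$. It cannot give the leaf bound or the $S$-contraction bound, and the repairs you sketch are either unspecified or do not work.

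\emph{Contraction.} Here the failure is structural, not a matter of constants. Parent indices strictly decrease, so the node of $s_j$ has depth at most $j-1\le |S|-1$. Hence $d_T(s,s')\le 2|S|-2$ for all $s,s'\in S$, whereas $d_G(s,s')$ can be of order $n$. Concretely, take the path $v_1v_3v_5\cdots v_{n-1}v_nv_{n-2}\cdots v_4v_2$ with $f(v_i)=i$ (bandwidth $2$) and $S=\{v_1,v_2\}$. Your tree makes $v_1$ the parent of $v_2$, so $d_T(v_1,v_2)=1$ while $d_G(v_1,v_2)=n-1$.

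\emph{Leaves.} The leaf count is also unbounded. Take the comb with backbone $b_1,\dots,b_m$ and a pendant $c_j$ at each $b_j$, with layout $b_1,c_1,b_2,c_2,\dots,b_m,c_m$ and $S=\{b_1\}$. Your tree is the comb itself, with $m$ leaves.

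\emph{Repairs.} Pruning to the subtree spanned by $g(V(G))$ changes nothing, since $g$ is a bijection onto the nodes. The charging scheme and the ``combine prefixes and suffixes'' fallback are never specified. The pendant-path repair destroys stretch: to raise $d_T(v_1,v_2)$ to within a bounded factor of $n-1$, some pendant path must have length of order $n$, and the $G$-edge at the corresponding $s_j$ is then stretched by that much.

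\textbf{The missing idea.} Orient the construction by prefixes, so that $S$ sits at the leaf end, and let the parent relation be driven by how the prefix components meet $S$.
\begin{itemize}
\item Let $C_i$ be the component of $G[\{v_1,\dots,v_i\}]$ containing $v_i$; these form a laminar family.
\item Define the leaf set $\ell(i)=C_{a(i)}\cap S$, where $a(i)$ is the least $j\ge i$ with $C_i\subseteq C_j$ and $C_j\cap S\neq\emptyset$. The leaf sets form a laminar family over $S$, so they take at most $2|S|-1$ distinct values.
\item The parent of $v_j$ is the next vertex with the same leaf set, if there is one. Otherwise it is the least $i>j$ with $C_j\subseteq C_i$.
\end{itemize}
Chaining vertices with equal leaf sets into a path is exactly what caps the number of leaves at $2|S|$. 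One still shows that $C_i\subseteq C_j$ forces $v_j$ to be an ancestor of $v_i$, which gives stretch $k$, and that $p(i)\le i+k$.

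For the contraction bound, take $v_p,v_q\in S$ with $p<q$, and let $i$ be the largest index with $\ell(i)\cap\{v_p,v_q\}=\{v_p\}$. Then $v_iv_{p(i)}$ is the only $T$-edge leaving $C_i$, so $d_T(v_p,v_q)\ge\lceil (p(i)-|S|)/k\rceil$. On the other side, $v_p,v_q\in C_{p(i)}$ gives $d_G(v_p,v_q)\le p(i)-1$. Splitting into the cases $p(i)\le 2|S|$ and $p(i)\ge 2|S|$ yields $\max(2|S|-1,2k)$.
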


\begin{proof}
We re-name the vertices of $G$ as $v_1, \ldots, v_n$ such that $f(v_i) = i$ for every $i$.
Then define $V_i = \{v_1, \ldots, v_i\}$ and $G_i = G[V_i]$. 
For each integer $i$ we define $C_i$ to be the connected component of $G_i$ containing $v_i$. Note that since $G$ is connected we have $C_n = G$. Next we show that the family of $C_i$'s forms a laminar family. In fact we show a slightly stronger statement.

\sta{\label{clm:CiLaminar}
For every $i$, $j$, if $i \leq j$ and $C_i \cap C_j \neq \emptyset$ then $C_i \subseteq C_j$
}
Let $u$ be an arbitrary vertex in $C_j$ and $v$ be an arbitrary vertex in $C_i$. Let $w$ be a vertex in $C_i \cap C_j$. There is a path from $u$ to $w$ in $C_j$ (and hence in $G_j$) and a path from $w$ to $v$ in $C_i$, and hence in $G_j$. Thus $u$ and $v$ are in the same component of $G_j$, namely $C_j$. This proves~(\ref{clm:CiLaminar}). 

An immediate consequence of (\ref{clm:CiLaminar}) is the following:

\sta{\label{clm:edgeContainment}
For every $i$, $j$, if $i < j$ and $v_iv_j \in E(G)$ then $C_i \subseteq C_j$
}
This follows directly from (\ref{clm:CiLaminar}) because $v_i \in C_i \cap C_j$.

A vertex $v_i$ is an {\em anchor} if $C_i \cap S \neq \emptyset$. For an anchor vertex $v_i$ its {\em leaf set} $\ell(i)$ is defined as $C_i \cap S$. For every vertex $v_i$ we define the anchor $a(i)$ of $v_i$ to be the lowest $j \geq i$ such that $v_j$ is an anchor and $C_i \subseteq C_j$. Observe that since $G$ is connected $v_n$ is an anchor and hence $a(i)$ is well defined for every $i$. Further note that for every anchor $v_i$, $a(i) = i$. For every non-anchor vertex $v_i$ we define its {\em leaf set} $\ell(i)$ to be $\ell(a(i))$. With this definition we have that $\ell(i) = \ell(a(i))$ for every vertex $v_i$, whether it is an anchor or not. Next we show that the leaf sets also satisfy a property similar to~(\ref{clm:CiLaminar}).

\sta{\label{clm:ellIsLaminar}
For every pair $i$,$j$ of distinct integers in $\{1, \ldots, n\}$, if $\ell(i) \cap \ell(j) \neq \emptyset$ and $a(i) \leq a(j)$ then $\ell(i) \subseteq \ell(j)$.}
Since 
$\ell(i) \cap \ell(j) \neq \emptyset$ and $\ell(i) = \ell(a(i))$ and $\ell(j) = \ell(a(j))$ we have $\ell(a(i)) \cap \ell(a(j)) \neq \emptyset$. So $C_{a(i)} \cap C_{a(j)} \neq \emptyset$ and hence by~(\ref{clm:CiLaminar}) $C_{a(i)} \subseteq C_{a(j)}$. But then $\ell(i) \subseteq \ell(j)$, proving~(\ref{clm:ellIsLaminar}). 

This has the following consequence. 

\sta{\label{clm:ellSizeBound} $|\{\ell(j) ~:~ j \leq n\}| \leq 2|S|-1$}
It follows directly from (\ref{clm:ellIsLaminar}) that $\{\ell(j) ~:~ j \leq n\}$ is a laminar family over the universe $S$. Every laminar family over a universe of size $|S|$ contains at most $2|S|-1$ non-empty sets (see e.g.~\cite{schrijver2003combinatorial} together with the well-known fact that a rooted tree with $|S|$ leaves has at most $2|S|-1$ vertices~\cite{diestelBook}), proving (\ref{clm:ellSizeBound}). 
%\todo{bad citation since it is an exercise there} 
%\todo[inline]{according to chat gpt, these are better citations but i dont have access to check it :( -- Combinatorial Optimization: Polyhedra and Efficiency, Vol. A, Springer, 2003, A. Schrijve -- OR Bollobás, B. Combinatorics: Set Systems, Hypergraphs, Families of Vectors and Combinatorial Probability. }

For every $j < n$ we define the {\em parent} $p(j)$ of $v_j$ as follows. If there exists an $i > j$ such that $\ell(i) = \ell(j)$ then $p(j)$ is set to be the smallest such $i$. Otherwise $p(j)$ is defined to be the smallest $i > j$ such that $C_j \subseteq C_i$. Since $G$ is connected $C_i \subseteq C_n$ for every $i < n$ and thus every vertex (except the root $v_n$) has a parent.

We define $T$ to be the graph with vertex set $V(G)$ and edge set $\{v_jv_{p(j)} : 1 \leq j < n\}$. Since every vertex $v_j$ with $j < n$ has an edge to a vertex $v_{p(j)}$ with $p(j) > j$, every vertex is in the same connected component of $T$ as $v_n$, and hence $T$ is connected. Since $T$ has precisely $n-1$ edges we conclude that $T$ is a tree. 

Next we show that $T$ has at most $2|S|$ leaves. Define $Z = \{v_i ~:~ \nexists j < i \mbox{ s.t. } \ell(j) = \ell(i)\}$. We have that $|Z|$ is upper bounded by the number of distinct sets in  $\{\ell(j) ~:~ j \leq n\}$ and hence by (\ref{clm:ellSizeBound}) we have $|Z| \leq 2|S|-1$. Thus it suffices to show that for every vertex $v_j$ of $T$, if $j < n$ and $v_j \notin Z$ then $v_j$ is not a leaf. Let $v_j$ be such a vertex. Then $v_jv_{p(j)}$ is an edge incident with $v_j$. Further, since $v_j \notin Z$ there exists an $i < j$ such that $\ell(i) = \ell(j)$. Pick the largest such $i$. By the definition of parents, we have that $j$ is the parent of $i$, and hence $v_iv_j$ is also an edge of $T$. But then $v_j$ is not a leaf. Hence $T$ has at most $2|S|$ leaves. 

We define the embedding $g : V(G) \rightarrow V(T)$ as the identity embedding $g(v_i) = v_i$ for all $1 \leq i \leq n$. It remains to upper bound the stretch, congestion and $S$-contraction of $g$. From the definition of $g$ it immediately follows that the congestion $c_g$ of $g$ is $1$.

Towards upper bounding the stretch we define {\em ancestors} of a vertex. We define $p^1(i) = p(i)$. For every $t \geq 2$ such that $p^{t-1}(i) \neq v_n$ we define $p^t(i) = p(p^{t-1}(i))$. We say that $j$ is an {\em ancestor} of $i$ if there exists an integer $t$ such that $j = p^t(i)$. Note that if $j = p^t(i)$ then $d_T(v_i, v_j) = t$.
%Observe that $j$ is an ancestor of $i$ if and only if there exists a path in $T$ from $i$ to $j$ such ...\todo{closer to root}

\sta{\label{clm:canReach}
For every pair $i$,$j$ of distinct integers in $\{1, \ldots, n\}$, if $C_i \subseteq C_j$ then $j$ is an ancestor of $i$. }

It is sufficient to prove (\ref{clm:canReach}) for pairs $i, j$ where $j$ is the smallest integer greater than $i$ such that $C_i \subseteq C_j$. Then (\ref{clm:canReach}) follows by induction on $j-i$. Suppose now that $j$ is the smallest integer greater than $i$ such that $C_i \subseteq C_j$. If $\ell(i) = \ell(j)$ then $j$ is an ancestor of $i$ because the sequence $p^1(i)$, $p^2(i)$, $\ldots$ iterates through all integers $i'$ greater than $i$ such that $\ell(i') = \ell(i)$.

We now consider the case that $\ell(i) \neq \ell(j)$.
We claim that in this case $i$ is an anchor. Suppose not. Then $a(i) > i$ and $C_i \subseteq C_j \subseteq C_{a(i)}$.
We have that $a(i) \leq a(j)$ since $C_i \subseteq C_{a(j)}$. 
Further, $a(j) \leq a(i)$ since $C_j \subseteq C_{a(i)}$.
Thus $a(i)=a(j)$.
But then $\ell(i) = \ell(a(i)) = \ell(a(j)) = \ell(j)$, contradicting $\ell(i) \neq \ell(j)$. We conclude that $i$ is an anchor. 

Since $i$ is an anchor, $j$ is also an anchor. We prove that $j$ is the parent of $i$. In particular, we show that there does not exist an $i' > i$ such that $\ell(i') = \ell(i)$. Suppose for contradiction that such an $i'$ exists. Then $i < i' \leq a(i')$. Then $C_i \cap C_{a(i')}$ is non-empty (since $\ell(i) = \ell(a(i'))$), so by (\ref{clm:CiLaminar}) $C_i \subseteq C_{a(i')}$. Since $j$ is smallest such that $C_i \subseteq C_j$ we have $C_i \subseteq C_j \subseteq C_{a(i')}$. 
Thus $$\ell(i) = C_i \cap S \subseteq C_j \cap S = \ell(j) \subseteq C_{a(i')} \cap S = \ell(i') = \ell(i)\mbox{,}$$
but this contradicts that $\ell(j) \neq \ell(i)$. Thus $j$ is the parent of $i$, proving (\ref{clm:canReach}).

Armed with (\ref{clm:canReach}) we can upper bound the stretch of $g$. Let $v_iv_j$ be an edge of $G$. Without loss of generality $i < j$. By~(\ref{clm:edgeContainment}) we have $C_i \subseteq C_j$, so by (\ref{clm:canReach}), $j = p^t(i)$ for some $t \geq 1$. But $t \leq j-i$, and since $f$ has bandwidth $k$ we have $j-i \leq k$. Thus $d_T(v_i, v_j) = t \leq k$.

Before turning to the $S$-contraction of $g$ we need to develop additional understanding of the last anchor vertex with a given leaf set, as well as upper bound $p(i)-i$ for every $i$. We start with the latter. 

\sta{\label{clm:parentJump}
For every $1 \leq i < n$ it holds that $p(i) \leq i + k$. 
}

Since $G$ is connected and $v_{p(i)} \notin C_i$, there is an edge in $G$ with one endpoint in $C_i$ and the other outside. Select $v_jv_{j'} \in E(G)$  such that $v_j \in C_i$, $v_{j'} \notin C_i$. Then $j \leq i$ and $j' > i$. Furthermore, since $f$ is a bandwidth $k$ layout of $G$ we have that $j' - j \leq k$, and therefore $j' - i \leq j' - j \leq k$. Since $v_j \in C_i \cap C_{j'}$ we get from (\ref{clm:CiLaminar}) that $C_{i} \subseteq C_{j'}$. Thus, by (\ref{clm:canReach}), $j'$ is an ancestor of $i$, and therefore $i < p(i) \leq j' \leq i+k$, proving (\ref{clm:parentJump}). 

We now study the last vertex with a given leaf set. A vertex $v_i$ is a {\em top anchor} if for every $j > i$, $\ell(i) \neq \ell(j)$. 
We first show that every top anchor is an anchor.
%we could have dropped the requirement that $v_i$ is an anchor from the definition of top anchor without changing the set of top anchors. 

\sta{\label{clm:lastIsAnchor}
For every $i$ if for every $j > i$ we have $\ell(i) \neq \ell(j)$, then $v_i$ is an anchor.}
If $v_i$ is not an anchor then $a(i) > i$ and $\ell(i) = \ell(a(i))$, contradicting the assumption that for every $j > i$ we have $\ell(i) \neq \ell(j)$. This proves~(\ref{clm:lastIsAnchor}). 

Next we show that for a top anchor $v_i$ there are many equivalent ways to characterize membership in $C_i$.

\sta{\label{clm:equivalenceInCi}
If $v_i$ is a top anchor, then for every $j \leq n$ the following are equivalent: 
\begin{enumerate}\setlength\itemsep{-5pt}
\item[(i)] $v_j \in C_i$
\item[(ii)] $v_{a(j)} \in C_{i}$
\item[(iii)] $\ell(j) \subseteq \ell(i)$
\item[(iv)] $j=i$ or $v_{p(j)} \in C_i$
\end{enumerate}
}

We first prove $v_j \in C_i \rightarrow v_{a(j)} \in C_i$. If $v_j \in C_i$ and $v_j$ is an anchor then $a(j)=j$ so $a(j) \in C_i$. Otherwise $a(j)$ is the smallest anchor such that $C_j \subseteq C_{a(j)}$. Since $i$ is an anchor such that  $C_j \subseteq C_i$ it follows that $a(j) \leq i$. Since $C_{a(j)} \cap C_i$ is non-empty, (\ref{clm:CiLaminar}) yields that $C_{a(j)} \subseteq C_i$, so $a(j) \in C_i$.

Next we prove $v_{a(j)} \in C_{i} \rightarrow \ell(j) \subseteq \ell(i)$. If $v_{a(j)} \in C_{i}$ then $C_{a(j)} \subseteq C_i$, so $\ell(j) = C_{a(j)} \cap S \subseteq C_i \cap S = \ell(i)$.

Then we prove $\ell(j) \subseteq \ell(i) \rightarrow v_j \in C_i$. Suppose that $\ell(j) \subseteq \ell(i)$. Suppose now for contradiction that $a(j) > i$. 
Both $\ell(a(j))$ and $\ell(i)$ contain $\ell(j)$,
and $a(i) = i < a(j) = a(a(j))$ and so by (\ref{clm:ellIsLaminar}) we have $\ell(i) \subseteq \ell(a(j))$. However this implies that $\ell(j) \subseteq \ell(i) \subseteq \ell(a(j)) = \ell(j)$, so $\ell(i) = \ell(a(j))$ contradicting that $i$ is a {\em top} anchor. Thus $a(j) \leq i$.
Then $C_{a(j)}$ and $C_i$ both contain $\ell(j)$ and so by (\ref{clm:CiLaminar}) we have $C_{a(j)} \subseteq C_i$ and $v_j \in C_{a(j)}$, proving $v_j \in C_i$. This establishes that {\em (i)}, {\em (ii)} and {\em (iii)} are equivalent.

We now show that $v_j \in C_i \rightarrow j=i \mbox{ or } v_{p(j)} \in C_i$. Suppose $v_j \in C_i$ and $j \neq i$. Then $\ell(j) \subseteq \ell(i)$ (since {\em (i)} implies {\em (iii)}). If $\ell(p(j)) = \ell(j)$ then $v_{p(j)} \in C_i$ (since {\em (iii)} implies {\em (i)}). Otherwise $p(j)$ is the smallest integer greater than $j$ such that $C_j \subseteq C_{p(j)}$. Since $C_j \subseteq C_i$ we have that $p(j) \leq i$ and both $C_{p(j)}$ and $C_i$ contain $v_j$, and so by (\ref{clm:CiLaminar}) we have $C_{p(j)} \subseteq C_i$, and hence $v_{p(j)} \in C_i$.

Finally we prove that $j=i \mbox{ or } v_{p(j)} \in C_i \rightarrow v_j \in C_i$. If $j = i$ then $v_j \in C_i$, suppose now that $v_{p(j)} \in C_i$. From {\em (i) $\rightarrow$ (iii)} we obtain that $\ell(p(j)) \subseteq \ell(i)$.
If $\ell(j) = \ell(p(j))$ then  {\em (iii) $\rightarrow$ (i)} yields $v_j \in C_i$. 
Otherwise $p(j)$ is the smallest integer greater than $j$ such that $C_j \subseteq C_{p(j)}$. 
But then $C_j \subseteq C_{p(j)} \subseteq C_i$ and so $v_{j} \in C_i$ as claimed. This completes the proof of (\ref{clm:equivalenceInCi}).

We now upper bound the $S$-contraction of $g$. To that end, let $v_p$ and $v_q$ be vertices in $S$ with $p < q$. Since $p < q$ we have that $\ell(p) \cap \{v_p, v_q\} = \{v_p\}$. Let $i$ be the largest integer such that $\ell(i) \cap \{v_p, v_q\} = \{v_p\}$. 
Then $v_i$ is a top anchor, and by (\ref{clm:lastIsAnchor}) we have that $v_i$ is an anchor. Since $\ell(n) = S$ we have that $i < n$. 
We first lower bound $d_T(v_p, v_q)$, then we upper bound $d_G(v_p, v_q)$ in terms of $p(i)$.
%

%By (\ref{clm:lastIsAnchor}) we have that $i$ is a top anchor. 
By (\ref{clm:equivalenceInCi}, {\em (i)} $\rightarrow$ {\em (iv)}) we have that for every $v_j \in C_i - \{v_i\}$ it holds that $v_{p(j)} \in C_i$. By (\ref{clm:equivalenceInCi}, {\em $\neg$(i)} $\rightarrow$ {\em $\neg$(iv)}) we have that for every $v_j \notin C_i$, $v_{p(j)} \notin C_i$. Thus, $v_iv_{p(i)}$ is the only edge in $T$ that has one endpoint in $C_i$ and the other outside. Since $v_i$ is an anchor we have that $C_i \cap \{v_p, v_q\} = \{v_p\}$. Thus, every path from $v_p$ to $v_q$ in $T$ must use the edge $v_iv_{p(i)}$, and in particular must go through $v_{p(i)}$.
Therefore we have that
$$d_T(v_p, v_q) \geq d_T(v_p, v_{p(i)}) \geq \left\lceil \frac{p(i)-p}{k} \right\rceil \geq \left\lceil \frac{p(i) - |S|}{k}\right\rceil$$
Here $d_T(v_p, v_{p(i)}) \geq \lceil \frac{p(i)-p}{k} \rceil$ follows from~(\ref{clm:parentJump}).

Since $\ell(j) \neq \ell(i)$ for all $j > i$ we have that $C_i \subseteq C_{p(i)}$. Since $v_i$ is an anchor, so is $v_{p(i)}$. Thus $\ell(p(i)) = C_{p(i)} \cap S$. Thus $\ell(i) \subseteq \ell(p(i))$, but $\ell(p(i)) \cap \{v_p, v_q\} \neq \{v_p\}$ since $i$ is the {\em largest} such that $\ell(i) \cap \{v_p, v_q\} = \{v_p\}$. We conclude that $C_{p(i)} \cap \{v_p, v_q\} = \ell(p(i)) \cap \{v_p, v_q\} = \{v_p, v_q\}$.
Therefore $d_G(v_p, v_q) \leq |C_{p(i)}| - 1 \leq p(i) - 1$.

We can now upper bound the $S$-contraction of $g$. If $p(i) \leq 2|S|$ we have that 
$\frac{d_G(v_p, v_q)}{d_T(v_p, v_q)} \leq 2|S|-1$. 
If $p(i) \geq 2|S|$ we have that 
$$\frac{d_G(v_p, v_q)}{d_T(v_p, v_q)} \leq  \frac{p(i)}{(p(i) - |S|)/k} \leq  \frac{p(i)}{p(i)/2k} \leq 2k\mbox{.}$$
Thus the $S$-contraction of $g$ is at most $\max(2|S|-1, 2k)$, as claimed. 
\end{proof}

Next we show how to embed trees with few leaves into the line, while ensuring low distortion for some small prescribed set $S$. 

\begin{lemma}\label{lem:embedFewLeafTree}
There is a polynomial time algorithm that takes as input
a tree $T$ with $\ell$ leaves
and a vertex set $S$ in $T$, 
and outputs an embedding of $T$ into $P_{\mathbb{Z}}$ with congestion at most $4\ell + 2|S|$, stretch at most $24\ell + 12|S|$, and $S$-contraction at most $1$. 
\end{lemma}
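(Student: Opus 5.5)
The plan is to combine Matou\v{s}ek's line embedding (Theorem~\ref{thm:lineEmbedding}) on a small set of ``key'' vertices with path-folding (Lemma~\ref{lem:pathWithFixedEndpoints}) along the branches of $T$. Let $K$ consist of $S$, all leaves of $T$, and all vertices of degree at least $3$. A tree with $\ell$ leaves has fewer than $\ell$ vertices of degree at least $3$, so $|K| \leq 2\ell + |S|$. Removing $K$ splits $T$ into at most $|K|$ maximal paths (``segments''), each having both endpoints adjacent to vertices of $K$. Equivalently, $T$ is a subdivision of a tree $T_K$ on $K$; in $T_K$ each edge $xy$ corresponds to a path in $T$ of length $d_T(x,y)$, since paths in a tree are unique.

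First I would apply Theorem~\ref{thm:lineEmbedding} to the integral metric $(K, d_T)$. This gives $\phi : K \to \mathbb{N}$ with
\[
d_T(x,y) \leq |\phi(x)-\phi(y)| \leq 12|K|\, d_T(x,y).
\]
I define the embedding $g$ on $K$ as $\phi$. The $S$-contraction is then at most $1$ immediately, because $S \subseteq K$ and $\phi$ is non-contracting on $K$. Moreover $\phi$ is injective on $K$, so $K$ contributes at most $1$ to the congestion of any point.

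Next I extend $g$ along each branch of $T_K$. For an edge $xy$ of $T_K$, let $P_{xy} = x, u_1, \ldots, u_{m}, y$ be the corresponding path in $T$, with $t = m+2$ vertices and $d_T(x,y) = t-1$. Because $|\phi(x)-\phi(y)| \leq 12|K|(t-1)$, Lemma~\ref{lem:pathWithFixedEndpoints} with $k = 12|K|$ gives an embedding of $P_{xy}$ into $P_{\mathbb{Z}}$ with $x \mapsto \phi(x)$, $y \mapsto \phi(y)$, stretch at most $12|K| \leq 24\ell + 12|S|$, and congestion at most $2$. Distinct branches share only endpoints in $K$, and those endpoints receive the consistent value $\phi$, so the branch embeddings combine into a well-defined embedding $g$ of $T$. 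Every edge of $T$ lies in exactly one branch, so the stretch bound is the one just stated.

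The main obstacle is the congestion, because different branches may be folded onto the same integers. Each integer is covered at most twice per branch, and there are at most $|K|-1$ branches, so the congestion is at most $2(|K|-1)+1 \leq 2|K| \leq 4\ell + 2|S|$. This matches the claimed bound exactly, which suggests this is the intended counting. The one point to check is the construction of Lemma~\ref{lem:pathWithFixedEndpoints}: its endpoints coincide with $\phi$-values of $K$-vertices, and these may already be counted. Charging each branch $2$ per point, including its endpoints, and noting that the total is $2(|K|-1) < 2|K|$, absorbs the single extra count for $K$ itself. The remaining work is routine: computing $K$, $T_K$, Matou\v{s}ek's embedding, and each path embedding all take polynomial time.
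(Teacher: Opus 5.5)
Your proposal is correct and follows the paper's proof essentially step for step. It uses the same key set (leaves, $S$, and vertices of degree at least $3$), Matou\v{s}ek's embedding on that set with $k = 12|K|$, Lemma~\ref{lem:pathWithFixedEndpoints} on each of the at most $|K|-1$ branches, and a congestion bound of $2$ per branch. The separate $+1$ you charge for $K$ is unnecessary, since every vertex of $K$ is an endpoint of some branch and is already counted there, but it does no harm to the bound.
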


\begin{proof}
Let $S'$ be the set consisting of all the leaves of $T$, all the vertices in $S$ and all the vertices of degree at least $3$ in $T$. Since every tree on $\ell$ leaves has at most $\ell-1$ vertices of degree at least $3$ (see, e.g.,~\cite{diestelBook}), $|S'| \leq 2\ell + |S|$. Let ${\cal P}$ be the set of paths in $T$ with both endpoints in $S'$ and no internal vertices in $S'$. Since every vertex $u \notin S'$ has degree exactly $2$, for every vertex $u \notin S'$ there exists precisely one path $P \in {\cal P}$ such that $u \in V(P)$. Further, standard counting arguments on trees show (see e.g.~\cite{diestelBook}) that $|{\cal P}| \leq |S'| - 1 \leq 2\ell + |S|$.

Let $f_{S'} : S' \rightarrow \mathbb{Z}$ be the function obtained by applying Theorem~\ref{thm:lineEmbedding} to the metric $(S', d_T)$. We set $k = 12|S'|$. For each path $P \in {\cal P}$ we proceed as follows. Let $s_1$ and $s_2$ be the endpoints of $P$. Note that $\{s_1, s_2\} \subseteq S'$. We set $a = f_{S'}(s_1)$ and $b = f_{S'}(s_2)$. Note that (by Theorem~\ref{thm:lineEmbedding}) $|a - b| \leq 12|S'| \cdot d_T(s_1, s_2) \leq |E(P)|k$. We apply Lemma~\ref{lem:pathWithFixedEndpoints} to obtain an embedding  $f_P$ of $P$ into $P_{\mathbb{Z}}$ with congestion at most $2$ and stretch at most $k$, such that $f_P(s_1) = a = f_{S'}(s_1)$ and $f_P(s_2) = b = f_{S'}(s_2)$.

Finally we define a function $f : V(T) \rightarrow \mathbb{Z}$: for each vertex $s \in S'$ we set $f(s) = f_{S'}(s)$. For every vertex $u \notin S'$ we select the unique $P \in {\cal P}$ such that $u \in V(P)$ and define $f(u) = f_P(u)$.
The $S$-contraction of $f$ is at most $1$ because for every $u, v \in S$ we have that $|f(u) - f(v)| = |f_{S'}(u) -f_{S'}(v)| \geq d_T(u, v)$.
For every vertex $u \in V(T)$ (including the vertices in $S$) there exists at least one path $P \in {\cal P}$ such that $f(u) = f_P(u)$. For each $P \in {\cal P}$ the congestion is at most $2$. Furthermore $|{\cal P}| \leq 2\ell + |S|$. Thus the congestion of $f$ is at most $2|{\cal P}| \leq 4\ell + 2|S|$. 
For the stretch of $f$ we have that for every edge $uv \in E(T)$ there exists a path $P \in {\cal P}$ such that $\{u,v\} \subseteq V(P)$. Since $f$ agrees with $f_P$ on all vertices of $V(P)$ it follows that $|f(u)-f(v)| = |f_P(u) - f_P(v)| \leq k$. Thus the stretch of $f$ is at most $k = 12|S'| \leq 24\ell + 12|S|$, concluding the proof.     
\end{proof}

We can now put the previous results together and show that every graph with low bandwidth can be embedded into the line with low stretch and congestion while also ensuring that the embedding is a low $S$-stretch and $S$-contraction embedding for any given small set $S$. 
%
%\todo[inline]{Note that $G$ needs to be connected below}

\begin{lemma}\label{lem:simultaneousEmbedding}
There is an algorithm that takes as input a connected graph $G$, a layout $f$ of $G$ of bandwidth at most $k$, and a vertex set $S$, and outputs an embedding $g$ of $G$ into $P_{\mathbb{Z}}$ with congestion at most $10|S|$, stretch at most $120(|S|+1)^2(k+1)$, and $S$-contraction at most $4(k+1)(|S|+1)$.
\end{lemma}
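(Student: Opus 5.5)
The embedding $g$ will be the composition of three maps. First we use Lemma~\ref{lem:makeSFirst} to put $S$ first. Next, Lemma~\ref{lem:treeEmbedding} embeds $G$ into a tree with few leaves. Finally, Lemma~\ref{lem:embedFewLeafTree} sends that tree to the line. Lemma~\ref{lem:composeEmbeddings} then multiplies the congestion, stretch and $S$-contraction bounds.

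\textbf{Step 1.} Apply Lemma~\ref{lem:makeSFirst} to $G$, $f$ and $S$. This gives a layout $f'$ of bandwidth $k' \le 2(k+1)(|S|+1)$ in which $S$ comes first.

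\textbf{Step 2.} Apply Lemma~\ref{lem:treeEmbedding} to the connected graph $G$, the set $S$ and the layout $f'$. The output is a tree $T$ with $\ell \le 2|S|$ leaves and an embedding $g_1$ of $G$ into $T$. Its congestion is $1$ and its stretch is at most $k'$. Its $S$-contraction is at most $\max(2|S|-1, 2k') = 2k' \le 4(k+1)(|S|+1)$, since $k' \ge 2(|S|+1) > 2|S|-1$.

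\textbf{Step 3.} Apply Lemma~\ref{lem:embedFewLeafTree} to $T$ and the set $g_1(S)$, which has size $|S|$ because $g_1$ is injective. The output is an embedding $g_2$ of $T$ into $P_{\mathbb Z}$. Its congestion is at most $4\ell+2|S| \le 10|S|$ and its stretch is at most $24\ell+12|S| \le 60|S|$. Its $g_1(S)$-contraction is at most $1$.

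\textbf{Step 4.} Set $g = g_2\circ g_1$ and apply Lemma~\ref{lem:composeEmbeddings}.
\begin{itemize}
\item Congestion: at most $1\cdot 10|S| = 10|S|$.
\item $S$-contraction: at most $4(k+1)(|S|+1)\cdot 1$.
\item Stretch: at most $k'\cdot 60|S| \le 2(k+1)(|S|+1)\cdot 60|S| \le 120(|S|+1)^2(k+1)$.
\end{itemize}
All three bounds match the statement, and every step runs in polynomial time.

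\textbf{Edge cases.} If $S=\emptyset$, Lemma~\ref{lem:treeEmbedding} gives no leaf bound, so the congestion bound $10|S|=0$ has to be handled separately; we would simply assume $S\neq\emptyset$. If $|S|=1$, the $S$-contraction is $1$ by definition.

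The main point to check is that Lemma~\ref{lem:treeEmbedding} requires $S$ to come first, which is exactly why Step 1 is needed. The price is a factor $(|S|+1)$ in the bandwidth, which costs one factor $(|S|+1)$ in both the stretch and the contraction. The rest is bookkeeping of constants.
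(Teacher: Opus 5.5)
Your proposal is correct and is essentially the paper's proof: you put $S$ first with Lemma~\ref{lem:makeSFirst}, embed into a tree with at most $2|S|$ leaves via Lemma~\ref{lem:treeEmbedding}, map that tree to $P_{\mathbb{Z}}$ with Lemma~\ref{lem:embedFewLeafTree} applied to the image of $S$, and multiply the bounds with Lemma~\ref{lem:composeEmbeddings}, obtaining the same constants. Your caveat that $S=\emptyset$ must be excluded is legitimate and left implicit in the paper: congestion $10|S|=0$ is impossible, and the proof of Lemma~\ref{lem:treeEmbedding} needs $v_n$ to be an anchor.
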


\begin{proof}
Given $G$, $S$, and $f$, the algorithm first uses Lemma~\ref{lem:makeSFirst} to compute a layout
$f_1$ with bandwidth at most $2(k+1)(|S|+1)$ in which $S$ comes first. 
It then applies Lemma~\ref{lem:treeEmbedding} to $G$, $f_1$ and $S$ and obtains a tree $T$ with at most $2|S|$ leaves and an embedding $f_2$ of $G$ into $T$ with
stretch at most $2(k+1)(|S|+1)$, 
congestion $1$, 
and $S$-contraction at most $4(k+1)(|S|+1)$. 
%\todo{needs to be updated with $f_1$'s numbers?}
%
Then it applies Lemma~\ref{lem:embedFewLeafTree} to $T$ and $f_2(S)$ to get an embedding $h$ of $T$ into $P_\mathbb{Z}$ with 
congestion at most $10|S|$, 
stretch at most $60|S|$, and 
$f_2(S)$-contraction at most $1$. 
Finally it defines $g : V(G) \rightarrow \mathbb{Z}$ as $g(u) = h(f_2(u))$ for every $u$ in $V(G)$. By Lemma~\ref{lem:composeEmbeddings} the 
stretch of $g$ is at most $120(|S|+1)^2(k+1)$,
the congestion is at most $10|S|$,
and the $S$-contraction is at most $4(k+1)(|S|+1)$.
\end{proof}

We are now ready to prove the main result of this section, namely Theorem~\ref{lem:lowStretchToEmbedding} \todo{restate}
%A very useful consequence of Theorem~\ref{thm:simultaneousEmbedding} is that when $G$ has low bandwidth we have an abundance of flexibility of how to embed $G$ (with low stretch and congestion) into any graph $H$ whose number of vertices is not much less than $G$. This is encapsulated in the next lemma. 

%\todo[inline]{Theorem~\ref{lem:lowStretchToEmbedding} used to be a lemma, check on invocations.}

%\begin{theorem}\label{lem:lowStretchToEmbedding}
%There exists a polynomial time algorithm that takes as input
%a connected graph $G$, positive integers $k$, $s$, $r$
%such that $r \geq 2$,
%a layout $f$ of $G$ of bandwidth at most $k$,
%a vertex set $S$ in $G$,
%a connected graph $H$ such that $|V(H)| \geq |V(G)|/r$
%and a function $h : S \rightarrow V(H)$ with $S$-stretch at most $s$,
%and outputs an embedding $g$ of $G$ into $H$ with
%stretch at most $480(|S|+1)^3(k+1)^2s$
%and congestion at most $1300(|S|+1)^4(k+1)r$
%such that for every $s \in S$, $g(s) = h(s)$.
%\end{theorem}

\begin{proof}[Proof of Theorem~\ref{lem:lowStretchToEmbedding}]
Given $G$, $k$, $r$, $s$, $f$, $S$, $H$ and $h$ the algorithm proceeds as follows. First it applies the algorithm of Lemma~\ref{lem:simultaneousEmbedding} and computes an embedding $g_1$ of $G$ into $P_{\mathbb{Z}}$ 
with stretch at most $120(|S|+1)^2(k+1)$,
congestion at most $10|S|$, and
$S$-contraction at most $4(k+1)(|S|+1)$.
It then treats $g_1$ as an embedding of $G$ into a path $P$ on $\ell$ vertices, where $\ell \leq n \cdot 120(|S|+1)^2(k+1)$.
Since $g_1$ has finite $S$-contraction, $g_1$ restricted to $S$ is injective\todo{note where S contraction is defined that this means injective, replace finite with well defined}. Rename the vertices in $S$ as $s_1, s_2, \ldots, s_{|S|}$ such that for every $i < |S|$ it holds that $g_1(s_i) < g_1(s_{i+1})$.

Let $P_0$ be the subpath of $P$ that starts with the first vertex of $P$ and ends with $g_1(s_1)$. For every $1 \leq i < |S|$ let $P_i$ be the subpath of $P$ that starts with $g_1(s_i)$ and ends with $g_1(s_{i+1})$. Finally, let $P_{|S|}$ be the subpath of $P$ that starts with $g_1(s_{|S|})$ and ends with the last vertex of $P$.

Since the $S$-contraction of $g_1$ is at most $4(k+1)(|S|+1)$, we have that for every $1 \leq i < |S|$, $|E(P_i)| \geq \frac{d_G(s_i, s_{i+1})}{4(k+1)(|S|+1)}$. Since the $S$-stretch of $h$ is at most $s$ we have that 
$$d_H(h(s_i), h(s_{i+1})) \leq d_G(s_i, s_{i+1}) \cdot s \leq |E(P_i)| \cdot 4(k+1)(|S|+1) \cdot s \mbox{.}$$

For every $1 \leq i < |S|$ we apply Lemma~\ref{lem:pathWithTwoEndpointsGeneral} to obtain an embedding $g_2^i$ of $P_i$ into $H$ such that $g_2^i(g_1(s_i)) = h(s_i)$, $g_2^i(g_1(s_{i+1})) = h(s_{i+1})$, and $g_2^i$ has 
stretch at most $4(k+1)(|S|+1) \cdot s$
and congestion at most $121(|S|+1)^2(k+1) \cdot r$.
We apply Lemma~\ref{lem:pathWithOneEndpointsGeneral} to obtain an embedding $g_2^0$ of $P_0$ into $H$ such that $g_2^0(g_1(s_1)) = h(s_1)$, the stretch of $g_2^0$ is at most $2$ and the congestion is at most $120(|S|+1)^2(k+1) \cdot r$.
Similarly we apply Lemma~\ref{lem:pathWithOneEndpointsGeneral} to obtain an embedding $g_2^{|S|}$ of $P_{|S|}$ into $H$ such that $g_2^{|S|}(g_1(s_{|S|})) = h(s_{|S|})$, the stretch of $g_2^{|S|}$ is at most $2$ and the congestion is at most $120(|S|+1)^2(k+1) \cdot r$.
For the invocations of Lemmas~\ref{lem:pathWithOneEndpointsGeneral} and~\ref{lem:pathWithTwoEndpointsGeneral} above we use the observation that $|V(P)| = \ell \leq n \cdot 120(|S|+1)^2(k+1) \leq |V(H)| \cdot 120(|S|+1)^2(k+1) \cdot r$ (using $|V(H)| \geq n/r$) to guarantee that $H$ has sufficiently many vertices. 

Every vertex of $P$ is on some $P_i$, for every edge $e$ of $P$ there is some $P_i$ such that both endpoints of $e$ are in $P_i$, and for every vertex $v$ of $P$ that appears both on $P_i$ and $P_j$ for $j \neq i$ it holds that $v \in g_1(S)$ and $g_2^i(v) = g_2^j(v)$. 
We define an embedding $g_2$ of $P$ into $H$ as follows: for every $v \in V(P)$ let $i$ be the smallest such that $v \in V(P_i)$. We set $g_2(v) = g_2^i(v)$.
For every edge $uv$ of $P$, there is some $P_j$ containing both endpoints, and since $g_2^i$ and $g_2^j$ agree on shared vertices we have $g_2(u) = g_2^j(u)$ and $g_2(v) = g_2^j(v)$. Hence $d_H(g_2(u), g_2(v))$ is at most the stretch of $g_2^j$, which is at most $4(k+1)(|S|+1) \cdot s$ (since $|S|, k, s \geq 1$ the bound $\max(2,\, 4(k+1)(|S|+1) s)$ collapses). So the stretch of $g_2$ is at most $4(k+1)(|S|+1) \cdot s$. The
congestion of $g_2$ is at most $(|S|+1) \cdot 121(|S|+1)^2(k+1) \cdot r = 121(|S|+1)^3(k+1) \cdot r$, since each vertex of $V(P)$ contributes to exactly one $g_2^i$ (the one with smallest $i$ such that $v \in V(P_i)$), so the congestion at any $z \in V(H)$ is the sum of contributions from the $|S|+1$ sub-embeddings, each bounded by the per-segment congestion above.
Furthermore, for every vertex $s_i \in S$ it holds that $g_2(g_1(s_i)) = h(s_i)$.

Finally we define an embedding $g$ of $G$ into $H$ by setting $g(v) = g_2(g_1(v))$ for every $v \in V(G)$. By definition $g(s_i) = h(s_i)$ for every $s_i \in S$. By Lemma~\ref{lem:composeEmbeddings} the
stretch of $g$ is at most $480(|S|+1)^3(k+1)^2s$
and the congestion is at most $1300(|S|+1)^4(k+1)r$. 
This completes the proof. 
\end{proof}

%\todo[inline]{Looks like the numbers in Lemma~\ref{lem:lowStretchToEmbedding} are now ok, but section 6 and onwards needs updating}

\section{Tree Packings}\label{sec:treepackings}
\todo[inline]{this section should be read carefully}
For a non-negative integer $\tau$, a {\em pathwidth-$\tau$ tree packing} in $G$ is a family ${\cal X}$ of pairwise disjoint connected vertex sets in $G$ such that for every $X \in {\cal X}$, $G[X]$ contains a subtree of pathwidth at least $\tau$. The {\em size} of a pathwidth-$\tau$ tree packing ${\cal X}$ is $|{\cal X}|$. We collect a few observations about pathwidth-$\tau$ tree packings. For a graph $G$, we define $\pack_\tau(G)$ to be the maximum size of a pathwidth-$\tau$ tree packing in $G$. 

\begin{lemma}\label{lem:treePackingMaxUnion}
There is an algorithm that takes as input a connected graph $G$, a non-negative integer $\tau$, and a non-empty pathwidth-$\tau$ tree packing ${\cal X}$ of $G$, runs in polynomial time, and outputs a pathwidth-$\tau$ tree packing ${\cal X}'$ of $G$ with $|{\cal X}'| = |{\cal X}|$ and $\bigcup_{X' \in {\cal X}'} X' = V(G)$.
\end{lemma}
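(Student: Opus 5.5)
We grow the packing greedily, absorbing uncovered vertices one at a time, as in the pre-processing step sketched in the proof outline. Start with ${\cal X}' = {\cal X}$ and let $U = V(G) \setminus \bigcup_{X \in {\cal X}'} X$ be the set of uncovered vertices. While $U \neq \emptyset$, we find a vertex $u \in U$ that has a neighbor in some set $X \in {\cal X}'$. We replace $X$ by $X \cup \{u\}$ and repeat.

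\textbf{Existence of $u$.} Such a $u$ exists whenever $U \neq \emptyset$. Indeed, ${\cal X}'$ is non-empty, so the covered set $W = V(G) \setminus U$ is non-empty. Since $G$ is connected, some edge $uw$ has $u \in U$ and $w \in W$, and $w$ lies in some $X \in {\cal X}'$.

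\textbf{Invariants.} Each update preserves three properties.
\begin{itemize}
\item The sets stay pairwise disjoint, because $u$ belonged to no set.
\item Each set stays connected, because $u$ is adjacent to a vertex of the connected set $X$, so $G[X \cup \{u\}]$ is connected.
\item $G[X \cup \{u\}]$ still contains the subtree of pathwidth at least $\tau$ that $G[X]$ contained, since that subtree is a subgraph of $G[X] \subseteq G[X \cup \{u\}]$. No appeal to Lemma~\ref{lem:pwsubgraph} is needed; the tree is literally unchanged.
\end{itemize}
The number of sets never changes, so $|{\cal X}'| = |{\cal X}|$ throughout.

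\textbf{Termination and running time.} Each iteration decreases $|U|$ by one, so there are at most $n$ iterations. Each iteration only scans edges to find $u$, so the total time is polynomial. At termination $U = \emptyset$, hence $\bigcup_{X' \in {\cal X}'} X' = V(G)$.

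\textbf{Main obstacle.} There is no real obstacle. The one point needing care is the role of the hypotheses: non-emptiness of ${\cal X}$ and connectivity of $G$ are exactly what guarantee that an absorbable vertex $u$ always exists. Without either one, the procedure could stall with vertices still uncovered.
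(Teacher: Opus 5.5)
Your proposal is correct and is essentially the paper's own proof: you greedily absorb an uncovered vertex adjacent to some member of the packing, check that disjointness, connectivity, the witness subtree and the number of sets are preserved, and note that there are at most $n$ iterations. The one step you leave implicit is that each member of ${\cal X}$ is non-empty (it contains a subtree, hence at least one vertex), which is what makes the covered set $W$ non-empty; the paper states this explicitly.
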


\begin{proof}
The algorithm initializes ${\cal X}' := {\cal X}$ and repeats the following step until $\bigcup_{X' \in {\cal X}'} X' = V(G)$. Since each set in ${\cal X}'$ is non-empty, $G$ is connected, and $\bigcup_{X' \in {\cal X}'} X'$ is a non-empty proper subset of $V(G)$, there exists a vertex $v \in V(G) \setminus \bigcup_{X' \in {\cal X}'} X'$ that has a neighbor in $\bigcup_{X' \in {\cal X}'} X'$. Pick such a $v$ together with a set $X \in {\cal X}'$ such that $v \in N(X)$, and replace ${\cal X}'$ by $({\cal X}' \setminus \{X\}) \cup \{X \cup \{v\}\}$.

Each iteration preserves $|{\cal X}'|$. The new set $X \cup \{v\}$ is connected in $G$ (since $X$ is connected and $v$ has a neighbor in $X$) and $G[X \cup \{v\}]$ contains a subtree of pathwidth at least $\tau$ (the same one contained in $G[X]$). It is disjoint from every other $Y \in {\cal X}' \setminus \{X\}$, since $X$ is and $v \notin \bigcup_{X' \in {\cal X}'} X' \supseteq Y$. Hence ${\cal X}'$ remains a pathwidth-$\tau$ tree packing of $G$.

The quantity $|\bigcup_{X' \in {\cal X}'} X'|$ strictly increases by $1$ at each iteration, so the loop terminates after at most $n$ iterations, at which point $\bigcup_{X' \in {\cal X}'} X' = V(G)$. Each iteration runs in polynomial time.
\end{proof}

\begin{lemma}\label{lem:decidePathwidthSubtree}
There exists an algorithm that, given a graph $G$ and integer $\tau \geq 1$, runs in time $2^{O(9^\tau)} n^{O(1)}$ and determines whether $G$ contains a subtree of pathwidth at least $\tau$.
\end{lemma}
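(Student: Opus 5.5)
The plan is to reduce the question to a finite number of minor-containment tests, one for each small tree, carried out on a graph of bounded pathwidth. The key observation is that $G$ contains a subtree of pathwidth at least $\tau$ if and only if $G$ contains, as a minor, some tree $T$ with $\pw(T) \geq \tau$ and $|V(T)| \leq N_\tau := \frac{5\cdot 3^{\tau-1}-1}{2}$.

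For the forward direction, let $T'$ be a subtree of $G$ with $\pw(T') \geq \tau$. By Theorem~\ref{thm:pwTreeObstructionSize}, $T'$ contains as a minor a tree $T$ with $\pw(T)\geq\tau$ on at most $N_\tau$ vertices. Since $T'$ is a subgraph of $G$, $T$ is also a minor of $G$.

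For the converse, suppose a tree $T$ with $\pw(T)\ge\tau$ is a minor of $G$. Lemma~\ref{lem:minorToSubtree} gives a subtree $T^*$ of $G$ having $T$ as a minor, and Lemma~\ref{lem:pwsubgraph} then yields $\pw(T^*)\geq\tau$.

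The algorithm proceeds as follows.

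\begin{enumerate}
\item Run the algorithm of Theorem~\ref{thm:computePw} with parameter $w := \frac{5\cdot 3^{\tau-1}-5}{2} = N_\tau - 2$. If it reports $\pw(G) > w$, Theorem~\ref{thm:pwObstructions} guarantees a subtree of pathwidth at least $\tau$, so we answer yes. This step costs $2^{O(w^2)} n = 2^{O(9^\tau)} n$. This is exactly where the $9^\tau$ in the claimed bound comes from, since $w^2 = O(9^\tau)$.
\item Otherwise we hold a layout of pathwidth at most $w$. We convert it into a path decomposition of width $w$ (vertex separation equals pathwidth, and the conversion is standard and polynomial), and then into a branch decomposition of width at most $w+1$ by~\cite{RobertsonS91}.
\item Enumerate all non-isomorphic trees $T$ on at most $N_\tau$ vertices. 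By Theorem~\ref{thm:treeCount} there are $2^{O(3^\tau)}$ of them, and they can be listed within that time bound up to polynomial factors in $N_\tau$.
\item For each enumerated $T$, compute $\pw(T)$. We can do this either by brute force over all layouts, costing $N_\tau! = 2^{O(\tau 3^\tau)}$, or recursively via Theorem~\ref{thm:treePathwidth}. Discard every $T$ with $\pw(T) < \tau$.
\item For each remaining $T$, run Hicks' algorithm (Theorem~\ref{thm:hicksMinor}) on $G$ with the branch decomposition from step~2. Its running time is $O(3^{(w+1)^2}\cdot (N_\tau+w)!\cdot |E(G)|)$.
\item Answer yes if and only if some remaining $T$ is found to be a minor of $G$. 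Correctness follows from the equivalence established above.
\end{enumerate}

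The main obstacle is the running-time bookkeeping rather than the logic. We need $3^{(w+1)^2} = 2^{O(9^\tau)}$, which holds since $w = O(3^\tau)$. We also need $(N_\tau+w)! \le (2N_\tau)^{2N_\tau} = 2^{O(\tau 3^\tau)}$, which is absorbed into $2^{O(9^\tau)}$. Multiplying by the $2^{O(3^\tau)}$ trees and by $|E(G)| \le n^2$, the total running time is $2^{O(9^\tau)} n^{O(1)}$, as claimed.

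There is one minor technical point to verify: Theorem~\ref{thm:computePw} returns a layout of small vertex separation, and we must turn it into a path decomposition of the same width in polynomial time before passing it to~\cite{RobertsonS91}. The construction is to take the bags $\{v_i\}\cup\partial(\{v_1,\dots,v_{i-1}\})$. Each bag has size at most the vertex separation number plus one, so the resulting decomposition has width at most the vertex separation number.
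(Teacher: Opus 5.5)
Your proposal is correct and takes essentially the same route as the paper. You threshold $\pw(G)$ at $N_\tau - 2$ using Theorem~\ref{thm:computePw} and Theorem~\ref{thm:pwObstructions}, pass to a branch decomposition, and run Hicks' algorithm for every tree of pathwidth at least $\tau$ on at most $N_\tau$ vertices, with the same running-time accounting. Your explicit conversion of the vertex-separation layout into a path decomposition (bags $\{v_i\}\cup\partial(\{v_1,\dots,v_{i-1}\})$) correctly fills a small step that the paper glosses over when it treats the output of Theorem~\ref{thm:computePw} as a path decomposition.
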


\begin{proof}
Set $N_\tau = \frac{5 \cdot 3^{\tau-1} - 1}{2}$.
The algorithm starts with a preprocessing step that depends only on $\tau$ and not on $G$. Enumerate all non-isomorphic trees on at most $N_\tau$ vertices, and use Theorem~\ref{thm:treePathwidth} to filter to those of pathwidth at least $\tau$. Let ${\cal T}_\tau$ denote the resulting list. By Theorem~\ref{thm:treeCount} we have $|{\cal T}_\tau| \leq 2^{O(N_\tau)} = 2^{O(3^\tau)}$, and by Theorem~\ref{thm:pwTreeObstructionSize} the list ${\cal T}_\tau$ is non-empty.

The main step proceeds as follows. Apply the algorithm of Theorem~\ref{thm:computePw} to $G$ with parameter $k = N_\tau - 2 = \frac{5 \cdot 3^{\tau-1} - 5}{2}$. If the algorithm reports $\textsf{pw}(G) > \frac{5 \cdot 3^{\tau-1} - 5}{2}$, then, by Theorem~\ref{thm:pwObstructions}, $G$ contains a subtree of pathwidth at least $\tau$, so the algorithm outputs \textsf{yes}.

Suppose now that the algorithm of Theorem~\ref{thm:computePw} produces a path decomposition of $G$ of width at most $N_\tau - 2$. Convert this path decomposition to a branch decomposition of $G$ of width at most $w = N_\tau - 1$ in polynomial time, as discussed immediately before Theorem~\ref{thm:hicksMinor}. 
By Lemma~\ref{lem:minorToSubtree} and Lemma~\ref{lem:pwsubgraph}, $G$ contains a subtree of pathwidth at least $\tau$ if and only if $G$ contains a tree of pathwidth at least $\tau$ as a minor; by Theorem~\ref{thm:pwTreeObstructionSize} this is in turn equivalent to $G$ containing some $T' \in {\cal T}_\tau$ as a minor. 
For each $T' \in {\cal T}_\tau$, run the algorithm of Theorem~\ref{thm:hicksMinor} to decide whether $T'$ is a minor of $G$. The algorithm outputs \textsf{yes} if at least one of these calls returns \textsf{yes}, and \textsf{no} otherwise.

We bound the running time of the algorithm. The preprocessing step runs in time $2^{O(3^\tau)}$, independent of $n$. The call to Theorem~\ref{thm:computePw} runs in time $2^{O(N_\tau^2)} n = 2^{O(9^\tau)} n$. The conversion to a branch decomposition is polynomial in $n$. Each call to the algorithm of Theorem~\ref{thm:hicksMinor} runs in time $O(3^{w^2} (h + w - 1)! \cdot |E(G)|)$ with $w \leq N_\tau - 1 = O(3^\tau)$ and $h \leq N_\tau = O(3^\tau)$, which is $2^{O(9^\tau)} n$ (the $3^{w^2} = 2^{O(9^\tau)}$ factor dominates the factorial $(h + w - 1)! = 2^{O(\tau \cdot 3^\tau)}$). The total number of calls to the algorithm of Theorem~\ref{thm:hicksMinor} is $|{\cal T}_\tau| = 2^{O(3^\tau)}$, so the total time spent on such calls is $2^{O(9^\tau)} n$, proving the claimed bound for the running time.
\end{proof}

\begin{lemma}\label{lem:detectPathwidthSubtree}
There exists an algorithm that given a graph $G$ and integer $\tau \geq 1$, runs in time $2^{O(9^\tau)} n^{O(1)}$, and either outputs a subtree of $G$ of pathwidth at least $\tau$ or determines that no such subtree exists.
\end{lemma}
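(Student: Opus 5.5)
The plan is to turn the decision procedure of Lemma~\ref{lem:decidePathwidthSubtree} into a search procedure by self-reduction over the edges of $G$. First I would run Lemma~\ref{lem:decidePathwidthSubtree} on $G$ and $\tau$. If it answers \textsf{no}, we report that no subtree of pathwidth at least $\tau$ exists. Otherwise I would shrink $G$ step by step while keeping the answer \textsf{yes}.

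\textbf{Vertex pruning.} Delete isolated vertices and restrict to a connected component on which the decision procedure still answers \textsf{yes}. Such a component exists because any subtree lies inside a single component.

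\textbf{Edge pruning.} Set $G' := G$ and go through the edges $e$ of $G'$ one at a time. Query Lemma~\ref{lem:decidePathwidthSubtree} on $G' - e$. If the answer is \textsf{yes}, replace $G'$ by $G'-e$; otherwise keep $e$. Then delete the vertices that have become isolated. Removing isolated vertices never changes the answer, because a subtree of pathwidth at least $\tau \geq 1$ has at least two vertices and so contains no isolated vertex.

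When the loop ends, $G'$ still contains a subtree $T$ of pathwidth at least $\tau$. Every remaining edge $e$ is essential: $G'-e$ has no such subtree. Hence every edge of $G'$ lies in $T$, and since $G'$ has no isolated vertices, every vertex does as well. So $G' = T$, and we output $G'$, which is a subgraph of the original $G$.

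Each pass examines every edge at most once, because the property is monotone under taking subgraphs. An edge kept at one point stays essential for all later subgraphs, since a witness in a smaller graph would also be a witness in the larger one. So the procedure makes $O(n^2)$ calls to the decision algorithm, giving total time $2^{O(9^\tau)} n^{O(1)}$.

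The main point to check is the correctness of the single-pass argument. It relies on this monotonicity: if $G'-e$ contains no subtree of pathwidth at least $\tau$ at the moment $e$ is examined, then the same holds for every later subgraph of $G'-e$. This step is immediate, so I do not expect a real obstacle.

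Alternatively, one could extract the subtree directly. Take the minor model produced by Theorem~\ref{thm:hicksMinor} for the successful $T'\in{\cal T}_\tau$ and apply Lemma~\ref{lem:minorToSubtree}. The pathwidth bound then follows from Lemma~\ref{lem:pwsubgraph}. The catch is the branch in which Theorem~\ref{thm:computePw} reports large pathwidth, since that branch gives no model. The self-reduction above avoids this issue uniformly, so I would use it.
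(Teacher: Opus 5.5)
Your proposal is correct and follows essentially the same route as the paper: both turn the decision procedure of Lemma~\ref{lem:decidePathwidthSubtree} into a search procedure by greedily deleting parts of $G$ while the oracle still answers \textsf{yes}, and then conclude that the surviving graph must equal the witness tree. The only difference is cosmetic: you make a single pass over the edges, justified by monotonicity, and then discard isolated vertices, whereas the paper repeatedly rescans all edges and vertices; the paper's version uses more oracle calls but gives the same $2^{O(9^\tau)} n^{O(1)}$ bound.
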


\begin{proof}
The algorithm first invokes the algorithm of Lemma~\ref{lem:decidePathwidthSubtree} on $G$. If the decision algorithm reports that $G$ contains no subtree of pathwidth at least $\tau$, the algorithm reports the same and halts.

Otherwise, the algorithm initializes $H := G$ and repeats the following step. Using the algorithm of Lemma~\ref{lem:decidePathwidthSubtree}, check whether there exists an edge $e \in E(H)$ or a vertex $v \in V(H)$ such that $H - e$ or $H - v$ contains a subtree of pathwidth at least $\tau$. If such an $e$ or $v$ exists, set $H := H - e$ or $H := H - v$ accordingly and repeat; otherwise output $H$.
Each iteration strictly decreases $|V(H)| + |E(H)|$ by at least one, so the loop terminates after at most $|V(G)| + |E(G)|$ iterations. The output graph $H$ contains as a subgraph a tree $T$ of pathwidth at least $\tau$, since this invariant is maintained by the algorithm. $T$ cannot be a proper subgraph of $H$, since any edge of $E(H) \setminus E(T)$ or vertex of $V(H) \setminus V(T)$ would pass the test. Hence, when the algorithm terminates, $H=T$ so $H$ is a tree.  
Each iteration makes at most $|V(G)| + |E(G)|$ invocations of the algorithm of Lemma~\ref{lem:decidePathwidthSubtree}, for an overall running time of $2^{O(9^\tau)} n^{O(1)}$.
\end{proof}

\begin{lemma}\label{lem:ballSeparator}
For every graph $G$, vertex $v$ and positive integer $q$, there exists an integer $q'$ such that $q \leq q' \leq 2q-1$ and $|N(B_{q'}(v))| \leq 4 \cdot \hat{\Delta}_r(G)$.
\end{lemma}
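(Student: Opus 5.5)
The plan is an averaging (pigeonhole) argument over the $q$ consecutive radii $q, q+1, \ldots, 2q-1$. The key observation is that $N(B_{j}(v))$ is exactly the sphere $B_{j+1}(v) \setminus B_j(v)$, the set of vertices at distance precisely $j+1$ from $v$. This holds because any neighbor of a vertex at distance at most $j$ lies at distance at most $j+1$. Conversely, a vertex at distance exactly $j+1$ has a predecessor on a shortest path, which lies at distance $j$. Hence for distinct $j$ these neighborhoods are pairwise disjoint spheres.

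First, I would observe that the spheres $N(B_j(v))$ for $j = q, \ldots, 2q-1$ are disjoint subsets of $B_{2q}(v) \setminus B_q(v)$. Therefore
\[
\sum_{j=q}^{2q-1} |N(B_j(v))| \leq |B_{2q}(v)| - 1.
\]
Second, I would apply the definition of radial local density with radius $2q$, which gives $|B_{2q}(v)| - 1 \leq 2 \cdot 2q \cdot \hat{\Delta}_r(G) = 4q\,\hat{\Delta}_r(G)$. Third, the sum has exactly $q$ terms, so the minimum term satisfies $\min_{q \le j \le 2q-1} |N(B_j(v))| \leq 4\hat{\Delta}_r(G)$. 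Choosing $q'$ to be a minimizing index finishes the proof.

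I do not expect a real obstacle here. The only points to verify are, first, that the sphere characterization holds when $B_j(v)$ already contains the whole component of $v$; in that case the neighborhood is empty, which is fine. Second, the definition of $\hat{\Delta}_r$ as $\max_{v,q}\frac{|B_q(v)|-1}{2q}$ must apply at radius $2q \geq 1$, which it does since $q$ is positive. One could even sharpen the count by excluding $v$ and $B_q(v)$ from the sum, but the crude bound already yields the constant $4$.
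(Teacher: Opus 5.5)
Your proposal is correct and is essentially the paper's argument. The paper phrases it as a contradiction: if every sphere $N(B_{q'}(v)) = B_{q'+1}(v)\setminus B_{q'}(v)$ for $q \le q' \le 2q-1$ exceeded $4\hat{\Delta}_r(G)$, then telescoping would give $|B_{2q}(v)| > 4q\,\hat{\Delta}_r(G)+1$, which contradicts the definition of $\hat{\Delta}_r$ at radius $2q$; your direct minimum-of-$q$-terms version is the same computation.
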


\begin{proof}
Suppose, for contradiction, that $|N(B_{q'}(v))| > 4 \hat{\Delta}_r(G)$ for every integer $q' \in \{q, q+1, \ldots, 2q-1\}$. Since $N(B_{q'}(v)) = B_{q'+1}(v) \setminus B_{q'}(v)$, summing these $q$ strict inequalities gives
\[
|B_{2q}(v)| - |B_q(v)| = \sum_{q' = q}^{2q-1} |N(B_{q'}(v))| > 4q \cdot \hat{\Delta}_r(G).
\]
Hence $|B_{2q}(v)| > 4q \cdot \hat{\Delta}_r(G) + 1$ since $v \in B_q(v)$. However, the definition of radial local density gives $\hat{\Delta}_r(G) \geq \frac{|B_{2q}(v)| - 1}{2 \cdot 2q}$ which yields $|B_{2q}(v)| \leq 4q \cdot \hat{\Delta}_r(G) + 1$, giving the desired contradiction. 
\end{proof}

\begin{lemma}\label{lem:densitySplitIntoBalls}
There is a polynomial time algorithm that takes as input a graph $G$ and a positive integer $q$ and outputs a vertex set $S$ of size at most $8 \cdot n \hat{\Delta}_r(G)/q$ such that every connected component of $G-S$ has radius less than $q$.
\end{lemma}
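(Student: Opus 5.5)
We will carve out balls greedily, using Lemma~\ref{lem:ballSeparator} to pick a radius whose sphere is small, and charge each such sphere to the many vertices the ball contains. Write $\Delta = \hat{\Delta}_r(G)$ and $q_0 = \lceil q/2 \rceil$ (we may assume $q \geq 2$ if needed; for $q=1$ we simply take $S = V(G)$, and the bound $8n\Delta \geq n$ holds whenever $G$ has an edge, since then $\Delta \geq 1/2$). The algorithm maintains a current graph $G' = G - (S \cup R)$, where $R$ is the set of vertices already placed in finished pieces. Initially $G' = G$, $S = R = \emptyset$.

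In each round we pick any vertex $v$ of $G'$. Using Lemma~\ref{lem:ballSeparator} applied to $G'$, whose radial local density is at most $\Delta$ by Lemma~\ref{lem:densityMonotone}, we find $q'$ with $q_0 \le q' \le 2q_0 - 1 < q$ and $|N_{G'}(B^{G'}_{q'}(v))| \le 4\Delta$. Finding $q'$ is done by simply trying every value. We then add $N_{G'}(B^{G'}_{q'}(v))$ to $S$ and add $B^{G'}_{q'}(v)$ to $R$, and repeat until $G'$ is empty.

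\emph{Correctness.} Each ball $B^{G'}_{q'}(v)$ is separated from the rest of $G'$ by the removed sphere. Vertices removed earlier are in $S$ or in earlier balls, and earlier balls were already separated. Hence every component of $G - S$ lies inside a single ball $B^{G'}_{q'}(v)$ and in fact equals that ball's vertex set. Every vertex of that ball is within distance $q' < q$ of $v$ via a path inside the ball. So the component has radius less than $q$.

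\emph{Size bound.} This is the main point to check. Each round adds at most $4\Delta$ vertices to $S$. If the ball is not the whole remaining component, the ball contains a shortest path from $v$ of length $q' \ge q_0$, so it has at least $q_0 + 1 \geq q/2$ vertices. Balls are disjoint, so the number of such rounds is at most $2n/q$. This gives $|S| \le 8n\Delta/q$.

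If instead the ball is the whole remaining component, then its sphere is empty. Such a round contributes nothing to $S$ regardless of the ball's size, so small components cost nothing.

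\emph{Running time.} Each round is a BFS, and there are at most $n$ rounds, so the algorithm runs in polynomial time. The only delicate point is ensuring that every ball which actually adds separator vertices has at least $q/2$ vertices; we handle this by the observation that a nonempty sphere forces a vertex at distance $q'+1$, and hence a shortest path of length $q'$ inside the ball.
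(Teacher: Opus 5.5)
Your strategy is the same as the paper's: carve out balls $B^{G'}_{q'}(v)$ whose spheres have at most $4\hat{\Delta}_r(G)$ vertices via Lemma~\ref{lem:ballSeparator}, and charge each nonempty sphere to the at least $q/2$ vertices of its ball. The paper does this recursively, after first peeling off components of radius less than $q$.

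However, the step $q_0 \le q' \le 2q_0-1 < q$ is false for odd $q$. With $q_0=\lceil q/2\rceil$ one has $2q_0-1=q$, so Lemma~\ref{lem:ballSeparator} may only offer $q'=q$, and the resulting component can have radius exactly $q$. This does happen. Take $q=3$ and let $G$ be the rooted tree whose root $v$ has two children, each with three children, each of which has three leaf children ($27$ vertices). Checking all balls gives $\hat{\Delta}_r(G)=26/6=13/3$, attained by $B_3(v)$. Hence $|N(B_2(v))|=18>52/3=4\hat{\Delta}_r(G)$, while $N(B_3(v))=\emptyset$. Starting at $v$, your algorithm is forced to take $q'=3$ and outputs $S=\emptyset$, but the unique component $G$ has radius $3=q$.

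The repair is to use $q_0=\lfloor q/2\rfloor$, exactly as the paper does. For $q\ge 2$ this is a positive integer, and it gives $q'\le 2\lfloor q/2\rfloor-1\le q-1$. A ball with nonempty sphere still contains at least $q'+1\ge\lfloor q/2\rfloor+1>q/2$ vertices. So your count of at most $2n/q$ charged rounds, and hence $|S|\le 8n\hat{\Delta}_r(G)/q$, goes through unchanged.

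Your $q=1$ case also needs the edgeless subcase. There $\hat{\Delta}_r(G)=0$, so $S=V(G)$ violates the bound; you must output $S=\emptyset$, since every component is a single vertex of radius $0$.

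With these two repairs your iterative argument is correct. Charging only rounds whose sphere is nonempty is slightly cleaner bookkeeping than the paper's preliminary removal of small-radius components, but otherwise the two proofs coincide.
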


\begin{proof}
We proceed by induction on $|V(G)|$. If $q = 1$, the algorithm outputs $S := \emptyset$ if $G$ has no edges, and $S := V(G)$ otherwise. In the first case $\hat{\Delta}_r(G) = 0$ and every component of $G - S = G$ is a singleton of radius $0 < 1$. In the second case $\hat{\Delta}_r(G) \geq 1/2$ (from any vertex of degree $\geq 1$), so $|S| = n \leq 8 n \hat{\Delta}_r(G) = 8 n \hat{\Delta}_r(G)/q$, and $G - S$ has no vertices so the radius condition holds vacuously. Henceforth assume $q \geq 2$.

If $G$ has a connected component $C$ of radius less than $q$, the algorithm calls itself on $G - C$. By the inductive hypothesis the recursive call outputs a set $S$ with $|S| \leq 8 |V(G - C)| \hat{\Delta}_r(G - C)/q \leq 8 n \hat{\Delta}_r(G)/q$.
%, where the last inequality uses $|V(G - C)| \leq n$ and the fact that vertex deletion can only shrink balls, hence $\hat{\Delta}_r(G - C) \leq \hat{\Delta}_r(G)$. 
Every component of $(G - C) - S$ has radius less than $q$ by induction, and $C$ itself is a component of $G - S$ with radius less than $q$ by assumption.

Suppose now that every connected component of $G$ has radius at least $q$. Pick an arbitrary vertex $v$. By Lemma~\ref{lem:ballSeparator} applied to $v$ and $\lfloor q/2 \rfloor$ there exists an integer $q'$ such that $\lfloor q/2 \rfloor \leq q' \leq q-1$ and $|N(B_{q'}(v))| \leq 4 \cdot \hat{\Delta}_r(G)$.
Since the connected component of $G$ containing $v$ has radius at least $q$, $|B_r(v)| \geq r+1$ for every $r \leq q$, and in particular $|N[B_{q'}(v)]| = |B_{q'+1}(v)| \geq q'+2 \geq \lfloor q/2 \rfloor + 2 > q/2$.

The algorithm calls itself recursively on $G - N[B_{q'}(v)]$ and obtains a set $S'$ which, by the inductive hypothesis, satisfies
\[
|S'| \leq 8 \cdot (n - |N[B_{q'}(v)]|) \cdot \hat{\Delta}_r(G)/q \leq 8 \cdot (n - q/2) \cdot \hat{\Delta}_r(G)/q = 8 n \hat{\Delta}_r(G)/q - 4 \hat{\Delta}_r(G).
\]
The algorithm outputs $S = S' \cup N(B_{q'}(v))$, and we have
\[
|S| \leq |S'| + 4 \hat{\Delta}_r(G) \leq 8 n \hat{\Delta}_r(G)/q.
\]

Finally, every connected component $C$ of $G - S$ has radius less than $q$: either $C$ is a component of $\bigl(G - N[B_{q'}(v)]\bigr) - S'$, in which case the bound on the radius holds by the inductive hypothesis applied to the recursive call, or $C = B_{q'}(v)$, which has radius $q' \leq q-1 < q$.
\end{proof}

\begin{lemma}\label{lem:computeTreePackingDensity}
There exists an algorithm that given as input a graph $G$ and integers $k \geq 1$ and $\tau \geq 1$ such that $\hat{\Delta}_r(G) \leq k$,
runs in time $2^{O(9^\tau)} n^{O(1)}$,
and outputs a pathwidth-$\tau$ tree packing ${\cal X}$ of $G$, and a set $S \subseteq V(G)$ such that $|S| \leq 36 \cdot k^2 \cdot |{\cal X}|$ and $G - S$ has no subtree of pathwidth at least $\tau$.
\end{lemma}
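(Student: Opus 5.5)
The plan is a greedy Erd\H{o}s--P\'osa style construction that peels off one ball at a time. Each removed ball is charged $O(k^2)$ vertices of $S$ and contributes one set to the packing. We maintain a current graph $G'$ (initially $G$), a packing ${\cal X}$ (initially empty) and a hitting set $S$ (initially empty). By Lemma~\ref{lem:densityMonotone}, every subgraph $G'$ of $G$ also satisfies $\hat{\Delta}_r(G') \leq k$, so Lemmas~\ref{lem:ballSeparator} and~\ref{lem:densitySplitIntoBalls} apply to it with density bound $k$.

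\textbf{One round.} If $G'$ has no subtree of pathwidth at least $\tau$, which we test with Lemma~\ref{lem:decidePathwidthSubtree}, we stop. Otherwise:
\begin{enumerate}
\item Compute the least $\rho \geq 1$ for which some vertex $v$ has $G'[B^{G'}_{\rho}(v)]$ containing a subtree of pathwidth at least $\tau$. This takes $n^{O(1)}$ calls to Lemma~\ref{lem:decidePathwidthSubtree}. Since $\tau \geq 1$, such a subtree has an edge, so $\rho \geq 1$.
\item Apply Lemma~\ref{lem:ballSeparator} to $v$ and $\rho$. This gives $q'$ with $\rho \leq q' \leq 2\rho-1$ and $|N(B_{q'}(v))| \leq 4k$.
\item Set $X = B^{G'}_{q'}(v)$. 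It is connected and contains a subtree of pathwidth at least $\tau$, and $|X| \leq 2q'k+1 \leq 4\rho k$ by the definition of radial local density.
\item Apply Lemma~\ref{lem:densitySplitIntoBalls} to $G'[X]$ with $q=\rho$. This gives $S_X \subseteq X$ with $|S_X| \leq 8|X|k/\rho \leq 32k^2$, and every component of $G'[X]-S_X$ has radius less than $\rho$.
\item Add $X$ to ${\cal X}$, add $S_X \cup N(X)$ to $S$, and replace $G'$ by $G' - N[X]$.
\end{enumerate}
Each round adds at most $32k^2+4k \leq 36k^2$ vertices to $S$ and one set to ${\cal X}$, so $|S| \leq 36k^2|{\cal X}|$. (If the arithmetic is off by a lower-order term, I would absorb it by using the slack in $|X|\le 2q'k+1$ together with $\rho\ge1$.) The sets in ${\cal X}$ are pairwise disjoint because each round deletes $N[X]$ from $G'$.

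\textbf{Key claim: components of radius less than $\rho$ contain no bad subtree.} Let $C$ be a component of $G'[X]-S_X$ and suppose it contains a subtree of pathwidth at least $\tau$. Since $C$ has radius at most $\rho-1$, we have $C \subseteq B^{G'}_{\rho-1}(c)$ for a centre $c$ of $C$, because distances in $G'$ are at most distances in $C$. Then $G'[B^{G'}_{\rho-1}(c)]$ contains that subtree, which contradicts the minimality of $\rho$. This minimality step is what makes the charge per ball independent of $|X|$. Without it, the hitting set inside a ball could be as large as $|X|$, and that is the main obstacle the argument has to get around.

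\textbf{Correctness at the end.} Let $T'$ be a subtree of $G-S$. It is connected and avoids every boundary $N(X)$ that was put into $S$. Hence $T'$ either lies inside a single ball $X$, or lies entirely in the final $G'$. In the first case it avoids $S_X$, so it lies in one component of $G'[X]-S_X$, and by the key claim it has pathwidth less than $\tau$. In the second case the stopping condition rules it out. So $G-S$ has no subtree of pathwidth at least $\tau$.

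\textbf{Running time.} There are at most $n$ rounds, and each uses $n^{O(1)}$ calls to Lemma~\ref{lem:decidePathwidthSubtree} plus polynomial work. The total is $2^{O(9^\tau)}n^{O(1)}$.
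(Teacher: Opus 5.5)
Your proof is correct and follows the paper's argument: take a minimal-radius ball, use Lemma~\ref{lem:ballSeparator} to enlarge it to radius $q'\in[\rho,2\rho-1]$ with boundary at most $4k$, apply Lemma~\ref{lem:densitySplitIntoBalls} with $q=\rho$ inside the ball at a cost of $32k^2$, and use the minimality of $\rho$ to rule out subtrees of pathwidth at least $\tau$ in the resulting components. The only difference is bookkeeping. You delete the whole closed ball $N[X]$ and put $X$ itself into the packing, so disjointness is free, the minimality argument certifies $G-S$ at the end, and you need only the decision procedure of Lemma~\ref{lem:decidePathwidthSubtree}. The paper instead keeps the ball minus $S_i$ in the working graph and adds a witness subtree found with Lemma~\ref{lem:detectPathwidthSubtree}, so its stopping rule certifies $G-S$ directly and the minimality argument shows that later witness trees avoid earlier balls.
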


\begin{proof}
The algorithm sets $S_0 = \emptyset$, ${\cal X} = \emptyset$, $i = 1$, and proceeds as follows.
If $G - S_{i-1}$ does not contain a subtree of pathwidth at least $\tau$ the algorithm outputs $S = S_{i-1}$, ${\cal X}$, and halts.
Otherwise the algorithm sets $r_i$ to be the smallest positive integer such that there exists a vertex $v \in V(G - S_{i-1})$ for which the induced subgraph $G[B_{r_i}^{G - S_{i-1}}(v)]$ contains a subtree of pathwidth at least $\tau$. The algorithm finds $r_i$, together with such a vertex $v_i$ and witness subtree $T_i$, by calling the algorithm of Lemma~\ref{lem:detectPathwidthSubtree} on $G[B_r^{G - S_{i-1}}(v)]$ for every pair $(v, r)$ with $v \in V(G - S_{i-1})$ and $1 \leq r \leq n$.
The algorithm then applies Lemma~\ref{lem:ballSeparator} to $G - S_{i-1}$, $v_i$ and $r_i$ to obtain an integer $r_i'$ with $r_i \leq r_i' \leq 2r_i - 1$ and $|N(B_{r_i'}^{G - S_{i-1}}(v_i))| \leq 4 \hat{\Delta}_r(G - S_{i-1}) \leq 4k$, where the last inequality uses that vertex deletion can only shrink balls, hence $\hat{\Delta}_r(G - S_{i-1}) \leq \hat{\Delta}_r(G) \leq k$.
The algorithm next applies Lemma~\ref{lem:densitySplitIntoBalls} to the induced subgraph $G[B_{r_i'}^{G - S_{i-1}}(v_i)]$ with parameter $r_i$ to obtain a set $S^{\mathrm{ball}}_i \subseteq B_{r_i'}^{G - S_{i-1}}(v_i)$ such that every connected component of $G[B_{r_i'}^{G - S_{i-1}}(v_i)] - S^{\mathrm{ball}}_i$ has radius less than $r_i$, and
\[
|S^{\mathrm{ball}}_i| \leq \frac{8 \cdot |B_{r_i'}^{G - S_{i-1}}(v_i)| \cdot \hat{\Delta}_r(G - S_{i-1})}{r_i} \leq \frac{8 \cdot (2 r_i' k + 1) \cdot k}{r_i} \leq 32 k^2,
\]
where the second inequality uses $|B_r(v)| \leq 2 r \hat{\Delta}_r + 1$ and the third uses $r_i' \leq 2 r_i - 1$ together with $k \geq 1$.
The algorithm sets $S_i = S_{i-1} \cup N(B_{r_i'}^{G - S_{i-1}}(v_i)) \cup S^{\mathrm{ball}}_i$, ${\cal X} = {\cal X} \cup \{T_i\}$, increments $i$, and returns to the beginning of the loop.

By construction we have that at the beginning of iteration $i$, $|S_{i-1}| \leq 36 k^2 (i-1)$ and $|{\cal X}| = i-1$, hence when the algorithm terminates $|S| \leq 36 k^2 \cdot |{\cal X}|$.
Furthermore, when the algorithm terminates $G - S$ has no subtree of pathwidth at least $\tau$. Hence, to prove the statement of the lemma it suffices to show that ${\cal X}$ is a pathwidth-$\tau$ tree packing. Since $T_1, T_2, \ldots, T_i$ are subtrees of $G$ of pathwidth at least $\tau$, it suffices to show that they are pairwise vertex disjoint.

Let $i < i'$ be non-negative integers such that the algorithm did not terminate before the iteration where the tree $T_{i'}$ and set $S_{i'}$ are defined.
By construction $V(T_i) \subseteq B_{r_i'}^{G - S_{i-1}}(v_i)$, and we have $N(B_{r_i'}^{G - S_{i-1}}(v_i)) \cup S^{\mathrm{ball}}_i \subseteq S_i \subseteq S_{i'-1}$. Since $T_{i'}$ is disjoint from $S_{i'-1}$, $T_{i'}$ is in particular disjoint from $N(B_{r_i'}^{G - S_{i-1}}(v_i))$. Since $T_{i'}$ is connected, it follows that $T_{i'}$ is either entirely contained in $B_{r_i'}^{G - S_{i-1}}(v_i)$ or entirely disjoint from $B_{r_i'}^{G - S_{i-1}}(v_i)$. In the latter case $T_{i'}$ is disjoint from $V(T_i) \subseteq B_{r_i'}^{G - S_{i-1}}(v_i)$ as desired.

It remains to rule out the former case. Suppose for contradiction that $T_{i'} \subseteq B_{r_i'}^{G - S_{i-1}}(v_i)$. Since $T_{i'}$ is also disjoint from $S^{\mathrm{ball}}_i \subseteq S_{i'-1}$, $T_{i'}$ is contained in some connected component $C$ of $G[B_{r_i'}^{G - S_{i-1}}(v_i)] - S^{\mathrm{ball}}_i$. By Lemma~\ref{lem:densitySplitIntoBalls} the component $C$ has radius less than $r_i$, so $C \subseteq B_{r_i - 1}^{C}(u)$ for some $u \in C$. Since $G[B_{r_i'}^{G - S_{i-1}}(v_i)] - S^{\mathrm{ball}}_i$ is a subgraph of $G - S_{i-1}$, distances in the former are at least those in the latter, and therefore $C \subseteq B_{r_i - 1}^{G - S_{i-1}}(u)$. But then $G[B_{r_i - 1}^{G - S_{i-1}}(u)]$ contains $T_{i'}$ as a subtree, which has pathwidth at least $\tau$, contradicting the choice of $r_i$ if $r_i \geq 2$, or, if $r_i = 1$, contradicting that $T_{i'}$ has at least two vertices (since its pathwidth is at least $\tau \geq 1$). We conclude that $T_i$ and $T_{i'}$ are vertex disjoint, and hence that ${\cal X}$ is a pathwidth-$\tau$ tree packing such that $|S| \leq 36 k^2 \cdot |{\cal X}|$.

To upper bound the running time, observe that the algorithm terminates after at most $n$ iterations of the outer loop, since the trees $T_1, T_2, \ldots$ are pairwise vertex disjoint and each contains at least one vertex of $G$. In each iteration the algorithm makes at most $n^2$ calls to the algorithm of Lemma~\ref{lem:detectPathwidthSubtree} (one per pair $(v, r)$ with $v \in V(G - S_{i-1})$ and $1 \leq r \leq n$). Each such call takes at most $2^{O(9^\tau)} n^{O(1)}$ time, and the per-iteration applications of Lemmas~\ref{lem:ballSeparator} and~\ref{lem:densitySplitIntoBalls} take polynomial time. Combined, this gives the claimed upper bound on the running time.
\end{proof}

Next we explore the structure of a pathwidth-$\tau$ tree packing ${\cal X}$ such that $\bigcup_{X \in {\cal X}} X = V(G)$ when $G$ has no subtree of pathwidth at least $\tau+1$.
Let $G$ be a graph, $\tau$ be a non-negative integer, and  ${\cal X}$ be a pathwidth-$\tau$ tree packing such that $\bigcup_{X \in {\cal X}} X = V(G)$.
The {\em tree-packing graph} of ${\cal X}$ is the graph $H$ with a vertex $v_X$ for every $X \in {\cal X}$ and an edge $v_Xv_{X'}$ for every pair $X$, $X' \in {\cal X}$ such that there is an edge in $G$ with one endpoint in $X$ and the other in $X'$.

\begin{lemma}\label{lem:treePackingStructure}
Let $G$ be a graph, $\tau$ be a non-negative integer such that $G$ has no subtree of pathwidth at least $\tau+1$, and  ${\cal X}$ be a pathwidth-$\tau$ tree packing such that
$\bigcup_{X \in {\cal X}} X = V(G)$.
The tree-packing graph $H$ of ${\cal X}$ is a path or a cycle. 
\end{lemma}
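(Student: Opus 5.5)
The proof has two parts. First, every vertex of the tree-packing graph $H$ has degree at most $2$. Second, $H$ is connected. A connected graph of maximum degree at most $2$ is a path or a cycle (a single vertex counts as a path). For connectivity we use that the sets of ${\cal X}$ partition $V(G)$. Every edge of $G$ either lies inside one set $X \in {\cal X}$ or joins two sets $X, X'$, and in the latter case $v_Xv_{X'} \in E(H)$. Each $X$ is connected, so any path in $G$ projects to a walk in $H$, and $H$ is connected whenever $G$ is. We therefore work, as in Lemma~\ref{lem:treePackingMaxUnion}, with connected $G$; the statement is meant for one component at a time.

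The main step is the degree bound. Suppose for contradiction that some $v_{X_1}$ has three distinct neighbours $v_{X_2}, v_{X_3}, v_{X_4}$ in $H$. For each $i \in \{1,2,3,4\}$, let $T_i$ be a subtree of $G[X_i]$ of pathwidth at least $\tau$. Extend $T_i$ (a forest inside the connected graph $G[X_i]$) to a spanning tree $S_i$ of $G[X_i]$. Since $T_i \subseteq S_i$, Lemma~\ref{lem:pwsubgraph} gives $\pw(S_i) \geq \tau$. For $j \in \{2,3,4\}$, fix an edge $x_jy_j$ of $G$ with $x_j \in X_1$ and $y_j \in X_j$. Let $T$ be the union of $S_1, S_2, S_3, S_4$ together with the three edges $x_jy_j$. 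The $S_i$ are vertex-disjoint trees and each added edge joins $S_1$ to a different $S_j$, so $T$ is a subtree of $G$.

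Next pick the branch vertex. Let $v$ be the median of $x_2, x_3, x_4$ in $S_1$, that is, the unique vertex of $S_1$ lying on all three pairwise paths; some points may coincide. The paths in $S_1$ from $v$ to $x_2$, $x_3$, $x_4$ pairwise meet only in $v$. Hence, for each $j$, the vertex set $V(S_j)$ lies in the component $C_j$ of $T - v$ that contains the path from $v$ to $x_j$ minus $v$, together with $x_jy_j$ and $S_j$. If $x_j = v$, then $C_j$ is simply the component containing $S_j$ (reached via $y_j$).

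The main obstacle is checking that $C_2, C_3, C_4$ are pairwise distinct. This should follow from the facts that $T$ is a tree and that the three $v$-to-$x_j$ branches, extended by the edges $x_jy_j$, leave $v$ through three different edges of $T$ (at most one $x_j$ can equal $v$ when two branches are trivial; the degenerate cases need a little care). Once distinctness is established, each $C_j$ contains $S_j$, so $\pw(C_j) \geq \tau$ by Lemma~\ref{lem:pwsubgraph}. Theorem~\ref{thm:treePathwidth} applied at $v$ then yields $\pw(T) \geq \tau + 1$. This contradicts the hypothesis that $G$ has no subtree of pathwidth at least $\tau + 1$, so $H$ has maximum degree at most $2$, which completes the proof.
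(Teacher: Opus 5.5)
Your proposal is correct and follows essentially the same route as the paper. You glue spanning trees of $G[X_1],\dots,G[X_4]$ with three connecting edges, take the branch vertex inside $X_1$, and apply Theorem~\ref{thm:treePathwidth}; your median in $S_1$ plays the role of the paper's vertex of $P_{AB}\cap P_{AC}$ furthest from $A$. Connectivity of $H$ comes from connectivity of $G$, which the paper's proof also assumes.

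The distinctness you hedge on is immediate and needs no case analysis. The three $v$--$y_j$ paths in $T$ pairwise meet only in $v$, so they leave $v$ through three distinct edges: edges of $S_1$ when $x_j\neq v$, and the edge $vy_j$ when $x_j=v$. This works even if two or all three of the $x_j$ equal $v$, so your parenthetical claim that at most one $x_j$ can equal $v$ is inaccurate, but harmless.
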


\begin{proof}
Since $G$ is connected and ${\cal X}$ forms a partition of $V(G)$, $H$ is connected as well. 
Thus it suffices to show that $H$ has maximum degree $2$. Aiming towards a contradiction, let $v_X$ be a vertex in $H$ with three distinct neighbors $v_A$, $v_B$, $v_C$ for sets $X,A,B,C \in {\cal X}$. 
Then $G[A]$, $G[B]$ and $G[C]$ contain spanning trees $T_A$, $T_B$, $T_C$ respectively, each of pathwidth at least $\tau$. 
Let $T_X$ be a spanning tree of $G[X]$.
Since $v_X$ is adjacent to $v_A$, $v_B$, and $v_C$ in $H$ we have that each of $X \cap N_G(A)$, $X \cap N_G(B)$ and $X \cap N_G(C)$ are all non-empty. 
Make a spanning tree $T$ of $G[X \cup A \cup B \cup C]$ from starting with $T_X$, $T_A$, $T_B$ and $T_C$ and adding an edge of $G$ between $X$ and $A$, $X$ and $B$, and $X$ and $C$ respectively. Such edges exist because $v_X$ is adjacent to $v_A$, $v_B$, and $v_C$ in $H$.
Let $P_{AB}$ be the unique shortest path in $T$ with one endpoint in $A$ and the other in $B$.
Let $P_{AC}$ be the unique shortest path in $T$ with one endpoint in $A$ and the other in $C$.
Let $v$ be the vertex furthest from $A$ on $P_{AB}$ such that $v \in V(P_{AC})$.
Then $v \in X$ (since both $P_{AB}$ and $P_{AC}$ contain unique neighbor of $A$ in $T$). 
Furthermore, let $C_A$, $C_B$ and $C_C$ be the connected components of $T - v$ containing $A$, $B$ and $C$ respectively. 
Then $C_A$, $C_B$ and $C_C$ are all distinct (since the unique path in $T$ between each pair from $\{A,B,C\}$ contains $v$).
But then $T$ is a subtree of $G$ that contains a vertex $v$ such that $T-v$ has tree distinct components $C_A$, $C_B$ and $C_C$, each of pathwidth at least $\tau$.
By Theorem~\ref{thm:treePathwidth} it follows that $\pw(T) \geq \tau + 1$, contradicting that $G$ does not have a subtree with pathwidth at least $\tau+1$. 
\end{proof}

We are now ready to prove the main result of this section.

\begin{lemma}\label{lem:treePackingPathCycleDensity}
There is an algorithm that, given positive integers $d \geq 2$, $\tau \geq 1$, and a connected graph $G$ on $n$ vertices of radial local density at most $d$ such that $G$ has no subtree of pathwidth at least $\tau+1$, runs in time $2^{O(9^\tau)} \cdot n^{O(1)}$ and either concludes that $G$ has no subtree of pathwidth at least $\tau$ or outputs
a pathwidth-$\tau$ tree packing ${\cal X} = \{X_1, \dots, X_t\}$ of $G$ with $\bigcup_{i=1}^{t} X_i = V(G)$, the tree-packing graph $H$ of ${\cal X}$ (which is a path or a cycle on vertex set $\{1, \dots, t\}$), and, for every subpath $Q$ of $H$, a vertex set $S_Q \subseteq V(G)$ such that $|S_Q| \leq 36 d^2 \cdot |V(Q)|$, and $G\!\left[\bigcup_{i \in V(Q)} X_i\right] - S_Q$ has no subtree of pathwidth at least $\tau$.
\end{lemma}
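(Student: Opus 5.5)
We combine Lemma~\ref{lem:computeTreePackingDensity}, Lemma~\ref{lem:treePackingMaxUnion} and Lemma~\ref{lem:treePackingStructure}; the only real work is producing the sets $S_Q$ for every subpath $Q$ of $H$.

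First we run Lemma~\ref{lem:computeTreePackingDensity} on $G$ with $k=d$ and $\tau$. This returns a pathwidth-$\tau$ tree packing ${\cal X}_0$ and a set $S$ with $|S|\le 36d^2|{\cal X}_0|$ such that $G-S$ has no subtree of pathwidth at least $\tau$. If ${\cal X}_0=\emptyset$, then $S=\emptyset$ and $G$ has no subtree of pathwidth at least $\tau$, so we report this. Otherwise we apply Lemma~\ref{lem:treePackingMaxUnion} to obtain a packing ${\cal X}=\{X_1,\dots,X_t\}$ of the same size that covers $V(G)$. We then compute the tree-packing graph $H$ in polynomial time. Since $G$ has no subtree of pathwidth at least $\tau+1$, Lemma~\ref{lem:treePackingStructure} shows that $H$ is a path or a cycle. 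We relabel so that $V(H)=\{1,\dots,t\}$ with consecutive indices adjacent along $H$.

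Next fix a subpath $Q$ of $H$ and let $G_Q=G[\bigcup_{i\in V(Q)}X_i]$. The subpaths must be handled uniformly because there are polynomially many of them (at most $O(t^2)$). We apply Lemma~\ref{lem:computeTreePackingDensity} to $G_Q$ with $k=d$. This is legitimate because $\hat{\Delta}_r(G_Q)\le\hat{\Delta}_r(G)\le d$ by Lemma~\ref{lem:densityMonotone}. The lemma returns a packing ${\cal X}_Q$ in $G_Q$ and a set $S_Q$ with $|S_Q|\le 36d^2|{\cal X}_Q|$ such that $G_Q-S_Q$ has no subtree of pathwidth at least $\tau$. Lemma~\ref{lem:computeTreePackingDensity} does not require connectivity, so it applies even if $G_Q$ is disconnected.

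The key step is to show $|{\cal X}_Q|\le |V(Q)|$. We argue by exchange against maximality. Suppose ${\cal X}_Q$ has more than $|V(Q)|$ members. Consider the family $({\cal X}\setminus\{X_i:i\in V(Q)\})\cup{\cal X}_Q$. Its members are pairwise disjoint, because members of ${\cal X}_Q$ lie inside $\bigcup_{i\in V(Q)}X_i$. Each member is connected and contains a subtree of pathwidth at least $\tau$. So this is a pathwidth-$\tau$ tree packing of $G$ strictly larger than ${\cal X}$.

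\textbf{This is the main obstacle.} The argument above needs ${\cal X}$ to be a maximum packing, that is $|{\cal X}|=\pack_\tau(G)$. Lemma~\ref{lem:computeTreePackingDensity} does not guarantee this. I would try to fix it algorithmically with an improvement loop. Whenever some subpath $Q$ yields $|{\cal X}_Q|>|V(Q)|$, we replace ${\cal X}$ by the larger packing above. We then re-saturate it with Lemma~\ref{lem:treePackingMaxUnion}, recompute $H$ (still a path or cycle by Lemma~\ref{lem:treePackingStructure}), and restart. The size of the packing is at most $n$ and strictly increases at each improvement, so there are at most $n$ rounds. Each round makes polynomially many calls to Lemma~\ref{lem:computeTreePackingDensity}, each costing $2^{O(9^\tau)}n^{O(1)}$. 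When the loop ends, every subpath $Q$ satisfies $|{\cal X}_Q|\le|V(Q)|$, hence $|S_Q|\le 36d^2|V(Q)|$, which is the bound the lemma requires. The total running time is $2^{O(9^\tau)}n^{O(1)}$.
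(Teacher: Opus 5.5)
Your proposal is correct and follows the paper's proof essentially step for step. You make the initial call to Lemma~\ref{lem:computeTreePackingDensity}, saturate the packing with Lemma~\ref{lem:treePackingMaxUnion}, and then run an improvement loop. That loop replaces ${\cal X}$ by $({\cal X}\setminus\{X_i : i\in V(Q)\})\cup{\cal X}_Q$ whenever some subpath has $|{\cal X}_Q|>|V(Q)|$, restarts, and stops after at most $n$ rounds; this is exactly how the paper gets around not having a maximum packing, so that $|S_Q|\le 36d^2|V(Q)|$ holds for every $Q$ at termination.
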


\begin{proof}
The algorithm invokes Lemma~\ref{lem:computeTreePackingDensity} on $G$ with parameters $(G, d, \tau)$.
%, which is valid since $\hat{\Delta}_r(G) \leq d$ and $d, \tau \geq 1$.
%
Let $({\cal X}_0, S_0)$ be the output, where ${\cal X}_0$ is a pathwidth-$\tau$ tree packing of $G$ and $S_0 \subseteq V(G)$ satisfies $|S_0| \leq 36 d^2 \cdot |{\cal X}_0|$ and is such that $G - S_0$ has no subtree of pathwidth at least $\tau$.
If ${\cal X}_0 = \emptyset$ then $S_0 = \emptyset$ and $G$ has no subtree of pathwidth at least $\tau$; in this case the algorithm outputs this conclusion and halts. We therefore assume below that ${\cal X}_0 \neq \emptyset$.

The algorithm now applies Lemma~\ref{lem:treePackingMaxUnion} to ${\cal X}_0$ to obtain a pathwidth-$\tau$ tree packing ${\cal X}$ of $G$ with $|{\cal X}| = |{\cal X}_0|$ and $\bigcup_{X \in {\cal X}} X = V(G)$, and computes the tree-packing graph $H$ of ${\cal X}$. By Lemma~\ref{lem:treePackingStructure}, $H$ is a path or a cycle; we identify $V(H)$ with $\{1, \dots, t\}$ where $t := |{\cal X}|$, and accordingly write ${\cal X} = \{X_1, \dots, X_t\}$.

We now describe the main loop. For each subpath $Q$ of $H$ the algorithm defines $G_Q = G\!\left[\bigcup_{i \in V(Q)} X_i\right]$. Since $G_Q$ is an induced subgraph of $G$, monotonicity of $\hat{\Delta}_r$ under vertex deletion yields $\hat{\Delta}_r(G_Q) \leq \hat{\Delta}_r(G) \leq d$, and $G_Q$ has no subtree of pathwidth at least $\tau+1$. The algorithm invokes Lemma~\ref{lem:computeTreePackingDensity} on $(G_Q, d, \tau)$, and obtains a pathwidth-$\tau$ tree packing ${\cal X}_Q$ and vertex set $S_Q$, such that $|S_Q| \leq 36 d^2 \cdot |{\cal X}_Q|$, and $G_Q - S_Q$ has no subtree of pathwidth at least $\tau$. We distinguish two cases.

Suppose first that $|{\cal X}_Q| \leq |V(Q)|$ for every subpath $Q$ of $H$. Then, for every $Q$ we have that
$$|S_Q| \;\leq\; 36 d^2 \cdot |{\cal X}_Q| \;\leq\; 36 d^2 \cdot |V(Q)|.$$
Thus the algorithm outputs ${\cal X}$, $H$, and the family $\{S_Q : Q \text{ a subpath of } H\}$ and halts.

Suppose now that there exists a subpath $Q$ of $H$ such that $|{\cal X}_Q| > |V(Q)|$. The algorithm defines
\[
\widetilde{\cal X} \;:=\; \bigl({\cal X} \setminus \{X_i : i \in V(Q)\}\bigr) \cup {\cal X}_Q,
\]
which is a pathwidth-$\tau$ tree packing of $G$ with $|\widetilde{\cal X}| > |{\cal X}|$. The algorithm now re-applies Lemma~\ref{lem:treePackingMaxUnion} to $\widetilde{\cal X}$ to obtain a packing ${\cal X}'$ such that $|{\cal X}'| = |\widetilde{\cal X}|$ and $\bigcup_{X' \in {\cal X}'} X' = V(G)$, and computes the tree-packing graph $H'$ of ${\cal X}'$. We have that $|{\cal X}'| > |{\cal X}|$ and that, by Lemma~\ref{lem:treePackingStructure}, $H'$ is a path or a cycle. The algorithm updates ${\cal X}$ to ${\cal X}'$, $H$ to $H'$ and re-starts the main loop.

The main loop terminates after at most $n$ iterations because each iteration increases $|{\cal X}|$ by at least one, and $|{\cal X}| \leq |V(G)| = n$. Each iteration of the main loop computes ${\cal X}_Q$ and $S_Q$ for at most $|V(H)|^2 \leq n^2$ subpaths $Q$ of $H$ via a call to Lemma~\ref{lem:computeTreePackingDensity}, costing $2^{O(9^\tau)} \cdot n^{O(1)}$. The applications of Lemma~\ref{lem:treePackingMaxUnion} and computations of the tree-packing graphs take polynomial time. Therefore the total running time is $2^{O(9^\tau)} \cdot n^{O(1)}$, as claimed.
\end{proof}

\section{Weak Pre-Layouts}\label{sec:weakPreLayouts}
We now describe how we use Lemma~\ref{lem:treePackingPathCycleDensity} to get a decomposition of the input graph $G$. To describe the structure obtained by the decompositions we  define {\em weak pre-layouts}. An enhanced version of weak pre-layouts, namely {\em pre-layouts}, will be defined in the next section. 
Let $s$, $\tau$ and $\rho$ be positive integers. 
A {\em weak pre-layout} of a graph $G$ 
with {\em separator size} $s$,
and {\em pathwidth-$\tau$ tree hitting bound} $\rho$
is a seven-tuple $(S, R, \ell, P, g, \zeta, W)$ where $\ell \geq 3$ is an integer, $S$, $R$, and $W$ are vertex sets in $G$ with $S$ and $R$ disjoint,
$P = p_0s_1p_1s_2p_2\ldots s_\ell p_\ell$ is a path on $2\ell+1$ vertices,
$g$ is an embedding of $G - (S \cup R)$ into $P$,
and $\zeta : \{1, \ldots, \ell\} \rightarrow 2^{V(G)}$ is a function such that the following conditions are satisfied. 

\begin{enumerate}\setlength\itemsep{-2pt}
\item\label{itm:SBound} $|S| \leq s$,
\item\label{itm:RNeigh} $N(R) \subseteq S$,
\item\label{itm:stretchOne} the stretch of $g$ is at most $1$,
\item\label{itm:connectedEnds} $G[g^{-1}(\{p_0,s_1\})]$ and $G[g^{-1}(\{s_\ell,p_\ell\})]$ are connected,
\item\label{itm:connectedMid} for every $1 \leq i < \ell$, $G[g^{-1}(\{s_i,p_i,s_{i+1}\})]$ is connected,
\item\label{itm:siSize} for every $1 \leq i \leq \ell$ we have that $|g^{-1}(s_i)| \leq s$,  
\item\label{itm:smallConnectedSeparator} for every $1 \leq i < \ell$, $\zeta(i)$ is a connected subset of $g^{-1}(p_i)$ that separates $g^{-1}(s_i)$ from $g^{-1}(s_{i+1})$ in $G' = G - (S \cup R)$, such that $|\zeta(i)| \leq d_{G'}(g^{-1}(s_i), g^{-1}(s_{i+1}))/2$.
\item\label{itm:rBound} $|W \cap R| \leq \rho$, and $G[R] - W$ has no subtree of pathwidth at least $\tau$,
\item\label{itm:hitPi} for every $0 \leq i \leq \ell$ we have that $|W \cap g^{-1}(p_i)| \leq \rho$ and $G[g^{-1}(p_i)] - W$ has no subtree of pathwidth at least $\tau$.

%$|\zeta(i)| <  d_{G'}(g^{-1}(s_i), g^{-1}(s_{i+1}))/2$.
%\item for every $1 \leq i < \ell$ there is a vertex $c_i \in g^{-1}(p_i)$ such that $$d_{G'}(g^{-1}(s_i), c_i) + d_{G'}(c_i, g^{-1}(s_{i+1})) \leq 2 \cdot d_{G'}(g^{-1}(s_i), g^{-1}(s_{i+1}))$$
\end{enumerate}

\begin{figure}[ht]
    \centering
    \includegraphics[width=0.9\textwidth]{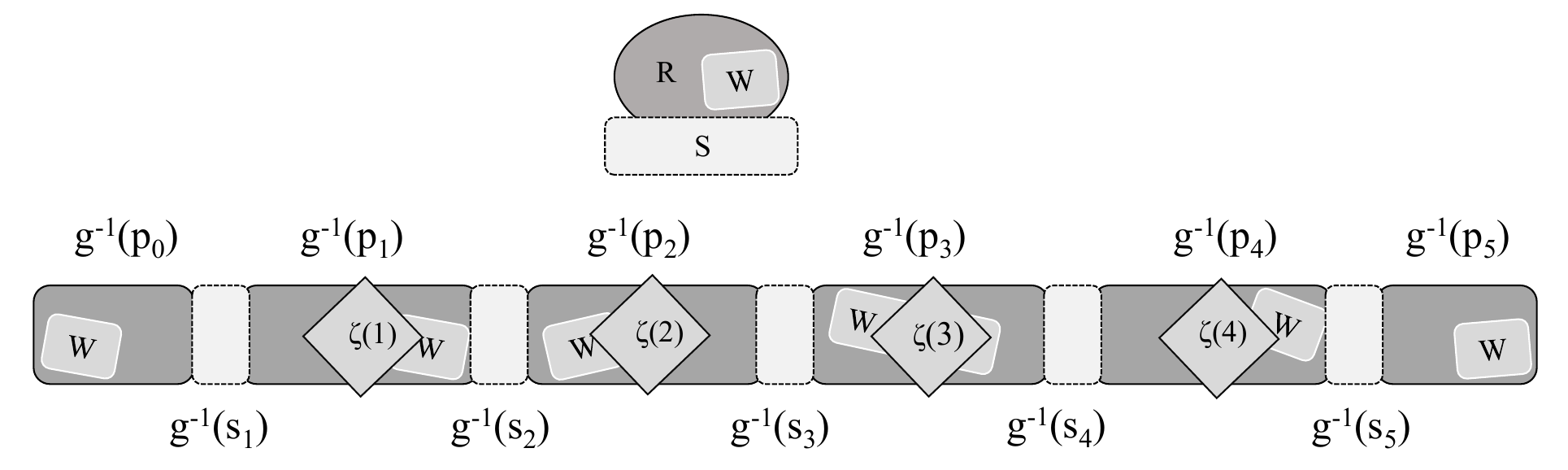}
    \caption{\em An illustration of a weak pre-layout with $\ell = 5$}
    \label{fig:weakPreLayout}
\end{figure}

{\bf Parts of a Weak Pre-Layout.} For a weak pre-layout $(S, R, \ell, P, g, \zeta, W)$ of a graph $G$, the induced subgraphs
$G[g^{-1}(\{p_0, s_1\})]$,
$G[g^{-1}(\{s_\ell, p_\ell\})]$
and
$G[g^{-1}(\{s_i,p_i,s_{i+1}\})]$ for $1 \leq i < \ell$ are called the {\em parts} of the pre-layout.
$G[g^{-1}(\{p_0, s_1\})]$ is part number $0$, $G[g^{-1}(\{s_\ell, p_\ell\})]$ is part number $\ell$, and for every $1 \leq i < \ell$, $G[g^{-1}(\{s_i,p_i,s_{i+1}\})]$ is part number $i$.
Part $0$ and $\ell$ are called {\em end parts}, and the remaining parts are called the {\em inner parts}. For each $1 \leq i < \ell$ we will refer to $\zeta(i)$ as the {\em core} of the inner part $i$. We will denote part $i$ of the weak pre-layout by $G_i$.
%\todo[inline]{make sure all weak pre-layouts and pre-layout references contain $\ell$ and $W$}
%\todo[inline]{DL: construct W and verify (iii) where (iii) mentioned first time? Or rename items 3 and 8 to come last in list.}

The main result of this section is an algorithm that extracts a weak pre-layout from a tree packing as guaranteed by Lemma~\ref{lem:treePackingPathCycleDensity}. Towards this goal we will need the following simple separation result about graphs with low radial local density and an embedding into a path. 

\begin{lemma}\label{lem:localDensitySeparator}
There exists a polynomial time algorithm that takes as input a graph $G$ with at least one edge, a path $P$ and an embedding $h$ of $G$ into $P = p_1 p_2 \ldots p_\ell$ with stretch at most $1$ such that
for every vertex $p_i \in V(P)$ the set $h^{-1}(p_i)$ is non-empty and induces a connected subgraph of $G$,
and $|V(P)| \geq 4 \hat{\Delta}_r(G) + 1$, and outputs a vertex set $S$ of size at most $4 \cdot \hat{\Delta}_r(G)$, disjoint from $h^{-1}( \{p_1, p_\ell \})$ such that $S$ separates $h^{-1}(p_1)$ from $h^{-1}(p_\ell)$.
\end{lemma}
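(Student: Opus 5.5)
The plan is a pigeonhole argument over the level sets $L_i = h^{-1}(p_i)$. Since $h$ has stretch at most $1$, every edge of $G$ joins two vertices in the same level or in consecutive levels. Hence for every $1 < i < \ell$ the set $L_i$ separates $L_1$ from $L_\ell$ and is disjoint from both. So it suffices to find an index $1 < i < \ell$ with $|L_i| \leq 4\hat{\Delta}_r(G)$. The algorithm scans all levels and outputs the smallest middle one, which is clearly polynomial.

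To show such a small level exists, I would use a radial ball centered at a vertex of $L_1$ and exploit connectivity of each level. Suppose for contradiction that every middle level has more than $4\hat{\Delta}_r(G)$ vertices. Set $D = 4\hat{\Delta}_r(G)$ and let $m = \ell - 2 \geq D - 1$ be the number of middle levels. (Here $\ell \geq D+1$; note $D>0$ since $G$ has an edge, so $\hat{\Delta}_r(G) \geq 1/2$.)

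The key step is to bound distances. Pick $v \in L_1$ and consider distances from $v$. Each $L_i$ is connected and nonempty, and consecutive levels must be joined by an edge: otherwise $G$ would split across $L_i$/$L_{i+1}$, and I would argue that $G$ being connected follows from the setting, or else the empty set already separates and we output $\emptyset$. Then the ball $B_q(v)$ contains all of $L_2,\dots,L_{j}$ only if $q$ accounts for the diameters of the levels, which may be large. This is the main obstacle: a connected level can have large diameter, so a ball of radius $q$ need not swallow $q$ whole levels.

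To handle it, I would instead count only one vertex per step. Take a shortest path $Q$ from $L_1$ to $L_\ell$; by stretch $1$ it visits every level, so it has length at least $\ell-1$. Apply the radial density bound to balls around a vertex on $Q$ but with a mass argument that uses the full levels. Concretely, if all middle levels exceed $D$, choose $q$ and a center so that $B_q$ contains at least $q$ full levels. This requires each level to have radius small compared to its size. If some level $L_i$ has radius $\geq |L_i|/2$-ish, the density is unaffected, but then that level itself, being connected with many vertices, must have a long path. I would resolve this by a direct alternative: if no middle level is small, then for the first middle level $L_2$ pick $u\in L_2$ and $q=1$. Then $B_1$ in $G$ includes only neighbors of $u$, which is not enough.

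So I expect the final argument to combine the two counts. Summing $|L_i|>D$ over $m$ levels gives $n-2 > mD$. Using a radial ball of radius $q \approx$ (diameter of $G$) around a center of $G$ gives $n-1 \le 2q\hat{\Delta}_r = qD/2$. Comparing these requires $q \lesssim 2m$, i.e., that the levels do not have large radius. I would try to bound this by choosing the center on $Q$ and noting each $L_i$ is reachable within the ball. I would verify this carefully, since it is where the proof could fail, and if it fails I would fall back on selecting the separator as a minimal vertex cut inside a level using Menger with the ball-growth argument of Lemma~\ref{lem:ballSeparator} applied from $L_1$ within the union of levels.
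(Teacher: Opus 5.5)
Your reduction to finding a middle level $L_i$ with $|L_i| \le 4\hat{\Delta}_r(G)$ is not merely hard to prove. It is false, so the obstacle you identified (levels of large radius) is fatal to the approach, not a technicality. Take $\ell = 5$, let each $L_i$ induce a path on $10$ vertices, and join the last vertex of $L_i$ to the first vertex of $L_{i+1}$. Then $G$ is a path on $50$ vertices, so $\hat{\Delta}_r(G) = 1$ and $\ell = 5 \ge 4\hat{\Delta}_r(G)+1$. All levels are nonempty and connected and $h$ has stretch $1$, yet every middle level has $10 > 4$ vertices. The lemma still holds there, since a single vertex of $L_3$ separates, but no level is a valid output. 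This is also why your global counts ($n-2 > mD$ versus $n-1 \le qD/2$) cannot be reconciled: the diameter of $G$ is not bounded in terms of the number of levels. Your ball-growth fallback does not repair this either. A BFS layer around one vertex $v \in h^{-1}(p_1)$ need not separate all of $h^{-1}(p_1)$ from $h^{-1}(p_\ell)$, because $h^{-1}(p_1)$ can itself have large diameter. Nor need that layer avoid these two sets.

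The missing idea is to let $S$ be an arbitrary vertex set and to prove that a small one exists via Menger's theorem. The algorithm then just computes a minimum vertex cut between $h^{-1}(p_1)$ and $h^{-1}(p_\ell)$ that avoids both sets.

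Suppose every such cut has more than $4\hat{\Delta}_r(G)$ vertices. Then there are $t \ge 4\hat{\Delta}_r(G)+1 \ge 3$ internally vertex-disjoint paths from $h^{-1}(p_1)$ to $h^{-1}(p_\ell)$, and by stretch $1$ each of them meets every level. Let $q$ be the minimum level size, attained at $p_i$. Let $R_1$ be a shortest of these paths, and let $u \in V(R_1) \cap h^{-1}(p_i)$. Since $h^{-1}(p_i)$ is connected with $q$ vertices, $B_{q-1}(u)$ contains it and therefore meets every path. Now split on the length of $R_1$:
\begin{itemize}
\item If $R_1$ has at least $q+2$ vertices, every path has at least $q$ internal vertices. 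Then $B_{2q-1}(u)$ contains $q$ internal vertices from each path, so $\hat{\Delta}_r(G) \ge (tq-1)/(4q-2) > t/4$.
\item Otherwise $B_q(u) \supseteq V(R_1)$, so it meets every level. Each level is connected with at least $q$ vertices, so $B_{2q-1}(u)$ contains $q$ vertices from each of the $\ell$ levels, giving $\hat{\Delta}_r(G) \ge (\ell q-1)/(4q-2) \ge \ell/4$.
\end{itemize}
Both cases contradict the hypotheses. This case split on the length of the shortest crossing path is exactly what handles long, thin levels like those in the example above. There the minimum cut is a single vertex, and no level needs to be small.
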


\begin{proof}
Suppose for contradiction that such a separator does not exist. 
Then, by Menger's Theorem~\cite{diestelBook} there are $t \geq 4 \cdot \hat{\Delta}_r(G) + 1 \geq 3$ paths $R_1, \ldots, R_t$ such that each path $R_i$ has its first vertex in
$h^{-1}(p_1)$, its last vertex in $h^{-1}(p_\ell)$, and the paths $R_1, \ldots, R_t$ are internally vertex disjoint.
Here we used that $\hat{\Delta}_r(G) \geq 1/2$ because $G$ has at least one edge.
Since the embedding $h$ has stretch $1$ it holds that $R_i \cap h^{-1}(p_j) \neq \emptyset $ for every $i$, $j$.
Let $p_i$ be such that $|h^{-1}(p_i)|$ is smallest, and set $q = |h^{-1}(p_i)|$.
Without loss of generality $R_1$ is shortest among all the paths  $R_1, \ldots, R_t$.
Let $u$ be a vertex in $V(R_1) \cap  h^{-1}(p_i)$. Since $u$ is well defined, $q \geq 1$.

If $|R_1| \geq q+2$ then each $R_i$ has at least $q$ internal vertices. A ball of radius $q-1$ around $u$ contains all of $h^{-1}(p_i)$, and hence it intersects every path in $R_1, \ldots, R_t$. Therefore a ball of radius $2q-1$ around $u$ contains at least $q$ internal vertices from each path in $R_1, \ldots, R_t$. Thus the radial local density of $G$ is at least
$(tq-1)/(4q-2) > t/4 >  \hat{\Delta}_r(G)$, a contradiction.

If $|R_1| \leq q+1$ then a ball of radius $q$ around $u$ contains all of $R_1$, and therefore at least one vertex from $h^{-1}(p_j)$ for every vertex $p_j$ on $P$. Since each set  $h^{-1}(p_j)$ has at least $q$ vertices, a ball of radius $2q-1$ around $u$ contains at least $q$ vertices from each set $h^{-1}(p_j)$. It follows that the radial local density of $G$ is at least $(\ell \cdot q - 1)/(4q-2) \geq \ell/4 > \hat{\Delta}_r(G)$, a contradiction.
Thus a separator $S$ as described in the statement exists. Finding a minimum size separator can be done in polynomial time~\cite{cormen2022introduction}.
\end{proof}

\begin{lemma}\label{lem:weakPreLayoutDensity}
There exists an algorithm that takes as input
positive integers $d \geq 2$, $\tau$,
and a connected graph $G$ on $n$ vertices of radial local density at most $d$
such that $G$ has no subtree of pathwidth at least $\tau+1$,
runs in time $2^{O(9^\tau)} \cdot n^{O(1)}$ and either outputs a vertex set $S \subseteq V(G)$ of size at most $1566 \cdot d^3$ such that $G - S$ has no subtree of pathwidth at least $\tau$, or outputs a weak pre-layout $(S, R, \ell, P, g, \zeta, W)$ of $G$ with separator size at most $4d$ and pathwidth-$\tau$ tree hitting bound $792 \cdot d^3$.
\end{lemma}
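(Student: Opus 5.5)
We would begin by invoking Lemma~\ref{lem:treePackingPathCycleDensity} on $(d,\tau,G)$. If it concludes that $G$ has no subtree of pathwidth at least $\tau$, we output $S=\emptyset$. Otherwise we obtain a packing ${\cal X}=\{X_1,\dots,X_t\}$ covering $V(G)$, its tree-packing graph $H$ (a path or a cycle), and the hitting sets $S_Q$ with $|S_Q|\le 36d^2|V(Q)|$.

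The first case split is on $t$. If $t\le 43d$, we output the hitting set as the first outcome. When $H$ is a path we take $S_H$. When $H$ is a cycle we take the union of $S_Q$ for $Q=H-t$ and for $Q=\{t\}$. We then need to argue that every subtree of pathwidth $\tau$ avoiding this set lies inside one of the two pieces. This is not automatic, since a tree may cross between $X_t$ and the rest. If the argument fails, we would instead cover $H$ by two overlapping subpaths, so that every connected set of "width" one crossing is inside one of them, paying a factor $2$. Either way the size is at most $36d^2\cdot 43d\le 1566d^3$.

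So assume $t$ is large. The map $h:V(G)\to V(H)$, $v\mapsto i$ for $v\in X_i$, is then an embedding of stretch $1$ with connected nonempty fibers. If $H$ is a cycle, we first cut it. We pick a window of $4d+1$ consecutive indices, view it as a path, and apply Lemma~\ref{lem:localDensitySeparator} to the induced subgraph. This gives a separator $S$ with $|S|\le 4d$ between the two end fibers. We put into $R$ the part of the window on one side of $S$ that is cut off from the rest of the cycle, so that $N(R)\subseteq S$. The remaining graph embeds into a path with stretch $1$. In the path case we take $S=R=\emptyset$, or use one separator near each end if convenient.

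Next we partition the index path into consecutive blocks of length about $22d$. Inside each block, in a window of length $4d+1$, we again apply Lemma~\ref{lem:localDensitySeparator}. This yields separators of size at most $4d$, whose vertices become the fibers $g^{-1}(s_i)$, which gives item~\ref{itm:siSize}. The vertices between consecutive separators form $g^{-1}(p_i)$. We then get $\ell\ge 3$ from $t$ being large. The set $W$ is the union of the sets $S_Q$ for $Q$ ranging over the (two consecutive) blocks meeting each $p_i$-region and over $R$'s window. Since each such $Q$ has at most $22d$ vertices, the bounds in items~\ref{itm:rBound} and~\ref{itm:hitPi} become $36d^2\cdot 22d=792d^3$. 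Stretch $1$ (item~\ref{itm:stretchOne}) follows because the separators cut all edges across the index order.

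The connectivity items~\ref{itm:connectedEnds} and~\ref{itm:connectedMid} require some repair. We would reassign to a neighbouring region every component of $g^{-1}(p_i)$ that does not touch both adjacent separators. We would also choose minimal separators, so that each separator vertex has neighbours on both sides; this keeps $g^{-1}(\{s_i,p_i,s_{i+1}\})$ connected.

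I expect the main obstacle to be item~\ref{itm:smallConnectedSeparator}: finding a core $\zeta(i)\subseteq g^{-1}(p_i)$ that is connected, separates the two separator fibers, and satisfies $|\zeta(i)|\le d_{G'}(g^{-1}(s_i),g^{-1}(s_{i+1}))/2$. My plan is to take $\zeta(i)$ to be a single packing fiber $X_j$ near the middle of the region. Such a set is connected, and it separates by the stretch-$1$ path structure. We would pick $j$ with $|X_j|$ minimal among the middle indices. The distance between the separators is at least the number of fibers in between, roughly $22d-8d$. A local-density counting argument, like the one in Lemma~\ref{lem:localDensitySeparator}, should then show that some middle fiber has size at most about half that distance. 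If a single fiber is too large, the fallback is to lengthen the blocks, adjusting the constants, so that the distance dominates.
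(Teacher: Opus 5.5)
Your architecture matches the paper's: the packing from Lemma~\ref{lem:treePackingPathCycleDensity}, cutting the cycle with $R$ the stray components, minimal $4d$-separators from Lemma~\ref{lem:localDensitySeparator} in windows, and a minimum-size packing fiber as $\zeta(i)$. However, the hitting-set accounting that produces $1566d^3$ and $792d^3$ does not work as written.

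A union of sets $S_Q$ over different index ranges does not hit a pathwidth-$\tau$ subtree that straddles the ranges. You noticed this for the small cycle case, but your $W$ repeats the mistake. Each $p_i$-region meets two blocks, and you use their $S_Q$'s separately, so a subtree of $G[g^{-1}(p_i)]$ crossing the block boundary survives. Taking the two blocks as one $Q$ instead gives about $44d$ indices, i.e.\ $1584d^3$. You also do not intersect each $S_Q$ with its own region, so several sets can land in $g^{-1}(p_i)$ and inflate the count in item~(\ref{itm:hitPi}). Your factor-$2$ fallback for small $t$ exceeds $1566d^3$, and it still misses trees that wind around the cycle.

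The missing observation is that $H$ minus an edge is a subpath $Q$ with $V(Q)=V(H)$. So $S_Q$ alone settles $t\le 40d+7$, since $36d^2(40d+7)\le 1566d^3$ for $d\ge 2$. For larger $t$, use one subpath per region. Lay out $J_0,P_0,\ldots,J_\ell,P_\ell$ with separator windows $|J_i|\in[4d+1,7d+1]$ and core ranges $|P_i|=6d+1$. Then $g^{-1}(p_i)\subseteq h^{-1}(J_i\cup P_i\cup J_{i+1})$ (indices cyclic; for a path drop $J_{\ell+1}$), a range of at most $20d+3\le 22d$ indices. Setting $W=\bigcup_i\bigl(S_{Q_i}\cap g^{-1}(p_i)\bigr)\cup\bigl(S_{Q_R}\cap R\bigr)$ with $Q_i$ that range gives $792d^3$. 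Moreover, $t\ge 40d+8$ already forces $\ell\ge 3$. With blocks of about $22d$ each carrying one separator, the range from one window to the next exceeds $22d$. Also, $t$ just above $43d$ yields at most two separators, so $\ell\ge 3$ fails.

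The other unfinished step is item~(\ref{itm:smallConnectedSeparator}). Comparing $z:=d_{G'}(g^{-1}(s_i),g^{-1}(s_{i+1}))$ with the number of fibers in between is the wrong scale, because fibers can be long paths, much larger than their number. Lengthening blocks, your fallback, would break the $792d^3$ bound. The counting must be done at scale $z$. Suppose every fiber $h^{-1}(u)$, $u\in P_i$, had more than $z/2$ vertices, and take $w$ on a shortest $g^{-1}(s_i)$--$g^{-1}(s_{i+1})$ path in $G'$. Each fiber separates, so it meets that path at some $v_u$ with $d_{G'}(w,v_u)\le z$. Being connected, it has more than $z/2$ vertices within distance $z/2$ of $v_u$. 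Hence the ball of radius $3z/2$ around $w$ contains more than $(6d+1)z/2$ vertices, forcing radial local density greater than $d$ (roughly $(6d+1)/6$). So $6d+1$ candidate fibers per inner region suffice, which is what makes the $22d$ budget feasible.
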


\begin{proof}
The algorithm first applies the algorithm of Lemma~\ref{lem:treePackingPathCycleDensity} on $(d, \tau, G)$.
If Lemma~\ref{lem:treePackingPathCycleDensity} reports that $G$ has no subtree of pathwidth at least $\tau$, the algorithm outputs $S := \emptyset$ and halts. Otherwise let ${\cal X} = \{X_1, \dots, X_r\}$, $H$, and $\{S_Q : Q \text{ a subpath of } H\}$ denote the output, and define $h : V(G) \rightarrow \{1, \ldots, r\}$ by $h(v) = i$ if $v \in X_i$. By Lemma~\ref{lem:treePackingPathCycleDensity} $H$ is a path or a cycle; without loss of generality the vertices of $H$ appear in the order $\{1, \dots, r\}$ on the path (cycle).

If $r \leq 40d+7$, let $Q^*$ be a subpath of $H$ with $V(Q^*) = V(H)$. The algorithm outputs $S := S_{Q^*}$ and halts; the size bound
\[
|S_{Q^*}| \;\leq\; 36 d^2 \cdot r \;\leq\; 36 d^2 \cdot (40d+7) \;=\; 1440 d^3 + 252 d^2 \;\leq\; 1566 \cdot d^3
\]
(using $252 d^2 \leq 126 d^3$ for $d \geq 2$) is the size promised in the lemma statement. We therefore assume below that $r > 40d+7$.

Let $\ell = \lfloor r/(10d+2) \rfloor - 1$ and  $\gamma$ be the remainder of $r$ when divided by $10d+2$.
Since  $r \geq 40d+8$, we have that $\ell \geq 3$.
Partition $V(H)$ into $J_0,P_0,J_1,P_1,\ldots,J_\ell,P_\ell$ such that each $H[P_i]$ is a path on $6d+1$ vertices, each $H[J_i]$ is a path on 
$4d+1  + \lfloor \gamma/(\ell+1)\rfloor$ or $4d+1  + \lceil \gamma/(\ell+1)\rceil$ vertices.
In particular the first endpoint of $H[J_0]$ is adjacent to the last endpoint of $H[P_\ell]$ if and only if $H$ is a cycle. 
Since $\ell \geq 3$ each path $H[J_i]$ is a path on at least $4d+1$ and at most $6.5d + 2 \leq 7d + 1$ vertices. 

%%%%%%%%%%%%%%%%%%%%%%%%%%%%%%%%
% THIS WAS MISSING, GOT ERASED IN EDITING. 
For every $J_i$, let $a_i$ and $b_i$ be the first and last vertex of $H[J_i]$.
Apply Lemma~\ref{lem:localDensitySeparator} to
the graph $G[h^{-1}(J_i)]$, path $H[J_i]$ and the restriction of $h$ to $J_i$
to obtain a vertex set $Z_i \subseteq h^{-1}(J_i \setminus \{a_i, b_i\})$ such that $|Z_i| \leq 4d$ and $Z_i$ separates 
$h^{-1}(a_i)$ from $h^{-1}(b_i)$ in  $G[h^{-1}(J_i)]$.
While there exists a vertex $z \in Z_i$ that
does not have a neighbor in the component of $G[h^{-1}(J_i)] - Z_i$ that contains $h^{-1}(a_i)$, or
does not have a neighbor in the component of $G[h^{-1}(J_i)] - Z_i$ that contains $h^{-1}(b_i)$,
we remove $z$ from $Z_i$ and observe that $Z_i$ still separates $h^{-1}(a_i)$ from $h^{-1}(b_i)$ in $G[h^{-1}(J_i)]$ and satisfies $Z_i \subseteq h^{-1}(J_i \setminus \{a_i, b_i\})$ and $|Z_i| \leq 4d$.
When this loop terminates every vertex $z \in Z_i$ has a neighbor both in the component of $G[h^{-1}(J_i)] - Z_i$ that contains $h^{-1}(a_i)$ and in the component that contains $h^{-1}(b_i)$.
%%%%%%%%%%%%%%%%%%%%%%%%%%%%%%%%

For every $J_i$ define $A_i$ to be the connected component of $G[h^{-1}(J_i)] - Z_i$ containing $h^{-1}(a_i)$, $B_i$ to be the connected component of $G[h^{-1}(J_i)] - Z_i$ containing $h^{-1}(b_i)$, and $R_i$ to be the union of all other connected components of $G[h^{-1}(J_i)] - Z_i$. Then $A_i, B_i, R_i, Z_i$ partition $h^{-1}(J_i)$, and since $\partial_G(h^{-1}(J_i)) \subseteq h^{-1}(a_i) \cup h^{-1}(b_i)$ we have $N(R_i) \subseteq Z_i$.

If $H$ is a path, set $S := \emptyset$ and $R := \emptyset$; if $H$ is a cycle, set $S := Z_0$ and $R := R_0$. In both cases $|S| \leq 4d$, verifying (\ref{itm:SBound}); and $N(R) \subseteq S$, verifying (\ref{itm:RNeigh}). 
%Item (\ref{itm:rBound}) will be verified in the construction of $W$ below.

The algorithm now defines the path $P = p_0 s_1 p_1 s_2 p_2 \ldots s_\ell p_\ell$ and the embedding $g$ of $G - (S \cup R)$ into $P$ as follows. For every $1 \leq i \leq \ell$ and $v \in V(G) - (S \cup R)$: if $v \in Z_i$ then $g(v) = s_i$; if $v \in h^{-1}(P_i) \cup B_i \cup R_i$ then $g(v) = p_i$; if $v \in A_i$ then $g(v) = p_{i-1}$. Furthermore, $g(v) = p_0$ for every $v \in h^{-1}(P_0)$. If $H$ is a path, $g(v) = p_0$ for every $v \in h^{-1}(J_0)$; if $H$ is a cycle, $g(v) = p_0$ for every $v \in B_0$ and $g(v) = p_\ell$ for every $v \in A_0$.

Thus $g$ is an embedding of $G - (S \cup R)$ into $P$ because $\{h^{-1}(P_i), A_i, B_i, Z_i, R_i : 0 \leq i \leq \ell\}$ partitions $V(G)$.
To verify (\ref{itm:stretchOne}), namely that the stretch of $g$ is at most $1$, it suffices to observe each of the following statements, each of which follows directly from the definition of the respective sets:
\begin{itemize}\setlength\itemsep{-4pt}
\item $N(R_0) \subseteq Z_0$,
\item $N(Z_0) \subseteq A_0 \cup B_0 \cup R_0$,
\item $N(B_0) \subseteq Z_0 \cup h^{-1}(P_0)$,
\item $N(A_0) \subseteq Z_0$ if $H$ is a path, and $N(A_0) \subseteq Z_0 \cup h^{-1}(P_\ell)$ if $H$ is a cycle;
\item for every $0 \leq i < \ell$, $N(h^{-1}(P_i)) \subseteq B_{i} \cup A_{i+1}$, and $N(h^{-1}(P_\ell)) \subseteq B_\ell$ if $H$ is a path, $N(h^{-1}(P_\ell)) \subseteq B_\ell \cup A_0$ if $H$ is a cycle;
\item for every $1 \leq i \leq \ell$, $N(R_i) \subseteq Z_i$, $N(Z_i) \subseteq A_i \cup B_i \cup R_i$, $N(B_i) \subseteq Z_i \cup h^{-1}(P_i)$, and $N(A_i) \subseteq Z_i \cup h^{-1}(P_{i-1})$.
\end{itemize}

For (\ref{itm:connectedEnds}): if $H$ is a path, then $g^{-1}(\{p_0,s_1\}) = h^{-1}(J_0 \cup P_0) \cup A_1 \cup Z_1$ and $g^{-1}(\{p_\ell,s_\ell\}) = h^{-1}(P_\ell) \cup B_\ell \cup R_\ell \cup Z_\ell$, each of which is connected; if $H$ is a cycle, then $g^{-1}(\{p_0,s_1\}) = B_0 \cup h^{-1}(P_0) \cup A_1 \cup Z_1$ and $g^{-1}(\{p_\ell,s_\ell\}) = A_0 \cup h^{-1}(P_\ell) \cup B_\ell \cup R_\ell \cup Z_\ell$, again each connected.

For (\ref{itm:connectedMid}): for every $1 \leq i < \ell$,
\[
g^{-1}(\{s_i,p_i,s_{i+1}\}) \;=\; Z_i \cup R_i \cup B_i \cup h^{-1}(P_i) \cup A_{i+1} \cup Z_{i+1},
\]
which is connected because $B_i \cup h^{-1}(P_i) \cup A_{i+1}$ is connected, every vertex of $Z_i$ has a neighbor in $B_i$, every component of $R_i$ has a neighbor in $Z_i$, and every vertex of $Z_{i+1}$ has a neighbor in $A_{i+1}$.
For (\ref{itm:siSize}), $g^{-1}(s_i) = Z_i$ and $|Z_i| \leq 4d$. 
%Item (\ref{itm:hitPi}) will be verified in the construction of $W$ below.

Finally, for every $1 \leq i < \ell$ set $\zeta(i) := h^{-1}(v)$, where $v$ is a vertex in $P_i$ minimizing $|h^{-1}(v)|$. We verify (\ref{itm:smallConnectedSeparator}). For every $u \in V(P_i)$, $h^{-1}(u) \subseteq g^{-1}(p_i)$, and $h^{-1}(u)$ separates $g^{-1}(s_i)$ from $g^{-1}(s_{i+1})$ in $G' = G - (S \cup R)$; the connectivity and separation requirements for $\zeta(i)$ follow. It remains to bound $|\zeta(i)|$.

Let $z := d_{G'}(g^{-1}(s_i), g^{-1}(s_{i+1}))$, and suppose for contradiction that $|h^{-1}(u)| > z/2$ for every $u \in V(P_i)$. Let $w$ be a vertex on a shortest path from $g^{-1}(s_i)$ to $g^{-1}(s_{i+1})$ in $G'$. For every $u \in V(P_i)$, this shortest path intersects $h^{-1}(u)$ at some vertex $v_u$ at distance at most $z$ from $w$; and since $h^{-1}(u)$ is a connected set in $G'$ on more than $z/2$ vertices, more than $z/2$ vertices of $h^{-1}(u)$ lie within distance $z/2$ of $v_u$, hence within distance $3z/2$ of $w$. Thus the ball of radius $3z/2$ around $w$ in $G'$ contains more than $(z/2)\cdot|V(P_i)|$ vertices, so the radial local density of $G$ is at least
$(z/2)|V(P_i)| / (2 \cdot 3z/2) = (6d+1)/6 > d$,
contradicting the hypothesis on the radial local density of $G$.

It remains to construct the hitting set $W$ and verify items (\ref{itm:rBound}) and (\ref{itm:hitPi}). For each $i \in \{0, 1, \ldots, \ell\}$ let $Q_i$ denote the subpath of $H$ with $V(Q_i) = J_i \cup P_i \cup J_{i+1}$ for $1 \leq i < \ell$; $V(Q_0) = J_0 \cup P_0$ if $H$ is a path and $V(Q_0) = J_0 \cup P_0 \cup J_1$ if $H$ is a cycle; and $V(Q_\ell) = P_\ell$ if $H$ is a path and $V(Q_\ell) = J_\ell \cup P_\ell \cup J_0$ if $H$ is a cycle. Let $Q_R$ denote the subpath of $H$ with $V(Q_R) = J_0$. In every case $|V(Q_i)|, |V(Q_R)| \leq 2(7d+1) + (6d+1) = 20d + 3 \leq 22d$ (using $d \geq 2$). Define
\[
W \;:=\; \bigcup_{i=0}^{\ell} \bigl(S_{Q_i} \cap g^{-1}(p_i)\bigr) \;\cup\; \bigl(S_{Q_R} \cap R\bigr).
\]

For (\ref{itm:rBound}): since $R$ is disjoint from $g^{-1}(p_i)$ for every $i$, $W \cap R = S_{Q_R} \cap R$, so $|W \cap R| \leq |S_{Q_R}| \leq 36 d^2 \cdot |V(Q_R)| \leq 36 d^2 \cdot 22 d = 792 \cdot d^3$; and any subtree of $G[R] - W$ of pathwidth at least $\tau$ would also be a subtree of $G[h^{-1}(V(Q_R))] - S_{Q_R}$, contradicting the property of $S_{Q_R}$ from Lemma~\ref{lem:treePackingPathCycleDensity}.

For (\ref{itm:hitPi}): since $g^{-1}(p_0), \ldots, g^{-1}(p_\ell)$ and $R$ are pairwise disjoint, $W \cap g^{-1}(p_i) = S_{Q_i} \cap g^{-1}(p_i)$, so $|W \cap g^{-1}(p_i)| \leq |S_{Q_i}| \leq 36 d^2 \cdot |V(Q_i)| \leq 792 \cdot d^3$; and any subtree of $G[g^{-1}(p_i)] - W$ of pathwidth at least $\tau$ would also be a subtree of $G[h^{-1}(V(Q_i))] - S_{Q_i}$, contradicting the property of $S_{Q_i}$.

The total running time is dominated by the call to Lemma~\ref{lem:treePackingPathCycleDensity}, which takes $2^{O(9^\tau)} \cdot n^{O(1)}$ time. The remaining steps, namely the (at most $\ell+1$) applications of Lemma~\ref{lem:localDensitySeparator} and explicit construction of the weak pre-layout all run in polynomial time.
\end{proof}

\section{Embedding Into a Subtree}\label{sec:intoSubtree}
%\todo[inline]{
%Likely *all* numbers in this section are bogus, and need to be re-worked carefully after fixing the numbers of Lemma~\ref{lem:lowStretchToEmbedding}. fix notation (possibly adjust definition of (weak) pre-layout as sequence of graphs $G_i$ as opposed to the function notation),
%ensure every term that is defined is used and actually needs to be defined. Check indices (off by one errors are probably abundant). Make figures for (weak) pre-layout, scaffolding graph, %witness tree, possibly showing what maps where.} 

{\bf Pre-layouts.} A {\em pre-layout} of $G$ with bandwidth $\beta$ is a six-tuple $(\ell, P, g, \zeta, W, \{\lambda_i ~:~ 0 \leq i \leq \ell\})$ such that $(\emptyset, \emptyset, \ell, P, g, \zeta, W)$ is a weak pre-layout of $G$ and for each $0 \leq i \leq \ell$, $\lambda_i$ is a bandwidth-$\beta$ layout of part $G_i$ of the weak pre-layout $(\emptyset, \emptyset, \ell, P, g, \zeta, W)$.
Given a graph $G$ and a weak pre-layout of $G$, we will obtain a low bandwidth layout for each part by a recursive invocation of the algorithm (see the proof of Lemma~\ref{lem:mainAlgorithmInduction}). Thus we will obtain a pre-layout of $G - (S \cup R)$ or a subtree witnessing that $G$ has large bandwidth. Next we show how to deal with a graph $G$ that has a pre-layout. 
The goal is to show that $G$ has a subtree $T$ such that $G$ embeds into $T$ with low stretch and congestion. This shows that if $T$ has low bandwidth, then so does $G$. 
A pre-layout is sufficient to define the subtree $T$ into which we will embed $G$. We now define this subtree and prove some structural properties of $T$ that will enable us to embed $G$ into $T$. 

\smallskip
\noindent
{\bf The Witness Tree.}
Given a graph $G$ and a pre-layout $(\ell, P, g, \zeta, W, \{\lambda_i ~:~ 0 \leq i \leq \ell\})$ of $G$ we define the {\em witness tree} as follows.
For each $1 \leq i < \ell$, fix an arbitrary spanning tree $T_\zeta^i$ of $G[\zeta(i)]$.
For each $1 \leq i < \ell-1$, let $\pi_i$ be a shortest path in $G$ between the sets $\zeta(i)$ and $\zeta(i+1)$ (so the endpoints of $\pi_i$ are the only vertices of $\pi_i$ in $\zeta(i) \cup \zeta(i+1)$).

\begin{lemma}\label{lem:witnessPathsContainmentDisjoint}
For every $1 \leq i < \ell-1$, $V(\pi_i) \subseteq g^{-1}(\{p_i, s_{i+1}, p_{i+1}\})$, and the paths $\pi_1, \ldots, \pi_{\ell-2}$ are pairwise internally vertex-disjoint.
\end{lemma}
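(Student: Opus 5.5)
We prove both claims using only the structure of the weak pre-layout $(\emptyset,\emptyset,\ell,P,g,\zeta,W)$. Since $S=R=\emptyset$, $g$ is an embedding of all of $G$ into $P$ with stretch at most $1$. Hence every edge of $G$ joins vertices whose images are equal or consecutive on $P$. Consequently, any walk in $G$ whose image leaves the set $\{p_i,s_{i+1},p_{i+1}\}$ must pass through a vertex mapped to $s_i$ or to $s_{i+2}$.

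\textbf{Containment.} Fix $1\le i<\ell-1$. By item~(\ref{itm:smallConnectedSeparator}), $\zeta(i)\subseteq g^{-1}(p_i)$ and $\zeta(i+1)\subseteq g^{-1}(p_{i+1})$. Suppose $\pi_i$ contains a vertex outside $g^{-1}(\{p_i,s_{i+1},p_{i+1}\})$, and assume by symmetry that its image lies on the side of $s_i$. By stretch $1$, $\pi_i$ then contains a vertex of $g^{-1}(s_i)$. The core $\zeta(i)$ separates $g^{-1}(s_i)$ from $g^{-1}(s_{i+1})$. Moreover, every path from $g^{-1}(s_i)$ to $\zeta(i+1)\subseteq g^{-1}(p_{i+1})$ passes through $g^{-1}(s_{i+1})$ by stretch $1$. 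It follows that after its last visit to $g^{-1}(s_i)$, the path $\pi_i$ must enter $\zeta(i)$ again before reaching $\zeta(i+1)$. This vertex of $\zeta(i)$ is not the starting endpoint of $\pi_i$, which contradicts the fact that the endpoints of $\pi_i$ are its only vertices in $\zeta(i)\cup\zeta(i+1)$. The case where the excursion goes toward $s_{i+2}$ is symmetric, using the separator $\zeta(i+1)$. In fact the argument gives more: the part of $\pi_i$ strictly between its endpoints avoids both $\zeta(i)$ and $\zeta(i+1)$, so all internal vertices of $\pi_i$ lie in the part of $g^{-1}(\{p_i,s_{i+1},p_{i+1}\})$ that $\zeta(i)$ and $\zeta(i+1)$ cut off between them.

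\textbf{Disjointness.} For $|i-j|\ge 2$ the sets $\{p_i,s_{i+1},p_{i+1}\}$ and $\{p_j,s_{j+1},p_{j+1}\}$ are disjoint, so $\pi_i$ and $\pi_j$ are vertex-disjoint. The main obstacle is the case $j=i+1$, where both paths meet $g^{-1}(p_{i+1})$. Here I use the refined form of containment. Internal vertices of $\pi_i$ that lie in $g^{-1}(p_{i+1})$ belong to the component of $G-\zeta(i+1)$ containing $g^{-1}(s_{i+1})$; they are on the $s_{i+1}$ side of $\zeta(i+1)$. Internal vertices of $\pi_{i+1}$ that lie in $g^{-1}(p_{i+1})$ are on the $s_{i+2}$ side of $\zeta(i+1)$, because they are reached from $\zeta(i+1)$ toward $\zeta(i+2)$ without re-entering $\zeta(i+1)$. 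Since $\zeta(i+1)$ separates $g^{-1}(s_{i+1})$ from $g^{-1}(s_{i+2})$, these two sides are disjoint. The only remaining possible shared vertices lie in $\zeta(i+1)$ itself, and those are endpoints of both paths. So $\pi_i$ and $\pi_{i+1}$ are internally vertex-disjoint.

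The delicate step is making the ``sides'' precise. An internal vertex of $\pi_i$ in $g^{-1}(p_{i+1})$ is connected to $g^{-1}(s_{i+1})$ avoiding $\zeta(i+1)$, via the subpath of $\pi_i$ back toward $\zeta(i)$. That subpath crosses $g^{-1}(s_{i+1})$ by stretch $1$, or else the vertex is already adjacent to that set. Similarly, an internal vertex of $\pi_{i+1}$ in $g^{-1}(p_{i+1})$ reaches $g^{-1}(s_{i+2})$ through $\pi_{i+1}$ while avoiding $\zeta(i+1)$. If some vertex lay on both paths, concatenating these two connections would give a $g^{-1}(s_{i+1})$–$g^{-1}(s_{i+2})$ path avoiding $\zeta(i+1)$, contradicting item~(\ref{itm:smallConnectedSeparator}).
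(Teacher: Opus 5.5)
Your proof is correct and takes essentially the same route as the paper. For containment, stretch~$1$ forces $\pi_i$ to visit $g^{-1}(s_i)$ or $g^{-1}(s_{i+2})$, which yields a subpath of internal vertices between consecutive $s$-levels and contradicts property~(\ref{itm:smallConnectedSeparator}). For disjointness, you concatenate the subpath of $\pi_i$ from a $g^{-1}(s_{i+1})$-vertex to the shared vertex with the subpath of $\pi_{i+1}$ on to a $g^{-1}(s_{i+2})$-vertex. Your intermediate phrase ``the component of $G-\zeta(i+1)$ containing $g^{-1}(s_{i+1})$'' is not well defined, since that set may meet several components, but it is also unnecessary: your final concatenation argument does all the work.
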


\begin{proof}
We first prove the containment claim. Since $g$ has stretch one, the walk $g(\pi_i)$ in $P$ from $p_i$ to $p_{i+1}$ must visit $s_{i+1}$, so $\pi_i$ has an internal vertex in $g^{-1}(s_{i+1})$. If $V(\pi_i) \not\subseteq g^{-1}(\{p_i, s_{i+1}, p_{i+1}\})$, then by stretch-$1$ the walk $g(\pi_i)$ visits $s_i$ or $s_{i+2}$, so $\pi_i$ contains a sub-path from a vertex in $g^{-1}(s_{i+1})$ to a vertex in $g^{-1}(s_i) \cup g^{-1}(s_{i+2})$ whose vertices are internal to $\pi_i$ and hence disjoint from $\zeta(i) \cup \zeta(i+1)$. This contradicts property~(\ref{itm:smallConnectedSeparator}) applied to $\zeta(i)$ (in the $s_i$ case) or to $\zeta(i+1)$ (in the $s_{i+2}$ case).

Next we argue that the paths $\pi_1, \ldots, \pi_{\ell-2}$ are pairwise internally vertex-disjoint. For every $i$, $j$ with  $|i-j| > 1$, the paths $\pi_i$ and $\pi_j$ have $V(\pi_i) \cap V(\pi_j) = \emptyset$ since $g^{-1}(\{p_i, s_{i+1}, p_{i+1}\}) \cap g^{-1}(\{p_j, s_{j+1}, p_{j+1}\}) = \emptyset$. If $\pi_i$ and $\pi_{i+1}$ share an internal vertex $v$, then $v \in g^{-1}(p_{i+1}) \setminus \zeta(i+1)$; concatenating the sub-path of $\pi_i$ from its $g^{-1}(s_{i+1})$-vertex $w_1$ to $v$ with the sub-path of $\pi_{i+1}$ from $v$ to its $g^{-1}(s_{i+2})$-vertex $w_2$ yields a walk from $w_1 \in g^{-1}(s_{i+1})$ to $w_2 \in g^{-1}(s_{i+2})$ all of whose vertices avoid $\zeta(i+1)$, contradicting property~(\ref{itm:smallConnectedSeparator}) applied to $\zeta(i+1)$.
\end{proof}

From Lemma~\ref{lem:witnessPathsContainmentDisjoint} it follows immediately that  $T' = \bigcup_{i=1}^{\ell-1} T_\zeta^i ~\cup~ \bigcup_{i=1}^{\ell-2} \pi_i$ is a tree. Let $T$ be an arbitrarily chosen spanning tree of $G$ that contains $T'$. We call $T$ the {\em witness tree}.
In the rest of this section, $T$ always refers to this witness tree.
We will show that the vertex set of every part $G_i$ is contained in the vertex set of a subtree $T_i$ of $T$, which itself is contained in the union of the vertex sets of the parts $G_{i-1}$, $G_i$ and $G_{i+1}$.

\begin{lemma}\label{lem:niceSubtrees}
For each $0 \leq i \leq \ell$ there exists a subtree $T_i$ of $T$ such that 
\begin{itemize}\setlength{\itemsep}{-.7pt}
    \item $g^{-1}(\{p_0, s_1\}) \subseteq V(T_0) \subseteq g^{-1}(\{p_0, s_1, p_1\})$,
    \item $g^{-1}(\{s_\ell, p_\ell\}) \subseteq V(T_\ell) \subseteq g^{-1}(\{p_{\ell-1}, s_\ell, p_\ell\})$, and 
    \item if $1 \leq i < \ell$ then $g^{-1}(\{s_i, p_i, s_{i+1}\}) \subseteq V(T_i) \subseteq g^{-1}(\{p_{i-1},s_i, p_i, s_{i+1}, p_{i+1}\})$.
\end{itemize}
Furthermore, given $T$ and the pre-layout $(\ell, P, g, \zeta, W, \{\lambda_i ~:~ 0 \leq i \leq \ell\})$ we can compute $\{ T_i ~:~ 0 \leq i \leq \ell \}$ in polynomial time.
\end{lemma}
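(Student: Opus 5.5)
The plan is to let $T_i$ be the minimal subtree of $T$ spanning the target set: for inner parts the union of the part $U_i = g^{-1}(\{s_i,p_i,s_{i+1}\})$, for the end parts $g^{-1}(\{p_0,s_1\})$ and $g^{-1}(\{s_\ell,p_\ell\})$. This subtree is the union of the $T$-paths between pairs of vertices of the set, and it is computable in polynomial time by repeatedly deleting leaves of $T$ outside the set. The lower containment is then automatic. So the whole task is the upper containment: every $T$-path between two vertices of $U_i$ stays inside $g^{-1}(\{p_{i-1},\dots,p_{i+1}\})$, with the obvious truncations for the end parts.

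The key tool will be that each core $\zeta(j)$ is a connected vertex set on which $T$ restricts to the spanning tree $T_\zeta^j$ (since $T \supseteq T'$). Consequently, for any two vertices $x,y\in\zeta(j)$ the unique $T$-path between them lies in $T_\zeta^j\subseteq g^{-1}(p_j)$. Now suppose the $T$-path $Q$ between $u,v\in U_i$ leaves $g^{-1}(\{p_{i-1},\dots,p_{i+1}\})$, say on the right. Then $Q$ enters $g^{-1}(s_{i+2})$ at some point. Because $g$ has stretch $1$, $Q$ must pass through $g^{-1}(s_{i+1})$ before doing so, and then through $g^{-1}(p_{i+1})$. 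Here I would argue via property~(\ref{itm:smallConnectedSeparator}) that $Q$ meets $\zeta(i+1)$ on the way out, since it travels from $g^{-1}(s_{i+1})$ to $g^{-1}(s_{i+2})$ inside $G$; recall that $S=R=\emptyset$ for a pre-layout, so $G'=G$. Symmetrically, $Q$ must come back to reach $v\in U_i$ and so meets $\zeta(i+1)$ again on the way back. Let $x$ and $y$ be the first and last vertices of $Q$ in $\zeta(i+1)$. The subpath of $Q$ from $x$ to $y$ is the $T$-path between two vertices of $\zeta(i+1)$, so it lies in $g^{-1}(p_{i+1})$. This contradicts $Q$ reaching $g^{-1}(s_{i+2})$ between $x$ and $y$.

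One subtlety is whether the excursion beyond $s_{i+2}$ happens between the first and last visits to $\zeta(i+1)$. I handle it by noting that any excursion into $g^{-1}(s_{i+2})$ is preceded by crossing $\zeta(i+1)$ from the $U_i$ side and followed by crossing it again, so it lies strictly between two $\zeta(i+1)$-vertices of $Q$. The left side uses $\zeta(i-1)$ symmetrically. For the end parts only one direction matters: part $0$ uses $\zeta(1)$ and part $\ell$ uses $\zeta(\ell-1)$, both of which exist since $\ell\ge 3$.

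The main obstacle I expect is the boundary cases. For $i=1$ the left side has no $\zeta(0)$, but there is nothing to the left of $p_0$, so no escape is possible. The same holds for $i=\ell-1$ on the right. I also need that the cores exist for indices $1,\dots,\ell-1$ only, which matches exactly the indices used above. Stretch $1$ is what guarantees the path cannot skip a level of $P$.
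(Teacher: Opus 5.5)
Your proof is correct and follows essentially the paper's argument: an escaping $T$-path must cross the core $\zeta(i\pm 1)$ both on the way out and on the way back, which is impossible because $T$ contains the spanning tree $T_\zeta^{i\pm 1}$ of that core. The only difference is in how $T_i$ is specified. The paper builds $T_i$ from $T$-paths into the hub $\zeta(i)$, and for the end parts it uses $\zeta(1)$ and $\zeta(\ell-1)$ as hubs. For inner parts this gives exactly your minimal subtree spanning $U_i$, since $\zeta(i)\subseteq U_i$.
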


\begin{proof}
We define $T_0$ as the subtree of $T$ induced by $\zeta(1)$, $g^{-1}(\{p_0, s_1\})$, and all vertices that lie on a path in $T$ between a vertex in $g^{-1}(\{p_0, s_1\})$ and $\zeta(1)$.
By definition $g^{-1}(\{p_0, s_1\}) \subseteq V(T_0)$. Since $T[\zeta(1)]$ is connected and every vertex in $T_0$ can reach $\zeta(1)$, $T_0$ is connected. It remains to show that
$V(T_0) \subseteq g^{-1}(\{p_0, s_1, p_1\})$. For this it is sufficient to show that every path $Q$ in $T$ from a vertex in $g^{-1}(\{p_0, s_1\})$ to a vertex in $\zeta(1)$ is fully contained in $g^{-1}(\{p_0, s_1, p_1\})$. This follows from the fact that $\zeta(1)$ separates  $g^{-1}(\{s_1\})$ from $g^{-1}(\{s_2\})$ (property~(\ref{itm:smallConnectedSeparator}) of weak pre-layouts).

We define $T_\ell$ as the subtree of $T$ induced by $\zeta(\ell-1)$, $g^{-1}(\{s_\ell, p_\ell\})$, and all vertices that lie on a path in $T$ between a vertex in $g^{-1}(\{s_\ell, p_\ell\})$ and $\zeta(\ell-1)$.
Since $T[\zeta(\ell-1)]$ is connected and every vertex in $T_\ell$ can reach $\zeta(\ell-1)$, $T_\ell$ is connected. The proof that $g^{-1}(\{s_\ell, p_\ell\}) \subseteq V(T_\ell) \subseteq g^{-1}(\{p_{\ell-1}, s_\ell, p_\ell\})$ is symmetric to the proof that $g^{-1}(\{p_0, s_1\}) \subseteq V(T_0) \subseteq g^{-1}(\{p_0, s_1, p_1\})$.

For every  $1 \leq i < \ell$ we define $T_i$ as the subtree of $T$ induced by $g^{-1}(\{s_i, p_i, s_{i+1}\})$, and all vertices that lie on a path in $T$ between a vertex in $g^{-1}(\{s_i, p_i, s_{i+1}\})$ and $\zeta(i)$.
By definition $g^{-1}(\{s_i, p_i, s_{i+1}\}) \subseteq V(T_i)$.
Since $T[\zeta(i)]$ is connected and every vertex in $T_i$ can reach $\zeta(i)$, $T_i$ is connected.
It remains to show that
$V(T_i) \subseteq g^{-1}(\{p_{i-1},s_i, p_i, s_{i+1}, p_{i+1}\})$.
Let $Q$ be a path in $T$ from a vertex in $g^{-1}(\{s_i, p_i, s_{i+1}\})$ to $\zeta(i)$.
It suffices to show that $V(Q)$ is disjoint from $g^{-1}(\{s_j\})$ for $j \in \{i-1,i+2\}$. Towards a contradiction, suppose not. If $V(Q) \cap g^{-1}(s_{i+2})$ is non-empty, let $q$ be a vertex in $V(Q) \cap g^{-1}(s_{i+2})$.
The first part of $Q$ up to $q$ contains a sub-path from $g^{-1}(s_{i+1})$ to $q$, and hence from $\zeta(i+1)$ to $q$. The second part of $Q$ (starting at $q$) contains a subpath from $q$ to $g^{-1}(s_{i+1})$, and hence from $q$ to $\zeta(i+1)$. Thus $Q$ contains a subpath that starts in  $\zeta(i+1)$, leaves  $\zeta(i+1)$ and comes back. But $T[\zeta(i+1)]$ is connected so $T[\zeta(i+1) \cup V(Q)]$ contains a cycle, a contradiction. The case that $V(Q) \cap g^{-1}(s_{i-1})$ is non-empty is symmetric.
The definitions of $T_i$ for every $i$ directly lead to polynomial time algorithms to compute them.
\end{proof}

The overall plan is to construct a tree $H$ (which is not necessarily a subtree of $G$), called the scaffolding graph, embed $G$ into $H$ and then embed $H$ back into $T$. 
When embedding $G$ into $H$ we will embed each part $G_i$ individually, just ensuring that the embeddings of $G_i$ and $G_{i+1}$ agree on the intersection of $V(G_i)$ and $V(G_{i+1})$, namely $g^{-1}(s_{i+1})$, while also controlling the congestion of the embedding by ensuring that no vertex of $H$ is used by more than a constant number of parts of $G$. 
When embedding $H$ into $T$ we will split $H$ according to some special vertices, and embed different pieces of $H$ individually. We will need to carefully control where the special vertices of $H$ will be mapped in the embedding of $H$ into $T$. The next lemma locates the vertices in $T$ to which we will map the special vertices of $H$. 
%\todo[inline]{Add statement that union of two consecutive $T_i$'s is connected?}     

\begin{lemma}\label{lem:selectAnchor}
There exists a polynomial time algorithm that takes as input a graph $G$,  
a pre-layout $(\ell, P, g, \zeta, W, \{\lambda_i ~:~ 0 \leq i \leq \ell\})$ of $G$,
and a witness tree $T$,
%runs in polynomial time, 
and outputs a set $\{a_i ~:~ 1 \leq i < \ell\}$ of vertices such that
$a_i \in \zeta(i)$ for every $1 \leq i < \ell$, and for every
$1 \leq i < \ell-1$ it holds that $d_T(a_i, a_{i+1}) \leq 3 \cdot d_G(a_i, a_{i+1})$.
\end{lemma}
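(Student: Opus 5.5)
The plan is to take each $a_i$ to be an endpoint of one of the witness paths $\pi_i$, and then bound $d_T(a_i,a_{i+1})$ by walking inside $T$ along $\pi_i$ and inside the spanning tree $T_\zeta^{i+1}$.

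\textbf{Construction.} Recall that $\pi_i$ is a shortest path in $G$ between $\zeta(i)$ and $\zeta(i+1)$. Let $x_i \in \zeta(i)$ be its endpoint in $\zeta(i)$ and $y_{i+1} \in \zeta(i+1)$ its endpoint in $\zeta(i+1)$. Set $a_i := x_i$ for $1 \leq i < \ell-1$, and let $a_{\ell-1}$ be any vertex of $\zeta(\ell-1)$ (for instance $y_{\ell-1}$). All of these are read off directly from $T'$, so the algorithm runs in polynomial time.

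\textbf{Distance in $T$.} By Lemma~\ref{lem:witnessPathsContainmentDisjoint}, $T' \subseteq T$ contains $\pi_i$ and $T_\zeta^{i+1}$. Walking $a_i = x_i \to y_{i+1}$ along $\pi_i$, and then $y_{i+1} \to a_{i+1}$ inside $T_\zeta^{i+1}$, gives
\[
d_T(a_i,a_{i+1}) \leq |E(\pi_i)| + |\zeta(i+1)| - 1 .
\]
Here $|E(\pi_i)| = d_G(\zeta(i),\zeta(i+1)) \leq d_G(a_i,a_{i+1})$. So it suffices to prove
\[
|\zeta(i+1)| \leq 2\, d_G(a_i,a_{i+1}).
\]

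\textbf{The main obstacle.} The only control on $|\zeta(i+1)|$ is property~(\ref{itm:smallConnectedSeparator}) with $S=R=\emptyset$: $|\zeta(i+1)| \leq d_G(g^{-1}(s_{i+1}), g^{-1}(s_{i+2}))/2$. Thus we must show that $a_i$ and $a_{i+1}$ are far apart whenever $g^{-1}(s_{i+1})$ and $g^{-1}(s_{i+2})$ are. I would attempt this by the triangle inequality, using the stretch-$1$ structure of $g$.
\begin{itemize}
\item Any path from $a_i \in g^{-1}(p_i)$ to $a_{i+1} \in g^{-1}(p_{i+1})$ passes through $g^{-1}(s_{i+1})$.
\item Any path from $a_{i+1}$ onward to $g^{-1}(s_{i+2})$ starts inside $\zeta(i+1)$. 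Since $\zeta(i+1)$ is connected, the distance from $g^{-1}(s_{i+1})$ to $g^{-1}(s_{i+2})$ is at most $d_G(g^{-1}(s_{i+1}), a_{i+1}) + |\zeta(i+1)| + d_G(\zeta(i+1), g^{-1}(s_{i+2}))$.
\end{itemize}
Combining these with $|\zeta(i+1)| \leq \tfrac12 d_G(g^{-1}(s_{i+1}), g^{-1}(s_{i+2}))$ should leave room to absorb one copy of $|\zeta(i+1)|$.

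\textbf{Likely failure and fallback.} This is the step I expect to fail as stated, because $a_{i+1}$ may be close to $g^{-1}(s_{i+1})$ while $g^{-1}(s_{i+2})$ is far away. If so, I would instead choose $a_{i+1}$ among the two candidates $y_{i+1}$ and $x_{i+1}$ (or a vertex of $\zeta(i+1)$ nearest $g^{-1}(s_{i+1})$). The choice would be made greedily from left to right, so that whichever of $d_G(a_i,a_{i+1})$ and $d_G(a_{i+1},a_{i+2})$ is forced to be short corresponds to a walk in $T$ that stays inside a single $\pi$. The slack between the factor $3$ in the statement and the $1/2$ in property~(\ref{itm:smallConnectedSeparator}) is what this case analysis has to exploit.
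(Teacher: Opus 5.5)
Your primary choice $a_i := x_i$ fails, as you suspected, and the inequality you reduce to, $|\zeta(i+1)| \le 2\,d_G(a_i,a_{i+1})$, is false in general. The tree $T_\zeta^{i+1}$ is an arbitrary spanning tree of $G[\zeta(i+1)]$, so the following configuration is allowed:
\begin{enumerate}
\item $G[\zeta(i+1)]$ is a cycle on $m$ vertices in which $y_{i+1}$ and $x_{i+1}$ are adjacent;
\item $T_\zeta^{i+1}$ is the spanning path obtained by deleting the edge $y_{i+1}x_{i+1}$;
\item $\pi_i$ has length $2$, and $\pi_{i+1}$ has length about $2m$, which keeps property~(\ref{itm:smallConnectedSeparator}) true for $\zeta(i+1)$.
\end{enumerate}
Then $d_G(a_i,a_{i+1}) \le 3$ while $d_T(a_i,a_{i+1}) = m+1$. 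For the same reason, $a_{\ell-1}$ cannot be an arbitrary vertex of $\zeta(\ell-1)$; take $y_{\ell-1}$.

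Your fallback is only a sketch, and it lacks the inequality that drives the lemma. The size of $\zeta(i)$ must be charged to \emph{both} witness paths adjacent to $\zeta(i)$, not to the single distance $d_G(a_i,a_{i+1})$. Let $\hat{Q}_i$ be the path in $T_\zeta^{i}$ from $y_i$ to $x_i$. The path $\pi_{i-1}\hat{Q}_i\pi_i$ meets $g^{-1}(s_i)$ inside $\pi_{i-1}$ and $g^{-1}(s_{i+1})$ inside $\pi_i$. Property~(\ref{itm:smallConnectedSeparator}) therefore gives
\[
|E(\pi_{i-1})| + |E(\hat{Q}_i)| + |E(\pi_i)| \;\geq\; d_G(g^{-1}(s_i), g^{-1}(s_{i+1})) \;\geq\; 2|\zeta(i)| \;>\; 2|E(\hat{Q}_i)|,
\]
that is, $|E(\hat{Q}_i)| < |E(\pi_{i-1})| + |E(\pi_i)|$.

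The paper uses this inequality to place $a_i$ at a balanced interior point of $\hat{Q}_i$: within $|E(\pi_{i-1})|$ steps of $y_i$ and within $|E(\pi_i)|$ steps of $x_i$, with $a_1 = x_1$ and $a_{\ell-1} = y_{\ell-1}$. The $T$-path from $a_i$ to $a_{i+1}$ then consists of at most $|E(\pi_i)|$ edges of $\hat{Q}_i$, the path $\pi_i$, and at most $|E(\pi_i)|$ edges of $\hat{Q}_{i+1}$. Hence
\[
d_T(a_i,a_{i+1}) \le 3|E(\pi_i)| = 3\,d_G(\zeta(i),\zeta(i+1)) \le 3\,d_G(a_i,a_{i+1}),
\]
and only the trivial lower bound on $d_G(a_i,a_{i+1})$ is needed.

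Your endpoint idea can be rescued by a local rule, with no greedy left-to-right dependence: set $a_i := y_i$ if $|E(\pi_{i-1})| \le |E(\pi_i)|$, and $a_i := x_i$ otherwise. It is more delicate, though. When $\pi_i$ is at least as long as $\pi_{i-1}$ and longer than $\pi_{i+1}$, the pair $(a_i,a_{i+1})$ pays for all of $\hat{Q}_i$ and all of $\hat{Q}_{i+1}$. With only $d_G(a_i,a_{i+1}) \ge |E(\pi_i)|$, the inequality above then yields a factor $5$, not $3$. Reaching $3$ this way needs the sharper estimate $d_G(y_i,a_{i+1}) \ge d_G(g^{-1}(s_i),g^{-1}(s_{i+1})) - |E(\pi_{i-1})| + 2$, and its mirror image for $x_{i+1}$. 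Your proposal does not contain these.
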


 \begin{proof}
For each $2 \leq i \leq \ell-1$ let $Q_i$ be the path in $T$ with one endpoint in $\zeta(i-1)$ and the other in $\zeta(i)$.
From the construction of $T$, $Q_i$ is a shortest path from $\zeta(i-1)$ to $\zeta(i)$ in $G$.
We set $a_1$ to be the endpoint of $Q_2$ in $\zeta(1)$ and $a_{\ell-1}$ to be the endpoint of $Q_{\ell-1}$ in $\zeta(\ell-1)$.
For every $1 < i < \ell-1$ we define $\hat{Q}_i$ to be the path in $T[\zeta(i)]$ that connects the endpoint of $Q_{i}$ in $\zeta(i)$ with the endpoint of $Q_{i+1}$ in $\zeta(i)$.

Then $Q_{i}\hat{Q}_iQ_{i+1}$ is a path from $\zeta(i-1)$ to $\zeta(i+1)$. In particular it intersects both $g^{-1}(s_i)$ and $g^{-1}(s_{i+1})$. Therefore we have that 
$$|E(Q_{i})| + |E(\hat{Q}_i)| + |E(Q_{i+1})| = |E(Q_{i}\hat{Q}_iQ_{i+1})| \geq d_G(g^{-1}(s_i), g^{-1}(s_{i+1})) \geq 2|\zeta(i)| > 2|E(\hat{Q}_i)|\mbox{.}$$
Here the transition $d_G(g^{-1}(s_i), g^{-1}(s_{i+1})) \geq 2|\zeta(i)|$ follows from property (\ref{itm:smallConnectedSeparator}) of weak pre-layouts. Re-arranging terms we get $|E(\hat{Q}_i)| < |E(Q_{i})| + |E(Q_{i+1})|$.

We select $a_i$ to be an arbitrary vertex of $\hat{Q}_i$ such that $a_i$ is at most $|E(Q_{i})|$ steps on $\hat{Q}_i$ away from the endpoint of $\hat{Q}_i$ in $Q_{i}$, and at most $|E(Q_{i+1})|$ steps on $\hat{Q}_i$ away from the endpoint of $\hat{Q}_i$ in $Q_{i+1}$.
It follows that 
$$d_T(a_1, a_2) \leq 2|E(Q_2)| = 2d_G(\zeta(1), \zeta(2)) \leq 2d_G(a_1, a_2)\mbox{,}$$
$$d_T(a_{\ell-2}, a_{\ell-1}) \leq 2|E(Q_{\ell-1})| = 2d_G(\zeta(\ell-2), \zeta(\ell-1))
\le 2d_G(a_{\ell-2}, a_{\ell-1})
\mbox{,}$$
and for every $2 \leq i \leq \ell-3$ we have 
$$d_T(a_{i}, a_{i+1}) \leq 3|E(Q_{i+1})| \leq 3d_G(\zeta(i), \zeta(i+1)) \leq 3d_G(a_i, a_{i+1})\mbox{.}$$
\end{proof}

\smallskip
{\bf Anchor Vertices.} We will call the vertices $\{a_i ~:~ 1 \leq i < \ell\}$ given by Lemma~\ref{lem:selectAnchor} the {\em anchor vertices} of $G$. In the rest of this section, $\{a_1, \ldots, a_{\ell-1}\}$ always refers to these anchor vertices. Next we show that forcing a shortest path in $G$ from any vertex $u \in g^{-1}(s_i)$ to $u' \in g^{-1}(s_{i+1})$ to go via $a_{i}$ does not increase the length of the path too much.

\begin{lemma}\label{lem:goThroughAnchor}
For every $1 \leq i < \ell$, $u \in g^{-1}(s_i)$, and $u' \in g^{-1}(s_{i+1})$, it holds that
$d_G(u, a_i) + d_G(a_i, u') \leq 2 \cdot d_G(u, u')$.
\end{lemma}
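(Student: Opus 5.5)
Fix $1 \leq i < \ell$, $u \in g^{-1}(s_i)$ and $u' \in g^{-1}(s_{i+1})$, and let $\pi$ be a shortest $u$--$u'$ path in $G$. The plan is to show that $\pi$ meets the core $\zeta(i)$. Then $a_i$ is within $|\zeta(i)|$ of $\pi$, measured along the connected set $\zeta(i)$, and property~(\ref{itm:smallConnectedSeparator}) makes $|\zeta(i)|$ at most half of any such distance.

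First, $\pi$ must meet $\zeta(i)$. We are working with a pre-layout, so $S=R=\emptyset$ and $G' = G$. By property~(\ref{itm:smallConnectedSeparator}), $\zeta(i)$ separates $g^{-1}(s_i)$ from $g^{-1}(s_{i+1})$ in $G$. Since $\zeta(i) \subseteq g^{-1}(p_i)$, it is disjoint from both endpoint sets. Hence $\pi$ contains some vertex $w \in \zeta(i)$.

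Second, $G[\zeta(i)]$ is connected and contains both $w$ and $a_i$ (Lemma~\ref{lem:selectAnchor} gives $a_i \in \zeta(i)$). So $d_G(w,a_i) \leq |\zeta(i)| - 1 < |\zeta(i)|$. The triangle inequality then gives
$$d_G(u,a_i) + d_G(a_i,u') \leq d_G(u,w) + d_G(w,u') + 2 d_G(w,a_i) = d_G(u,u') + 2 d_G(w,a_i),$$
where the equality holds because $w$ lies on a shortest $u$--$u'$ path.

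Third, bound $2 d_G(w,a_i)$. Property~(\ref{itm:smallConnectedSeparator}) gives $|\zeta(i)| \leq d_G(g^{-1}(s_i), g^{-1}(s_{i+1}))/2 \leq d_G(u,u')/2$. Therefore $2 d_G(w,a_i) < 2|\zeta(i)| \leq d_G(u,u')$, which yields the claimed bound $d_G(u,a_i) + d_G(a_i,u') \leq 2 d_G(u,u')$.

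The only point needing care is that property~(\ref{itm:smallConnectedSeparator}) is stated in $G' = G - (S\cup R)$. Here $S = R = \emptyset$ because the weak pre-layout underlying a pre-layout is $(\emptyset,\emptyset,\ell,P,g,\zeta,W)$, so all distances and separations may be taken in $G$ itself. I do not expect any real obstacle beyond checking this.
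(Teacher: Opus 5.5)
Your proof is correct and matches the paper's argument: a shortest $u$--$u'$ path must hit the separator $\zeta(i)$ at some vertex, connectivity of $\zeta(i)$ puts $a_i$ within distance less than $|\zeta(i)|$ of that vertex, and property~(\ref{itm:smallConnectedSeparator}) gives $2|\zeta(i)| \leq d_G(u,u')$. You also say explicitly that $S = R = \emptyset$ for a pre-layout, so $G' = G$; the paper uses this without stating it.
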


\begin{proof}
%\todo[inline]{E: should be $\ge 2|\zeta(i)|$, right? DL: yes}
Since $u \in g^{-1}(s_i)$ and $u' \in g^{-1}(s_{i+1})$ we have that
$d_G(u, u') \geq d_G(g^{-1}(s_i), g^{-1}(s_{i+1})) \geq 2|\zeta(i)|$
, where the last transition follows from property (\ref{itm:smallConnectedSeparator}) of weak pre-layouts.
Let $Q$ be a shortest path from $u$ to $u'$ in $G$. Since $\zeta(i)$ separates $g^{-1}(s_i)$ from $g^{-1}(s_{i+1})$ it follows that $Q$ contains a vertex $q$ in $\zeta(i)$. We have that $d_G(q, a_i) < |\zeta(i)|$.
The triangle inequality yields that
$$d_G(u, a_i) + d_G(a_i, u') \leq d_G(u, q) + d_G(q, a_i) + d_G(a_i, q) + d_G(q, u') < d_G(u, u') + 2|\zeta(i)| \leq 2 \cdot d_G(u, u')\mbox{.}$$
This concludes the proof.
\end{proof}

We are almost ready to embed $G$ into a tree $H$, and then show that $H$ is embeddable into the witness tree $T$. All that remains is a simple upper bound stating that vertices in $G$ that are mapped close together on $P$ are not too far from each other in $G$.

\begin{lemma}\label{lem:reallySillyDistanceBound}
For every pair $u,v$ in $g^{-1}(s_1) \cup \{a_1\}$, it holds that $d_G(u, v) \leq |g^{-1}(\{s_1,p_1,s_2\})| - 1$.
For every pair $u,v$ in $g^{-1}(s_\ell) \cup \{a_{\ell-1}\}$, it holds that $d_G(u, v) \leq |g^{-1}(\{s_{\ell-1}, p_{\ell-1}, s_\ell\})| - 1$.
For every $2 \leq i \leq \ell-1$ and every pair $u,v$ in $g^{-1}(s_i) \cup \{a_{i-1}, a_i\}$, it holds that  $d_G(u, v) \leq |g^{-1}(\{s_{i-1}, p_{i-1}, s_{i}, p_i, s_{i+1}\})| - 1$. 
\end{lemma}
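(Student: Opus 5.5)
The plan is to use one observation repeatedly: if $u$ and $v$ both lie in a set $U$ with $G[U]$ connected, then $d_G(u,v) \leq d_{G[U]}(u,v) \leq |U|-1$, since a shortest path inside $G[U]$ visits each vertex at most once. In each of the three cases I will exhibit such a connected set, check that it contains all the listed vertices, and check that it has the stated size.

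For the first claim, take $U = g^{-1}(\{s_1,p_1,s_2\})$. By property~(\ref{itm:connectedMid}) of weak pre-layouts, $G[U]$ is connected; here $S = R = \emptyset$, since we work with a pre-layout. Clearly $g^{-1}(s_1) \subseteq U$. By Lemma~\ref{lem:selectAnchor}, $a_1 \in \zeta(1)$, and by property~(\ref{itm:smallConnectedSeparator}), $\zeta(1) \subseteq g^{-1}(p_1) \subseteq U$. So every pair $u,v$ in $g^{-1}(s_1) \cup \{a_1\}$ satisfies $d_G(u,v) \leq |U|-1$.

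The second claim is symmetric. Take $U = g^{-1}(\{s_{\ell-1},p_{\ell-1},s_\ell\})$, which is part $\ell-1$ and is connected by property~(\ref{itm:connectedMid}) with $i = \ell-1$ (valid since $\ell \geq 3$). It contains $g^{-1}(s_\ell)$, and it contains $a_{\ell-1} \in \zeta(\ell-1) \subseteq g^{-1}(p_{\ell-1})$.

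For the third claim, fix $2 \leq i \leq \ell-1$ and take $U = g^{-1}(\{s_{i-1},p_{i-1},s_i,p_i,s_{i+1}\})$. This is the union of parts $i-1$ and $i$. Each part is connected by property~(\ref{itm:connectedMid}), and both indices $i-1$ and $i$ lie in $[1,\ell-1]$. The two parts share $g^{-1}(s_i)$, which is non-empty: otherwise part $i-1$, being connected and mapped with stretch $1$, could not contain both $g^{-1}(s_{i-1})$ and $g^{-1}(p_{i-1})$, and non-emptiness also follows from the construction in Lemma~\ref{lem:weakPreLayoutDensity}. Hence $G[U]$ is connected. Moreover, $U$ contains $g^{-1}(s_i)$, it contains $a_{i-1} \in \zeta(i-1) \subseteq g^{-1}(p_{i-1})$, and it contains $a_i \in \zeta(i) \subseteq g^{-1}(p_i)$. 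Applying the observation gives the bound.

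The only point needing care is the connectivity of the union in the third case, namely that $g^{-1}(s_i)$ is non-empty. If the definition of weak pre-layouts does not explicitly guarantee this, I would derive it from the separation property~(\ref{itm:smallConnectedSeparator}): it asserts that $\zeta(i)$ separates $g^{-1}(s_i)$ from $g^{-1}(s_{i+1})$ at a finite distance bounding $|\zeta(i)| \geq 1$, and this forces both sets to be non-empty.
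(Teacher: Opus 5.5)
Your proof is correct and is essentially the paper's own argument. Each of the three sets is a union of one or two parts, and each part is connected by property~(\ref{itm:connectedMid}). Each set contains the relevant $g^{-1}(s_i)$ and anchors because $a_j\in\zeta(j)\subseteq g^{-1}(p_j)$, so any two of its vertices are at distance at most its size minus one. The paper simply asserts that the union of parts $i-1$ and $i$ is connected, so your attention to $g^{-1}(s_i)\neq\emptyset$ is extra care.

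Of your justifications for that non-emptiness, only the appeal to the construction in Lemma~\ref{lem:weakPreLayoutDensity} is sound as written: there $g^{-1}(s_i)=Z_i$ is a nonempty separator. The stretch-$1$ argument on part $i-1$ fails because $s_{i-1}p_{i-1}$ is an edge of $P$. Property~(\ref{itm:smallConnectedSeparator}) holds vacuously, with infinite distance, if $g^{-1}(s_i)=\emptyset$. A clean general reason is that $G$ is connected (it has the spanning witness tree $T$), so by stretch $1$ the path $\pi_{i-1}$ of Lemma~\ref{lem:witnessPathsContainmentDisjoint} must meet $g^{-1}(s_i)$.
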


\begin{proof}
By property~(\ref{itm:connectedMid}) of weak pre-layouts, 
we have that $g^{-1}(\{s_1,p_1,s_2\})$ is a connected set that contains $g^{-1}(s_1) \cup \{a_1\}$,
$g^{-1}(\{s_{\ell-1}, p_{\ell-1}, s_\ell\})$ is a connected set that contains $g^{-1}(s_\ell) \cup \{a_{\ell-1}\}$,
and $g^{-1}(\{s_{i-1}, p_{i-1}, s_{i}, p_i, s_{i+1}\})$ is a connected set that contains $g^{-1}(s_i) \cup \{a_{i-1}, a_i\}$. This concludes the proof. 
\end{proof}

\smallskip
{\bf The Scaffolding Graph.} Given a graph $G$, a pre-layout $(\ell, P, g, \zeta, W, \{\lambda_i ~:~ 0 \leq i \leq \ell\})$ of $G$ with separator size $s$ and bandwidth $\beta$, and the anchor vertices  $\{a_i ~:~ 1 \leq i < \ell\}$ of $G$ we define the {\em scaffolding graph} $H$ of $G$ as follows.
Make a path $R_1$ on $12(s+1)|g^{-1}(\{p_0,s_1,p_1,s_2\})|$ vertices,
and a path $R_\ell$ on $12(s+1)|g^{-1}(\{s_{\ell-1},p_{\ell-1},s_\ell,p_\ell\})|$ vertices. For every $2 \leq i \leq \ell-1$ make a path $R_i$ on $12(s+2)|g^{-1}(\{s_{i-1},p_{i-1},s_i,p_i,s_{i+1}\})|$ vertices. 
%\todo[inline]{i think here we use that $\ell$ is bounded from below, check by how much. $3$ is enough.}

Using Theorem~\ref{thm:lineEmbedding} we obtain a function 
$r_1 : g^{-1}(s_1) \cup \{a_1\} \rightarrow \mathbb{N}$ such that for every $u, v \in g^{-1}(s_1) \cup \{a_1\}$ it holds that $d_G(u,v) \leq |r_1(u) - r_1(v)| \leq 12(s+1)d_G(u,v)$.
Similarly, obtain a function $r_\ell : g^{-1}(s_\ell) \cup \{a_{\ell-1}\} \rightarrow \mathbb{N}$ such that for every $u, v \in g^{-1}(s_\ell) \cup \{a_{\ell-1}\}$ it holds that $d_G(u,v) \leq |r_\ell(u) - r_\ell(v)| \leq 12(s+1)d_G(u,v)$.
For every $2 \leq i \leq \ell-1$ obtain a function $r_i : g^{-1}(s_i) \cup \{a_{i-1}, a_i\} \rightarrow \mathbb{N}$ such that for every $u, v \in g^{-1}(s_i) \cup \{a_{i-1}, a_i\}$ it holds that $d_G(u,v) \leq |r_i(u) - r_i(v)| \leq 12(s+2)d_G(u,v)$.
In the rest of this section, $r_1, \ldots, r_\ell$ always refer to these functions.

Let $i$ be an integer between $1$ and $\ell$, $u$ be a vertex with minimum $r_i(u)$ and $v$ be a vertex with maximum $r_i(v)$. By Lemma~\ref{lem:reallySillyDistanceBound} we have that $r_i(v) - r_i(u) \leq |E(R_i)|$. Thus we may treat each $r_i$ as a function with range $V(R_i)$ instead of $\mathbb{N}$, where for every $u$, $v$ in the domain of $r_i$, $|r_i(u) - r_i(v)|$ is replaced by $d_{R_i}(r_i(u), r_i(v))$. We obtain the graph $H$ by taking the disjoint union of all the paths $R_i$ for $1 \leq i \leq \ell$, as well as $\ell-1$ additional vertices $\hat{a}_1, \ldots, \hat{a}_{\ell-1}$ and adding for every $1 \leq i \leq \ell-1$ the edges $\hat{a}_i r_i(a_i)$ and $\hat{a}_i r_{i+1}(a_i)$.
It is easy to see that $H$ is indeed a tree. Indeed, $H - \{\hat{a}_1, \ldots, \hat{a}_{\ell-1}\}$ is just the disjoint union of the paths $R_1, \ldots, R_\ell$ and each $\hat{a}_i$ has precisely one neighbor in $R_i$ and one in $R_{i+1}$. 
Let $a$ be one of the two endpoints of $R_1$, and $b$ be one of the two endpoints of $R_\ell$. Let $B$ be the unique path in $H$ connecting $a$ and $b$. Then $B$ contains $\{\hat{a}_1, \ldots, \hat{a}_{\ell-1}\}$ and their neighbors, and in particular all vertices of $H$ of degree at least $3$. Hence $H$ is a caterpillar with backbone $B$.
In the rest of this section, $H$, $R_1, \ldots, R_\ell$, and $\hat{a}_1, \ldots, \hat{a}_{\ell-1}$ always refer to these objects.

\begin{figure}[ht]
    \centering
    \includegraphics[width=0.9\textwidth]{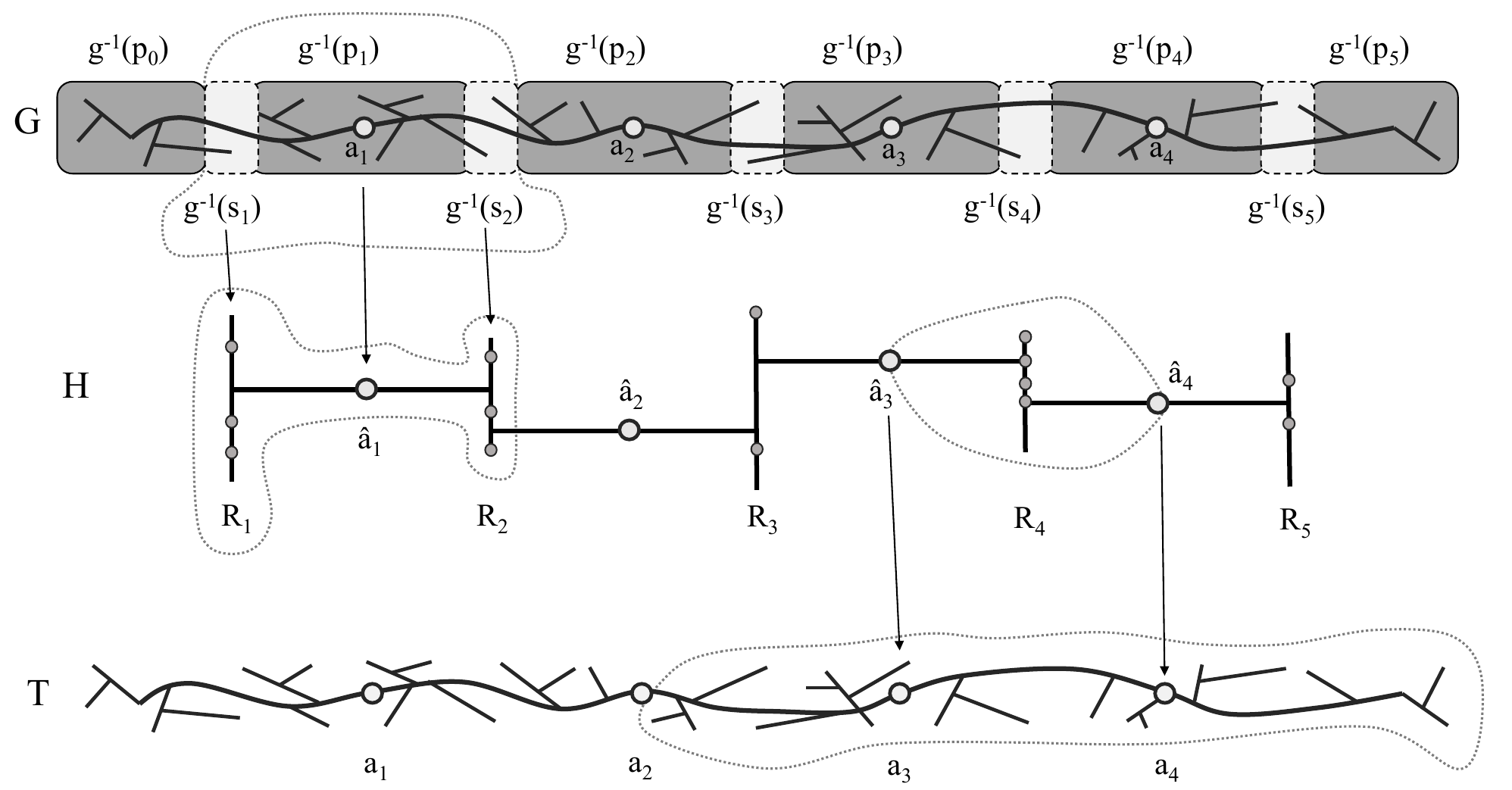}
    \caption{\em Top row: a pre-layout of a graph $G$ with $\ell=5$, anchor vertices, and witness tree $T$. Middle row: The scaffolding graph $H$. Bottom row: the witness tree $T$. The arrows from the first to second row illustrate the anchor embedding and the embedding from $G$ to $H$ constructed in Lemma~\ref{lem:embedGtoH}. The arrows from the second to third row illustrate the embedding constructed in Lemma~\ref{lem:embedHtoT}.}
    \label{fig:scaffold}
\end{figure}

\smallskip
{\bf The Anchor Embedding.} We define the {\em anchor embedding} 
$$r : g^{-1}(\{s_1, \ldots, s_\ell\}) \cup \{a_1, \ldots, a_{\ell-1}\} \rightarrow V(H)$$ 
as follows.
For every vertex $v \in g^{-1}(\{s_i\})$ for $1 \leq i \leq \ell$ we define $r(v) = r_i(v)$. For every $1 \leq i \leq \ell-1$ we set $r(a_i) = \hat{a}_i$.
In the rest of this section, $r$ always refers to this anchor embedding.
Before showing that $G$ has a low stretch and congestion embedding into $H$, we show that for every part of the pre-layout, $r$ does not stretch too much the separator and anchor vertices of the part. In the rest of the section $s$ denotes the separator size of the weak pre-layout underlying the considered pre-layout.

%\todo[inline]{s is the separator size of the pre layout, add inside lemma or note that this is how we will use it. s is a bad name (reused)}

%\todo[inline]{technically the $r_i$'s are not defined inside the scope of the lemma}

\begin{lemma}\label{lem:rHasLowStretch}
For every $1 \leq i < \ell$ and every pair $u$, $v$ of vertices in $g^{-1}(\{s_i, s_{i+1}\}) \cup \{a_i\}$ it holds that 
$d_H(r(u), r(v)) \leq 25(s+2)d_G(u, v).$
\end{lemma}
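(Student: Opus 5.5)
The plan is a case analysis on where $u$ and $v$ lie among $g^{-1}(s_i)$, $g^{-1}(s_{i+1})$ and $\{a_i\}$. The structure of $H$ makes every distance computable along the route $R_i \to \hat{a}_i \to R_{i+1}$. In each case we bound $d_H(r(u),r(v))$ by the guarantees of the functions $r_i$ from Theorem~\ref{thm:lineEmbedding}, namely $d_{R_i}(r_i(x),r_i(y)) \leq 12(s+2)d_G(x,y)$. The small extra constant comes from passing through $\hat{a}_i$.

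\textbf{Case 1: both endpoints in $g^{-1}(s_i)$, or both in $g^{-1}(s_{i+1})$.} Then $r(u),r(v)$ lie on the same path $R_j$, with $j\in\{i,i+1\}$. Hence
$$d_H(r(u),r(v)) \leq d_{R_j}(r_j(u),r_j(v)) \leq 12(s+2)\,d_G(u,v).$$

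\textbf{Case 2: $u\in g^{-1}(s_i)$ and $v=a_i$.} We have $r(a_i)=\hat{a}_i$, which is adjacent to $r_i(a_i)$. Hence
$$d_H(r(u),\hat{a}_i) \leq 1 + 12(s+2)\,d_G(u,a_i).$$
Since $u\neq a_i$ (note $a_i\in\zeta(i)\subseteq g^{-1}(p_i)$ is disjoint from $g^{-1}(s_i)$), we have $d_G(u,a_i)\geq 1$. So the bound is at most $13(s+2)d_G(u,a_i)$. The symmetric case $u\in g^{-1}(s_{i+1})$ uses the neighbor $r_{i+1}(a_i)$ of $\hat{a}_i$ and $r_{i+1}$ in the same way. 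The case $u=v=a_i$ is trivial.

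\textbf{Case 3: $u\in g^{-1}(s_i)$ and $v\in g^{-1}(s_{i+1})$.} This is the main case. Every path between $R_i$ and $R_{i+1}$ in $H$ passes through $\hat{a}_i$, so
$$d_H(r(u),r(v)) \leq 2 + 12(s+2)\bigl(d_G(u,a_i)+d_G(a_i,v)\bigr).$$
Lemma~\ref{lem:goThroughAnchor} bounds the sum in parentheses by $2d_G(u,v)$, giving $24(s+2)d_G(u,v)+2$. We absorb the additive $2$ using $d_G(u,v)\geq 1$, which holds because the sets $g^{-1}(s_i)$ and $g^{-1}(s_{i+1})$ are disjoint. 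This yields the bound $25(s+2)d_G(u,v)$.

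\textbf{Domain bookkeeping.} One point needs checking before the cases apply: that $r_i$ is actually defined on $a_i$ and $r_{i+1}$ on $a_i$, with the correct constants. By construction, $r_i$ is defined on $g^{-1}(s_i)\cup\{a_{i-1},a_i\}$ for $2\le i\le\ell-1$, on $g^{-1}(s_1)\cup\{a_1\}$ for $i=1$, and on $g^{-1}(s_\ell)\cup\{a_{\ell-1}\}$ for $i=\ell$. In particular $r_{i+1}$ contains $a_i$ in its domain. The end paths carry the constant $12(s+1)\leq 12(s+2)$, so the same bound holds for them.

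We should also confirm that treating each $r_j$ as a map into $R_j$ preserves distances. This follows from the remark after Lemma~\ref{lem:reallySillyDistanceBound}: the range of $r_j$ fits in $R_j$, so $d_{R_j}(r_j(x),r_j(y))=|r_j(x)-r_j(y)|$.

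The only nontrivial step is Case 3, and it is handled directly by Lemma~\ref{lem:goThroughAnchor}. The remaining work is keeping the additive constants under the factor $25(s+2)$.
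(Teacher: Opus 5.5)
Your proof is correct and follows the paper's argument essentially verbatim: the same three-case split on where $u,v$ lie, with the cross case routed through $\hat a_i$ and closed by Lemma~\ref{lem:goThroughAnchor}. You also justify explicitly that $d_G(u,v)\ge 1$ in Cases 2 and 3, via $a_i\in\zeta(i)\subseteq g^{-1}(p_i)$ and the disjointness of $g^{-1}(s_i)$ and $g^{-1}(s_{i+1})$; the paper uses this step only implicitly when absorbing the additive constants.
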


\begin{proof}
Let $1 \leq i < \ell$ and $u$, $v$ be vertices in $g^{-1}(\{s_i, s_{i+1}\}) \cup \{a_i\}$.
If both $u$ and $v$ are in $g^{-1}(\{s_i\})$ then 
$d_H(r(u), r(v)) = d_{R_i}(r(u), r(v)) \leq 12(s+2)d_G(u, v)$, and an identical bound holds if both $u$ and $v$ are in $g^{-1}(\{s_{i+1}\})$.
If $u$ is in  $g^{-1}(\{s_i\})$ and $v = a_i$ then $d_H(r(u), r(v)) = d_{R_i}(r_i(u), r_i(v)) + 1 \leq 12(s+2)d_G(u, v) \cdot 2$, and identical bound holds if $u$ is in  $g^{-1}(\{s_{i+1}\})$ and $v = a_i$.
Suppose now that $u$ is in $g^{-1}(\{s_i\})$ and $v$ is in $g^{-1}(\{s_{i+1}\})$.
It follows that 
\begin{align*}
d_H(r(u), r(v)) & \leq d_{R_i}(r_i(u), r_i(a_i)) + 2 + d_{R_{i+1}}(r_{i+1}(a_i), r_{i+1}(v)) \\
& \leq 12(s+2)d_G(u, a_i) + 2 +  12(s+2)d_G(a_i, v).
\end{align*}

From Lemma~\ref{lem:goThroughAnchor} we have that $d_G(u, a_i) + d_G(a_i, v) \leq 2d_G(u, v)$, and so the bound follows. 
\end{proof}

\todo[inline]{\color{red} E: instead of multiplying $\cdot 2$, better write $+1$, which is what the argument gives (and is tighter), and used below too, right? I changed also the displayed inequality below accordingly, +4 instead of +2.}

\todo[inline]{DL:  agree on $+1$, put back $+2$ in displayed inequality it is correct as is (and $+4$ means statement needs to be updated. )}

We now have all the ingredients ready to wrap up the proof. We start by showing that $G$ can be embedded into the scaffolding graph $H$.

\begin{lemma}\label{lem:embedGtoH}
There exists a polynomial time algorithm that takes as input a graph $G$, a pre-layout $(\ell, P, g, \zeta, W, \{\lambda_i ~:~ 0 \leq i \leq \ell\})$ of $G$ with separator size $s$ and bandwidth $\beta$, anchor vertices $\{a_1, \ldots, a_{\ell-1}\}$, scaffolding graph $H$ and anchor embedding $r$, and computes an embedding $h$ of $G$ into $H$ with
stretch at most $96000(s+2)^4(\beta+1)^2$
and congestion at most $84000(s+1)^4(\beta+1)$.
\end{lemma}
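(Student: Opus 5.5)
The plan is to embed each part $G_i$ separately into a small connected subgraph $H_i$ of $H$ using Theorem~\ref{lem:lowStretchToEmbedding}, with the prescribed map given by the anchor embedding $r$. Then we glue the pieces together.

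\emph{Choice of pieces.} For each part $i$ take the prescribed set $S_i$ and the target $H_i$ as follows:
\begin{itemize}
\item for $1 \leq i < \ell$: $S_i = g^{-1}(\{s_i, s_{i+1}\}) \cup \{a_i\}$, and $H_i = H[V(R_i) \cup \{\hat a_i\} \cup V(R_{i+1})]$, which is connected via $\hat a_i$;
\item for $i = 0$: $S_0 = g^{-1}(s_1)$ and $H_0 = R_1$;
\item for $i = \ell$: $S_\ell = g^{-1}(s_\ell)$ and $H_\ell = R_\ell$.
\end{itemize}
Since $a_i \in \zeta(i) \subseteq g^{-1}(p_i)$, the vertex $a_i$ lies in $G_i$ for inner parts, so $S_i \subseteq V(G_i)$. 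Each part $G_i$ is connected by properties (\ref{itm:connectedEnds}) and (\ref{itm:connectedMid}), and it comes with the bandwidth-$\beta$ layout $\lambda_i$.

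\emph{Bounding the parameters.} We have $|S_i| \leq 2s+1$. By Lemma~\ref{lem:rHasLowStretch}, the restriction of $r$ to $S_i$ has $S_i$-stretch at most $25(s+2)$, measured in $H$. Here is a subtlety: we need the stretch measured in $H_i$ and in $G_i$, not in $H$ and $G$. Distances between $S_i$-vertices in $H$ are realized inside $H_i$, because $H$ is a tree and the relevant paths stay in $R_i \cup \{\hat a_i\} \cup R_{i+1}$. On the other hand, $d_{G_i} \geq d_G$, so stretch relative to $d_{G_i}$ is no larger. For the end parts the set $S_0$ maps into $R_1$ with stretch $12(s+1)$ directly.

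For the size condition, $|V(H_i)| \geq |R_i| \geq 12(s+1)|V(G_i)|$ by the choice of the path lengths, so $r = 1$ suffices; we take $r=2$ to meet the hypothesis $r \geq 2$. Plugging in $|S_i|+1 \leq 2(s+1)$, $k = \beta$, and stretch $25(s+2)$, the theorem gives stretch at most $480 \cdot 8(s+1)^3(\beta+1)^2 \cdot 25(s+2) \leq 96000(s+2)^4(\beta+1)^2$. It gives congestion at most $1300 \cdot 16 (s+1)^4(\beta+1)\cdot 2$ per part.

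\emph{Gluing.} Let $h_i$ be the embedding of part $i$. Define $h(v) = h_i(v)$ for any part $i$ containing $v$. This is well defined because a vertex lies in two parts only if it is in some $g^{-1}(s_j)$, and there both $h_{j-1}$ and $h_j$ agree with $r$. Every edge of $G$ lies inside some part, since $g$ has stretch $1$ and the parts cover consecutive triples. Hence the stretch of $h$ is the maximum over the parts.

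For congestion, a vertex of $H$ lies in $H_i$ for at most two values of $i$ (each $R_j$ belongs to $H_{j-1}$ and $H_j$). So the total congestion is at most twice the per-part bound, giving about $83200(s+1)^4(\beta+1) \leq 84000(s+1)^4(\beta+1)$.

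The main obstacle is the distance bookkeeping in Lemma~\ref{lem:rHasLowStretch}: the $S_i$-stretch hypothesis must be checked relative to $d_{G_i}$ and $d_{H_i}$ rather than $d_G$ and $d_H$. The $d_{G_i}$ direction is favorable. The $d_{H_i}$ direction needs $H_i$ to be a geodesically closed subtree of $H$, which holds by construction. I also expect the constants to need some care in the end parts and when $|S_i|$ is at most $2s+1$.
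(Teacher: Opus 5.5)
Your proposal is correct and follows the paper's proof essentially step for step. The paper also applies Theorem~\ref{lem:lowStretchToEmbedding} once per part, into $R_1$, $R_\ell$, or $H[V(R_i)\cup\{\hat a_i\}\cup V(R_{i+1})]$, with $r=2$ and the $25(s+2)$ bound from Lemma~\ref{lem:rHasLowStretch}; it then glues along the sets $g^{-1}(s_j)$, where the pieces agree with the anchor embedding, and doubles the congestion because each vertex of $H$ lies in at most two targets. Two small differences are worth noting. Your extra prescribed vertex $a_i$ is harmless: the bound $|S_i|+1\le 2(s+1)$ is unchanged, and it even guarantees $S_i\neq\emptyset$. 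Your explicit check that $d_{G_i}\ge d_G$, and that $d_{H_i}=d_H$ on the connected subtree $H_i$ of the tree $H$, fills in a step the paper leaves implicit.
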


\begin{proof}
From properties (\ref{itm:connectedEnds}) and (\ref{itm:connectedMid}) of weak pre-layouts all parts $G_i$ are connected. Using Theorem~\ref{lem:lowStretchToEmbedding} we obtain an embedding $h_0$ of the part
$G_0$ into $R_1$ such that $h_0(v) = r(v)$ for every $v \in g^{-1}(s_1)$.
To apply Theorem~\ref{lem:lowStretchToEmbedding} we note that $|V(R_1)| \geq |V(G_0)|$, that by property~(\ref{itm:siSize}) we have $|g^{-1}(s_1)| \leq s$, and that by Lemma~\ref{lem:rHasLowStretch} the $g^{-1}(s_1)$-stretch of $r$ is at most $25(s+2)$. Finally we have at hand an embedding $\lambda_0$ of $G_0$ with bandwidth at most $\beta$. Thus the
stretch of $h_0$ is at most
$$480(s+1)^3(\beta+1)^2 \cdot 25(s+2) \leq 12000(s+2)^4(\beta + 1)^2$$
and the
congestion of $h_0$ is at most
$$1300(s+1)^4(\beta+1) \cdot 2 = 2600(s+1)^4(\beta+1).$$
%\todo[inline]{E: can you please detail these two computations estimating stretch and congestion? Perhaps similarly below too? DL: Will do for arxiv version}

Similarly, using Theorem~\ref{lem:lowStretchToEmbedding} we obtain an embedding $h_\ell$ of the part $G_\ell$ into $R_\ell$ such that $h_\ell(v) = r(v)$ for every $v \in g^{-1}(s_\ell)$. Calculations identical to those for $h_0$ show that the stretch of $h_\ell$ is at most $12000(s+2)^4(\beta + 1)^2$ and the
congestion of $h_\ell$ is at most $2600(s+1)^4(\beta+1)$.

For every inner part $1 \leq i < \ell$, by using Theorem~\ref{lem:lowStretchToEmbedding} we obtain an embedding $h_i$ of the part $G_i$ into $H[R_i \cup \{\hat{a}_i\} \cup R_{i+1}]$ such that $h_i(v) = r(v)$ for every $v \in g^{-1}(\{s_{i}, s_{i+1}\})$.
To apply Theorem~\ref{lem:lowStretchToEmbedding} we note that $|V(R_{i+1})| \geq |V(G_i)|$, that by property~(\ref{itm:siSize}) we have $|g^{-1}(\{s_i, s_{i+1}\})| \leq 2s$, and that by Lemma~\ref{lem:rHasLowStretch} the $g^{-1}(\{s_i, s_{i+1}\})$-stretch of $r$ is at most $25(s+2)$. Finally we have at hand an embedding $\lambda_i$ of $G_i$ with bandwidth at most $\beta$. 
Thus, the stretch of $h_i$ is at most
$$480(2(s+1))^3(\beta+1)^2 \cdot 25(s+2) \leq 96000(s+2)^4(\beta+1)^2$$
and the congestion of $h_i$ is at most
$$1300(2(s+1))^4(\beta+1) \cdot 2 \leq 42000(s+1)^4(\beta+1).$$

%\todo[inline]{need a simple lemma in manipulating embeddings section that we can put together embeddings of sungraphs that agree on common vertices}

We define the embedding $h$ of $G$ into $H$ by setting $h(v) = h_i(v)$ for every $v$ in part $i$ of the pre-layout.  Notice that if $v$ appears both in part $i$ and in part $j$ of the pre-layout and $i < j$, then $j = i+1$, $v \in g^{-1}(s_{i+1})$ and $h_i(v) = h_j(v) = r(v)$, and hence the function $h$ is well defined. 
Since every edge of $G$ has both endpoints in some part of the pre-layout, the stretch of $h$ is at most the maximum stretch of any $h_i$, in particular it is at most
$96000(s+2)^4(\beta+1)^2$.
Finally, for each vertex $v \in V(H)$ at most two functions $h_i$ have $v$ in their range, thus the congestion of $h$ is at most twice the maximum congestion of any $h_i$. In particular the congestion of $h$ is at most $84000(s+1)^4(\beta+1)$.
\end{proof}

Next we show that the scaffolding graph $H$ can be embedded back into the witness tree $T$. 

\begin{lemma}\label{lem:embedHtoT}
There exists a polynomial time algorithm that takes as input a graph $G$, a pre-layout $(\ell, P, g, \zeta, W, \{\lambda_i ~:~ 0 \leq i \leq \ell\})$ of $G$ with separator size $s$ and bandwidth $\beta$, anchor vertices $\{a_1, \ldots, a_{\ell-1}\}$, scaffolding graph $H$, anchor embedding $r$,
and witness tree $T$ of $G$, and outputs an embedding $t$ of $H$ into $T$ with stretch at most $630000$ and congestion at most $22000000(s+2)$.
\end{lemma}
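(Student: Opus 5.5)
We embed $H$ into $T$ piece by piece, with the anchors $\hat a_i$ pinned to the anchor vertices. We first set $t(\hat a_i) = a_i$ for every $1 \leq i < \ell$. For each path $R_i$ we then construct an embedding $t_i$ of $R_i$ into the subtree $T_i$ given by Lemma~\ref{lem:niceSubtrees}. The pinned endpoints are $r_i(a_{i-1})$ and $r_i(a_i)$ (only one of them for $R_1$ and $R_\ell$), and we require each to land at, or adjacent to, the corresponding anchor $a_{i-1}$, $a_i$ in $T$. Since the only edges of $H$ outside the paths $R_i$ are $\hat a_i r_i(a_i)$ and $\hat a_i r_{i+1}(a_i)$, it suffices to make $t_i(r_i(a_i)) = a_i$ and $t_{i+1}(r_{i+1}(a_i)) = a_i$. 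Those edges then have stretch $0$, and the stretch of $t$ is the maximum stretch of the individual $t_i$.

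To build $t_i$ for an inner index $2 \leq i \leq \ell-1$, we split $R_i$ at the two pinned vertices $x = r_i(a_{i-1})$ and $y = r_i(a_i)$ into three subpaths: the prefix ending at the first of them, the middle between $x$ and $y$, and the suffix. The middle subpath has length $d_{R_i}(x,y) \geq d_G(a_{i-1},a_i)$ by the contraction property of $r_i$ (Theorem~\ref{thm:lineEmbedding}). By Lemma~\ref{lem:selectAnchor}, $d_T(a_{i-1},a_i) \leq 3 d_G(a_{i-1},a_i)$, so the hypothesis of Lemma~\ref{lem:pathWithTwoEndpointsGeneral} holds with $s = 3$. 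Applying it into $T_i$ yields stretch at most $3$. The prefix and suffix are handled by Lemma~\ref{lem:pathWithOneEndpointsGeneral}, with root $a_{i-1}$ or $a_i$ respectively, giving stretch $2$. The end pieces $R_1$ and $R_\ell$ have a single pinned vertex and are split into two subpaths handled by Lemma~\ref{lem:pathWithOneEndpointsGeneral}.

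Both $a_{i-1}$ and $a_i$ lie in $V(T_i)$. This holds because $\zeta(i-1) \subseteq g^{-1}(p_{i-1})$ and $\zeta(i)\subseteq g^{-1}(p_i)$, and $T_i$ contains $\zeta(i)$ and the $T$-path toward $\zeta(i-1)$. If necessary we enlarge $T_i$ slightly, which Lemma~\ref{lem:niceSubtrees} allows within the stated neighbouring parts. The congestion parameter $k$ in these lemmas is set to $\lceil |V(R_i)|/|V(T_i)| \rceil$. Since $|V(R_i)| = 12(s+2)|g^{-1}(\{s_{i-1},\dots,s_{i+1}\})|$ and $V(T_i) \supseteq g^{-1}(\{s_i,p_i,s_{i+1}\})$, this ratio is $O(s)$ provided the neighbouring part sizes are comparable.

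Gluing the $t_i$ gives congestion $O(s)$ per piece. Each vertex of $T$ lies in at most three of the subtrees $T_i$, because $V(T_i)\subseteq g^{-1}(\{p_{i-1},\dots,p_{i+1}\})$. Hence the total congestion is $O(s)$, and the stretch is a constant.

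The main obstacle is that the ratio $|V(R_i)|/|V(T_i)|$ need not be bounded. $R_i$ is sized by five consecutive layers of $P$, while $T_i$ is only guaranteed to contain three. If $g^{-1}(p_{i-1})$ is huge and $g^{-1}(p_i)$ is tiny, the pieces do not fit. The first thing to try is to charge the excess of $R_i$ to the neighbouring subtrees $T_{i-1}$ and $T_{i+1}$. Concretely, we cut $R_i$ into the portion associated with layers $s_{i-1}, p_{i-1}$ and the portion associated with $p_i, s_{i+1}$, and send the prefix and suffix pieces into $T_{i-1}$ and $T_{i+1}$ respectively via Lemma~\ref{lem:pathWithOneEndpointsGeneral}. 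These pieces are rooted at $a_{i-1}$ and $a_i$, which lie in those subtrees as well. Then every subtree receives $O(s)$ times its own size from a constant number of paths. The large absolute constants in the statement (stretch $630000$, congestion $22000000(s+2)$) suggest that this charging, together with the factor-$3$ anchor bound and a possible composition with the DFS embedding, is what produces the final numbers.
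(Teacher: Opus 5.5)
Your proposal has a genuine gap, located at the obstacle you identify yourself. Embedding $R_i$ into $T_i$ alone cannot give bounded congestion, and the repair you sketch does not fix it.

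First, subpaths of $R_i$ are not ``associated'' with layers of $P$. The pinned vertices $x=r_i(a_{i-1})$ and $y=r_i(a_i)$ sit wherever the line embedding $r_i$ puts them. The image of $r_i$ has length at most $12(s+2)$ times the $G$-diameter of $g^{-1}(s_i)\cup\{a_{i-1},a_i\}$, and this can be tiny compared with $|V(R_i)|$. So the prefix or the suffix can carry essentially all $12(s+2)|g^{-1}(\{s_{i-1},p_{i-1},s_i,p_i,s_{i+1}\})|$ vertices of $R_i$. The subtree you send that piece to need not contain the layer carrying this mass:
\begin{itemize}
\item a long suffix sent into $T_{i+1}$ fails when $g^{-1}(p_{i-1})$ is huge and $g^{-1}(\{s_{i+1},p_{i+1},s_{i+2}\})$ is small; moreover $T_{i+1}$ need not contain $a_i$ at all;
\item a long prefix sent into $T_{i-1}$ fails when $g^{-1}(p_i)$ is huge.
\end{itemize}
Second, you never bound the middle piece relative to $T_i$. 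This is salvageable: after adding the $T$-path from $a_{i-1}$ to $a_i$, one has $d_{R_i}(x,y)\le 12(s+2)d_G(a_{i-1},a_i)\le 12(s+2)d_T(a_{i-1},a_i)$. But that argument is not in your proposal. The same mismatch affects $R_1$ and $R_\ell$: they are sized using $g^{-1}(p_0)$ and $g^{-1}(\{s_{\ell-1},p_{\ell-1}\})$, which are not guaranteed to lie in $T_1$ and $T_\ell$.

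The missing idea is to use the union of two consecutive subtrees as the target. Embed $R_i$, together with $\hat a_{i-1},\hat a_i$, into $T[V(T_{i-1})\cup V(T_i)]$. Embed $R_1$ into $T[V(T_0)\cup V(T_1)]$ and $R_\ell$ into $T[V(T_{\ell-1})\cup V(T_\ell)]$. By Lemma~\ref{lem:niceSubtrees} this target has the properties you need:
\begin{itemize}
\item it is a subtree, since both trees contain $g^{-1}(s_i)$;
\item it contains both anchors, as $a_{i-1}\in\zeta(i-1)\subseteq g^{-1}(p_{i-1})$ and $a_i\in\zeta(i)\subseteq g^{-1}(p_i)$;
\item it contains exactly the five layers used to size $R_i$, so $|V(R_i)|\le 12(s+2)\,|V(T_{i-1})\cup V(T_i)|$;
\item every vertex of $T$ lies in at most four such unions.
\end{itemize}
This is what the paper does. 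It then invokes Theorem~\ref{lem:lowStretchToEmbedding} on $H[V(R_i)\cup\{\hat a_{i-1},\hat a_i\}]$, which has bandwidth at most $3$, with the prescribed map $\hat a_j\mapsto a_j$, whose stretch is at most $3$ by Lemma~\ref{lem:selectAnchor}. With this target your own three-piece splitting also goes through, using Lemma~\ref{lem:pathWithTwoEndpointsGeneral} (stretch parameter $3$) and Lemma~\ref{lem:pathWithOneEndpointsGeneral}. It gives stretch $3$ and congestion $O(s)$, with much smaller constants than the paper's route.
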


\begin{proof}
Let $\{T_i ~:~ 0 \leq i \leq \ell\}$ be subtrees of $T$ as given by Lemma~\ref{lem:niceSubtrees}.
We apply Theorem~\ref{lem:lowStretchToEmbedding} to obtain an embedding $t_1$ of $H[V(R_1) \cup \hat{a}_1]$ into $T[V(T_0) \cup V(T_1)]$ such that $t_1(\hat{a}_1) = a_1$.
To apply Theorem~\ref{lem:lowStretchToEmbedding} we note that $H[V(R_1) \cup \hat{a}_1]$, being a path plus one vertex with a single edge into the path, has bandwidth at most $2$; that $V(T_0) \cup V(T_1)$ contains $g^{-1}(\{p_0,s_1,p_1,s_2\})$, hence $|V(T_0) \cup V(T_1)| \geq |V(R_1) \cup \hat{a}_1|/24(s+1)$; and that the $\{\hat{a}_1\}$-stretch of the mapping that assigns $a_1$ to $\hat{a}_1$ is $1$ (vacuous on a singleton). Thus the stretch of $t_1$ is at most
$$480(1+1)^3(2+1)^2 \cdot 1 \leq 35000$$
and the congestion of $t_1$ is at most
$$1300(1+1)^4(2+1) \cdot 24(s+1) \leq 1500000(s+1).$$
By an identical argument we obtain an embedding $t_\ell$ of $H[V(R_\ell) \cup \hat{a}_{\ell-1}]$ into $T[V(T_{\ell-1}) \cup V(T_{\ell})]$ such that $t_\ell(\hat{a}_{\ell-1}) = a_{\ell-1}$. The stretch of $t_\ell$ is at most $35000$ and the congestion is at most $1500000(s+1)$.

For every $2 \leq i \leq \ell-1$ we apply Theorem~\ref{lem:lowStretchToEmbedding} to obtain an embedding $t_i$ of $H[V(R_i) \cup \{\hat{a}_{i-1},\hat{a}_i\}]$ into $T[V(T_{i-1}) \cup V(T_i)]$ such that $t_i(\hat{a}_{i-1}) = a_{i-1}$ and $t_i(\hat{a}_{i}) = a_{i}$.
To apply Theorem~\ref{lem:lowStretchToEmbedding} we note that $H[V(R_i) \cup \{\hat{a}_{i-1},\hat{a}_i\}]$, being a path plus two vertices each with a single edge into the path, has bandwidth at most $3$; that by Lemma~\ref{lem:niceSubtrees}, $V(T_{i-1}) \cup V(T_i)$ contains $g^{-1}(\{s_{i-1},p_{i-1},s_i,p_i,s_{i+1}\})$, hence $|V(T_{i-1}) \cup V(T_i)| \geq |V(R_i) \cup \{\hat{a}_{i-1},\hat{a}_i\}|/13(s+2)$. The $\{\hat{a}_{i-1}, \hat{a}_i\}$-stretch of the function that maps $\hat{a}_i$ to $a_i$ and $\hat{a}_{i-1}$ to $a_{i-1}$ is at most $3$: by Lemma~\ref{lem:selectAnchor}, $d_T(a_{i-1}, a_i) \leq 3 d_G(a_{i-1}, a_i)$, and $d_G(a_{i-1}, a_i) \leq d_{R_i}(r_i(a_{i-1}), r_i(a_i)) \leq d_H(\hat{a}_{i-1}, \hat{a}_i)$, so $d_T(a_{i-1}, a_i) \leq 3 d_H(\hat{a}_{i-1}, \hat{a}_i)$. Thus the stretch of $t_i$ is at most
$$480(2+1)^3(3+1)^2 \cdot 3 \leq 630000$$
and the congestion of $t_i$ is at most
$$1300(2+1)^4(3+1) \cdot 13(s+2) \leq 5500000(s+2).$$

We define the embedding $t$ of $H$ to $T$ as follows. For every vertex $\hat{a}_i$ we set $t(\hat{a}_i) = a_i$. For every $R_i$ and every vertex $v \in R_i$ we set $t(v) = t_i(v)$. Note that $t$ agrees with $t_i$ on the domain of $t_i$ for every $1 \leq i \leq \ell$. Furthermore, every edge of $H$ has both endpoints in the domain of $t_i$ for some $1 \leq i \leq \ell$. Thus $t$ is an embedding of $H$ to $T$ with stretch at most $630000$ (the largest stretch of any $t_i$).

Next we argue that for every vertex $v$ in $T$ there are at most $4$ values of $i \in \{1, \ldots, \ell\}$ such that $v$ is in the range of $t_i$. Fix an arbitrary $v$ in $T$.
By Lemma~\ref{lem:niceSubtrees}, there exists a $j$ such that for every $i$, $v \in V(T_i)$ implies $i \in \{j-1,j,j+1\}$.
For every $i \in \{1, \ldots, \ell\}$, the range of $t_i$ is $V(T_{i-1}) \cup V(T_i)$. Thus, if $v$ is in the range of $t_i$ then either $i-1 \in \{j-1,j,j+1\}$ or $i \in \{j-1,j,j+1\}$, i.e., $i \in \{j-1,j,j+1,j+2\}$.
Since every vertex of $T$ is in the range of at most $4$ embeddings from $\{t_i ~:~ 1 \leq i \leq \ell\}$ it follows that the congestion of $t$ is at most $4 \cdot 5500000(s+2) = 22000000(s+2)$.
%%
%
%\todo[inline]{meow}
%There exists a polynomial time algorithm that takes as input
%a connected graph $G$, positive integers $k$, $r$, $s$,
%a layout $f$ of $G$ of bandwidth at most $k$,
%a vertex set $S$ in $G$,
%a graph $H$ such that $|V(H)| \geq |V(G)|/r$
%and a function $h : S \rightarrow V(H)$ with $S$-stretch at most $s$,
%and outputs an embedding $g$ of $G$ into $H$ with
%stretch at most $120|S|kr \cdot \max(|S|, k)$
%congestion at most  $1400|S|^3kr$
%such that for every $s \in S$, $g(s) = h(s)$.
%
%    
\end{proof}

We are now ready to show that when a graph $G$ has a pre-layout, we can either find a low bandwidth layout of $G$ or a subtree of $G$ with large bandwidth. First we recall the approximation algorithm for the bandwidth of caterpillars given by Dregi and Lokshtanov~\cite{DregiL14}.

%\todo[inline]{for the approximation algorithm that does not need to find a subtree of high bandwidth we can instead use the approximation algorithm on $H$ rather than $T$ since $H$ has pathwidth $2$ and the dregi-lokshtanov algorithm has $poly(k)$  approximation ratio then}

%\begin{theorem}[\cite{DregiL14,dregi2017beyond}]\label{lem:treeApproximation}
%There exists polynomial time algorithm that takes as input a tree $T$ and an integer $k$ and either returns that $\bw(T) > k$ or produces a layout of $T$ of bandwidth at most $(5k)^{6k}$.
%\end{theorem}

\begin{theorem}[\cite{DregiL14}]\label{lem:caterpillarApproximation}
There exists an algorithm that, given a caterpillar $H$ and a positive integer $k$ either
returns a layout of $H$ of bandwidth at most $48k^3$ or correctly concludes that $\bw(H) > k$ in time $O(kn^3)$.
\end{theorem}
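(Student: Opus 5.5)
The natural route is to build a layout greedily and, when the greedy construction fails, read off a certificate that $\bw(H) > k$. Let $B = b_1 b_2 \ldots b_m$ be a backbone of $H$ and $L_1, \ldots, L_q$ its hairs. Each hair is a path hanging off a backbone vertex.

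\textbf{Step 1: local density test.} In time $O(n^3)$ I compute all balls $B_q(v)$ by BFS and check whether $\hat{\Delta}(H) \leq k$. If not, then $\bw(H) \geq \hat{\Delta}(H) > k$ and I stop. From now on every backbone window $b_i, \ldots, b_j$, together with the hair vertices within distance $j-i$ of it, has $O(k(j-i+1))$ vertices.

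\textbf{Step 2: short hairs.} A hair of length at most $ck$, for a suitable constant $c$, is folded (Lemma~\ref{lem:pathWithFixedEndpoints} style) into positions near its attachment point. The density bound shows that these contribute congestion $O(k)$ per backbone window of length $O(k)$.

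\textbf{Step 3: long hairs.} Long hairs are drawn as monotone straight segments of slope at most $k$, each sent either to the left or to the right of its attachment point. I process the backbone left to right with a sweep that keeps track of how many hair segments currently cross each position. The direction of each long hair is chosen by a DP over the backbone. Its state records the pending load sent rightward, capped at $O(k^2)$, since a larger load already forces a contradiction. This DP runs in $O(kn^3)$ total.

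If the DP finds an assignment in which every position is crossed by $O(k^2)$ segments, then the resulting embedding into $P_\mathbb{Z}$ has stretch $O(k)$ and congestion $O(k^2)$. Lemma~\ref{lem:embeddingToLayout} then turns it into a layout of bandwidth $O(k^3)$. Tracking the constants should give $48k^3$.

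\textbf{Main obstacle: the lower bound.} The hard part is showing that when the DP fails, $\bw(H) > k$ genuinely holds. Local density alone cannot certify this, because skewed Cantor combs have bounded local density and unbounded bandwidth. My first approach is to unwind a failing DP state into a nested family of hairs. At each level, one long hair has to be routed across a region already saturated by the hairs of the two sub-structures on either side. This is exactly the recursive shape of a skewed Cantor comb.

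I would then prove, by induction on the depth of this nesting, the inequality that a bandwidth-$k$ layout would violate. At each level, the positions spanned by the two sub-structures force the long middle hair to cross a region of length proportional to its distance $d$ from the backbone endpoints. Since the hair has length at least $2kd$, the layout must stretch some edge beyond $k$. Calibrating the DP's failure threshold so that it matches the comb parameters from Theorem~\ref{thm:treeBandwidth} is where I expect most of the technical effort to go.
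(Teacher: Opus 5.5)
The paper does not prove this statement; it imports it from Dregi and Lokshtanov~\cite{DregiL14}. Judged on its own, your proposal has a genuine gap, at the place you flag yourself. The implication ``the DP fails $\Rightarrow$ $\bw(H)>k$'' is the whole content of the theorem, and you give no argument for it.

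The sketch you offer for it does not work as stated. The available comb lower bound is that skewed $k$-Cantor combs of \emph{depth $k$} have bandwidth at least $k$. A single level gives nothing: a depth-$2$ skewed comb is a subdivided claw, which has bandwidth $2$ however long its middle path is. So ``since the hair has length at least $2kd$, the layout must stretch some edge beyond $k$'' can only come at the top of an induction of depth about $k$. You would therefore have to extract, from the failure of an $O(k^2)$ load threshold, a nested structure of depth $k+1$ with parameter $k+1$. Nothing in your DP relates load to nesting depth, so this is not a calibration detail; it is the missing proof.

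Note also that the theorem only asks you to \emph{conclude} $\bw(H)>k$, so no obstruction needs to be exhibited. The natural route goes the other way: take a hypothetical layout $f$ of bandwidth $k$, and orient each long hair by the side of its attachment vertex on which its tip lies in $f$. Then show that this orientation is feasible for your DP, using that every edge has length at most $k$ in $f$ and that $x$ consecutive positions hold at most $x$ vertices. This argument must also handle the fact that the backbone need not be monotone in $f$, whereas your construction places it in order. Once that is done, infeasibility of the DP is itself the certificate.

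The algorithmic half is not established either. If the state is a single capped count of pending rightward load, it does not determine future loads. The pending hairs end at different positions, and with slope up to $k$ their extent is an extra choice. Also, a hair sent leftward from $b_i$ lands on positions that a left-to-right sweep has already committed. A state that actually determines the loads must record the individual crossing hairs (or a load profile), and that does not fit in $O(kn^3)$ time. Moreover, your justification for the cap, that a larger load ``already forces a contradiction'', is precisely the unproved lower bound. It cannot be used to define the DP before that bound is proved.
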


Armed with Theorem~\ref{lem:caterpillarApproximation} we prove the main result of this section.

\begin{lemma}\label{lem:prelayoutGraphs}
There exists a polynomial time algorithm that takes as input a graph $G$, a pre-layout $(\ell, P, g, \zeta, W, \{\lambda_i ~:~ 0 \leq i \leq \ell\})$ of $G$ with separator size $s$ and bandwidth $\beta$, together with an integer $k$, and either outputs a layout of $G$ of bandwidth at most $10^{54} \cdot (s+2)^{11}(\beta+1)^3 k^3$, or a subtree $T$ of $G$ with bandwidth at least $k$.
\end{lemma}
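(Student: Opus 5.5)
The plan is to compose the two embeddings already built in this section. Given the pre-layout, I will first construct the witness tree $T$, the anchor vertices (Lemma~\ref{lem:selectAnchor}), the scaffolding graph $H$ and the anchor embedding $r$, all in polynomial time. Then I apply Lemma~\ref{lem:embedGtoH} to get an embedding $h$ of $G$ into $H$ with stretch at most $96000(s+2)^4(\beta+1)^2$ and congestion at most $84000(s+1)^4(\beta+1)$. Next I apply Lemma~\ref{lem:embedHtoT} to get an embedding $t$ of $H$ into $T$ with stretch at most $630000$ and congestion at most $22000000(s+2)$. By Lemma~\ref{lem:composeEmbeddings}, the composition $t \circ h$ is an embedding of $G$ into the subtree $T$ of $G$. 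Its stretch is $O((s+2)^4(\beta+1)^2)$ and its congestion is $O((s+2)^5(\beta+1))$.

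Rather than running an approximation algorithm on $T$ directly (no polynomial-time one with a good ratio is available for general trees), I will exploit that $H$ is a caterpillar. I run the algorithm of Theorem~\ref{lem:caterpillarApproximation} on $H$ with parameter $k' = c\,k$ for a suitable $k'$, giving one of two outcomes.

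\emph{Layout outcome.} Suppose the algorithm returns a layout of $H$ of bandwidth at most $48k'^3$. Composing with $h$ via Lemma~\ref{lem:composeEmbeddings} gives an embedding of $G$ into a path with stretch at most $96000(s+2)^4(\beta+1)^2\cdot 48k'^3$ and congestion at most $84000(s+1)^4(\beta+1)$. Lemma~\ref{lem:embeddingToLayout} then turns this into a layout of bandwidth roughly $2\cdot$stretch$\cdot$congestion, which is of the form $C(s+2)^{8}(\beta+1)^3k'^3$. This is within the claimed bound.

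\emph{Bandwidth outcome.} Suppose instead the algorithm concludes $\bw(H) > k'$. I must then transfer large bandwidth from $H$ to $T$, and this is the main obstacle: I need the reverse direction, an embedding of $H$ into $T$, to bound $\bw(H)$ from above in terms of $\bw(T)$. Lemma~\ref{lem:embedHtoT} provides exactly this. If $T$ had a layout of bandwidth $b < k$, composing it with $t$ and applying Lemma~\ref{lem:embeddingToLayout} would give a layout of $H$ of bandwidth at most $2\cdot 630000\, b\cdot 22000000(s+2) < 3\cdot 10^{13}(s+2)k$.

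So I choose $k' = 3\cdot 10^{13}(s+2)k$. Then $\bw(H) > k'$ forces $\bw(T) \ge k$, and the algorithm outputs $T$, which is a subtree of $G$ of bandwidth at least $k$. With this choice of $k'$, the first outcome's bound becomes $C(s+2)^{8}(\beta+1)^3\cdot (3\cdot 10^{13})^3(s+2)^3k^3$. This is of order $10^{54}(s+2)^{11}(\beta+1)^3k^3$, and routine constant tracking should confirm the stated bound.

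Every step (the constructions, the lemmas invoked, and the caterpillar algorithm) runs in polynomial time.
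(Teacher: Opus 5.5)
Your proposal is correct and follows the paper's proof: the paper likewise builds $T$, the anchors, $H$ and $r$, then runs the caterpillar algorithm of Theorem~\ref{lem:caterpillarApproximation} on $H$. In one outcome it composes $h$ with the caterpillar layout; in the other it composes $t$ with a hypothetical layout of $T$ and applies Lemma~\ref{lem:embeddingToLayout}, so that $\bw(H) > z$ forces $\bw(T)\geq k$. The only difference is your threshold $k' = 3\cdot 10^{13}(s+2)k$ in place of the paper's $z = 10^{14}sk$; your choice works, since the layout outcome gives roughly $2.1\cdot 10^{52}(s+2)^{11}(\beta+1)^3k^3$, and it avoids the paper's use of $s+2\leq 3s$.
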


\begin{proof}
%\todo[inline]{E: a witness tree, right? T is not uniquely defined. DL: the, same as above.}
Given $G$ and $(\ell, P, g, \zeta, W, \{\lambda_i ~:~ 0 \leq i \leq \ell\})$ the algorithm first computes the witness tree $T$.
It then computes a set $\{a_i ~:~ 1 \leq i < \ell\}$ of anchor vertices using Lemma~\ref{lem:selectAnchor}.
Next it defines the scaffolding graph $H$ and anchor embedding $r$, and applies Lemma~\ref{lem:embedGtoH} to obtain an embedding $h$ of $G$ into $H$ with
stretch at most $96000(s+2)^4(\beta+1)^2$
and congestion at most $84000(s+1)^4(\beta+1)$.
Then it applies Lemma~\ref{lem:embedHtoT} to obtain an embedding $t$ of $H$ into $T$ with
stretch at most $630000$ and
congestion at most $22000000(s+2)$.

The algorithm approximates the bandwidth of $H$ by setting $z = 10^{14}sk$ and invoking Theorem~\ref{lem:caterpillarApproximation}. The algorithm of Theorem~\ref{lem:caterpillarApproximation} either produces a layout $\lambda_H$ of $H$ with bandwidth at most $48z^3$ or concludes that $\bw(H) > z$. 

If we obtain a layout $\lambda_H$ of bandwidth at most $48z^3$, we treat $\lambda_H$ as an embedding of $H$ into $P_\mathbb{Z}$ with congestion $1$ and stretch at most $48z^3$. By Lemma~\ref{lem:composeEmbeddings}, composing $h$ and $\lambda_H$ gives an embedding of $G$ into  $P_\mathbb{Z}$ with
congestion at most $84000(s+1)^4(\beta+1)$
and
stretch at most $5000000(s+2)^4(\beta+1)^2z^3$.
By Lemma~\ref{lem:embeddingToLayout}, this yields an embedding of $G$ of bandwidth at most
$10^{12} \cdot (s+2)^8(\beta+1)^3z^3 \leq 10^{54} \cdot (s+2)^{11}(\beta+1)^3 k^3$.

If the algorithm of Theorem~\ref{lem:caterpillarApproximation} concludes that $\bw(H) > z$, we output the tree $T$ and claim that $\bw(T) \geq k$. 
Suppose for contradiction that $T$ has a layout $\lambda_T$ of bandwidth at most $k-1$. This is an embedding of $T$ into $P_\mathbb{Z}$ with congestion $1$ and stretch at most $k-1$.
By Lemma~\ref{lem:composeEmbeddings}, composing $t$ with $\lambda_T$ yields an embedding of $H$ into $P_\mathbb{Z}$ with
congestion at most $22000000(s+2)$
and stretch at most $630000(k-1)$.
Lemma~\ref{lem:embeddingToLayout} then yields a layout of $H$ of bandwidth at most $2 \cdot 630000(k-1) \cdot 22000000(s+2) \leq 10^{14}s(k-1) < z$ (using $(s+2) \leq 3s$ for $s \geq 1$), contradicting that $\bw(H) > z$.
\end{proof}
\todo[inline]{E: should this be $630000 (k-1)$? DL: yes, fixed.}

\section{Proof of the Main Result}\label{sec:proofsOfMain}

\begin{lemma}\label{lem:mainAlgorithmInduction}
There exists an algorithm that takes as input a graph $G$, a non-negative integer $\tau$ and a positive integer $k$, such that $\tau \leq k$ and $G$ does not have a subtree of pathwidth $\tau + 1$, runs in time $2^{O(9^k)} \cdot n^{O(1)}$, and either outputs a subtree $T$ of $G$ of bandwidth at least $k$ or a layout of $G$ of bandwidth at most $\beta(k, \tau)$, where $\beta(k, \tau) = (10^{85} \cdot k^{28})^{4^\tau}$.
\end{lemma}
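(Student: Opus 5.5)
We prove the statement by induction on $\tau$, handling each connected component of $G$ separately and concatenating the resulting layouts. Concatenation does not increase bandwidth, and a subtree found in a component is a subtree of $G$.

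\textbf{Preprocessing.} First compute the radial local density; this takes polynomial time, since we just maximize over balls $B_q(v)$.

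If $\hat{\Delta}_r(G) \geq k$, Lemma~\ref{lem:preserveRadialLocalDensity} gives a BFS subtree $T$ with $\hat{\Delta}_r(T) \geq k$. Then $\bw(T) \geq \hat{\Delta}(T) \geq \hat{\Delta}_r(T) \geq k$, and we output $T$.

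Otherwise $\hat{\Delta}_r(G) < k$. Set $d = \max(2,k)$. The maximum degree of $G$ is at most $2\hat{\Delta}_r(G) \leq 2k$, using the ball of radius $1$.

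\textbf{Base case $\tau = 0$.} Every graph with at least one vertex has a subtree of pathwidth $0$. So $G$ is empty and the statement is trivial. More generally, whenever $G$ has no subtree of pathwidth $\geq 1$, $G$ is edgeless and any ordering is a layout of bandwidth $0$.

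\textbf{Inductive step, $\tau \geq 1$.} Invoke Lemma~\ref{lem:weakPreLayoutDensity} on $(d,\tau,G)$. There are two outcomes.

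\emph{Outcome 1: a hitting set.} We get $S$ with $|S| \leq 1566 d^3$ and no subtree of pathwidth $\tau$ in $G-S$. Recurse with $\tau-1$ on $G-S$. Either we get a subtree of bandwidth $\geq k$, which we output, or a layout of bandwidth $\beta(k,\tau-1)$. In the latter case Lemma~\ref{lem:deleteSet} lifts it to a layout of $G$ of bandwidth at most $12\,\beta(k,\tau-1)\cdot 1566d^3\cdot 2k$, which is well below $\beta(k,\tau)$.

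\emph{Outcome 2: a weak pre-layout.} We get $(S,R,\ell,P,g,\zeta,W)$ with separator size $4d$ and hitting bound $\rho = 792d^3$.
\begin{itemize}
\item For each part $G_i$ of $G-(S\cup R)$, the set $W\cap V(G_i)$ has size at most $3\rho + 2\cdot 4d$, since a part consists of $g^{-1}(p_i)$ plus two $s$-bags. Removing it leaves no subtree of pathwidth $\tau$: any connected subtree surviving in $G_i - W$ either lies in a single $g^{-1}(p_i)$ or meets an $s$-bag, and we put the $s$-bags into the deletion set $Z_i$. Recurse with $\tau-1$ on $G_i - Z_i$ and lift with Lemma~\ref{lem:deleteSet}. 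This gives layouts $\lambda_i$ of bandwidth $\beta' = O(d^4 k)\cdot\beta(k,\tau-1)$.
\item Handle $R$ the same way: recurse on $G[R]-(W\cap R)$ and lift.
\end{itemize}
This yields a pre-layout of $G-(S\cup R)$ with bandwidth $\beta'$ and separator size $4d$. Lemma~\ref{lem:prelayoutGraphs} then gives either a subtree of bandwidth $\geq k$, which we output since it is a subtree of $G$, or a layout of $G-(S\cup R)$ of bandwidth $10^{54}(4d+2)^{11}(\beta'+1)^3k^3$.

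To recover $G$, note that $N(R)\subseteq S$ and $G[R]$ has the layout just computed. Handle $G-S$ as the disjoint union of $G-(S\cup R)$ and $G[R]$, whose layouts concatenate. Then add back $S$ via Lemma~\ref{lem:deleteSet}, using $|S|\leq 4d$ and maximum degree $\leq 2k$. One caveat: components of $G-(S\cup R)$ need not be connected in the way the pre-layout lemma wants; Lemma~\ref{lem:prelayoutGraphs} only needs the parts to be connected, which the weak pre-layout properties ensure.

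\textbf{Arithmetic.} The recurrence is roughly $\beta(k,\tau) \leq C k^{c}\,\beta(k,\tau-1)^3$, with $C\approx 10^{54}\cdot 10^{O(1)}$ and $c\approx 11+3\cdot 5+3 = 28$ after substituting $d=O(k)$. Solving it gives $\beta(k,\tau) \leq (Ck^c)^{(3^{\tau+1}-1)/2}\,\beta(k,0)^{3^\tau} \leq (10^{85}k^{28})^{4^\tau}$. Checking that the constants indeed fit $10^{85}$ and the exponent $28$, including the $+1$'s and the lifting factors, is the main obstacle, though it is routine bookkeeping.

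\textbf{Running time.} All calls use $\tau\leq k$, and the recursion has depth $\tau$. The pieces at each level are essentially disjoint: each vertex lies in at most two parts plus $R$, so the total size at each level is $O(n)$ up to constant multiplicity. Each level therefore costs $2^{O(9^k)}n^{O(1)}$, with the multiplicity $3^{\tau}$ absorbed into $2^{O(9^k)}$.
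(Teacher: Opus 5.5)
Your proposal is correct and follows essentially the same route as the paper. The shared steps are: component-wise handling, the BFS subtree when the radial local density is large, Lemma~\ref{lem:weakPreLayoutDensity} with its two outcomes, recursion with $\tau-1$ on $G[g^{-1}(p_i)]-W$ and $G[R]-W$, lifting via Lemma~\ref{lem:deleteSet}, then Lemma~\ref{lem:prelayoutGraphs}, and finally re-inserting $R$ and $S$.

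Your bookkeeping slips in a place that matters for the exponent. The set deleted from each part has size at most $\rho+8d=O(d^3)$, so $\beta'=O(d^3k)\,\beta(k,\tau-1)$, not $O(d^4k)$; this gives $d^{20}k^6$, and $d^{21}k^7\le 2^{21}k^{28}$ after lifting $S$. The induction then closes most cleanly as in the paper, via $10^{85}k^{28}\,\beta(k,\tau-1)^3\le\beta(k,\tau-1)^4=\beta(k,\tau)$, rather than by unrolling the recurrence.
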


\begin{proof}
If $G$ is disconnected we handle each connected component individually: If at least one component has a subtree $T$ of bandwidth at least $k$, the algorithm outputs $T$. If each of the connected components have a layout of bandwidth at most $\beta(k, \tau)$, then these layouts can be concatenated to yield a layout of $G$ of bandwidth at most $\beta(k, \tau)$. Suppose now that $G$ is connected. 
The algorithm is a recursive algorithm which will make recursive calls on instances with strictly smaller value of $\tau$. We will assume by induction on $\tau$ that those recursive calls behave according to the statement of the lemma. In the base case $\tau = 0$, the hypothesis that $G$ has no subtree of pathwidth at least $1$ implies that $G$ has no edges. Since $G$ is connected, $G$ is a single vertex, and the algorithm outputs the trivial layout of bandwidth $0 \leq \beta(k, 0)$. Suppose now that $\tau \geq 1$.

If the radial local density $\hat{\Delta}_r(G)$ of $G$ is at least $k+1$, the algorithm outputs a subtree $T$ of $G$ with $\hat{\Delta}_r(T) = \hat{\Delta}_r(G) \geq k+1$ using Lemma~\ref{lem:preserveRadialLocalDensity}; since $\bw(T) \geq \hat{\Delta}_r(T) \geq k+1 \geq k$, this $T$ satisfies the lemma's requirement. 
We now assume that $\hat{\Delta}_r(G) \leq k$, and that therefore the maximum degree of $G$ is at most $2k$. Set $d = \max(\hat{\Delta}_r(G), 2)$, so $d \leq \max(k, 2)$. We apply the algorithm of Lemma~\ref{lem:weakPreLayoutDensity} with parameters $d$ and $\tau$. This algorithm runs in time $2^{O(9^\tau)} \cdot n^{O(1)}$ and either outputs a vertex set $S \subseteq V(G)$ of size at most $1566 d^3$ such that $G - S$ has no subtree of pathwidth at least $\tau$, or outputs a weak pre-layout $(S, R, \ell, P, g, \zeta, W)$ of $G$ with separator size at most $4d$ and pathwidth-$\tau$ tree hitting bound $\rho \leq 792 d^3$.

If the algorithm of Lemma~\ref{lem:weakPreLayoutDensity} returns a vertex set $S$ of size at most $1566 d^3$, the algorithm applies itself recursively to $G - S$ with parameter $\tau - 1$. The recursive call returns either a subtree $T$ of $G - S$ of bandwidth at least $k$, or a layout of $G - S$ of bandwidth at most $\beta(k, \tau-1)$. In the first case the algorithm outputs $T$ and halts. In the latter case it applies Lemma~\ref{lem:deleteSet} (with $|S| \leq 1566 d^3$ and max degree $\leq 2k$) to obtain a layout of $G$ of bandwidth at most
$$12 \cdot 1566 d^3 \cdot 2k \cdot \beta(k, \tau-1) = 37584 \cdot d^3 k \cdot \beta(k, \tau-1) \leq 10^{85} \cdot k^{28} \cdot \beta(k, \tau-1) \leq \beta(k, \tau),$$
where the last inequality follows from $\beta(k, \tau) = \beta(k, \tau-1)^4$ together with $\beta(k, \tau-1) \geq 10^{85} \cdot k^{28}$.
 
Suppose now the algorithm of Lemma~\ref{lem:weakPreLayoutDensity} outputs a weak pre-layout $(S, R, \ell, P, g, \zeta, W)$ of $G$ with separator size at most $4d$ and pathwidth-$\tau$ tree hitting bound $\rho \leq 792 d^3$.
Let $P = p_0 s_1 p_1 \ldots s_\ell p_\ell$, and set $Q = g^{-1}(\{s_1, s_2, \ldots, s_\ell\})$. By property~(\ref{itm:siSize}), $|g^{-1}(s_i)| \leq 4d$ for every $1 \leq i \leq \ell$, so for every part $G_i$ of the weak pre-layout we have $|V(G_i) \cap Q| \leq 2 \cdot 4d = 8d$. Further, for every two distinct parts $G_i$, $G_j$ it holds that $(V(G_i) - Q) \cap (V(G_j) - Q) = \emptyset$.
For every $0 \leq i \leq \ell$ let $W_i = W \cap g^{-1}(p_i)$. Then $G[g^{-1}(p_i)] - W_i = G[g^{-1}(p_i)] - W$, which by property~(\ref{itm:hitPi}) has no subtree of pathwidth at least $\tau$ and satisfies $|W_i| \leq \rho$. Similarly, let $W_R = W \cap R$; then $G[R] - W_R = G[R] - W$, which by property~(\ref{itm:rBound}) has no subtree of pathwidth at least $\tau$ and satisfies $|W_R| \leq \rho$.

For every $0 \leq i \leq \ell$ we call the algorithm recursively on $G_i - (Q \cup W_i)$ with parameter $\tau - 1$. The recursive call is valid since $G_i - (Q \cup W_i) = G[g^{-1}(p_i)] - W_i$, and $G[g^{-1}(p_i)] - W_i$ has no subtree of pathwidth at least $\tau$ as established above. It returns either a subtree $T_i$ of $G_i - (Q \cup W_i)$ of bandwidth at least $k$, or a layout of $G_i - (Q \cup W_i)$ of bandwidth at most $\beta(k, \tau-1)$.
Additionally we call the algorithm recursively on $G[R] - W_R$ with parameter $\tau - 1$, obtaining either a subtree $T_R$ of $G[R] - W_R$ of bandwidth at least $k$, or a layout of $G[R] - W_R$ of bandwidth at most $\beta(k, \tau-1)$.
If any of these recursive calls returns a subtree of bandwidth at least $k$, the algorithm returns that subtree and halts. Suppose now that all recursive calls return layouts of bandwidth at most $\beta(k, \tau-1)$.

For each $0 \leq i \leq \ell$, apply Lemma~\ref{lem:deleteSet} to the layout of $G_i - (Q \cup W_i)$ together with the vertex set $(Q \cap V(G_i)) \cup W_i$ (of size at most $8d + \rho \leq 800 d^3$, using $d \geq 2$) to obtain a layout $\lambda_i$ of the part $G_i$ of bandwidth at most $\beta'$, where $\beta' = 12 \cdot \beta(k, \tau-1) \cdot 800 d^3 \cdot 2k = 19200 \cdot d^3 k \cdot \beta(k, \tau-1)$. Similarly, apply Lemma~\ref{lem:deleteSet} to the layout of $G[R] - W_R$ together with $W_R$ (of size at most $\rho$) to obtain a layout of $G[R]$ of bandwidth at most $12 \cdot \beta(k, \tau-1) \cdot \rho \cdot 2k \leq 19008 \cdot d^3 k \cdot \beta(k, \tau-1) \leq \beta'$.

Now $(\ell, P, g, \zeta, W, \{\lambda_i : 0 \leq i \leq \ell\})$ is a pre-layout of $G - (S \cup R)$ with separator size at most $4d$ and bandwidth at most $\beta'$: by construction $(\emptyset, \emptyset, \ell, P, g, \zeta, W)$ is a weak pre-layout of $G - (S \cup R)$, and each $\lambda_i$ is a bandwidth-$\beta'$ layout of part $G_i$.

We apply Lemma~\ref{lem:prelayoutGraphs} to this pre-layout (with separator size $s \leq 4d$ and bandwidth $\beta'$) and obtain in polynomial time either a subtree $T$ of $G$ with bandwidth at least $k$ or a layout of $G - (S \cup R)$ of bandwidth at most
$$10^{54}(4d+2)^{11}(\beta'+1)^3 k^3 \leq 10^{76} \cdot d^{20} k^6 \cdot \beta(k, \tau-1)^3,$$
using $d \geq 2$ to bound $4d + 2 \leq 5d$. Since we also have a layout of $G[R]$ of bandwidth at most $\beta' \leq 10^{76} d^{20} k^6 \beta(k, \tau-1)^3$, and $N(R) \subseteq S$ by property~(\ref{itm:RNeigh}) of weak pre-layouts, concatenating these two layouts yields a layout of $G - S$ of the same bandwidth. Applying Lemma~\ref{lem:deleteSet} to this layout together with $S$ (with $|S| \leq 4d$ and max degree $\leq 2k$) yields a layout of $G$ of bandwidth at most
$$12 \cdot 4d \cdot 2k \cdot 10^{76} d^{20} k^6 \beta(k, \tau-1)^3 = 96 \cdot 10^{76} \cdot d^{21} k^7 \beta(k, \tau-1)^3 \leq 10^{85} \cdot k^{28} \beta(k, \tau-1)^3,$$
where the last inequality uses $d \leq 2k$. The inductive hypothesis $\beta(k, \tau-1) \leq (10^{85} \cdot k^{28})^{4^{\tau-1}}$ gives
$$(10^{85} \cdot k^{28}) \cdot \beta(k, \tau-1)^3 \leq (10^{85} \cdot k^{28})^{1 + 3 \cdot 4^{\tau-1}} \leq (10^{85} \cdot k^{28})^{4^\tau} = \beta(k, \tau).$$

To complete the proof we analyze the running time of the algorithm. The dominant cost at each node of the recursion tree is the call to Lemma~\ref{lem:weakPreLayoutDensity}, which runs in time $2^{O(9^\tau)} n^{O(1)}$; since $\tau \leq k$, the per-node time is at most $2^{O(9^k)} n^{O(1)}$. Each recursive invocation of the algorithm is on an instance with strictly smaller value of $\tau$, so the depth of the recursion tree is at most $\tau$. All recursive invocations of the algorithm are made on disjoint vertex sets, thus the number of leaves in the recursion tree is at most $n$. We conclude that the total running time of the algorithm is at most $2^{O(9^k)} n^{O(1)} \cdot n \cdot \tau \leq 2^{O(9^k)} \cdot n^{O(1)}$, where the $\tau$ factor is absorbed into $2^{O(9^k)}$ using $\tau \leq k$.
\end{proof}

We are now ready to prove Theorem~\ref{thm:main}. 

\begin{proof}[Proof of Theorem~\ref{thm:main}]
The algorithm proceeds as follows. First it determines, using the algorithm of Lemma~\ref{lem:detectPathwidthSubtree} with parameter $\tau = k+1$, in time $2^{O(9^k)} \cdot n^{O(1)}$, whether $G$ contains a subtree $T$ of pathwidth at least $k+1$. If it does, the algorithm returns $T$, which satisfies $\bw(T) \geq \pw(T) > k$. Otherwise it calls the algorithm of Lemma~\ref{lem:mainAlgorithmInduction} on $G$ with $\tau = k$, and obtains in time $2^{O(9^k)} \cdot n^{O(1)}$ either a subtree $T$ of $G$ of bandwidth at least $k$ or a layout of $G$ of bandwidth at most $(10^{85} \cdot k^{28})^{4^k}$.
\end{proof}

We conclude with the proof of Theorem~\ref{thm:mainInTermsOfObstructions}

\begin{proof}[Proof of Theorem~\ref{thm:mainInTermsOfObstructions}]
Let $G$ be a graph such that $G$ does not contain a subtree $T$ such that $\pw(T) \geq k+1$, or $\hat{\Delta}(T) \geq k+1$, or $T$ is a skewed $(k+1)$-Cantor comb of depth $k+1$.
By Theorem~\ref{thm:treeBandwidth} the bandwidth of every subtree of $G$ is at most $(5k)^{6k}$. Setting $K_0 := (5k)^{6k} + 1$, no subtree of $G$ has bandwidth at least $K_0$, so applying Theorem~\ref{thm:main} to $G$ with parameter $K_0$ must produce a layout, giving
$$\bw(G) \leq (10^{85} \cdot K_0^{28})^{4^{K_0}} \leq 4^{4^{(5k)^{7k}}}.$$
%The last inequality follows by taking $\log_4$ twice: for $k \geq 1$ we have $K_0 \leq 2(5k)^{6k}$, so
%$$\log_4 \log_4 \bw(G) \leq K_0 + \log_4 \log_4(10^{85} \cdot K_0^{28}) \leq 2 K_0 \leq 4(5k)^{6k} \leq (5k)^{7k},$$using $\log_4 \log_4(10^{85} K_0^{28}) \leq K_0$ (an extremely loose bound for $k \geq 1$) and $(5k)^{7k} = (5k)^k \cdot (5k)^{6k} \geq 5 (5k)^{6k}$.
\end{proof}

\section{Conclusion}\label{sec:conclusion}
\todo{the three things that are reasonable to improve are structure, running time, approx ratio,now we talk about 1,2,3}
In this work we obtain an algorithm that takes as input a graph $G$ and integer $k$, runs in time $2^{O(9^k)} \cdot n^{O(1)}$, and either outputs a subtree $T$ of $G$ such that $\bw(T) \geq k$ or produces a layout of $G$ with bandwidth at most $(10^{85} \cdot k^{28})^{4^k}$.
The approximation function $(10^{85} \cdot k^{28})^{4^k}$ of our algorithm grows quite quickly with $k$, and it would be interesting to see how much this bound can be improved. 
An intriguing related question left open by our work is to determine the slowest growing function $h$ such that every graph with no subtree of bandwidth at least $k$ has bandwidth at most $h(k)$. Our algorithm establishes that $h(k)$ is at most double exponential in $k$. At the same time, we are not even aware of {\em super-linear} lower bounds on $h(k)$. %\todo{Eran, Maria, Please check!!! E: I'm not aware either:)}

It is worth noting that the running time ($2^{c \cdot 9^k}n^{O(1)}$ for some constant $c$) of our algorithm can be improved to {\em polynomial} time, even when $k$ is part of the input, by using a simple trick, at the cost of a small additional factor in the approximation bound and at the cost of not outputting the witness subtree $T$.
If $n > 2^{c \cdot 9^k}$ the running time of the algorithm of Theorem~\ref{thm:main} is already upper bounded by $2^{c \cdot 9^k} \cdot n^{O(1)} = n^{O(1)}$, so we may assume $n \leq 2^{c \cdot 9^k}$.
In this regime, applying the polynomial-time $O(\log^3 n \log\log n)$-approximation algorithm for {\sc Bandwidth} in general graphs of Dunagan and Vempala~\cite{DunaganV01} to $G$ outputs a layout of bandwidth at most $O(\log^3 n \log\log n) \cdot \bw(G) \leq O(c^4 \cdot 9^{4k}) \cdot \bw(G)$. Combined with the structural guarantee of Theorem~\ref{thm:main} that $\bw(G) \leq (10^{85} \cdot k^{28})^{4^k}$ (whenever no witness subtree exists), this yields a layout of bandwidth at most $O(c^4 \cdot 9^{4k}) \cdot (10^{85} \cdot k^{28})^{4^k}$, only a low-order factor worse than the original bound. 

%For the running time, the only reason our algorithm does not run in polynomial time is the algorithm of Lemma~\ref{lem:erdosPosa}.
%to compute a set $S$ of size at most ... such that $G - S$ does not have any subtree of pathwidth at least $\tau$. 
%
%All other super-polynomial time subroutines can be replaced by appropriate polynomial time approximation algorithms (at the cost of a slightly worse bound on the bandwidth of the produced layout of $G$). 
%
%The algorithm of Lemma~\ref{lem:erdosPosa} invokes (a variant of) Courcelle's Theorem~\cite{borie1992automatic,courcelle1990monadic}, which is the reason why the running time dependence on $k$ of our algorithm is some unspecified function. 
%
%\todo[inline]{removing the paragraph below from STOC version because arxiv version should fix this anyway}
%\todo[inline]
%{It is quite plausible that there exists an appropriate polynomial time (or perhaps a $f(k)n^{O(1)}$ time for a ``reasonable'' function $f$) approximation version of Lemma~\ref{lem:erdosPosa} that can be plugged into our arguments. 
%
%More concretely we conjecture that there exists a {\em polynomial} time version of Theorem~\ref{thm:main} where the bandwidth of the produced layout is still $2^{2^{O(k)}}$.}

Finally, one should also consider parameterized approximation algorithms for the {\sc Bandwidth} problem untethered from the bandwidth of a subtree of $G$. 
Theorem~\ref{thm:main} shows that the  {\sc Bandwidth} problem is FPT-approximable. While no constant factor approximation is possible in {\em polynomial} time~\cite{dubey2011hardness} unless \textsf{P} $=$ \textsf{NP}, there might exist an algorithm that takes as input a graph $G$ and integer $k$, runs in time  $f(k)n^{O(1)}$, and either concludes that $\bw(G) \geq k$ or produces a layout of $G$ of bandwidth at most $(\log k)^{O(1)}$, or even $O(k)$.

\paragraph{AI use statement.}
The authors used Claude (Anthropic) during the revision stage of this paper. Claude was used for line-by-line audits of lemma statements and proofs (identifying typos, off-by-one errors, calculational slips, and minor logical inconsistencies), for suggesting tighter constants and cleaner phrasings, and for drafting some proof passages from author-provided sketches and structural outlines. All original results, proof strategies, and mathematical content predate the use of AI assistance. All AI-suggested edits and drafted passages were reviewed and approved by the authors, who bear sole responsibility for the content.

%%%%%%%%%%%%%%%%%%%%%%%%%%%%%%%%%%%%%%%%%%%%%%%%%%%%%%

\bibliographystyle{alpha}
\bibliography{fptapproxbandwidth}

\end{document}